\title{Object Allocation Over a Network of Objects:\\ 
	Mobile Agents with Strict Preferences\footnote{	Department of Computer Science,
	University of Texas at Austin. A preliminary version of this paper will appear as an extended abstract in the Proceedings of the 20th International Conference on Autonomous Agents and Multiagent Systems. Email: \{fuli, plaxton, vaibhavsinha\}@utexas.edu.
}
}
\author{
	Fu Li
	\and
	C.\ Gregory Plaxton
	\and 
	Vaibhav B. Sinha
}
\date{March 2021}
\newcommand{\punt}[1]{#1}
\newcommand{\Agent}{a}
\newcommand{\Agents}{A}
\newcommand{\Object}{b}
\newcommand{\Objects}{B}
\newcommand{\Prefs}{\succ}
\newcommand{\Oaf}{F}
\newcommand{\Edge}{e}
\newcommand{\Edges}{E}
\newcommand{\Matching}{\mu}
\newcommand{\Swap}[3]{#2\rightarrow_{#1}#3}
\newcommand{\Swaps}[3]{#2\leadsto_{#1}#3}
\DeclareMathOperator{\ReachOp}{reach}
\newcommand{\Config}{\chi}
\newcommand{\Reach}[1]{\ReachOp(#1)}
\newcommand{\ReachAgain}[2]{\ReachOp(#1,#2)}
\DeclareMathOperator{\AgentsOp}{agents}
\newcommand{\AgentsOf}[1]{\AgentsOp(#1)}
\DeclareMathOperator{\ObjectsOp}{objects}
\newcommand{\ObjectsOf}[1]{\ObjectsOp(#1)}
\newcommand{\Weight}[1]{w(#1)}
\DeclareMathOperator{\LeftOp}{left}
\newcommand{\Left}[2]{\LeftOp(#1,#2)}
\DeclareMathOperator{\RightOp}{right}
\newcommand{\Right}[2]{\RightOp(#1,#2)}
\newcommand{\MaxMatched}[1]{\max(#1)}
\DeclareMathOperator{\MinUnmatchedOp}{hole}
\newcommand{\MinUnmatched}[1]{\MinUnmatchedOp(#1)}
\newcommand{\Congruent}{\simeq}
\newcommand{\MatchingsSymbol}{\Phi}
\newcommand{\Mcms}[1]{\MatchingsSymbol(#1)}
\newcommand{\Decs}[1]{\MatchingsSymbol^*(#1)}
\newcommand{\DecsBinary}[1]{\MatchingsSymbol'(#1)}
\DeclareMathOperator{\SpanOp}{span}
\newcommand{\Span}[2]{\SpanOp(#1,#2)}
\DeclareMathOperator{\IndRatOp}{IR}
\newcommand{\IndRat}[2]{\IndRatOp(#1,#2)}
\newcommand{\IndRatAgent}[3]{\IndRatOp(#1,#2,#3)}
\newcommand{\Encs}[2]{\Psi(#1,#2)}
\newcommand{\Steps}[2]{{#1}\leadsto{#2}}
\DeclareMathOperator{\DestsOp}{indices}
\newcommand{\Dests}[3]{\DestsOp(#1,#2,#3)}
\newcommand{\BranchObject}[2]{\langle #1,#2\rangle}
\newcommand{\Center}{o}
\newcommand{\Partial}{\tau}
\newcommand{\AgentSeq}{\sigma}
\DeclareMathOperator{\SerialOp}{serial}
\newcommand{\Serial}[1]{\SerialOp(#1)}
\newcommand{\LLL}{\mathcal{L}}
\newcommand{\Instance}{I}
\newcommand{\Improving}[2]{\ObjectsOp(#1,#2)}
\newcommand{\fu}[1]{}
\newtheorem{theorem}{Theorem}[section]
\newtheorem{lemma}[theorem]{Lemma}
\newtheorem{ob}[theorem]{Observation}
\newtheorem{claim}[theorem]{Claim}
\newcommand{\cutoff}[1]{ }
\newcommand{\RO}{\textsf{Reachable Object}}
\newcommand{\RM}{\textsf{Reachable Matching}}
\newcommand{\PE}{\textsf{Pareto Efficiency}}
\DeclareMathOperator{\rank}{rank}
\begin{document}
	
	\begin{titlepage}
	\maketitle
        \thispagestyle{empty}

\begin{abstract}
In recent work, Gourv\`es, Lesca, and Wilczynski propose a variant of the classic housing markets model where the matching between agents and objects evolves through Pareto-improving swaps between pairs of adjacent agents in a social network.  To explore the swap dynamics of their model, they pose several basic questions concerning the set of reachable matchings.  In their work and other follow-up works, these questions have been studied for various classes of graphs: stars, paths, generalized stars (i.e., trees where at most one vertex has degree greater than two), trees, and cliques. For generalized stars and trees, it remains open whether a Pareto-efficient reachable matching can be found in polynomial time.

In this paper, we pursue the same set of questions under a natural variant of their model. In our model, the social network is replaced by a network of objects, and a swap is allowed to take place between two agents if it is Pareto-improving and the associated objects are adjacent in the network. In those cases where the question of polynomial-time solvability versus NP-hardness has been resolved for the social network model, we  are able to show that the same result holds for the network-of-objects model.  In addition, for our model, we present a polynomial-time algorithm for computing a Pareto-efficient reachable matching in generalized star networks. Moreover, the object reachability algorithm that we present for path networks is significantly faster than the known polynomial-time algorithms for the same question in the social network model.
\end{abstract}

	\end{titlepage}


\section{ Introduction}
\label{sec:intro}

Problems related to resource allocation under preferences are widely
studied in both computer science and economics. Research in this area
seeks to gain mathematical insight into the structure of resource
allocation problems, and to exploit this structure to design fast
algorithms. In one important class of resource allocation problems,
sometimes referred to as one-sided matching
problems~\cite{david2013algorithmics}, we seek to allocate indivisible
objects to a set of agents, where each agent has preferences over the
objects and wants to receive at most one object (unit demand). The
allocation should enjoy one or more strong game-theoretic properties,
such as Pareto-efficiency.

In a seminal work, Shapley and Scarf~\cite{shapley1974cores}
introduced the notion of a housing market, which corresponds to the
special case of one-sided matching in which there are an equal number
of agents and objects, each agent is initially endowed with a distinct
object, and each agent is required to be matched to exactly one
object.  They present an elegant algorithm (attributed to David Gale)
for housing markets called the top trading cycles (TTC) algorithm. The
TTC algorithm enjoys a number of strong game-theoretic properties. 
For example, when agents have strict preferences, the output of the TTC
algorithm is the unique matching in the core. The TTC algorithm has
subsequently been generalized to handle more complex variants of the
original housing market problem (e.g., \cite{aziz2016optimal, chevaleyre2007allocating,endriss2006negotiating,saad2009coalitional,  sandholm1998contract}). 

Like many one-sided matching algorithms, the TTC algorithm is
centralized: it takes all of the agent preference information as input
and computes the output matching.  In some resource allocation
scenarios of practical interest, it may be difficult or impossible to
coordinate such a global recomputation of the matching.  Accordingly,
researchers have studied decentralized (or distributed) variants of
one-sided matching problems 
in which the initial allocation gradually evolves as ``local'' trading
opportunities arise. 
In this setting, restrictions are imposed on the sets of agents
that are allowed to participate in a single trade. 
For example, we might only allow (certain) pairs of agents to trade.
In addition, all trades are required to be Pareto-improving.
Locality-based restrictions on trade are generally
enforced through graph-theoretic constraints.

Of particular relevance to the present paper is the line of research
initiated by Gourv{\`e}s et al.~\cite{gourves2017object} on
decentralized allocation in housing markets. They propose a model in
which agents have strict preferences and are embedded in an underlying
social network. A pair of agents are allowed to swap objects with each
other only if (1) they will be better off after the swap, and (2) they
are directly connected (socially tied) via the network. The underlying
social network is modeled as an undirected graph, and five different
graph classes are considered: paths, stars, generalized stars, trees,
and general graphs.  The swap dynamics of the model are investigated
by considering three computational questions. The first question,
\RO, asks whether there is a sequence of swaps that results in a
given agent being matched to a given target object. The second
question, \RM, asks whether there is a sequence of swaps that
results in a given target matching. The third question, \PE, asks
how to find a sequence of swaps that results in a Pareto-efficient
matching with respect to the set of reachable matchings.

Gourv{\`e}s et al.~\cite{gourves2017object} studied each of the three
questions in the context of the aforementioned graph classes, with the
goal of either exhibiting a polynomial-time algorithm or establishing
NP-hardness. 
For some of these problems, it is a relatively straightforward exercise to
design a polynomial-time algorithm (even for the search version). In
particular, this is the case for all three reachability questions on
stars, for \PE\ on paths, and for \RM\ on trees (which subsumes \RM\ 
on generalized stars, and hence also on paths).  Gourv{\`e}s et al.\ present an
elegant reduction from 2P1N-SAT~\cite{Yoshinaka05} to establish the NP-completeness of
\RO\ on generalized stars (and hence also on trees and general graphs). They establish the
NP-completeness of \RM\ on general graphs via a reduction from \RO\ on
trees. The latter reduction has the property that for any given
instance of \RO\  on trees, the target matching in the instance of \RM\ 
on general graphs produced by the transformation matches each agent to
its most preferred object. Consequently, the same reduction
establishes the NP-hardness of \PE\ on general graphs.
The work of Gourv{\`e}s et al.\ left three of
these problems open: \RO\ on paths and \PE\ on generalized stars and
trees.  Subsequently, two sets of authors independently presented
polynomial-time algorithms for \RO\ on
paths~\cite{bentert2019good, huang2019object}. Both groups obtained an
$O(n^4)$-time algorithm by carefully studying the structure of swap
dynamics on paths and then reducing the problem to 2-SAT.
The complexity of \PE\ remains open
for generalized stars and for trees.
Gourv{\`e}s et al.\  noted, 
``It appears interesting to see if Pareto 
(Efficiency) is polynomial time solvable in a 
generalized star by a combination of the 
techniques used to solve the cases of paths 
and stars.''

Bentert et al.~\cite{bentert2019good} established that \RO\ on cliques
is NP-complete, and M{\"u}ller and Bentert~\cite{muller2020reachable}
established that \RM\ on cliques is NP-complete. It is easy to
extend the latter result to show that \PE\ on cliques is NP-hard.
These three hardness results for cliques subsume the corresponding
results obtained previously for general graphs by Gourv{\`e}s et al.

We study a natural variant of the
decentralized housing markets model of Gourv{\`e}s et
al.~\cite{gourves2017object}.  Instead of enforcing locality
constraints on trade via a network where the locations of the agents
are fixed (since they correspond to the vertices of the network) and
the objects move around (due to swaps), we consider a network where
the locations of the objects are fixed and the agents move around. We
refer to these two models as the object-moving model and the
agent-moving model.
Table~\ref{tbl:dozen} summarizes the current state of the art for the object-moving model.

 To motivate the study of
the agent-moving model, consider a cloud computing environment with a
large number of servers (objects) connected by a network that are
available to rent. A set of customers (agents) are each interested in
renting one server.  The servers vary in CPU capacity, storage
capacity, physical security, and rental cost. Varying customer
workloads and requirements result in varying customer preferences over
the servers. Rather than attempting to globally optimize the entire
matching of customers to servers, it might be preferable to allow
local swaps between adjacent servers to gradually optimize the
matching.  Given that customer workloads are likely to vary
significantly over time, an optimization strategy based on frequent
local updates might outperform a strategy based on less frequent
global updates.  Alternatively, one can envision a system that
performs occasional global updates to optimize the matching, and that
relies on local updates to maintain a reasonable matching between
successive global updates.

{\bf Our Results.} We initiate the study of the agent-moving model by revisiting each of
the questions associated with Table~\ref{tbl:dozen} in the context of
the agent-moving model. 
We emphasize that the sole difference between
the agent-moving model and the object-moving model
is that the locality constraint prevents an agent $\Agent$ currently
matched to an object $\Object$ from trading with an agent $\Agent'$
currently matched to an object $\Object'$ unless objects $\Object$ and
$\Object'$ (two vertices in a given network of objects) are adjacent,
rather than requiring agents $\Agent$ and $\Agent'$ (two vertices in a
given network of agents) to be adjacent. Both models also require
swaps to be Pareto-improving.
The two models have strong similarities.
In fact, for all of the questions in Table
1 for which a polynomial-time algorithm or hardness result has been
established in the object-moving model, we establish a corresponding
result in the agent-moving model. Moreover, for Pareto Efficiency on
generalized stars, which is open in the object-moving model, we
provide a polynomial-time algorithm in the agent-moving model.  In
some cases, it is relatively straightforward to adapt known results
for the object-moving model to the agent-moving model. Below we
highlight our four main technical contributions, which address more challenging cases. 


\begin{table}
 \centering

 \label{tbl:dozen}

\scalebox{0.78}{
	\begin{tabular}{cccc}
		\toprule & \RO & \RM & \PE \\ 
		\midrule
		Star& poly-time & (poly-time) & poly-time\\ 
		Path& poly-time & (poly-time) & poly-time \\ 
		Generalized Star& NP-complete & (poly-time) & open\\ 
		Tree & (NP-complete) & poly-time & open\\ 
		Clique & NP-complete & NP-complete & NP-hard\\
		\bottomrule
	\end{tabular} 
}

 \caption{This table presents known complexity results for various questions related to the object-moving model of Gourv{\`e}s et
	al.~\cite{gourves2017object}. The results in parentheses follow directly from other table entries. 
	For the agent-moving model, we obtain the same results, except that we also give a polynomial-time algorithm for \PE\ on generalized stars. 
}
\end{table}


Our first main
technical result is an $O(n^2)$ time algorithm for \RO\ on paths in
the agent-moving model, which is much faster than the known
$O(n^4)$-time algorithms for \RO\ on paths in the object-moving
model. (Here $n$ denotes the number of agents/objects; the size of the
input is quadratic in $n$ since the preference list of each agent is
of length $n$.) The speedup is due to a simpler local characterization
of the reachable matchings on a path in the agent-moving model. 

In our
second main technical result, we obtain the same $O(n^2)$ time bound
for \PE\  on paths. Our algorithms for \RO\ and \PE\  are based on an
efficient subroutine for solving a certain constrained reachability
problem.  Roughly speaking, this subroutine determines all of the
possible matches for a given agent when certain agent-object pairs are
required to be matched to one another.  Our implementation involves a
trivial $O(n^2)$-time preprocessing phase followed by an $O(n)$-time
greedy phase. The preferences of the agents are only examined during
the preprocessing phase. The proof of correctness of the greedy phase
is somewhat nontrivial. We solve \RO\ on paths using a single
application of the subroutine, yielding an $O(n^2)$ bound.  Our
polynomial-time algorithm for \PE\ on paths uses $n$ applications of
our algorithm for \RO\ on paths. Since the preprocessing phase only
needs to be performed once, the overall running time remains $O(n^2)$.

In our third main technical result, we present a polynomial-time
algorithm for \PE\ on generalized stars, which remains open in the
object-moving model. 
To tackle this problem, 
we use the serial dictatorship algorithm
with the novel idea of dynamically choosing the dictator sequence. We
also leverage our techniques for solving \PE\ on paths.

The faster time bounds discussed above for the case of paths suggest
that the agent-moving model is simpler than the object-moving model,
at least from an upper bound perspective. Accordingly, we can expect
it to be a bit more challenging to establish the NP-completeness
results stated in Table~\ref{tbl:dozen} for the agent-moving model than for the
object-moving model.  In our fourth main technical result, we adapt an
NP-completeness proof developed by Bentert et al.~\cite{bentert2019good} in the context of
the object-moving model to the more challenging setting of the
agent-moving model. 
Specifically, we modify their reduction from
2P1N-SAT to establish that Reachable Object on cliques remains
NP-complete in the agent-moving model.



{\bf Related work.} For the object-moving model, Huang and
Xiao~\cite{huang2019object} study \RO\ with weak preferences, i.e.,
where an agent can be indifferent between different objects. Bentert
et al.~\cite{bentert2019good} establish NP-hardness for \RO\ on
cliques, and consider the case where the preference lists have bounded
length.
  Saffidine and
  Wilczynski~\cite{saffidine2018constrained} propose a variant of
  \RO\ where we ask whether a given agent is guaranteed to achieve 
  a specified level of satisfaction after any maximal sequence of rational exchanges. 
M{\"u}ller and
Bentert~\cite{muller2020reachable} study \RM\ on cliques and cycles.
Aspects related to social connectivity are also addressed in recent
work on envy-free allocations~\cite{beynier2019local,bilo2018almost}
and on trade-offs between efficiency and
fairness~\cite{igarashi2019pareto}. 

Our agent-moving model can be viewed as a game in which each agent
seeks to be matched to an object that is as high as possible on its
preference list.  If the game reaches a state in which no further
swaps can be performed, we say that an equilibrium matching has been
reached. Agarwal et al.~\cite{AgarwalEGV20} study a similar game
motivated by Schelling’s well-known residential segregation model. As
in our game, there are an equal number of agents and objects, the
objects correspond to the nodes of a graph, a matching is maintained
between the agents and the objects, and the matching evolves via
Pareto-improving, agent-moving swaps. There are also some significant
differences. In our model, each agent has static preferences over the
set of objects, and swaps can only occur between adjacent agents (i.e,
agents matched to adjacent objects).  In the Agarwal et al.\ game,
each agent has a type, the desirability of an object $\Object$ to an
agent $\Agent$ depends on the current fraction of agents in the
``neighborhood'' of $\Object$ (i.e., the set of agents matched to an
object adjacent to $\Object$) with the same type as $\Agent$, and
swaps can occur between any pair of agents. Agarwal et al.\ study the
existence, computational complexity, and quality of equilibrium
matchings in such games.  Bil{\`{o}} et.\ al \cite{BiloBLM20} further
investigated the influence of the graph structure on the resulting
strategic multi-agent system.

{\bf Organization of the paper.} The remainder of the paper is organized as
follows. Section~\ref{sec:prelims} provides formal
definitions. Section~\ref{sec:path} presents our polynomial-time
algorithms for \RO\ and \PE\ on paths. 
Section~\ref{sec:gsPeAlg} presents our polynomial-time
algorithm for \PE\ on generalized stars. 
Section~\ref{sec:NPC_RO_clique} presents our NP-completeness result
for \RO\ on cliques.
Section~\ref{sec:hardness}
presents our other NP-completeness and NP-hardness results.
Section~\ref{sec:easy} briefly discusses simple algorithms for
justifying the other polynomial-time entries in Table~\ref{tbl:dozen}.
Section~\ref{sec:concs} offers  concluding remarks.

	\section{Preliminaries}
\label{sec:prelims}

We define an object allocation framework (OAF) as a $4$-tuple
$\Oaf=(\Agents,\Objects,\Prefs,\Edges)$ where $\Agents$ is a set of
agents, $\Objects$ is a set of objects such that
$|\Agents|=|\Objects|$, $\Prefs$ is a collection of strict linear
orderings $\{\Prefs_\Agent\}_{\Agent\in\Agents}$ over $\Objects$ such
that $\Prefs_\Agent$ specifies the preferences of agent $\Agent$ over
$\Objects$, and $\Edges$ is the edge set of some undirected graph
$(\Objects,\Edges)$.

We define a matching $\Matching$ of given OAF
$\Oaf=(\Agents,\Objects,\Prefs,\Edges)$ as a subset of
$\Agents\times\Objects$ such that no agent or object belongs to more
than one pair in $\Matching$. (Put differently, $\Matching$ is a
matching in the complete bipartite graph of agents and objects.)  We
say that such a matching is perfect if $|\Matching|=|\Agents|$. For
any matching $\Matching$, we define $\AgentsOf{\Matching}$ (resp.,
$\ObjectsOf{\Matching}$) as the set of all matched agents (resp.,
objects) with respect to $\Matching$.  For any matching $\Matching$
and any agent $\Agent$ that is matched in $\Matching$, we use the
shorthand notation $\Matching(\Agent)$ to refer to the object matched
to agent $\Agent$.  For any matching $\Matching$ and any object
$\Object$ that is matched in $\Matching$, we use the notation
$\Matching^{-1}(\Object)$ to refer to the agent matched to object
$\Object$.

For any OAF $\Oaf=(\Agents,\Objects,\Prefs,\Edges)$, any perfect
matching $\Matching$ of $\Oaf$, and any edge
$\Edge=(\Object,\Object')$ in $\Edges$ such that
$\Object'\Prefs_\Agent\Object$ and $\Object\Prefs_{\Agent'}\Object'$
where $\Agent=\Matching^{-1}(\Object)$ and
$\Agent'=\Matching^{-1}(\Object')$, we say that a swap operation is
applicable to $\Matching$ across edge $\Edge$, and we write
$\Swap{\Oaf,\Edge}{\Matching}{\Matching'}$ where
\[
\Matching'=(\Matching\setminus\{(\Agent,\Object),(\Agent',\Object')\})\cup
\{(\Agent,\Object'),(\Agent',\Object)\},
\]
is the matching of $\Oaf$ that results from applying this operation.
We write $\Swap{\Oaf}{\Matching}{\Matching'}$ to denote that
$\Swap{\Oaf,\Edge}{\Matching}{\Matching'}$ for some edge $\Edge$.  We
write $\Swaps{\Oaf}{\Matching}{\Matching'}$ if there exists a sequence
$\Matching=\Matching_0,\ldots,\Matching_k=\Matching'$ of matchings of
$\Oaf$ such that $\Swap{\Oaf}{\Matching_{i-1}}{\Matching_i}$ for
$1\leq i\leq k$.

We define a configuration as a pair $\Config=(\Oaf,\Matching)$ where
$\Oaf$ is an OAF and $\Matching$ is a perfect matching of $\Oaf$.

For any configuration $\Config=(\Oaf,\Matching)$ where
$\Oaf=(\Agents,\Objects,\Prefs,\Edges)$, any agent $\Agent$ in
$\Agents$, and any object $\Object$ in $\Objects$, we define
$\Config(\Agent)$ as a shorthand for the object $\Matching(\Agent)$,
and we define $\Config^{-1}(\Object)$ as a shorthand for the agent
$\Matching^{-1}(\Object)$.

For any configuration $\Config=(\Oaf,\Matching)$ where
$\Oaf=(\Agents,\Objects,\Prefs,\Edges)$, and any matching $\Matching'$
of $\Oaf$ such that $\Swap{\Oaf,\Edge}{\Matching}{\Matching'}$ for
some edge $\Edge$ in $\Edges$, we say that a swap is applicable to
$\Config$ across edge $\Edge$, and the result of applying this
operation is the configuration $(\Oaf,\Matching')$.

For any configuration $\Config=(\Oaf,\Matching)$, we define
$\Reach{\Config}$ as the set of all perfect matchings $\Matching'$ of
$\Oaf$ such that $\Swaps{\Oaf}{\Matching}{\Matching'}$.  For any
configuration $\Config=(\Oaf,\Matching)$ and any matching $\Matching'$
of $\Oaf$, we define $\ReachAgain{\Config}{\Matching'}$ as the set of
all matchings $\Matching''$ in $\Reach{\Config}$ such that
$\Matching''$ contains $\Matching'$.

We now state the three reachability problems studied in this paper.
\begin{itemize}
\item The reachable matching problem: Given a configuration
$\Config=(\Oaf,\Matching)$ and a perfect matching $\Matching'$ of
$\Oaf$, determine whether $\Matching'$ belongs to $\Reach{\Config}$.

\item
The reachable object problem: Given a configuration
$\Config=(\Oaf,\Matching)$ where
$\Oaf=(\Agents,\Objects,\Prefs,\Edges)$, an agent $\Agent$ in
$\Agents$, and an object $\Object$ in $\Objects$, determine whether
there is a matching $\Matching'$ in $\Reach{\Config}$ such that
$\Matching'(\Agent)=\Object$.

\item
The Pareto-efficient matching problem: Given a configuration
$\Config$, find a matching in $\Reach{\Config}$ that is not
Pareto-dominated by any other matching in $\Reach{\Config}$.
\end{itemize}

	\section{Reachability Over a Path Network}
\label{sec:path}

%

We begin by introducing some notation.

For any nonnegative integer $n$, we define $[n]$ as $\{1,\ldots,n\}$.
Without loss of generality, in this section we restrict attention to
OAFs of the form $([n], [n], \Prefs, \{(b,b+1)\mid 1\leq b<n\})$ for
some positive integer $n$. We use the notation $(n,\Prefs)$ to refer
to such an OAF.

For any nonnegative integer $n$, we define $\Mcms{n}$ as the set of all
matchings $\Matching$ such that $\AgentsOf{\Matching}=[|\Matching|]$
and $\ObjectsOf{\Matching}\subseteq [n]$.

For any matching $\Matching$ in $\Mcms{n}$, we define
$\MaxMatched{\Matching}$ as the maximum matched object in
$\ObjectsOf{\Matching}$, or as $0$ if $\Matching=\emptyset$.

For any matching $\Matching$ in $\Mcms{n}$, we define
$\MinUnmatched{\Matching}$ as the minimum positive integer that does
not belong to $\ObjectsOf{\Matching}$.

For any matching $\Matching$ in $\Mcms{n}$ and any agent $\Agent$ in
$\AgentsOf{\Matching}$, define $\Span{\Matching}{\Agent}$ 
as $\{\Object\in [n]\mid\Matching(\Agent)\leq\Object\leq\Agent\}
\cup\{\Object\in [n]\mid\Agent\leq\Object\leq\Matching(\Agent)\}.$

For any OAF $\Oaf=(n,\Prefs)$, we define $\Matching_\Oaf$ as the
matching $\{(i,i)\mid i\in [n]\}$, and we define $\Config_\Oaf$ as the
configuration $(\Oaf,\Matching_\Oaf)$.

For any OAF $\Oaf=(n,\Prefs)$ and any agent $\Agent$ in $[n]$, we
define $\Left{\Prefs}{\Agent}$ as the minimum object $\Object$ in
$[n]$ such that
$\Object\Prefs_\Agent\Object+1\Prefs_\Agent\cdots\Prefs_\Agent\Matching_\Oaf(\Agent)=\Agent$,
and we define $\Right{\Prefs}{\Agent}$ as the maximum object $\Object$
in $[n]$ such that
$\Object\Prefs_\Agent\Object-1\Prefs_a\cdots\Prefs_\Agent\Agent$.
Thus if a matching $\Matching$ belongs to $\Reach{\Config_\Oaf}$, then
the match $\Matching(\Agent)$ of agent $\Agent$ is at least
$\Left{\Prefs}{\Agent}$ and at most $\Right{\Prefs}{\Agent}$,
regardless of the preferences of the remaining agents.

For any OAF $\Oaf=(n,\Prefs)$, any matching $\Matching$ in $\Mcms{n}$,
and any agent $\Agent$ in $\AgentsOf{\Matching}$, we say that the
predicate $\IndRatAgent{\Prefs}{\Matching}{\Agent}$ holds (where
``IR'' stands for ``individually rational'') if $\Left{\Prefs}{\Agent}\leq\Matching(\Agent)\leq\Right{\Prefs}{\Agent}.$
We say that the predicate $\IndRat{\Prefs}{\Matching}$ holds if
$\IndRatAgent{\Prefs}{\Matching}{\Agent}$ holds for all agents
$\Agent$ in $\AgentsOf{\Matching}$.

\subsection{A Useful Subroutine}
\label{sec:greedy}

This section presents Algorithm~\ref{alg:greedyPath}, a greedy
subroutine that we use in Sections~\ref{sec:pathReachObject}
and~\ref{sec:pathPareto} to solve reachability problems over a path
network.


Recall that the reachable object problem with path configuration
$\Config_\Oaf$ is to check whether an object $\Object$ is reachable
for an agent $\Agent$. Algorithm~\ref{alg:greedyPath} addresses a
variant of this problem in which the agents less than $\Agent$ are all
required to be matched to specific objects. The input matching
$\Matching_0$ specifies the required match for each of these agents.
 
\begin{algorithm}[h]
\SetAlgoLined
 
\KwIn{An OAF $\Oaf=(n,\Prefs)$, a matching $\Matching_0$ in
$\Mcms{n}$ such that $|\Matching_0|<n$ and
$\ReachAgain{\Config_\Oaf}{\Matching_0}\not=\emptyset$,
and a matching $\Matching_1=\Matching_0+(|\Matching_1|,\Object_0)$
where
$\MaxMatched{\Matching_0}<\Object_0\leq\Right{\Prefs}{|\Matching_1|}$}
\KwOut{A matching $\Matching$ in
$\ReachAgain{\Config_\Oaf}{\Matching_1}$, or $\emptyset$ if
this set is empty}
$\Matching=\Matching_1$\;
\While{$0<|\Matching|<n$}{
 \uIf{$\Left{\Prefs}{|\Matching|+1}\leq\MinUnmatched{\Matching}$}{
 $\Matching=\Matching+(|\Matching|+1,\MinUnmatched{\Matching})$\;
 }\uElseIf{$\MaxMatched{\Matching}<\Right{\Prefs}{|\Matching|+1}$}{
 $\Matching=\Matching+(|\Matching|+1,\MaxMatched{\Matching}+1)$\;
 }\Else{
 $\Matching=\emptyset$\;
 }
}
\Return $\Matching$
\caption{A greedy path reachability subroutine.}
\label{alg:greedyPath}
\end{algorithm}

\subsection{Proof of Correctness of Algorithm~\ref{alg:greedyPath}}
\label{sec:greedy-proof}

In this section, we establish the correctness of
Algorithm~\ref{alg:greedyPath}. 

We begin by defining a specific subset $\Decs{n}$ of $\Mcms{n}$. For
any matching $\Matching$ in $\Mcms{n}$ and any integer $i$ in
$[|\Matching|]$, let $\Matching_i$ be the matching such that
$\Matching_i\subseteq \Matching$ and $\AgentsOf{\Matching}=[i]$. Then
$\Decs{n}$ is the set of all matchings $\Matching$ such that
$\Matching$ belongs to $\Mcms{n}$ and for each $i$ in
$[|\Matching|-1]$, either $\Matching(i+1) =
\MinUnmatched{\Matching_i}$ or
$\MaxMatched{\Matching_i}<\Matching(i+1)\le n$.

\label{sec:structural}
We now present a number of useful structural properties of matchings in $\Decs{n}$.


\subsubsection{Representing a Matching as a Pair of Binary Strings}
\label{sec:pair}

For any binary string $\alpha$, we let $|\alpha|$ denote the length of
$\alpha$, and we let $\Weight{\alpha}$ denote the Hamming weight of
$\alpha$.  For any binary string $\alpha$ and any integer $i$ in
$[|\alpha|]$, we let $\alpha_i$ denote bit $i$ of $\alpha$.  For any
binary string $\alpha$, and any integers $i$ and $j$ in $[|\alpha|]$,
we let $\alpha_{i,j}$ denote the substring $\alpha_i\cdots\alpha_j$ of
$\alpha$.

For any integers $m$ and $n$ such that $0\leq m\leq n$, we let
$\Encs{m}{n}$ denote the set of all pairs of binary strings
$(\alpha,\beta)$ such that $|\alpha| = m$, $|\beta|=n$,
$\Weight{\alpha_{1,i}}\geq\Weight{\beta_{1,i}}$ holds for all $i$ in
$[m]$, $\Weight{\alpha}=\Weight{\beta}$, and $m<n$ implies
$\beta_n=1$.

For any $(\alpha,\beta)$ in $\Encs{m}{n}$, we define $[\alpha,\beta]$
as the cardinality-$m$ matching $\Matching$ in $\Mcms{n}$ constructed
as follows: for any agent $\Agent$ in $[m]$ such that $\alpha_\Agent$
is the $i$th $0$ (resp., $1$) in $\alpha$, we define
$\Matching(\Agent)$ as the index of the $i$th $0$ (resp., $1$) in
$\beta$.

\begin{ob}\label{ob:encode}
Let $(\alpha,\beta)$ belong to $\Encs{m}{n}$, let $\Matching$ denote
$[\alpha,\beta]$, and let $\Object$ belong to $[n]$. If $\Object$ is
unmatched in $\Matching$, then $\beta_\Object=0$. Otherwise, the
following conditions hold, where $\Agent$ denotes
$\Matching^{-1}(\Object)$: $\Agent>\Object$ implies
$\alpha_\Agent=\beta_\Object=0$, $\Agent<\Object$ implies
$\alpha_\Agent=\beta_\Object=1$, and $\Agent=\Object$ implies
$\alpha_\Agent=\beta_\Agent$.
\end{ob}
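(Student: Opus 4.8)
The plan is to reduce the entire statement to one structural fact about the encoding: whenever $\Agent=\Matching^{-1}(\Object)$, the agent's bit and the object's bit agree, i.e.\ $\alpha_\Agent=\beta_\Object$. This is immediate from the definition of $[\alpha,\beta]$, since the $i$th $0$ (resp.\ $1$) of $\alpha$ is matched to the $i$th $0$ (resp.\ $1$) of $\beta$. This observation already settles the case $\Agent=\Object$ by taking $\Object=\Agent$, and it reduces each remaining matched case to pinning down the common bit value from the sign of $\Agent-\Object$.

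First I would dispose of the unmatched case. Since $\Weight{\alpha}=\Weight{\beta}$, the number of $1$s in $\beta$ equals the number of $1$-agents, so the matching restricts to a bijection from the $1$-agents onto the full set of $1$-objects; hence every object with bit $1$ is matched. Contrapositively, an unmatched object $\Object$ must satisfy $\beta_\Object=0$, which is the first assertion.

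The core of the proof will be two monotonicity facts: if a matched pair $(\Agent,\Object)$ has bit $1$ then $\Agent\le\Object$, and if it has bit $0$ then $\Agent\ge\Object$. For the first, write $\Weight{\alpha_{1,\Agent}}=i=\Weight{\beta_{1,\Object}}$, both being the $i$th $1$ in their strings, and suppose for contradiction that $\Agent>\Object$. Then $\Object<\Agent\le m$, so the prefix inequality applies at index $\Object$ and gives $\Weight{\alpha_{1,\Object}}\ge\Weight{\beta_{1,\Object}}=i$; combined with $\Weight{\alpha_{1,\Object}}\le\Weight{\alpha_{1,\Agent}}=i$ this forces equality, so $\alpha$ has no $1$ among positions $\Object+1,\dots,\Agent$, contradicting $\alpha_\Agent=1$. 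For the second, translate $0$-counts into weights as $\Weight{\alpha_{1,\Agent}}=\Agent-i$ and $\Weight{\beta_{1,\Object}}=\Object-i$ (the $i$th $0$s), and suppose $\Agent<\Object$. Applying the prefix inequality at index $\Agent$ gives $\Weight{\beta_{1,\Agent}}\le\Weight{\alpha_{1,\Agent}}=\Agent-i$, so $\beta_{1,\Agent}$ contains at least $i$ zeros; but $\Object>\Agent$ is the $i$th zero of $\beta$, so $\beta_{1,\Agent}$ contains at most $i-1$ zeros, a contradiction.

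Finally I would combine these two facts with $\alpha_\Agent=\beta_\Object$: the hypothesis $\Agent>\Object$ rules out bit $1$, forcing $\alpha_\Agent=\beta_\Object=0$, and symmetrically $\Agent<\Object$ forces $\alpha_\Agent=\beta_\Object=1$. The routine parts are the bit-agreement observation and the unmatched case; the hard part will be the two counting arguments, where the main obstacle is keeping the bookkeeping consistent between the ``$i$th $0$/$1$'' descriptions and the prefix weights $\Weight{\alpha_{1,i}}$ and $\Weight{\beta_{1,i}}$, and in particular checking that the index at which the defining inequality is invoked actually lies in $[m]$.
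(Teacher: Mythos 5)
Your proof is correct. The paper states this as an Observation with no written proof, treating it as immediate from the definitions, and your argument---bit agreement $\alpha_\Agent=\beta_\Object$ from the construction of $[\alpha,\beta]$, plus the two prefix-weight counting arguments using $\Weight{\alpha_{1,i}}\geq\Weight{\beta_{1,i}}$ and $\Weight{\alpha}=\Weight{\beta}$---is exactly the intended verification, with the index-range checks handled properly.
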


\begin{ob}\label{ob:extend}
Let $(\alpha,\beta)$ belong to $\Encs{m}{n}$ and let $\Matching$
denote $[\alpha,\beta]$. If $m<n$ then $(\alpha 0,\beta)$ belongs to
$\Encs{m+1}{n}$ and $[\alpha
  0,\beta]=\Matching+(m+1,\MinUnmatched{\Matching})$. Furthermore, for
any nonnegative integer $k$, $(\alpha 1,\beta 0^k1)$ belongs to
$\Encs{m+1}{n+k+1}$ and $[\alpha 1,\beta 0^k1]=\Matching+(m+1,n+k+1)$.
\end{ob}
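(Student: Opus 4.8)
The plan is to treat the two assertions in parallel, since each decomposes into two sub-claims of the same type: first that the extended pair lies in the appropriate set $\Encs{m+1}{\cdot}$, and second that the matching it encodes equals $\Matching$ augmented by the single pair involving the new agent $m+1$. The guiding principle is that appending symbols to the \emph{ends} of $\alpha$ and $\beta$ leaves untouched every prefix count that governs the matches of agents $1,\dots,m$, so the only genuinely new quantity to pin down is the object assigned to agent $m+1$. Throughout I will use that $\alpha$ has $m-\Weight{\alpha}$ zeros and that $\beta$ has $\Weight{\beta}=\Weight{\alpha}$ ones and hence $n-\Weight{\alpha}$ zeros.

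For the membership claims I would verify the defining conditions of $\Encs{\cdot}{\cdot}$ by prefix-weight arithmetic. For $(\alpha 0,\beta)$: the lengths are $m+1$ and $n$; for $i\le m$ the prefix condition is inherited, and at $i=m+1$ we have $\Weight{(\alpha 0)_{1,m+1}}=\Weight{\alpha}=\Weight{\beta}\ge\Weight{\beta_{1,m+1}}$, using $m+1\le n$; the total weights agree since appending a $0$ changes nothing; and $m+1<n$ forces $m<n$, so the inherited condition gives $\beta_n=1$. For $(\alpha 1,\beta 0^k1)$: the lengths are $m+1$ and $n+k+1$; for $i\le m$ the prefix condition is again inherited, and at $i=m+1$ we have $\Weight{(\alpha 1)_{1,m+1}}=\Weight{\alpha}+1\ge\Weight{\beta}+1\ge\Weight{(\beta 0^k1)_{1,m+1}}$; the total weights both equal $\Weight{\alpha}+1$; and the final condition is automatic because the last symbol of $\beta 0^k1$ is $1$.

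For the matching-identity claims I would first argue that restricting either extended encoding to agents $[m]$ reproduces $\Matching$. Indeed, for $\Agent\le m$ the bit $\alpha_\Agent$ and its rank among the $0$s (resp.\ $1$s) of the prefix $\alpha_{1,\Agent}$ are unchanged by the append, and the first $m-\Weight{\alpha}$ zeros and first $\Weight{\alpha}$ ones of $\beta$ occupy the same indices in $\beta$, in $\beta$ itself, and in $\beta 0^k1$; hence each such agent keeps its object. It then remains to compute the match of agent $m+1$. In the first case $(\alpha 0)_{m+1}=0$ is the $(m-\Weight{\alpha}+1)$th zero of $\alpha 0$, so it maps to the $(m-\Weight{\alpha}+1)$th zero of $\beta$. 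In the second case $(\alpha 1)_{m+1}=1$ is the $(\Weight{\alpha}+1)$th one of $\alpha 1$, and since $\beta 0^k1$ contains exactly $\Weight{\alpha}$ ones before its final symbol, agent $m+1$ maps to that final $1$, namely position $n+k+1$, as claimed.

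The step I expect to be the main obstacle is showing, in the first assertion, that the $(m-\Weight{\alpha}+1)$th zero of $\beta$ is exactly $\MinUnmatched{\Matching}$. For this I would characterize the matched objects of $\Matching=[\alpha,\beta]$: by Observation~\ref{ob:encode} every unmatched object $\Object$ satisfies $\beta_\Object=0$, and by the definition of $[\alpha,\beta]$ the zero-agents of $\alpha$ occupy precisely the first $m-\Weight{\alpha}$ zeros of $\beta$ while all $\Weight{\alpha}$ ones of $\beta$ are matched; hence the matched objects are exactly all $1$-positions of $\beta$ together with its first $m-\Weight{\alpha}$ zero-positions. Consequently the smallest unmatched object is the $(m-\Weight{\alpha}+1)$th zero of $\beta$, which exists because $m<n$ guarantees $n-\Weight{\alpha}>m-\Weight{\alpha}$ zeros in $\beta$. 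Matching this against the computation of the previous paragraph yields $[\alpha 0,\beta]=\Matching+(m+1,\MinUnmatched{\Matching})$, and the second assertion follows directly from the final-$1$ computation. Everything outside this identification is routine prefix-weight bookkeeping.
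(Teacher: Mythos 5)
Your proof is correct: the paper states this as an unproved Observation, and your argument is exactly the direct verification from the definitions that the paper leaves implicit (prefix-weight checks for membership in $\Encs{m+1}{n}$ and $\Encs{m+1}{n+k+1}$, invariance of the matches of agents $1,\ldots,m$ under appending, and the identification of the new agent's object). In particular, your key step—that the matched objects of $[\alpha,\beta]$ are precisely the $1$-positions of $\beta$ together with its first $m-\Weight{\alpha}$ zero-positions, so that $\MinUnmatched{\Matching}$ is the $(m-\Weight{\alpha}+1)$th zero of $\beta$—is the right justification and is handled correctly.
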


For any $(\alpha,\beta)$ in $\Encs{m}{n}$, and any agent $\Agent$ in
$[m]$ such that $\alpha_\Agent=\beta_\Agent$ and
$\Weight{\alpha_{1,\Agent}}=\Weight{\beta_{1,\Agent}}$, we say that a
complement operation is applicable to $(\alpha,\beta)$ at agent
$\Agent$. The result of applying this operation is the pair of binary
strings $(\alpha',\beta')$ that is the same as $(\alpha,\beta)$ except
$\alpha'_\Agent=\beta'_\Agent=1-\alpha_\Agent$.  It is easy to
see that $(\alpha',\beta')$ belongs to $\Encs{m}{n}$.

For any $(\alpha,\beta)$ and $(\alpha',\beta')$ in $\Encs{m}{n}$, we
write $(\alpha,\beta)\Congruent(\alpha',\beta')$ to denote that
$(\alpha,\beta)$ can be transformed into $(\alpha',\beta')$ via a
sequence of complement operations.

\begin{ob}\label{ob:congruent}
Let $(\alpha,\beta)$ and $(\alpha',\beta')$ belong to
$\Encs{m}{n}$. Then $[\alpha,\beta]=[\alpha',\beta']$ if and only if
$(\alpha,\beta)\Congruent(\alpha',\beta')$.
\end{ob}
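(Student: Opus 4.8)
The plan is to prove the two directions separately, establishing the ``if'' direction first and then invoking it to streamline the ``only if'' direction.

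For the ``if'' direction, by induction on the length of a sequence of complement operations it suffices to show that a single complement operation preserves the encoded matching. So I would suppose a complement operation is applicable to $(\alpha,\beta)$ at agent $\Agent$, let $(\alpha',\beta')$ be the result, and write $\Matching=[\alpha,\beta]$ and $\Matching'=[\alpha',\beta']$. The first step is to note that the applicability conditions $\alpha_\Agent=\beta_\Agent$ and $\Weight{\alpha_{1,\Agent}}=\Weight{\beta_{1,\Agent}}$ force $\Matching(\Agent)=\Agent$: when $\alpha_\Agent=\beta_\Agent=0$, the rank of $\alpha_\Agent$ among the $0$s of $\alpha$ is $\Agent-\Weight{\alpha_{1,\Agent}}$, which equals the rank of $\beta_\Agent$ among the $0$s of $\beta$, so agent $\Agent$ is matched to object $\Agent$ (the value-$1$ case is symmetric). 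Flipping $\alpha_\Agent$ and $\beta_\Agent$ together moves agent $\Agent$ and object $\Agent$ between the $0$-group and the $1$-group in lockstep, and since their prefix weights remain equal a direct rank computation gives $\Matching'(\Agent)=\Agent$. For any other agent $c$, the flip at position $\Agent$ changes its rank within its group by the same amount in $\alpha$ as in $\beta$ (and by nothing when $c<\Agent$), so its match is unchanged; carrying out the short case analysis on $c<\Agent$ versus $c>\Agent$ and on which group $c$ lies in then yields $\Matching'=\Matching$.

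For the ``only if'' direction, the guiding idea is that a matching determines its encoding everywhere except at its fixed points. Assuming $[\alpha,\beta]=[\alpha',\beta']=\Matching$ with both pairs in $\Encs{m}{n}$, I would apply Observation~\ref{ob:encode} to each encoding to conclude that $\alpha_\Agent$ is forced whenever $\Matching(\Agent)\ne\Agent$ (to $0$ if $\Matching(\Agent)<\Agent$, else to $1$), that $\beta_\Object$ is forced for every object $\Object$ that is either unmatched or matched to an agent other than $\Object$, and that at each fixed point $\Agent$ (where $\Matching(\Agent)=\Agent$) one has $\alpha_\Agent=\beta_\Agent$ with the common value left free by $\Matching$. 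Hence $(\alpha,\beta)$ and $(\alpha',\beta')$ can disagree only on the set $S$ of fixed points of $\Matching$, and at each such position both pairs have equal $\alpha$- and $\beta$-bits.

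It then remains to transform $(\alpha,\beta)$ into $(\alpha',\beta')$ by applying one complement operation at each $\Agent\in S$ where the two encodings differ. The crucial observation is that such an operation is always applicable: at any fixed point of an encoding of $\Matching$ the condition $\alpha_\Agent=\beta_\Agent$ holds by Observation~\ref{ob:encode}, while $\Weight{\alpha_{1,\Agent}}=\Weight{\beta_{1,\Agent}}$ holds by the rank identity already used above. Furthermore, the ``if'' direction guarantees that each complement operation keeps us among the encodings of $\Matching$, so $S$ and the prefix-weight identity at every fixed point persist; the operations therefore remain applicable in any order, and since each one flips exactly the bit pair at its position, the process ends at $(\alpha',\beta')$, establishing $(\alpha,\beta)\Congruent(\alpha',\beta')$. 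I expect the main obstacle to be the rank bookkeeping in the ``if'' direction; once that is settled, the ``only if'' direction goes through cleanly because applicability at fixed points becomes automatic.
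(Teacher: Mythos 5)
Your proof is correct. The paper states this as an Observation without proof, and your argument — the prefix-weight/rank identity showing that a complement operation can only occur at (and preserves) a fixed point of the matching, plus Observation~\ref{ob:encode} to show that two encodings of the same matching can differ only at fixed points, where the complement operation is automatically applicable — is exactly the justification the paper leaves implicit.
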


\begin{ob}\label{ob:prefix}
Let $(\alpha,\beta)$ belong to $\Encs{m}{n}$, and let
$(\alpha',\beta')$ belong to $\Encs{m'}{n'}$ where $m\leq m'$ and
$n\leq n'$.  Then $[\alpha,\beta]\subseteq[\alpha',\beta']$ if and
only if
$(\alpha,\beta)\Congruent(\alpha'_{1,|\alpha|},\beta'_{1,|\beta|})$.
\end{ob}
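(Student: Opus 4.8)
Write $m=|\alpha|$ and $n=|\beta|$, let $\Matching=[\alpha,\beta]$ and $\Matching'=[\alpha',\beta']$, and let $\Matching'_m$ denote the restriction of $\Matching'$ to the agent set $[m]$ (the unique subset of $\Matching'$ whose agent set is $[m]$). Since $\AgentsOf{\Matching}=[m]$, the plan begins with the elementary remark that $\Matching\subseteq\Matching'$ holds if and only if $\Matching=\Matching'_m$. I would then reduce the whole statement, via Observation~\ref{ob:congruent}, to a single fact about truncated encodings: namely that $(\alpha,\beta)\Congruent(\alpha'_{1,m},\beta'_{1,n})$ is equivalent to the conjunction ``$(\alpha'_{1,m},\beta'_{1,n})\in\Encs{m}{n}$ and $[\alpha'_{1,m},\beta'_{1,n}]=\Matching$.'' (Recall that $\Congruent$ is only defined between members of $\Encs{m}{n}$, so the RHS of the observation implicitly asserts $(\alpha'_{1,m},\beta'_{1,n})\in\Encs{m}{n}$.) The conditions $|\alpha'_{1,m}|=m$, $|\beta'_{1,n}|=n$, and $\Weight{\alpha'_{1,i}}\ge\Weight{\beta'_{1,i}}$ for $i\in[m]$ are inherited directly from $(\alpha',\beta')\in\Encs{m'}{n'}$ together with $m\le m'$ and $n\le n'$, so the only nontrivial membership condition is the weight balance $\Weight{\alpha'_{1,m}}=\Weight{\beta'_{1,n}}$.

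\textbf{A prefix-agreement lemma.} The key auxiliary claim is: if $(\alpha'_{1,m},\beta'_{1,n})\in\Encs{m}{n}$, then $[\alpha'_{1,m},\beta'_{1,n}]=\Matching'_m$. I would prove this directly from the definition of $[\cdot,\cdot]$. Setting $w=\Weight{\alpha'_{1,m}}$, the weight balance forces $\beta'_{1,n}$ to contain exactly $w$ ones, which are precisely the first $w$ ones of $\beta'$; hence each $1$-agent of $[m]$ is sent to the same object as in $\Matching'$. Likewise $\beta'_{1,n}$ contains $n-w\ge m-w$ zeros, so the first $m-w$ zeros of $\beta'_{1,n}$ coincide with the first $m-w$ zeros of $\beta'$, and each $0$-agent of $[m]$ is also sent to its $\Matching'$-image. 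Thus the two matchings agree on $[m]$. This lemma immediately settles the ``if'' direction: given the congruence, $(\alpha'_{1,m},\beta'_{1,n})\in\Encs{m}{n}$ and, by Observation~\ref{ob:congruent}, $[\alpha'_{1,m},\beta'_{1,n}]=\Matching$; the lemma then gives $\Matching=\Matching'_m\subseteq\Matching'$.

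\textbf{The ``only if'' direction and the main obstacle.} The crux is to show that $\Matching\subseteq\Matching'$ (equivalently $\Matching=\Matching'_m$) forces the weight balance $\Weight{\alpha'_{1,m}}=\Weight{\beta'_{1,n}}$; everything else follows from the prefix-agreement lemma and Observation~\ref{ob:congruent}. Here I would first pin down $n$: since $\Matching=\Matching'_m\in\Encs{m}{n}$, object $n$ is matched in $\Matching$ (because $\beta_n=1$ when $m<n$, by Observation~\ref{ob:encode}, and because $\ObjectsOf{\Matching}=[n]$ when $m=n$), and as $\ObjectsOf{\Matching}\subseteq[n]$ this object $n$ is the maximum matched object of $\Matching'_m$. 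The heart of the argument is then a counting fact about which objects of $[n]$ carry a $1$ in $\beta'$. By Observation~\ref{ob:encode}, any matched object $\Object$ with $\beta'_\Object=0$ has $\Matching'^{-1}(\Object)\ge\Object$, so every $0$-target used by an agent of $[m]$ is at most $m$; consequently, for $m<n$ the maximum object $n$ of $\Matching'_m$ must be a $1$-target, and it is exactly the $w$-th one of $\beta'$. Since the $(w+1)$-th one of $\beta'$ (if any) occurs at a position strictly larger than $n$, the prefix $\beta'_{1,n}$ contains precisely $w$ ones, i.e.\ $\Weight{\beta'_{1,n}}=w=\Weight{\alpha'_{1,m}}$; the case $m=n$ (where $\Matching'_m$ is a perfect matching of $[n]$, so no object of $[n]$ escapes the agent set $[m]$) is handled by the same count. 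I expect this balance claim to be the main obstacle, precisely because a naive count admits a ``leak'': an object $\Object\le n$ with $\beta'_\Object=1$ that is matched in $\Matching'$ to an agent outside $[m]$ would break it. The argument above rules this out by showing every such $1$-object lies among the first $w$ ones of $\beta'$ and is therefore claimed by a $1$-agent of $[m]$; this is exactly the point where the ballot/prefix-weight structure of $\Encs{m'}{n'}$ (and, underlying it, the non-crossing nature of path-reachable matchings) is used. With the balance in hand, $(\alpha'_{1,m},\beta'_{1,n})\in\Encs{m}{n}$, the prefix-agreement lemma gives $[\alpha'_{1,m},\beta'_{1,n}]=\Matching'_m=\Matching=[\alpha,\beta]$, and Observation~\ref{ob:congruent} yields the desired congruence.

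\textbf{Remark on an alternative route.} As a sanity check and a possible simplification, the same balance fact can be viewed through Observation~\ref{ob:extend}: as agents are appended, a $0$-move reuses a hole (leaving $\beta$ unchanged and contributing a $0$-target) while a $1$-move creates a brand-new maximum object (appending to $\beta$). Hence every object carrying a $1$ in $\beta'$ is created, as the running maximum, by the $1$-move of its own agent, which forces any $1$-object bounded by $\MaxMatched{\Matching'_m}=n$ to be matched within $[m]$. I would keep the direct counting argument as the primary proof and mention this append-only view only if it streamlines the verification of the balance condition.
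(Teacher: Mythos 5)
Your proof is correct. The paper states this result as an unproved Observation, like the others in Section~\ref{sec:pair}, so there is no official argument to compare against; what you have written is a complete justification. Your decomposition is the natural one: reduce $[\alpha,\beta]\subseteq[\alpha',\beta']$ to the assertion that the truncated pair $(\alpha'_{1,m},\beta'_{1,n})$ is a valid member of $\Encs{m}{n}$ that decodes to the restriction $\Matching'_m$, and isolate the weight balance $\Weight{\alpha'_{1,m}}=\Weight{\beta'_{1,n}}$ as the only step that can fail. You also correctly identify and close the one genuine danger in a naive count: a priori an object $\Object\leq n$ with $\beta'_{\Object}=1$ could be matched in $[\alpha',\beta']$ to an agent outside $[m]$, and your argument that object $n=\MaxMatched{[\alpha,\beta]}$ must be the index of the $w$-th $1$ of $\beta'$ (all $0$-targets of agents in $[m]$ lie in $[m]$ by Observation~\ref{ob:encode}, so the maximum target is a $1$-target, and all $1$-targets of agents in $[m]$ are bounded by $n$) rules this out. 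Two cosmetic points: your sentence listing the ``inherited'' membership conditions omits the requirement that $m<n$ implies $\beta'_{n}=1$, but you do establish it later (object $n$ is a $1$-target when $m<n$), so nothing is actually missing; and the phrase ``$\Matching=\Matching'_m\in\Encs{m}{n}$'' should read ``$\Matching=\Matching'_m=[\alpha,\beta]$ with $(\alpha,\beta)\in\Encs{m}{n}$.''
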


For any $(\alpha,\beta)$ in $\Encs{m}{n}$, and any object $\Object$ in
$[n-1]$ such that $\beta_\Object=1$ and $\beta_{\Object+1}=0$, we say
that a sort operation is applicable to $(\alpha,\beta)$ across
objects $\Object$ and $\Object+1$.  The result of applying this
operation is the pair of binary strings $(\alpha,\beta')$ that is the
same as $(\alpha,\beta)$ except $\beta'_\Object=0$ and
$\beta'_{\Object+1}=1$.

\begin{ob}\label{ob:sort}
Let $(\alpha,\beta)$ belong to $\Encs{m}{n}$ and let $(\alpha,\beta')$
be the result of applying a sort operation to $(\alpha,\beta)$ across
objects $\Object$ and $\Object+1$. Then $(\alpha,\beta')$ belongs to
$\Encs{m}{n}$.  Furthermore, if $m=n$ then
\[
[\alpha,\beta']=(\Matching
\setminus\{(\Agent,\Object),(\Agent',\Object+1)\})
\cup\{(\Agent',\Object),(\Agent,\Object+1)\}
\]
where $\Matching$ denotes $[\alpha,\beta]$, $\Agent$ denotes
$\Matching^{-1}(\Object)$, and $\Agent'$ denotes
$\Matching^{-1}(\Object+1)$.
\end{ob}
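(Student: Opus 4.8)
The plan is to prove Observation~\ref{ob:sort} in two parts, mirroring its two assertions. First I would verify that $(\alpha,\beta')$ belongs to $\Encs{m}{n}$; then I would establish the explicit description of $[\alpha,\beta']$ under the additional hypothesis $m=n$.

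For the membership claim, recall that the sort operation only modifies $\beta$, leaving $\alpha$ untouched, and it swaps a $1$ in position $\Object$ with a $0$ in position $\Object+1$. Thus $|\alpha|=m$ and $|\beta'|=n$ still hold, and $\Weight{\beta'}=\Weight{\beta}=\Weight{\alpha}$ is preserved since we exchange a $0$ and a $1$. The condition $m<n \Rightarrow \beta'_n=1$ needs a brief check: the only bit whose value could drop to $0$ is position $\Object$, so I would note that if $\Object=n$ the precondition $\beta_{\Object+1}=0$ could not refer to a valid index, hence $\Object\le n-1<n$ and $\beta'_n=\beta_n$ is unaffected. The only nontrivial point is the prefix-weight inequality $\Weight{\alpha_{1,i}}\ge\Weight{\beta'_{1,i}}$ for all $i\in[m]$. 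Since $\beta'$ differs from $\beta$ only in positions $\Object$ and $\Object+1$, and $\Weight{\beta'_{1,i}}=\Weight{\beta_{1,i}}$ for every $i$ except $i=\Object$ (where moving the $1$ rightward strictly \emph{decreases} the prefix weight by one), we have $\Weight{\beta'_{1,i}}\le\Weight{\beta_{1,i}}\le\Weight{\alpha_{1,i}}$ for all $i$. So the inequality is in fact easier to satisfy after sorting, and membership follows.

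For the second assertion, assume $m=n$, so both $\alpha$ and $\beta$ have length $n$ and $\Matching=[\alpha,\beta]$ is a perfect matching on $[n]$. I would unwind the definition of $[\cdot,\cdot]$ at the two affected objects. Let $\Agent=\Matching^{-1}(\Object)$ and $\Agent'=\Matching^{-1}(\Object+1)$. The key is to track how the encoding decodes positions $\Object$ and $\Object+1$: because $\beta_\Object=1$ and $\beta_{\Object+1}=0$, object $\Object$ is the image of some agent sitting at a $1$ in $\alpha$ and object $\Object+1$ is the image of some agent sitting at a $0$, whereas after the sort $\beta'_\Object=0$ and $\beta'_{\Object+1}=1$ reverse these roles. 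Since $\alpha$ is unchanged and $\Weight{\beta'_{1,\Object}}=\Weight{\beta_{1,\Object}}-1$ while $\Weight{\beta'_{1,\Object+1}}=\Weight{\beta_{1,\Object+1}}+1$, the rank-based decoding rule sends $\Agent$ (the $1$-agent previously mapped to $\Object$ via the $1$-count) now to $\Object+1$, and sends $\Agent'$ (the $0$-agent previously mapped to $\Object+1$) now to $\Object$; every other agent's image is unaffected because the running counts of $0$s and $1$s in $\beta'$ agree with those in $\beta$ outside the interval $[\Object,\Object+1]$. This yields exactly $[\alpha,\beta']=(\Matching\setminus\{(\Agent,\Object),(\Agent',\Object+1)\})\cup\{(\Agent',\Object),(\Agent,\Object+1)\}$.

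I expect the main obstacle to be the bookkeeping in the second part: making the decoding argument rigorous requires careful reasoning about how the $i$th-$0$ and $i$th-$1$ counters in $\beta$ versus $\beta'$ line up across the boundary between positions $\Object$ and $\Object+1$, and confirming that no other agent is displaced. The cleanest route is probably to invoke Observation~\ref{ob:encode} to pin down, for each of $\Agent$ and $\Agent'$, whether its $\alpha$-bit is $0$ or $1$ (hence which counter governs its image), and then argue that swapping a single $0$--$1$ adjacent pair in $\beta$ permutes only the two objects whose cumulative counts straddle that pair. Everything else is the routine verification carried out in the first paragraph.
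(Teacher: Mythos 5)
Your proposal is correct. The paper states this result as an Observation and supplies no proof of its own, so there is nothing to diverge from: your two-part verification (prefix weights can only decrease at position $\Object$, so membership in $\Encs{m}{n}$ is preserved; and the swap of an adjacent $1$--$0$ pair in $\beta$ shifts the $j$th $1$ from position $\Object$ to $\Object+1$ and the $k$th $0$ from position $\Object+1$ to $\Object$ while leaving all other ordinal counts, and hence all other agents' images under the decoding, unchanged) is exactly the direct argument the authors intend the reader to carry out.
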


For any $(\alpha,\beta)$ and $(\alpha',\beta')$ in $\Encs{m}{n}$, we
write $\Steps{(\alpha,\beta)}{(\alpha',\beta')}$ to denote that
$(\alpha,\beta)$ can be transformed into $(\alpha',\beta')$ via a
sequence of complement and sort operations.

\begin{ob}\label{ob:encodings}
Let $\alpha$ be a binary string of length $m$, and let $\beta$ be a
binary string of length $n$ such that $m\leq n$.  Then
$(\alpha,\beta)$ belongs to $\Encs{m}{n}$ if and only if
$\Steps{(0^m,0^n)}{(\alpha,\beta)}$.
\end{ob}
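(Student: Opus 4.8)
The statement is a characterization of $\Encs{m}{n}$ as exactly the set of pairs reachable from the all-zeros pair $(0^m,0^n)$ under complement and sort operations, so I would prove the two implications separately. The forward implication ($\Steps{(0^m,0^n)}{(\alpha,\beta)}$ implies $(\alpha,\beta)\in\Encs{m}{n}$) I expect to be routine, whereas the reverse implication (membership in $\Encs{m}{n}$ implies reachability) carries the real content, and I would prove it by an explicit two-phase construction.

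For the forward direction, I would argue by induction on the length of the transforming sequence. The base case is the empty sequence, for which I just check that $(0^m,0^n)$ satisfies the defining conditions of $\Encs{m}{n}$: the prefix condition $\Weight{\alpha_{1,i}}\ge\Weight{\beta_{1,i}}$ and the total-weight condition $\Weight{\alpha}=\Weight{\beta}$ hold since every weight is $0$, and the clause ``$m<n$ implies $\beta_n=1$'' holds vacuously in the relevant seed case $m=n$. For the inductive step I would invoke the two closure facts already recorded in the excerpt — the remark immediately following the definition of the complement operation, and Observation~\ref{ob:sort} for the sort operation — to conclude that each successive pair stays in $\Encs{m}{n}$.

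For the reverse direction, fix $(\alpha,\beta)\in\Encs{m}{n}$. In Phase~A I would transform $(0^m,0^n)$ into the intermediate pair $(\alpha,\hat\gamma)$, where $\hat\gamma$ is the length-$n$ string that agrees with $\alpha$ on its first $m$ bits and is $0$ thereafter, using only complement operations applied at agents $\Agent=1,\ldots,m$ in increasing order and performed exactly at those $\Agent$ with $\alpha_\Agent=1$. Writing $(\alpha',\beta')$ for the current pair just before the step at agent $\Agent$, the applicability conditions hold because bit $\Agent$ is still $0$ in both strings (so $\alpha'_\Agent=\beta'_\Agent$), and the complements already performed have set the first $\Agent-1$ bits of $\alpha'$ and $\beta'$ identically, giving $\Weight{\alpha'_{1,\Agent}}=\Weight{\beta'_{1,\Agent}}$. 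In Phase~B I would transform $(\alpha,\hat\gamma)$ into $(\alpha,\beta)$ using only sort operations, which leave the first string untouched and slide a $1$ of the second string one position to the right. Here I would use the prefix condition $\Weight{\alpha_{1,i}}\ge\Weight{\beta_{1,i}}$: it guarantees that the $j$-th $1$ of $\beta$ lies weakly to the right of the $j$-th $1$ of $\hat\gamma$, so every $1$ need only move rightward, and a standard bubbling argument (driving the rightmost misplaced $1$ to its target first, then recursing) shows the required sequence of sorts exists and terminates exactly at $\beta$.

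I expect Phase~B to be the main obstacle: the delicate part is confirming that the prefix inequality is precisely the condition that lets every $1$ reach its target by rightward adjacent moves without ever needing a (nonexistent) leftward sort, and that the process lands on $\beta$ exactly rather than overshooting — in particular using the clause $\beta_n=1$ when $m<n$ to ensure the rightmost $1$ can be driven all the way to object $n$. A secondary point worth pinning down is the treatment of the seed $(0^m,0^n)$ when $m<n$; I would either state the clean version for $m=n$ and reduce the partial-matching case to it via Observation~\ref{ob:extend}, or verify directly that the two-phase construction keeps every intermediate pair within the domain on which the operations act.
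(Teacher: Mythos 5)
The paper states this as an \emph{Observation} and supplies no proof of its own, so there is nothing to compare against line by line; judged on its own terms, your argument is correct and complete. The forward direction is exactly the intended routine induction (the seed lies in $\Encs{n}{n}$ and both operations preserve membership, by the remark after the complement definition and by Observation~\ref{ob:sort}), and your two-phase construction for the converse works: Phase~A's left-to-right complements are applicable because each prefix of the two strings agrees bit-for-bit at that point, and Phase~B's right-to-left bubbling is sound because the prefix inequality $\Weight{\alpha_{1,i}}\geq\Weight{\beta_{1,i}}$ forces the $j$th $1$ of $\beta$ to sit weakly right of the $j$th $1$ of $\alpha$, and processing the $1$s from rightmost to leftmost guarantees that the interval each $1$ must traverse is free of other $1$s, so every step is a legal sort. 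Your closing caveat about the seed when $m<n$ is well taken and is really a defect of the statement rather than of your proof: for $m<n$ the pair $(0^m,0^n)$ fails the clause ``$m<n$ implies $\beta_n=1$,'' so it is not in $\Encs{m}{n}$ and the operations are, strictly speaking, not defined on it; the observation is only ever invoked with $m=n$ (in the proof of Lemma~\ref{lem:reach}), where your argument applies verbatim, and either of your proposed repairs (restricting to $m=n$ and extending via Observation~\ref{ob:extend}, or relaxing the domain of the operations) would make the general statement precise.
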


For any nonnegative integer $n$, we define $\DecsBinary{n}$ as the set of all
matchings $\Matching$ in $\Mcms{n}$ such that
$\Matching=[\alpha,\beta]$ for some
$(\alpha,\beta)$ in $\Encs{|\Matching|}{\MaxMatched{\Matching}}$.

For any $(\alpha,\beta)$ in $\Encs{n}{n}$, and any agent $\Agent$ in
$[n-1]$ such that
$\Weight{\alpha_{1,\Agent}}>\Weight{\beta_{1,\Agent}}$ and
$\alpha_\Agent=1$, we say that a pivot operation is applicable to
$(\alpha,\beta)$ at agent $\Agent$.  The result of applying this
operation is the pair of binary strings $(\alpha',\beta)$ that is the
same as $(\alpha,\beta)$ except $\alpha'_\Agent=0$ and
$\alpha'_{\Agent'}=1$, where $\Agent'$ denotes the minimum agent
greater than $\Agent$ for which
$\Weight{\alpha_{1,\Agent'}}=\Weight{\beta_{1,\Agent'}}$. (The agent
$\Agent'$ is well-defined since $\Weight{\alpha}=\Weight{\beta}$.)

\begin{ob}\label{ob:pivot}
	Let $(\alpha,\beta)$ belong to $\Encs{n}{n}$, and let
	$(\alpha',\beta)$ be the result of applying a pivot operation to
	$(\alpha,\beta)$ at agent $\Agent$.  Then $(\alpha',\beta)$ belongs to
	$\Encs{n}{n}$ and $\Span{[\alpha',\beta]}{\Agent'}$ is contained in
	$\Span{[\alpha,\beta]}{\Agent'}$ for all agents $\Agent'$ in
	$[n]-\Agent$.
\end{ob}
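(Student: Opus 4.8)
The plan is to track how the pivot operation shifts the ranks of the zeros and ones in $\alpha$, and then read off the induced changes in $[\alpha,\beta]$ directly from those shifts (I rename the universally quantified agent to $c$ to avoid clashing with the pivot target $\Agent'$). Throughout, write $d_i = \Weight{\alpha_{1,i}} - \Weight{\beta_{1,i}}$, and recall from membership in $\Encs{n}{n}$ that $d_i \ge 0$ for all $i$ and $d_n = 0$. By the precondition of the pivot, $d_\Agent > 0$ and $\alpha_\Agent = 1$, and $\Agent'$ is the least index exceeding $\Agent$ with $d_{\Agent'} = 0$; hence $d_j > 0$ for $\Agent \le j < \Agent'$. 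Since $d_{\Agent'-1} > 0$ (as $\Agent \le \Agent'-1 < \Agent'$) while $d_{\Agent'} = 0$, the single step forces $\alpha_{\Agent'} - \beta_{\Agent'} = d_{\Agent'} - d_{\Agent'-1} < 0$, so $\alpha_{\Agent'} = 0$ and $\beta_{\Agent'} = 1$. Thus the pivot relocates a single $1$ in $\alpha$ from $\Agent$ rightward to $\Agent'$, leaving $\Weight{\alpha'} = \Weight{\alpha} = \Weight{\beta}$.

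For the membership claim $(\alpha',\beta) \in \Encs{n}{n}$, the only nontrivial condition is $\Weight{\alpha'_{1,i}} \ge \Weight{\beta_{1,i}}$ for all $i$. Writing $d'_i = \Weight{\alpha'_{1,i}} - \Weight{\beta_{1,i}}$, the relocation gives $d'_i = d_i$ for $i < \Agent$ and for $i \ge \Agent'$ (where both the removed and the added $1$ are counted, for a net of zero), while $d'_i = d_i - 1$ for $\Agent \le i < \Agent'$ (only the removed $1$ is counted). The first two cases are $\ge 0$ since $d_i \ge 0$, and the middle case is $\ge 0$ precisely because $d_i > 0$ on that range. The remaining defining conditions are immediate: equal lengths, equal weights, and the final-bit condition is vacuous when $m = n$.

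For the span claim, I would note that $\Matching(c)$ is determined by the type of $\alpha_c$ together with its rank among symbols of that type (since $\beta$ is fixed), so only agents whose rank changes can move. The prefix computation above shows that for $c < \Agent$ and for $c > \Agent'$ the counts of zeros and ones in $\alpha'_{1,c}$ agree with those in $\alpha_{1,c}$, so $\Matching'(c) = \Matching(c)$ and the span is unchanged. For an interior agent $\Agent < c < \Agent'$ with $\alpha_c = 0$, its rank among zeros increases by one (the new zero at $\Agent$), so $\Matching'(c)$ is the next zero of $\beta$, giving $\Matching(c) < \Matching'(c)$; moreover Observation~\ref{ob:encode} applied to both $[\alpha,\beta]$ and $[\alpha',\beta]$ forces $\Matching(c) \le c$ and $\Matching'(c) \le c$, whence $\Matching(c) < \Matching'(c) \le c$ and $\Span{[\alpha',\beta]}{c} \subseteq \Span{[\alpha,\beta]}{c}$. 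Symmetrically, if $\alpha_c = 1$ its rank among ones drops by one, so $\Matching'(c) < \Matching(c)$ while both lie at or above $c$, again yielding containment. Finally, for $c = \Agent'$, I would observe $\Weight{\alpha'_{1,\Agent'}} = \Weight{\alpha_{1,\Agent'}} = \Weight{\beta_{1,\Agent'}}$, and since $\beta_{\Agent'} = 1$ the $\Weight{\beta_{1,\Agent'}}$th $1$ in $\beta$ sits exactly at $\Agent'$; hence $\Matching'(\Agent') = \Agent'$ and $\Span{[\alpha',\beta]}{\Agent'} = \{\Agent'\}$ is trivially contained in $\Span{[\alpha,\beta]}{\Agent'}$.

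The main obstacle is bookkeeping rather than conceptual: I must track the rank shifts in the window $(\Agent,\Agent')$ correctly and pair each shift with the correct side-of-$c$ information from Observation~\ref{ob:encode}, so that ``the match moves by one rank in a fixed direction'' translates cleanly into one-sided span containment. The boundary agent $\Agent'$ is the delicate case, and the clean resolution is that its new match is exactly its home object $\Agent'$, collapsing its span to a singleton.
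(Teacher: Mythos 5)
Your proof is correct. The paper states Observation~\ref{ob:pivot} without proof, so the intended argument is exactly the kind of direct verification you give: the prefix-difference bookkeeping shows $\alpha_{\Agent'}=0$ and $\beta_{\Agent'}=1$ (so the pivot genuinely relocates one $1$ rightward and preserves membership in $\Encs{n}{n}$), and the rank-shift analysis correctly splits the agents into the unaffected ranges $c<\Agent$ and $c>\Agent'$, the interior window where the match moves by one rank strictly toward $c$ (with Observation~\ref{ob:encode} pinning down the side of $c$ on which both matches lie), and the boundary agent $\Agent'$ whose new match collapses to $\Agent'$ itself.
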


For any $(\alpha,\beta)$ in $\Encs{n}{n}$, and any object $\Object$ in
$[n-1]$ such that
$\Weight{\alpha_{1,\Object}}>\Weight{\beta_{1,\Object}}$,
$\beta_\Object=0$, and $\beta_{\Object+1}=1$, we say that an unsort
operation is applicable to $(\alpha,\beta)$ across objects $\Object$
and $\Object+1$.  The result of applying this operation is the pair of
binary strings $(\alpha,\beta')$ that is the same as $(\alpha,\beta)$
except $\beta'_\Object=1$ and $\beta'_{\Object+1}=0$.

\begin{ob}\label{ob:unsort}
	Let $(\alpha,\beta)$ belong to $\Encs{n}{n}$, and let
	$(\alpha,\beta')$ be the result of applying an unsort operation to
	$(\alpha,\beta)$ across objects $\Object$ and $\Object+1$. Then
	$(\alpha,\beta')$ belongs to $\Encs{n}{n}$ and
	$\Span{[\alpha,\beta']}{\Agent}$ is contained in
	$\Span{[\alpha,\beta]}{\Agent}$ for all agents $\Agent$ in $[n]$.
\end{ob}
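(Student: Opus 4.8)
The plan is to verify the two assertions separately: first that $(\alpha,\beta')$ is a legal encoding, and then that the induced matching moves the two affected agents strictly inward. Write $\Matching$ for $[\alpha,\beta]$ and $\Matching'$ for $[\alpha,\beta']$. For membership in $\Encs{n}{n}$, the lengths are unchanged and swapping a $0$ and a $1$ in $\beta$ preserves Hamming weight, so $\Weight{\beta'}=\Weight{\beta}=\Weight{\alpha}$. Editing only positions $\Object$ and $\Object+1$ leaves $\Weight{\beta'_{1,i}}=\Weight{\beta_{1,i}}$ for every $i\ne\Object$, so the prefix-weight condition $\Weight{\alpha_{1,i}}\ge\Weight{\beta'_{1,i}}$ is inherited there from $(\alpha,\beta)\in\Encs{n}{n}$; at $i=\Object$ we have $\Weight{\beta'_{1,\Object}}=\Weight{\beta_{1,\Object}}+1$, so the condition reduces to exactly the applicability hypothesis $\Weight{\alpha_{1,\Object}}>\Weight{\beta_{1,\Object}}$. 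This is the only place the precondition is used for membership.

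Next I would pin down how $\Matching'$ differs from $\Matching$. Since $\alpha$ is unchanged, the split into $0$-agents and $1$-agents and their left-to-right order is identical in both encodings, and only the positions of the $z$th $0$ and the $o$th $1$ of $\beta$ are altered, where $z$ (resp.\ $o$) is the number of $0$s (resp.\ $1$s) in $\beta_{1,\Object+1}$. Concretely, the agent $\Agent_0:=\Matching^{-1}(\Object)$ (a $0$-agent) is remapped from $\Object$ to $\Object+1$, the agent $\Agent_1:=\Matching^{-1}(\Object+1)$ (a $1$-agent) is remapped from $\Object+1$ to $\Object$, and every other agent keeps its match. This can be read off the encoding directly, or invoked from Observation~\ref{ob:sort} run backwards, since the unsort across $\Object,\Object+1$ is the inverse of a sort across $\Object,\Object+1$ applied to $(\alpha,\beta')$, which is legal once membership is known.

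The crux is to show $\Agent_0>\Object$ and $\Agent_1\le\Object$, and this is where the precondition does its real work. Counting zeros, the number of $0$s in $\alpha_{1,\Object}$ is $\Object-\Weight{\alpha_{1,\Object}}<\Object-\Weight{\beta_{1,\Object}}=z$, so the $z$th $0$ of $\alpha$ lies strictly beyond position $\Object$, giving $\Agent_0>\Object$. Symmetrically, the number of $1$s in $\alpha_{1,\Object}$ is $\Weight{\alpha_{1,\Object}}\ge\Weight{\beta_{1,\Object}}+1=o$, so the $o$th $1$ of $\alpha$ lies at or before position $\Object$, giving $\Agent_1\le\Object$. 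These inequalities make both spans shrink: since $\Agent_0\ge\Object+1$, $\Span{\Matching}{\Agent_0}=[\Object,\Agent_0]\supseteq[\Object+1,\Agent_0]=\Span{\Matching'}{\Agent_0}$, and since $\Agent_1\le\Object$, $\Span{\Matching}{\Agent_1}=[\Agent_1,\Object+1]\supseteq[\Agent_1,\Object]=\Span{\Matching'}{\Agent_1}$; all other agents have identical spans, so $\Span{\Matching'}{\Agent}\subseteq\Span{\Matching}{\Agent}$ for every $\Agent$ in $[n]$. The main obstacle I anticipate is establishing these two inequalities cleanly: they are precisely what rules out the degenerate case $\Agent_0=\Object$, in which the span of $\Agent_0$ would grow rather than shrink, so without the precondition the containment would simply be false.
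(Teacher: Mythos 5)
Your proof is correct. The paper states this as an Observation without an accompanying proof, and your argument is precisely the direct verification the authors intend: membership in $\Encs{n}{n}$ reduces to the prefix-weight condition at index $\Object$, where it is exactly the applicability hypothesis $\Weight{\alpha_{1,\Object}}>\Weight{\beta_{1,\Object}}$; the only agents whose matches change are $\Matching^{-1}(\Object)$ and $\Matching^{-1}(\Object+1)$; and that same hypothesis forces $\Matching^{-1}(\Object)>\Object$ and $\Matching^{-1}(\Object+1)\leq\Object$, so both spans shrink. You also correctly identify why the precondition is indispensable (otherwise the span of $\Matching^{-1}(\Object)$ could grow).
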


\subsubsection{Structural Properties of Matchings in $\Decs{n}$}

Claim \ref{claim:decodings} below gives a a simple characterization of
matchings in $\DecsBinary{n}$, and hence implies that $\DecsBinary{n} = \Decs{n}$.
\begin{claim}\label{claim:decodings}
 Let $\Matching$ be a matching in $\DecsBinary{n}$ such that $|\Matching|<n$,
 let $\Agent$ denote $|\Matching|$, let $\Agent'$ denote $\Agent+1$,
 let $\Object$ denote $\MaxMatched{\Matching}$, and let $\Matching'$
 denote $\Matching+(\Agent',\Object^*)$. Then $\Matching'$ belongs to
 $\DecsBinary{n}$ if and only if $\Object^*=\MinUnmatched{\Matching}$ or
 $\Object<\Object^*\leq n$.
\end{claim}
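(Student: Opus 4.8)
The plan is to prove the two directions separately, relying on the encoding machinery of Section~\ref{sec:pair}. Since $\Matching\in\DecsBinary{n}$ and $\MaxMatched{\Matching}=\Object$, fix $(\alpha,\beta)\in\Encs{\Agent}{\Object}$ with $[\alpha,\beta]=\Matching$. Two preliminary facts will be used repeatedly: $\Object\geq\Agent$ always (we match $\Agent$ agents to distinct objects whose maximum is $\Object$), and $\MinUnmatched{\Matching}\le\Object$ precisely when $\Agent<\Object$, whereas $\Agent=\Object$ forces $\MinUnmatched{\Matching}=\Object+1$ because then all of $[\Object]$ is matched.

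For the ``if'' direction I would apply Observation~\ref{ob:extend} directly. If $\Object^*=\MinUnmatched{\Matching}\le\Object$, this forces $\Agent<\Object$, so $(\alpha 0,\beta)\in\Encs{\Agent+1}{\Object}$ decodes to $\Matching+(\Agent+1,\MinUnmatched{\Matching})=\Matching'$; since $\MaxMatched{\Matching'}=\Object$, this witnesses $\Matching'\in\DecsBinary{n}$. If instead $\Object<\Object^*\le n$, then setting $k=\Object^*-\Object-1\ge 0$, Observation~\ref{ob:extend} gives $(\alpha 1,\beta 0^k1)\in\Encs{\Agent+1}{\Object^*}$ decoding to $\Matching+(\Agent+1,\Object^*)=\Matching'$, and $\MaxMatched{\Matching'}=\Object^*$, so again $\Matching'\in\DecsBinary{n}$. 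The sub-case $\Object^*=\MinUnmatched{\Matching}=\Object+1$ (which arises only when $\Agent=\Object$) is absorbed into this second branch.

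For the ``only if'' direction I would argue the contrapositive. Assuming $\Object^*\neq\MinUnmatched{\Matching}$ and not $\Object<\Object^*\le n$, the fact that $\Matching'$ is a matching in $\Mcms{n}$ gives $\Object^*\le n$, hence $\Object^*\le\Object$ with $\Object^*$ unmatched in $\Matching$; writing $u=\MinUnmatched{\Matching}$ we then have $u<\Object^*\le\Object$, with $u$ unmatched in both $\Matching$ and $\Matching'$. Suppose for contradiction $\Matching'=[\alpha',\beta']\in\DecsBinary{n}$; since $\MaxMatched{\Matching'}=\Object$ we get $(\alpha',\beta')\in\Encs{\Agent+1}{\Object}$ and $|\beta'|=\Object$. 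The key structural fact, which follows from $\Weight{\alpha'}=\Weight{\beta'}$, is that the decoding $[\alpha',\beta']$ matches every $1$-position of $\beta'$ and matches the $0$-positions of $\beta'$ to the $0$-positions of $\alpha'$ in increasing order; consequently the unmatched objects are exactly the trailing $0$-positions of $\beta'$, and $u$ (a $0$-position of $\beta'$ by Observation~\ref{ob:encode}) is the first such trailing $0$-position.

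I would finish with a case analysis on the last bit $\alpha'_{\Agent+1}$. If $\alpha'_{\Agent+1}=1$, then agent $\Agent+1$ is matched to the last $1$-position of $\beta'$, making $\Object^*$ the maximum matched object; but $\Object^*<\Object$, and since the boundary cases $\Agent=\Object$ and $\Agent+1=\Object$ are excluded (both would leave no unmatched object $\le\Object$), we have $\Agent+1<\Object$, which forces $\beta'_{\Object}=1$ by the definition of $\Encs{\Agent+1}{\Object}$, so object $\Object>\Object^*$ is matched, a contradiction. If $\alpha'_{\Agent+1}=0$, then agent $\Agent+1$ is matched to the last matched $0$-position of $\beta'$, which lies strictly before the first trailing $0$-position $u$, giving $\Object^*<u$ and contradicting $u<\Object^*$. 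Either way no valid encoding exists, so $\Matching'\notin\DecsBinary{n}$. I expect the main obstacle to be the ``only if'' direction: cleanly establishing the decoding structure and using it to pin down that the final $0$-agent is forced onto the object immediately preceding the first hole, together with the careful handling of the boundary cases $\Agent+1=\Object$ and $\Object^*=\Object$ via the condition ``$m<n$ implies $\beta_n=1$'' in the definition of $\Encs{m}{n}$.
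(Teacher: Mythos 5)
Your proof is correct and follows essentially the same route as the paper's: the ``if'' direction is a direct application of Observation~\ref{ob:extend}, and the ``only if'' direction rests on the same case split on the final bit $\alpha'_{\Agent'}$ of the encoding of $\Matching'$, with your contrapositive phrasing and direct count of matched versus unmatched $0$-positions of $\beta'$ playing the role of the paper's appeal to Observations~\ref{ob:congruent} and~\ref{ob:prefix}. The only blemish is the intermediate assertion in the $\alpha'_{\Agent'}=1$ case that $\Object^*$ becomes ``the maximum matched object,'' which does not follow from $\Object^*$ being the last $1$-position alone; however, the contradiction you actually derive (that $\beta'_{\Object}=1$ with $\Object>\Object^*$ while $\Object^*$ is supposedly the last $1$-position of $\beta'$) is valid once $\Agent'<\Object$ is established, so nothing essential is lost.
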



\renewcommand{\Decs}[1]{\DecsBinary{#1}}
\begin{proof}
Since $\Matching$ belongs to $\Decs{n}$, there exists $(\alpha,\beta)$
in $\Encs{\Agent}{\Object}$ such that $\Matching=[\alpha,\beta]$.
Let $\Object'$ denote $\MaxMatched{\Matching'}$.

For the ``if'' direction, we need to prove that there exists
$(\alpha',\beta')$ in $\Encs{\Agent'}{\Object'}$ such that
$\Matching'=[\alpha',\beta']$.  We consider two cases.

Case~1: $\Object^*=\MinUnmatched{\Matching}\leq\Object$.
Observation~\ref{ob:extend} implies that $(\alpha 0,\beta)$ belongs to
$\Encs{\Agent'}{\Object'}$ and $\Matching'=[\alpha 0,\beta]$.
	
Case~2: $\Object<\Object^*\leq n$. Let $k$ denote
$\Object^*-\Object-1$. Observation~\ref{ob:extend} implies that
$(\alpha 1,\beta 0^k1)$ belongs to $\Encs{\Agent'}{\Object'}$ and
$\Matching'=[\alpha 1,\beta 0^k1]$.

We now address the ``only if'' direction. Assume that
$\Matching'$ belongs to $\Decs{n}$.  Since
$\Matching'=\Matching+(\Agent',\Object^*)$, we deduce that $\Object^*$
belongs to $[n]\setminus\ObjectsOf{\Matching}$. It remains to prove that
$\Object^*=\MinUnmatched{\Matching}$ or $\Object<\Object^*$.  Let
$\Objects$ denote the set of objects
$[\Object]\setminus\ObjectsOf{\Matching}$. We consider two cases.

Case~1: $\Agent=\Object$. Thus $\Objects=\emptyset$ and since
$\Object^*$ is unmatched in $\Matching$, we have $\Object<\Object^*$.

Case~2: $\Agent<\Object$. Since $\Matching'$ belongs to $\Decs{n}$,
there exists $(\alpha',\beta')$ in $\Encs{\Agent'}{\Object'}$ such
that $\Matching'=[\alpha',\beta']$. Observation~\ref{ob:prefix}
implies that $(\alpha'_{1,\Agent},\beta'_{1,\Object})$ belongs to
$\Encs{\Agent}{\Object}$ and
$(\alpha'_{1,\Agent},\beta'_{1,\Object})\Congruent(\alpha,\beta)$.
Since\\
$(\alpha'_{1,\Agent},\beta'_{1,\Object})\Congruent(\alpha,\beta)$,
Observation~\ref{ob:congruent} implies that
$[\alpha'_{1,\Agent},\beta'_{1,\Object}]=[\alpha,\beta]=\Matching$.
Let $\Objects_0$ denote the set of all objects $i$ in $\Objects$ such
that $\beta'_i=0$. Observation~\ref{ob:encode} implies that
$\Objects_0=\Objects$. We consider two cases.

Case~2.1: $\alpha'_{\Agent'}=1$. Since
$\Matching'(\Agent')=\Object^*$, Observation~\ref{ob:encode} implies
that $\beta'_{\Object^*}=1$. Thus $\Object^*$ does not belong to
$\Objects=\Objects_0$. Since $\Object^*$ is unmatched in $\Matching$,
we conclude that $\Object<\Object^*$.

Case~2.2: $\alpha'_{\Agent'}=0$. Since $\Agent<\Object$ and
$(\alpha'_{1,\Agent},\beta'_{1,\Object})$ belongs to
$\Encs{\Agent}{\Object}$, we have $\beta'_\Object=1$ and 
$\Weight{\alpha'_{1,\Agent}}=\Weight{\beta'_{1,\Object}}>\Weight{\beta'_{1,\Agent}}$.
Let $k$ denote the number of $0$'s in $\alpha'_{1,\Agent}$, and let
$\ell$ denote the number of $0$'s in $\beta'_{1,\Agent}$.  Since
$\Weight{\alpha'_{1,\Agent}}>\Weight{\beta'_{1,\Agent}}$, we have
$\ell>k$.  Let $\Objects'$ denote the indices of the first $k$ $0$'s
in $\beta'_{1,\Agent}$, and let $\Objects''$ denote the indices of the
remaining $\ell-k$ $0$'s in $\beta'_{1,\Agent}$.  Since
$\Matching=[\alpha'_{1,\Agent},\beta'_{1,\Object}]$, we deduce that
the objects in $\Objects'$ are all matched in $\Matching$ and the
objects in $\Objects''$ are all unmatched in $\Matching$. Thus
$\Objects\cap[\Agent]=\Objects_0\cap[\Agent]=\Objects''\not=\emptyset$,
It follows that $\MinUnmatched{\Matching}$ is the minimum object in
$\Objects''$.  Since $\Matching'=[\alpha',\beta']$,
$\alpha'_{\Agent'}=0$, and there are $k+1$ $0$'s in $\alpha'$, we deduce
that $\Object^*$ is the minimum object in $\Objects''$. Thus
$\Object^*=\MinUnmatched{\Matching}$.
\end{proof}

\renewcommand{\Decs}[1]{\MatchingsSymbol^*(#1)}

Lemma~\ref{lem:reach} below establishes a one-to-one
correspondence between matchings that are reachable from $\Config_\Oaf$
and matchings in $\Decs{n}$. 
\begin{restatable}{lemma}{reach}\label{lem:reach}
Let $\Oaf=(n,\Prefs)$ be an OAF. Then $\Matching$ belongs to
$\Reach{\Config_\Oaf}$ if and only if $\Matching$ belongs to
$\Decs{n}$, $|\Matching|=n$, and $\IndRat{\Prefs}{\Matching}$ holds.
\end{restatable}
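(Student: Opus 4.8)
The plan is to prove the two implications separately, treating membership in $\Decs{n}$ as a preference-independent combinatorial condition and $\IndRat{\Prefs}{\Matching}$ as the preference-dependent one. Throughout I use the identification of perfect matchings in $\Decs{n}$ with pairs $(\alpha,\beta)$ in $\Encs{n}{n}$ via $\Matching=[\alpha,\beta]$ (Claim~\ref{claim:decodings}), so that the operations of Section~\ref{sec:pair} are available. The common starting point for both directions is the remark that, in any swap sequence, a single agent $\Agent$ only participates in swaps that move it to a strictly more preferred adjacent object; hence the objects it occupies form a strictly $\Prefs_\Agent$-increasing, repetition-free walk on the path $1,\dots,n$, which must therefore be monotone. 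Consequently $\Matching(\Agent)$ is reached from its home object $\Agent$ by moves in a single direction, and the defining chains of $\Left{\Prefs}{\Agent}$ and $\Right{\Prefs}{\Agent}$ immediately give $\Left{\Prefs}{\Agent}\le\Matching(\Agent)\le\Right{\Prefs}{\Agent}$. This settles $\IndRat{\Prefs}{\Matching}$ for the forward direction; perfectness is immediate, since a swap only exchanges two objects.

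For membership in $\Decs{n}$ (forward direction) I induct on the length of the swap sequence. The base case $\Matching_\Oaf=[0^n,0^n]$ lies in $\Decs{n}$. For the inductive step, an applicable swap is Pareto-improving, so by the monotonicity above it moves an agent with home $\le\Object$ rightward across $(\Object,\Object+1)$ and an agent with home $\ge\Object+1$ leftward. After passing to a suitable representative $(\alpha,\beta)$ of the current matching (using Observations~\ref{ob:congruent} and~\ref{ob:encode} to arrange $\beta_\Object=1$ and $\beta_{\Object+1}=0$), this swap realizes exactly the sort operation of Observation~\ref{ob:sort}, whose output again lies in $\Encs{n}{n}$, i.e.\ in $\Decs{n}$. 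It is essential here that the swap be Pareto-improving: an arbitrary adjacent swap can leave $\Decs{n}$, and only the sort configuration forced by monotonicity keeps us inside it.

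For the converse I induct on the total span $\sum_{\Agent}|\Span{\Matching}{\Agent}|$ of a perfect $\Matching\in\Decs{n}$ satisfying $\IndRat{\Prefs}{\Matching}$. If $\Matching=\Matching_\Oaf$ there is nothing to prove. Otherwise, writing $\Matching=[\alpha,\beta]$ with $(\alpha,\beta)\in\Encs{n}{n}$, the prefix-weight potential $\Weight{\alpha_{1,i}}-\Weight{\beta_{1,i}}$ is nonnegative, vanishes at $i=0$ and $i=n$, and is positive somewhere (else $\alpha=\beta$ and $\Matching=\Matching_\Oaf$); tracing it shows $\beta$ has an ascent $\beta_\Object=0$, $\beta_{\Object+1}=1$ at a position where the potential is positive, so an unsort is applicable. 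Let $\Matching^-$ be the resulting matching. By Observation~\ref{ob:unsort}, $\Matching^-\in\Encs{n}{n}=\Decs{n}$ and every span of $\Matching^-$ is contained in the corresponding span of $\Matching$, with a strict drop; moreover span containment preserves $\IndRat{\Prefs}{\Matching^-}$, because each home object satisfies $\Left{\Prefs}{\Agent}\le\Agent\le\Right{\Prefs}{\Agent}$ and shrinking a span toward home keeps the match inside $[\Left{\Prefs}{\Agent},\Right{\Prefs}{\Agent}]$. Thus $\Matching^-$ is reachable by the inductive hypothesis. Finally, reading Observation~\ref{ob:sort} backwards, the unsort is a single swap across $(\Object,\Object+1)$ taking $\Matching^-$ to $\Matching$; in it the rightward-moving agent has home $\le\Object<\Matching(\Agent)$ and the leftward-moving agent has home $\ge\Object+1>\Matching(\Agent)$, so each relevant step lies on that agent's monotone improving chain guaranteed by $\IndRat{\Prefs}{\Matching}$, whence the swap is Pareto-improving. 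Hence $\Swaps{\Oaf}{\Matching_\Oaf}{\Matching}$.

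The main obstacle is the converse: one must guarantee that a span-reducing step is always available when $\Matching\neq\Matching_\Oaf$ and that it is realized by a genuine Pareto-improving swap. The key lever is the span-containment conclusion of Observation~\ref{ob:unsort} (Observation~\ref{ob:pivot} offers a comparable tool), which simultaneously makes the induction well-founded, keeps the reduced matching in $\Decs{n}$, and certifies that individual rationality is inherited; combined with the availability argument via the prefix-weight potential, this drives the whole construction. The remaining bookkeeping---selecting the encoding representative in the forward step (whose existence follows from the fact that a home agent in a matching of $\Decs{n}$ has a self-contained prefix, so the relevant prefix weights balance) and verifying the potential argument in the converse---is routine given Observations~\ref{ob:encode}--\ref{ob:unsort}.
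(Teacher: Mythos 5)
Your proof is correct, and the forward direction is essentially the paper's argument: each agent's trajectory is a strictly improving, hence monotone, walk on the path, which yields $\IndRat{\Prefs}{\Matching}$ directly, and after normalizing with complement operations the Pareto-improving swap across $(\Object,\Object+1)$ is realized by a sort operation, keeping the matching in $\Decs{n}$. The converse, however, is organized genuinely differently. The paper invokes Observation~\ref{ob:encodings} to obtain a complement/sort sequence from $(0^n,0^n)$ to a representative of $\Matching$, shows that spans only grow along that sequence, and propagates individual rationality backward from the target so that every sort is certified as a Pareto-improving swap. You instead run a backward induction on total span, peeling off one unsort at a time: the prefix-weight potential supplies an applicable unsort (take the largest $i$ with positive potential, where necessarily $\beta_{i+1}=1$, and walk left while $\beta$ reads $1$; the potential stays positive and cannot remain positive down to index $0$, so an ascent $\beta_\Object=0$, $\beta_{\Object+1}=1$ with positive potential must appear), and Observation~\ref{ob:unsort} gives strict span decrease, membership in $\Encs{n}{n}$, and inheritance of individual rationality, after which reading the unsort backwards is a single swap certified by the chains defining $\Left{\Prefs}{\Agent}$ and $\Right{\Prefs}{\Agent}$. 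Your route is more self-contained---the paper leans on Observation~\ref{ob:encodings}, which is asserted without proof, whereas you construct the sequence explicitly---at the cost of the availability argument, which you defer as routine but which does check out. Two cosmetic points: the identification $\Decs{n}=\DecsBinary{n}$ is stated in the sentence introducing Claim~\ref{claim:decodings} rather than being the content of that claim, and ``$\Matching^-\in\Encs{n}{n}$'' should read $(\alpha,\beta')\in\Encs{n}{n}$ and hence $\Matching^-\in\Decs{n}$.
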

\punt{
\begin{proof}
First we prove the ``only if'' direction. Suppose that $\Matching$
belongs to $\Reach{\Config_\Oaf}$. Then there is a sequence
$\Matching_\Oaf=\Matching_0,\ldots,\Matching_k=\Matching$ of perfect
matchings of $\Oaf$ such that
$\Swap{\Oaf}{\Matching_{i-1}}{\Matching_i}$ for all $i$ in $[k]$.

For any $i$ such that $0\leq i\leq k$, let $P(i)$ denote the predicate
asserting that the following conditions hold: $\Matching_i$ belongs to
$\Decs{n}$; $|\Matching_i|=n$; $\IndRat{\Prefs}{\Matching_i}$
holds. We prove by induction on $i$ that $P(i)$ holds for all $i$ in
$\{0,\ldots,k\}$.  Using the definition of $\Matching_\Oaf$, it is
easy to see that $P(0)$ holds. Now consider the induction step. Fix
$i$ in $[k]$ and assume that $P(i-1)$ holds.  We need to prove that
$P(i)$ holds. Since $P(i-1)$ holds, we know that $\Matching_{i-1}$
belongs to $\Decs{n}$, $|\Matching_{i-1}|=n$, and
$\IndRat{\Prefs}{\Matching_{i-1}}$.  Since $\Matching_{i-1}$ belongs
to $\Decs{n}$ and $|\Matching_{i-1}|=n$, there exists $(\alpha,\beta)$
in $\Encs{n}{n}$ such that $\Matching_{i-1}=[\alpha,\beta]$. Let
$\Object$ denote the object in $[n-1]$ such that
$\Swap{\Oaf,(\Object,\Object+1)}{\Matching_{i-1}}{\Matching_i}$.

Let $(\alpha',\beta')$ and $(\alpha'',\beta'')$ be defined as
follows. First, if a complement operation is applicable to
$(\alpha,\beta)$ at agent $\Object$ and $\beta_{\Object}=0$, then
$(\alpha',\beta')$ is the result of applying this operation to
$(\alpha,\beta)$, and otherwise $(\alpha',\beta')$ is equal to
$(\alpha,\beta)$.  Second, if a complement operation is applicable to
$(\alpha,\beta)$ at agent $\Object+1$ and $\beta_{\Object+1}=1$, then
$(\alpha'',\beta'')$ is the result of applying this operation to
$(\alpha',\beta')$, and otherwise $(\alpha'',\beta'')$ is equal to
$(\alpha',\beta')$. Observation~\ref{ob:congruent} implies that
$(\alpha'',\beta'')$ belongs to $\Encs{n}{n}$ and
$\Matching_{i-1}=[\alpha'',\beta'']$.

Using Observation~\ref{ob:encode}, it is straightforward to prove that
$\beta''_\Object=1$ and $\beta''_{\Object+1}=0$. Thus a sort operation
is applicable to $(\alpha'',\beta'')$ across objects $\Object$ and
$\Object+1$; let $(\alpha'',\beta''')$ denote the result of applying
this sort operation.  Observation~\ref{ob:sort} implies that
$(\alpha'',\beta''')$ belongs to $\Encs{n}{n}$ and
$\Matching_i=[\alpha'',\beta''']$. Thus $\Matching_i$ belongs to
$\Decs{n}$ and $|\Matching_i|=n$.  Since $\IndRat{\Prefs}{\Matching_{i-1}}$
holds and $\Swap{\Oaf}{\Matching_{i-1}}{\Matching_i}$, we deduce that
$\IndRat{\Prefs}{\Matching_i}$ holds. We conclude that $P(i)$ holds,
completing the proof by induction.

We now prove the ``if'' direction.  Assume that $\Matching$ belongs to
$\Decs{n}$, $|\Matching|=n$, and $\IndRat{\Prefs}{\Matching}$
holds. Observation~\ref{ob:encodings} implies there exists
$(\alpha,\beta)$ in $\Encs{n}{n}$ such that
$\Steps{(0^n,0^n)}{(\alpha,\beta)}$. It follows that there is a
sequence
$(0^n,0^n)=(\alpha^{(0)},\beta^{(0)}),\ldots,(\alpha^{(k)},\beta^{(k)})=(\alpha,\beta)$
of pairs in $\Encs{n}{n}$ such that an applicable complement or sort
operation transforms $(\alpha^{(i-1)},\beta^{(i-1)})$ into
$(\alpha^{(i)},\beta^{(i)})$ for all $i$ in $[k]$.  Let $\Matching_i$
denote $[\alpha^{(i)},\beta^{(i)}]$ for all $i$ such that $0\leq i\leq
k$. Thus $\Matching_0=\Matching_\Oaf$ and $\Matching_k=\Matching$.

Observations~\ref{ob:congruent} and~\ref{ob:sort} imply that for all
$i$ in $[k]$, either $\Matching_i=\Matching_{i-1}$ or $\Matching_i$ is
obtained from $\Matching_{i-1}$ via an exchange across two adjacent
objects. It remains to prove that any such exchanges are swaps, i.e.,
do not violate individual rationality. Below we accomplish this by
proving that $\IndRat{\Prefs}{\Matching_i}$ holds for $0\leq i\leq k$.

For any $i$ in $[k]$, let $P(i)$ denote the predicate
$\Span{\Matching_{i-1}}{\Agent}\subseteq\Span{\Matching_i}{\Agent}$
for all agents $\Agent$ in $[n]$.  We claim that $P(i)$ holds for all
$i$ in $[k]$. To prove the claim, fix an integer $i$ in $[k]$.  We
consider two cases.

Case~1: A complement operation transforms
$(\alpha^{(i-1)},\beta^{(i-1)})$ into
$(\alpha^{(i)},\beta^{(i)})$.  In this case,
Observation~\ref{ob:congruent} implies that
$\Matching_i=\Matching_{i-1}$. Thus
$\Span{\Matching_{i-1}}{\Agent}=\Span{\Matching_i}{\Agent}$ for all
agents $\Agent$ in $[n]$.

Case~2: A sort operation transforms $(\alpha^{(i-1)},\beta^{(i-1)})$
into $(\alpha^{(i)},\beta^{(i)})$. Assume that the sort operation
is applied to $(\alpha^{(i-1)},\beta^{(i-1)})$ across objects
$\Object$ and $\Object+1$. Then $\beta_\Object^{(i-1)}=1$ and
$\beta_{\Object+1}^{(i)}=0$, and Observation~\ref{ob:encode} implies
that $\Matching_{i-1}^{-1}(\Object)\leq\Object$ and
$\Matching_{i-1}^{-1}(\Object+1)\geq\Object+1$. Thus
Observation~\ref{ob:sort} implies
$\Span{\Matching_{i-1}}{\Agent}\subseteq\Span{\Matching_i}{\Agent}$
for all agents $\Agent$ in $[n]$.

Since $P(i)$ holds for all $i$ in $[k]$ and
$\IndRat{\Prefs}{\Matching_k}$ holds, we deduce that
$\IndRat{\Prefs}{\Matching_i}$ holds for all $i$ such that $0\leq
i\leq k$, as required.
\end{proof}

}

The next three lemmas are concerned with enlarging a given matching
$\Matching$ in $\Decs{n}$ such that $|\Matching|<n$ by introducing a
suitable match for agent $|\Matching|+1$. Lemma~\ref{lem:left}
(resp., Lemma~\ref{lem:right}) addresses the case where agent
$|\Matching|+1$ is matched to an object that is at most (resp., at
least) $|\Matching|+1$. By combining these two lemmas, we obtain Lemma~\ref{lem:options}, which shows that it is
sufficient to consider matching agent $|\Matching|+1$ with an object
in $\{\MinUnmatched{\Matching},\MaxMatched{\Matching}+1\}$.

\begin{restatable}{lemma}{left}\label{lem:left}
Let $\Oaf=(n,\Prefs)$ be an OAF, let $\Matching$ be a matching in
$\Decs{n}$ such that $|\Matching|<n$, let $\Agent$ denote
$|\Matching|$, let $\Agent'$ denote $\Agent+1$, let $\Object$ denote
$\MaxMatched{\Matching}$, and let $\Matching'$ denote
$\Matching+(\Agent',\MinUnmatched{\Matching})$. Assume that
$\Agent<\Object$,
$\ReachAgain{\Config_\Oaf}{\Matching}\not=\emptyset$, and
$\Left{\Prefs}{\Agent'}\leq\MinUnmatched{\Matching}$. Then
$\ReachAgain{\Config_\Oaf}{\Matching'}\not=\emptyset$.
\end{restatable}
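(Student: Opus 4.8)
The plan is to start from a witness $\nu\in\ReachAgain{\Config_\Oaf}{\Matching}$ (which exists by hypothesis) and surgically reroute agent $\Agent'$ to $\MinUnmatched{\Matching}$ while staying inside $\Reach{\Config_\Oaf}$. By Lemma~\ref{lem:reach}, $\nu$ is a perfect matching in $\Decs{n}$ with $\IndRat{\Prefs}{\nu}$, and $\nu\supseteq\Matching$. Since $\Decs{n}=\DecsBinary{n}$, I would write $\nu=[\alpha,\beta]$ for some $(\alpha,\beta)\in\Encs{n}{n}$. The length-$\Agent'$ prefix of $\nu$ equals $\Matching+(\Agent',\nu(\Agent'))$ and again lies in $\Decs{n}$, so the defining rule of $\Decs{n}$ (equivalently Claim~\ref{claim:decodings}) forces $\nu(\Agent')=\MinUnmatched{\Matching}$ or $\Object<\nu(\Agent')\le n$. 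In the first case $\nu\supseteq\Matching'$ and we are already done, so the whole argument reduces to the case $\nu(\Agent')>\Object$.

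In that case I would apply a single pivot operation to $(\alpha,\beta)$ at agent $\Agent'$. To see it is applicable, note that $\nu(\Agent')>\Object\ge\Agent'$ gives $\alpha_{\Agent'}=1$ by Observation~\ref{ob:encode}, and it remains to check $\Weight{\alpha_{1,\Agent'}}>\Weight{\beta_{1,\Agent'}}$. The key combinatorial identity is that $\Weight{\alpha_{1,\Agent'}}-\Weight{\beta_{1,\Agent'}}$ equals the number of objects in $[\Agent']$ left unmatched by the length-$\Agent'$ prefix of $\nu$: every matched pair with both coordinates in $[\Agent']$ contributes equally to the two prefix weights (both bits are $1$ when the agent moved right, both $0$ when it moved left), while each agent of $[\Agent']$ sent outside the window contributes a $1$ to $\alpha$ and each object of $[\Agent']$ taken from outside contributes a $0$ to $\beta$, and these two ``across'' counts are equal. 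Since $|\Matching|=\Agent$ forces $\MinUnmatched{\Matching}\le\Agent+1=\Agent'$, and here $\MinUnmatched{\Matching}\neq\nu(\Agent')$, object $\MinUnmatched{\Matching}$ is one such unmatched object, so the gap is at least $1$ and the pivot is applicable.

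Let $\nu'=[\alpha',\beta]$ be the result. The pivot alters $\alpha$ only at positions $\Agent'$ and $\Agent''>\Agent'$, so every agent $i\le\Agent$ keeps its rank among the matching bits and hence $\nu'(i)=\nu(i)=\Matching(i)$; thus $\nu'\supseteq\Matching$. Applying the $\Decs{n}$ prefix rule to $\Matching+(\Agent',\nu'(\Agent'))$ gives $\nu'(\Agent')=\MinUnmatched{\Matching}$ or $\nu'(\Agent')>\Object$, but $\alpha'_{\Agent'}=0$ forces $\nu'(\Agent')\le\Agent'\le\Object$ by Observation~\ref{ob:encode}, ruling out the second alternative, so $\nu'(\Agent')=\MinUnmatched{\Matching}$ and $\nu'\supseteq\Matching'$. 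It then remains to verify $\IndRat{\Prefs}{\nu'}$. For every agent $a\neq\Agent'$, Observation~\ref{ob:pivot} gives $\Span{\nu'}{a}\subseteq\Span{\nu}{a}$; combining $\IndRat{\Prefs}{\nu}$ with the trivial bounds $\Left{\Prefs}{a}\le a\le\Right{\Prefs}{a}$ yields $\Left{\Prefs}{a}\le\min(\nu(a),a)$ and $\max(\nu(a),a)\le\Right{\Prefs}{a}$, and the span nesting places $\nu'(a)$ inside $[\Left{\Prefs}{a},\Right{\Prefs}{a}]$, so $\IndRatAgent{\Prefs}{\nu'}{a}$ holds. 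For $a=\Agent'$ we have $\nu'(\Agent')=\MinUnmatched{\Matching}$ and $\Left{\Prefs}{\Agent'}\le\MinUnmatched{\Matching}\le\Agent'\le\Right{\Prefs}{\Agent'}$, using the hypothesis for the left bound and $\MinUnmatched{\Matching}\le\Agent'$ for the right. Hence $\nu'$ is a perfect, individually rational matching in $\Decs{n}$ extending $\Matching'$, and Lemma~\ref{lem:reach} gives $\nu'\in\ReachAgain{\Config_\Oaf}{\Matching'}$, proving the set is nonempty.

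I expect the main obstacle to be the second and third paragraphs: establishing that the pivot is applicable, and more delicately that it sends agent $\Agent'$ to \emph{exactly} $\MinUnmatched{\Matching}$ rather than to some other hole. I would handle the latter not by decoding $\alpha'$ explicitly but by combining the structural constraint $\alpha'_{\Agent'}=0$ (which caps agent $\Agent'$'s object at $\Agent'\le\Object$) with the $\Decs{n}$ prefix dichotomy, whose only option below $\Object$ is $\MinUnmatched{\Matching}$. The span-nesting bookkeeping that transfers individual rationality from $\nu$ to $\nu'$ is then routine, once the principle that shrinking spans preserves individual rationality is isolated.
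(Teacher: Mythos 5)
Your proposal is correct and follows essentially the same route as the paper's proof: take a witness in $\ReachAgain{\Config_\Oaf}{\Matching}$, encode it via $\Encs{n}{n}$, dispose of the case where agent $\Agent'$ already receives $\MinUnmatched{\Matching}$, and otherwise apply a single pivot operation at $\Agent'$, using Observation~\ref{ob:pivot} for span containment (hence individual rationality of the other agents) and the hypothesis $\Left{\Prefs}{\Agent'}\leq\MinUnmatched{\Matching}$ for agent $\Agent'$ itself. Your explicit counting argument for why the pivot is applicable correctly fills in a step the paper dismisses as ``straightforward,'' and your derivation that the pivot sends $\Agent'$ to exactly $\MinUnmatched{\Matching}$ (via $\alpha'_{\Agent'}=0$ plus the $\Decs{n}$ prefix dichotomy) is an equivalent substitute for the paper's direct appeal to Observation~\ref{ob:extend}.
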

\punt{
\begin{proof}
Let $\Matching^*$ be a matching in
$\ReachAgain{\Config_\Oaf}{\Matching}$. Lemma~\ref{lem:reach} implies
that $\IndRat{\Prefs}{\Matching^*}$ holds and there exists
$(\alpha^*,\beta^*)$ in $\Encs{n}{n}$ such that
$\Matching^*=[\alpha^*,\beta^*]$.  Using
Observations~\ref{ob:congruent} and~\ref{ob:prefix}, we deduce that
$[\alpha^*_{1,\Agent},\beta^*_{1,\Object}]$ is equal to $\Matching$.
We consider two cases.

Case~1: $\alpha^*_{\Agent'}=0$.  Using Observation~\ref{ob:extend}, we
deduce that $[\alpha^*_{1,\Agent'},\beta^*_{1,\Object}]$ is equal to
$\Matching'$.  Hence Observation~\ref{ob:prefix} implies that
$\Matching^*$ contains $\Matching'$.  Thus
$\ReachAgain{\Config_\Oaf}{\Matching'}\not=\emptyset$, as required.
	
Case~2: $\alpha^*_{\Agent'}=1$. In this case, it is straightforward to
prove that a pivot operation is applicable to $(\alpha^*,\beta^*)$ at
agent $\Agent'$; let $(\alpha^{**},\beta^*)$ denote the result of this
operation.  Observation~\ref{ob:pivot} implies that
$(\alpha^{**},\beta^*)$ belongs to $\Encs{n}{n}$. Let $\Matching^{**}$
denote $[\alpha^{**},\beta^*]$.  Thus $\Matching^{**}$ belongs to
$\Decs{n}$.  Observation~\ref{ob:pivot} implies that
$\IndRatAgent{\Prefs}{\Matching^{**}}{\Agent''}$ holds for all agents
$\Agent''$ in $[n]-\Agent'$.  Since the inequality
$\Left{\Prefs}{\Agent'}\leq\MinUnmatched{\Matching}$ implies that
$\IndRatAgent{\Prefs}{\Matching^{**}}{\Agent'}$ holds, we deduce that
$\IndRat{\Prefs}{\Matching^{**}}$ holds. Thus Lemma~\ref{lem:reach}
implies that $\Matching^{**}$ belongs to $\Reach{\Config_\Oaf}$.
Since $\alpha^{**}_{1,\Agent}=\alpha^*_{1,\Agent}$,
Observation~\ref{ob:prefix} implies that $\Matching^{**}$ contains
$\Matching$. Using Observation~\ref{ob:extend}, we deduce that
$[\alpha^{**}_{1,\Agent'},\beta^*_{1,\Object}]$ is equal to
$\Matching'$. Hence Observation~\ref{ob:prefix} implies that
$\Matching^{**}$ contains $\Matching'$.  Since $\Matching^{**}$
belongs to $\Reach{\Config_\Oaf}$ and $\Matching^{**}$ contains
$\Matching'$, we conclude that $\Matching^{**}$ is contained in
$\ReachAgain{\Config_\Oaf}{\Matching'}$.
\end{proof}

}

\begin{restatable}{lemma}{right}\label{lem:right}
Let $\Oaf=(n,\Prefs)$ be an OAF, let $\Matching$ be a matching in
$\Decs{n}$ such that $|\Matching|<n$, let $\Agent$ denote
$|\Matching|$, let $\Agent'$ denote $\Agent+1$, let $\Object$ denote
$\MaxMatched{\Matching}$, let $\Matching'$ denote
$\Matching+(\Agent',\Object')$ where $\Object'$ belongs to
$[n]\setminus[\Object+1]$ and
$\ReachAgain{\Config_\Oaf}{\Matching'}\not=\emptyset$, and let
$\Matching''$ denote $\Matching+(\Agent',\Object+1)$. Then
$\ReachAgain{\Config_\Oaf}{\Matching''}\not=\emptyset$.
\end{restatable}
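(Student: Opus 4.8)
My plan is to take a reachable perfect matching extending $\Matching'$ and physically slide agent $\Agent'$ leftward, from object $\Object'$ down to object $\Object+1$, using a sequence of span-shrinking unsort operations. Since spans only shrink, individual rationality is preserved, and since the operations never touch objects in $[\Object]$, agents $1,\dots,\Agent$ keep their matches; the terminal matching therefore extends $\Matching''$ and remains reachable. Concretely, fix $\Matching^*$ in the nonempty set $\ReachAgain{\Config_\Oaf}{\Matching'}$. By Lemma~\ref{lem:reach}, $\Matching^*$ is a perfect matching in $\Decs{n}$ with $\IndRat{\Prefs}{\Matching^*}$, so there is $(\alpha^*,\beta^*)$ in $\Encs{n}{n}$ with $\Matching^*=[\alpha^*,\beta^*]$. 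Because $\Matching$ has $\Agent$ matched objects with maximum $\Object$, we have $\Object\geq\Agent$, hence $\Object'\geq\Object+2>\Agent+1=\Agent'$; since $\Matching^*(\Agent')=\Object'>\Agent'$, Observation~\ref{ob:encode} gives $\alpha^*_{\Agent'}=\beta^*_{\Object'}=1$. Write $q=\Weight{\alpha^*_{1,\Agent}}$. Using Observation~\ref{ob:prefix} and the fact that $\Matching^*$ contains $\Matching$, the pair $(\alpha^*_{1,\Agent},\beta^*_{1,\Object})$ lies in $\Encs{\Agent}{\Object}$ and encodes $\Matching$, so $\Weight{\beta^*_{1,\Object}}=q$ as well.

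The two structural facts I will exploit are as follows. First, $\beta^*$ has exactly $q$ ones in $[\Object]$, and agent $\Agent'$ is the $(q+1)$st one of $\alpha^*$, so by the encoding its object $\Object'$ is the $(q+1)$st one of $\beta^*$; hence $\beta^*_j=0$ for every $j$ with $\Object<j<\Object'$. Second, since $\Agent'\leq\Object+1$ and $\alpha^*_{\Agent'}=1$, the prefix weights of $\alpha^*$ satisfy $\Weight{\alpha^*_{1,j}}\geq q+1$ for all $j\geq\Object+1$. These let me verify the unsort preconditions at each step. I apply unsort across objects $(\Object'-1,\Object'),(\Object'-2,\Object'-1),\dots,(\Object+1,\Object+2)$ in turn; just before the step across $(j,j+1)$, agent $\Agent'$ sits at object $j+1$, the current $\beta$ still has its first $q$ ones in $[\Object]$ and zeros throughout $[\Object+1,j]$. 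Thus $\beta_{j+1}=1$ (object $j+1$ is matched to $\Agent'<j+1$), $\beta_j=0$, and $\Weight{\beta_{1,j}}=q<q+1\leq\Weight{\alpha^*_{1,j}}$, which are exactly the three conditions for an unsort across $j$ and $j+1$.

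By Observation~\ref{ob:unsort}, each such step keeps the pair in $\Encs{n}{n}$ (so the matching stays in $\Decs{n}$) and shrinks the span of every agent. None of these operations touches an object in $[\Object]$, so agents $1,\dots,\Agent$ retain their $\Matching$-matches while agent $\Agent'$ is carried down to object $\Object+1$. Writing $(\alpha^*,\beta^{**})$ for the terminal pair and $\Matching^{**}=[\alpha^*,\beta^{**}]$, the prefix $(\alpha^*_{1,\Agent'},\beta^{**}_{1,\Object+1})$ equals $(\alpha^*_{1,\Agent}1,\beta^*_{1,\Object}1)$, so Observations~\ref{ob:extend} and~\ref{ob:prefix} give $\Matching''=\Matching+(\Agent',\Object+1)\subseteq\Matching^{**}$. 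Finally, because every span only shrank and $\Left{\Prefs}{\Agent''}\leq\Agent''\leq\Right{\Prefs}{\Agent''}$ always holds, each $\Matching^{**}(\Agent'')$ lies between $\Agent''$ and $\Matching^*(\Agent'')$, hence within $[\Left{\Prefs}{\Agent''},\Right{\Prefs}{\Agent''}]$; thus $\IndRat{\Prefs}{\Matching^{**}}$ holds. Lemma~\ref{lem:reach} now yields $\Matching^{**}\in\Reach{\Config_\Oaf}$, and since $\Matching''\subseteq\Matching^{**}$ we conclude $\ReachAgain{\Config_\Oaf}{\Matching''}\neq\emptyset$.

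\textbf{Main obstacle.} The delicate point is confirming that the unsort preconditions remain valid all the way down to the pair $(\Object+1,\Object+2)$. This hinges on the structural fact that $\beta^*$ contains no ones strictly between $\Object$ and $\Object'$, together with the bound $\Weight{\alpha^*_{1,j}}\geq q+1$ for $j\geq\Object+1$; both follow from $\alpha^*_{\Agent'}=1$ and $\Agent'\leq\Object+1$. Once these are in hand, the span-monotonicity of unsort (Observation~\ref{ob:unsort}) and the inequality $\Left{\Prefs}{\Agent''}\leq\Agent''\leq\Right{\Prefs}{\Agent''}$ make the preservation of individual rationality immediate, and the rest is bookkeeping with the encoding.
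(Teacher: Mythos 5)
Your proposal is correct and follows essentially the same route as the paper's proof: fix a reachable perfect matching extending $\Matching'$, encode it as a pair in $\Encs{n}{n}$, observe that $\beta^*$ is zero strictly between $\Object$ and $\Object'$, and apply $\Object'-\Object-1$ unsort operations to slide agent $\Agent'$ down to $\Object+1$, with Observation~\ref{ob:unsort} preserving membership in $\Encs{n}{n}$ and individual rationality, and Observations~\ref{ob:extend} and~\ref{ob:prefix} certifying that the result contains $\Matching''$. The only difference is that you spell out the verification of the unsort preconditions (via the prefix-weight bound $\Weight{\alpha^*_{1,j}}\geq q+1$ for $j\geq\Object+1$), which the paper leaves implicit.
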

\punt{
\begin{proof}
Let $\Matching^*$ be a matching in
$\ReachAgain{\Config_\Oaf}{\Matching'}$. Lemma~\ref{lem:reach} implies
that $\IndRat{\Prefs}{\Matching^*}$ holds and there exists
$(\alpha^*,\beta^*)$ in $\Encs{n}{n}$ such that
$\Matching^*=[\alpha^*,\beta^*]$. Using Observations~\ref{ob:encode}
and~\ref{ob:prefix}, we find that
$\alpha^*_{\Agent'}=\beta^*_{\Object'}=1$ and $\beta^*_{\Object+k}=0$
for all $k$ in $[\Object'-\Object-1]$.  Using
Observation~\ref{ob:unsort}, we deduce that a sequence of
$\Object'-\Object-1$ unsort operations can be used to transform
$(\alpha^*,\beta^*)$ into a pair of binary strings
$(\alpha^*,\beta^{**})$ in $\Encs{n}{n}$ such that
$\beta^{**}_{1,\Object}=\beta^*_{1,\Object}$ and
$\IndRat{\Prefs}{[\alpha^*,\beta^{**}]}$ holds. Let $\Matching^{**}$
denote $[\alpha^*,\beta^{**}]$; thus $\Matching^{**}$ belongs to
$\Decs{n}$. Lemma~\ref{lem:reach} implies that $\Matching^{**}$
belongs to $\Reach{\Config_\Oaf}$.  Since
$\beta^{**}_{1,\Object}=\beta^*_{1,\Object}$,
Observation~\ref{ob:prefix} implies that $\Matching^{**}$ contains
$\Matching$. Using Observation~\ref{ob:extend}, we deduce that
$[\alpha^*_{1,\Agent'},\beta^{**}_{1,\Object+1}]$ is equal to
$\Matching''$. Hence Observation~\ref{ob:prefix} implies that
$\Matching^{**}$ contains $\Matching''$.  Since $\Matching^{**}$
belongs to $\Reach{\Config_\Oaf}$ and $\Matching^{**}$ contains
$\Matching''$, we conclude that $\Matching^{**}$ belongs to
$\ReachAgain{\Config_\Oaf}{\Matching''}$.
\end{proof}

}

\begin{restatable}{lemma}{options}\label{lem:options}
Let $\Oaf=(n,\Prefs)$ be an OAF, let $\Matching$ be a matching in
$\Decs{n}$ such that $|\Matching|<n$, let $\Agent$ denote
$|\Matching|$, let $\Agent'$ denote $\Agent+1$, let $\Object$ denote
$\MaxMatched{\Matching}$, and assume that
$\ReachAgain{\Config_\Oaf}{\Matching}\not=\emptyset$. Then there
exists a matching in $\ReachAgain{\Config_\Oaf}{\Matching}$ that
matches agent $\Agent'$ to an object in
$\{\MinUnmatched{\Matching},\Object+1\}$.
\end{restatable}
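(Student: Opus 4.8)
The plan is to start from an arbitrary reachable witness and then \emph{normalize} the match of agent $\Agent'$ to one of the two target objects using Lemmas~\ref{lem:left} and~\ref{lem:right}. Since $\ReachAgain{\Config_\Oaf}{\Matching}\neq\emptyset$ by hypothesis, I would fix a matching $\Matching^*$ in $\ReachAgain{\Config_\Oaf}{\Matching}$ and let $\Object^*$ denote $\Matching^*(\Agent')$. Because $\Matching^*$ contains $\Matching$ and $|\Matching|=\Agent$, object $\Object^*$ is unmatched in $\Matching$, so $\MinUnmatched{\Matching}\leq\Object^*$; moreover, the $\Agent$ objects matched by $\Matching$ cannot cover all of $[\Agent+1]$, so $\MinUnmatched{\Matching}\leq\Agent'$ always holds. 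The goal is to exhibit a matching in $\ReachAgain{\Config_\Oaf}{\Matching}$ whose value at $\Agent'$ lies in $\{\MinUnmatched{\Matching},\Object+1\}$.

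First I would pin down the possible values of $\Object^*$. By Lemma~\ref{lem:reach}, $\Matching^*$ lies in $\Decs{n}$, and since $\Decs{n}$ is closed under restricting to an initial segment of agents, the cardinality-$\Agent'$ matching $\Matching+(\Agent',\Object^*)$ (the restriction of $\Matching^*$ to agents $[\Agent']$) also lies in $\Decs{n}$. Applying Claim~\ref{claim:decodings} to $\Matching$ and this restriction then forces $\Object^*=\MinUnmatched{\Matching}$ or $\Object<\Object^*\leq n$; in particular, no match strictly between $\MinUnmatched{\Matching}$ and $\Object$ is possible for $\Agent'$.

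Now I would split on the position of $\Object^*$ relative to $\Agent'$, which is exactly the dichotomy handled by the two helper lemmas. In the right case $\Object^*>\Agent'$, we have $\Object^*>\MinUnmatched{\Matching}$ (since $\MinUnmatched{\Matching}\leq\Agent'$), so the previous step yields $\Object^*>\Object$. If $\Object^*=\Object+1$ then $\Matching^*$ already witnesses the claim; otherwise $\Object^*\in[n]\setminus[\Object+1]$ and $\ReachAgain{\Config_\Oaf}{\Matching+(\Agent',\Object^*)}\neq\emptyset$ (witnessed by $\Matching^*$), so Lemma~\ref{lem:right} yields $\ReachAgain{\Config_\Oaf}{\Matching+(\Agent',\Object+1)}\neq\emptyset$, a matching sending $\Agent'$ to $\Object+1$. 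In the left case $\Object^*\leq\Agent'$, note $\Agent\leq\Object$ gives $\Object^*\leq\Agent'\leq\Object+1$, so the previous step forces $\Object^*=\MinUnmatched{\Matching}$; since $\Matching^*$ is individually rational, $\Left{\Prefs}{\Agent'}\leq\Object^*=\MinUnmatched{\Matching}$, and when $\Agent<\Object$ the hypotheses of Lemma~\ref{lem:left} are met, producing a matching in $\ReachAgain{\Config_\Oaf}{\Matching+(\Agent',\MinUnmatched{\Matching})}$. (When $\Agent=\Object$ the two targets coincide, $\MinUnmatched{\Matching}=\Object+1$, and $\Matching^*$ itself suffices.)

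The main obstacle I anticipate is not the case analysis itself but correctly discharging the side conditions of the machinery: verifying that $\Decs{n}$ is prefix-closed so that Claim~\ref{claim:decodings} applies to the truncation $\Matching+(\Agent',\Object^*)$ of $\Matching^*$; checking the individual-rationality hypothesis $\Left{\Prefs}{\Agent'}\leq\MinUnmatched{\Matching}$ needed by Lemma~\ref{lem:left}; and carefully handling the degenerate boundary $\Agent=\Object$ (equivalently $\ObjectsOf{\Matching}=[\Agent]$), where $\MinUnmatched{\Matching}=\Object+1$ so the two candidate targets collapse and neither helper lemma is strictly required.
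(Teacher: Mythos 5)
Your proposal is correct and follows essentially the same route as the paper's proof: fix a witness $\Matching^*$, use Lemma~\ref{lem:reach} and the structure of $\Decs{n}$ to force $\Matching^*(\Agent')\in\{\MinUnmatched{\Matching}\}\cup\{\Object+1,\ldots,n\}$, then normalize via Lemma~\ref{lem:left} (using individual rationality for its precondition) or Lemma~\ref{lem:right}. Your explicit treatment of the degenerate boundary $\Agent=\Object$, where the two candidate objects coincide, is a point the paper's proof passes over silently.
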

\punt{
\begin{proof}
Let $\Matching^*$ belong to $\ReachAgain{\Config_\Oaf}{\Matching}$ and
let $\Object^*$ denote $\Matching^*(\Agent')$.  Lemma~\ref{lem:reach}
implies that $\Matching^*$ belongs to $\Decs{n}$ and
$\IndRat{\Prefs}{\Matching^*}$ holds.

Since $\Matching^*$ belongs to $\Decs{n}$, the definition of $\Decs{n}$
implies that $\Object^*=\MinUnmatched{\Matching}$ or
$\Object<\Object^*$.  We consider two cases.
	
Case~1: $\Object^*=\MinUnmatched{\Matching}$. Since
$\IndRat{\Prefs}{\Matching^*}$ holds, we deduce that
$\Left{\Prefs}{\Agent'}\leq\Object^*$. Hence the claim of the lemma
follows from Lemma~\ref{lem:left}.
	
Case~2: $\Object<\Object^*$.  Since $\IndRat{\Prefs}{\Matching^*}$
holds, we deduce that $\Object^*\leq\Right{\Prefs}{\Agent'}$.  If
$\Object^*=\Object+1$, the claim of the lemma is immediate.
Otherwise, it follows from Lemma~\ref{lem:right}.
\end{proof}

}


We are now ready to prove the main technical lemma of this section, Lemma~\ref{lem:main} below.

%

\begin{lemma}\label{lem:main}
Consider an execution of Algorithm~\ref{alg:greedyPath} with inputs
$\Oaf=(n,\Prefs)$, $\Matching_0$, and $\Matching_1$. If the guard of
the while loop is evaluated in a state where
$\Matching\not=\emptyset$, then the following conditions hold in that
state: (1) $\Matching$ belongs to $\Decs{n}$; (2)
$\IndRat{\Prefs}{\Matching}$ holds; (3)
$\ReachAgain{\Config_\Oaf}{\Matching_1}\not=\emptyset$ implies
$\ReachAgain{\Config_\Oaf}{\Matching}\not=\emptyset$.
Furthermore, if the guard of the while loop is evaluated in a state
where $\Matching=\emptyset$, then
$\ReachAgain{\Config_\Oaf}{\Matching_1}=\emptyset$.
\end{lemma}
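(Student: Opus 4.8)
The plan is to prove this lemma as a loop invariant by induction on the iterations of the while loop, leveraging the three "enlargement" lemmas (Lemmas~\ref{lem:left}, \ref{lem:right}, and especially \ref{lem:options}) that were just established. The invariant to maintain is the conjunction of conditions (1), (2), and (3) stated in the lemma, and I will show it holds every time the guard is evaluated with $\Matching\neq\emptyset$, while separately handling the case where $\Matching=\emptyset$.

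For the base case, I first verify that the invariant holds when the guard is first evaluated, i.e., with $\Matching=\Matching_1$. By the input specification, $\Matching_1=\Matching_0+(|\Matching_1|,\Object_0)$ where $\MaxMatched{\Matching_0}<\Object_0\leq\Right{\Prefs}{|\Matching_1|}$, and $\ReachAgain{\Config_\Oaf}{\Matching_0}\neq\emptyset$. Condition (1) should follow from Claim~\ref{claim:decodings}, since $\Object_0>\MaxMatched{\Matching_0}$ places $\Object_0$ in the permitted range for extending a matching in $\Decs{n}$ (after noting $\Matching_0\in\Decs{n}$, which follows from $\ReachAgain{\Config_\Oaf}{\Matching_0}\neq\emptyset$ via Lemma~\ref{lem:reach} applied to a witnessing reachable matching and Observation~\ref{ob:prefix}). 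Condition (2) should follow because the individual rationality of each agent in $\Matching_0$ is inherited from a reachable matching containing it, and the new agent $|\Matching_1|$ is matched to $\Object_0\leq\Right{\Prefs}{|\Matching_1|}$, with the lower bound $\Left{\Prefs}{|\Matching_1|}\leq\Object_0$ coming from $\MaxMatched{\Matching_0}<\Object_0$ together with the fact that the witnessing reachable matching already matches agents $1,\ldots,|\Matching_0|$ individually-rationally. Condition (3) is trivial in the base case since $\Matching=\Matching_1$.

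For the inductive step, I assume the invariant holds at the start of an iteration with $\Matching\neq\emptyset$ and $|\Matching|<n$, and I analyze the three branches of the if-statement. In the first two branches, $\Matching$ is extended by matching agent $|\Matching|+1$ to either $\MinUnmatched{\Matching}$ (when $\Left{\Prefs}{|\Matching|+1}\leq\MinUnmatched{\Matching}$) or to $\MaxMatched{\Matching}+1$ (when $\MaxMatched{\Matching}<\Right{\Prefs}{|\Matching|+1}$); in each case I must re-establish (1), (2), (3) for the new $\Matching$. Condition (1) follows from Claim~\ref{claim:decodings}, and condition (2) follows from the branch guards (which directly assert the relevant $\Left{}$ or $\Right{}$ inequality for the newly matched agent) combined with the inductive hypothesis for the older agents. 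The crux is condition (3): assuming $\ReachAgain{\Config_\Oaf}{\Matching_1}\neq\emptyset$, the inductive hypothesis gives $\ReachAgain{\Config_\Oaf}{\text{old }\Matching}\neq\emptyset$, and I then invoke Lemma~\ref{lem:options} to obtain a matching in $\ReachAgain{\Config_\Oaf}{\text{old }\Matching}$ that matches agent $|\Matching|+1$ to an object in $\{\MinUnmatched{\text{old }\Matching},\MaxMatched{\text{old }\Matching}+1\}$. The remaining work is to argue that whichever branch the algorithm takes, the extension it produces is consistent with a reachable witness, using Lemma~\ref{lem:left} for the $\MinUnmatched{}$ case and Lemma~\ref{lem:right} for the $\MaxMatched{}+1$ case to convert the witness provided by Lemma~\ref{lem:options} into one containing the algorithm's chosen extension.

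The main obstacle I anticipate is the interaction between the branch chosen by the greedy algorithm and the branch guaranteed by Lemma~\ref{lem:options}: the algorithm greedily prefers the first branch ($\MinUnmatched{}$) whenever its guard holds, but Lemma~\ref{lem:options} might hand back a witness using the other option, so I need the enlargement lemmas to bridge this gap. Specifically, when the algorithm takes the $\MinUnmatched{}$ branch but the witness uses $\MaxMatched{}+1$ (or higher), I apply Lemma~\ref{lem:left} to show a witness containing the $\MinUnmatched{}$ extension still exists; symmetrically, when the algorithm takes the $\MaxMatched{}+1$ branch (because the first guard failed, forcing $\Left{\Prefs}{|\Matching|+1}>\MinUnmatched{\Matching}$, which rules out the $\MinUnmatched{}$ option being individually rational), Lemma~\ref{lem:right} converts any witness using a larger object into one using $\MaxMatched{}+1$. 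Finally, for the $\Matching=\emptyset$ case: this arises only when both branch guards fail, meaning $\Left{\Prefs}{|\Matching|+1}>\MinUnmatched{\Matching}$ and $\MaxMatched{\Matching}\geq\Right{\Prefs}{|\Matching|+1}$; by Lemma~\ref{lem:options} any matching in $\ReachAgain{\Config_\Oaf}{\text{old }\Matching}$ would have to match agent $|\Matching|+1$ to $\MinUnmatched{}$ or $\MaxMatched{}+1$, but both violate individual rationality under these failed guards, so by Lemma~\ref{lem:reach} no reachable matching contains the old $\Matching$, and by the contrapositive of condition (3) from the previous iteration we conclude $\ReachAgain{\Config_\Oaf}{\Matching_1}=\emptyset$.
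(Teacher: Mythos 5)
Your proposal follows essentially the same route as the paper's proof: induction on the loop iterations, with the base case discharged from the preconditions via Lemma~\ref{lem:reach} and the definition of $\Decs{n}$, the inductive step for condition~(3) handled by Lemma~\ref{lem:options} together with Lemmas~\ref{lem:left} and~\ref{lem:right}, and the empty case by contradiction through Lemma~\ref{lem:options}. The only point to watch is the first branch when $|\Matching|=\MaxMatched{\Matching}$: there the hypothesis $\Agent<\Object$ of Lemma~\ref{lem:left} fails, but then $\MinUnmatched{\Matching}=\MaxMatched{\Matching}+1$, so the two candidates of Lemma~\ref{lem:options} coincide and the witness it provides already contains the algorithm's extension, which is exactly how the paper's Case~2.2 disposes of this degenerate case.
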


\begin{proof}
We prove the claim by induction on the number of iterations of the
loop. For the base case, we verify that the stated conditions hold
the first time the loop is reached. The initialization of $\Matching$
ensures that $\Matching\not=\emptyset$, so we need to verify
conditions~(1) through~(3). Since
$\ReachAgain{\Config_\Oaf}{\Matching_0}\not=\emptyset$,
Lemma~\ref{lem:reach} implies that $\IndRat{\Prefs}{\Matching_0}$
holds, and Lemma \ref{lem:reach} and the definition of $\Decs{n}$
together 
imply that $\Matching_0$ belongs to $\Decs{n}$. Since
$\Object_0\leq\Right{\Prefs}{|\Matching_1|}\leq n$,
the definition of $\Decs{n}$ 
implies that $\Matching_1$ belongs to
$\Decs{n}$. Since $\IndRat{\Prefs}{\Matching_0}$ holds and the
preconditions associated with Algorithm~\ref{alg:greedyPath} ensure
that $\IndRatAgent{\Prefs}{\Matching_1}{|\Matching_1|}$ holds, we
deduce that $\IndRat{\Prefs}{\Matching_1}$ holds. Since $\Matching$
is initialized to $\Matching_1$, we conclude that conditions~(1)
through~(3) hold the first time the loop is reached.
 
For the induction step, consider an arbitrary iteration of the loop.
Such an iteration begins in a state where the guard of the while loop
evaluates to true, so we can assume that $0<|\Matching|<n$ and that
conditions~(1) through~(3) hold in this state. Let $\Agent$ denote
$|\Matching|$, let $\Agent'$ denote $\Agent+1$, let $\Object$ denote
$\MaxMatched{\Matching}$, and let $\Matching'$ denote the value of the
program variable $\Matching$ immediately after this iteration of the
loop body. We consider two cases.
 
Case~1: $\Matching'=\emptyset$. In this case, we need to prove that
$\ReachAgain{\Config_\Oaf}{\Matching_1}=\emptyset$. Assume for
the sake of contradiction that
$\ReachAgain{\Config_\Oaf}{\Matching_1}\not=\emptyset$.
Condition~(3) implies that
$\ReachAgain{\Config_\Oaf}{\Matching}\not=\emptyset$. Thus
Lemma~\ref{lem:options} implies there exists a matching
$\Matching^*$ in $\ReachAgain{\Config_\Oaf}{\Matching}$ such
that $\Left{\Prefs}{\Agent'}\leq\MinUnmatched{\Matching}$ or
$\Object+1\leq\Right{\Prefs}{\Agent'}$. It follows by inspection of
the code that $\Matching'\not=\emptyset$, a contradiction.
 
Case~2: $\Matching'\not=\emptyset$. In this case, we need to prove
that conditions~(1) through~(3) hold with $\Matching$ replaced by
$\Matching'$; we refer to these conditions as postconditions~(1)
through~(3). Since
$\Matching'(\Agent')\leq\Right{\Prefs}{\Agent'}\leq n$ and
condition~(1) implies that $\Matching$ belongs to $\Decs{n}$,
the definition of $\Decs{n}$
implies that $\Matching'$ belongs to
$\Decs{n}$. Thus postcondition~(1) holds. Since condition~(2) implies
that $\IndRat{\Prefs}{\Matching}$ holds, and
$\IndRatAgent{\Prefs}{\Matching'}{\Agent'}$ holds by inspection of the
code, we deduce that postcondition~(2) holds. It remains to establish
postcondition~(3). In order to do so, we may assume that
$\ReachAgain{\Config_\Oaf}{\Matching_1}\not=\emptyset$.
Condition~(3) implies that
$\ReachAgain{\Config_\Oaf}{\Matching}\not=\emptyset$. To
establish postcondition~(3), we need to prove that
$\ReachAgain{\Config_\Oaf}{\Matching'}\not=\emptyset$. We
consider two cases.
 
Case~2.1: $\Agent<\Object$ and
$\Left{\Prefs}{\Agent'}\leq\MinUnmatched{\Matching}$. In this case,
$\Matching'=\Matching+(\Agent',\MinUnmatched{\Matching})$, and
Lemma~\ref{lem:left} implies that
$\ReachAgain{\Config_\Oaf}{\Matching'}\not=\emptyset$.
 
Case~2.2: $\Agent=\Object$ or
$\MinUnmatched{\Matching}<\Left{\Prefs}{\Agent'}$. In this case,
$\Matching'=\Matching+(\Agent',\Object+1)$, and
Lemma~\ref{lem:options} implies there exists a matching $\Matching^*$
in $\ReachAgain{\Config_\Oaf}{\Matching}$ such that
$\Matching^*(\Agent')$ is either $\MinUnmatched{\Matching}$ or
$\Object+1$. Since $\IndRat{\Prefs}{\Matching^*}$ holds, the Case~2.2
condition implies that if
$\Matching^*(\Agent')=\MinUnmatched{\Matching}$, then
$\Agent=\Object$, in which case $\ObjectsOf{\Matching}=[\Object]$ and
hence $\MinUnmatched{\Matching}=\Object+1$. It follows that
$\Matching^*(\Agent')=\Object+1$, and hence that
$\ReachAgain{\Config_\Oaf}{\Matching'}\not=\emptyset$.
\end{proof}

Using Lemma~\ref{lem:main}, it is straightforward to establish the correctness of Algorithm~\ref{alg:greedyPath}. 

\begin{restatable}{lemma}{correctness}\label{lem:correctness}
Consider an execution of Algorithm~\ref{alg:greedyPath} with inputs
$\Oaf=(n,\Prefs)$, $\Matching_0$, and $\Matching_1$. The execution
terminates correctly within $n-|\Matching_1|$ iterations.
\end{restatable}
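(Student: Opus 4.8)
The plan is to prove Lemma~\ref{lem:correctness} by leaning entirely on the loop invariant already established in Lemma~\ref{lem:main}. I would separate the two assertions in the statement --- termination within $n-|\Matching_1|$ iterations, and correctness of the returned value --- and handle the termination count by a straightforward monotonicity argument, then use the invariant to pin down what the loop returns.

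\textbf{Termination.} First I would observe that each iteration of the while loop that does not set $\Matching=\emptyset$ increases $|\Matching|$ by exactly one, since each of the first two branches performs $\Matching=\Matching+(|\Matching|+1,\cdot)$, adding one pair. The loop is initialized with $\Matching=\Matching_1$, so $|\Matching|$ starts at $|\Matching_1|$. The guard $0<|\Matching|<n$ fails as soon as $|\Matching|=n$ (the only way to exit with $\Matching\neq\emptyset$) or as soon as $\Matching=\emptyset$ (in which case the guard $0<|\Matching|$ fails immediately on the next evaluation). In the worst case where no iteration hits the final \textbf{Else} branch, $|\Matching|$ climbs from $|\Matching_1|$ to $n$, which takes exactly $n-|\Matching_1|$ iterations; an iteration that sets $\Matching=\emptyset$ only terminates sooner. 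Hence the loop runs for at most $n-|\Matching_1|$ iterations.

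\textbf{Correctness.} For the returned value I would argue by cases on how the loop exits, invoking Lemma~\ref{lem:main} at the moment the guard is evaluated and fails. If the loop exits with $\Matching\neq\emptyset$, then the guard failed because $|\Matching|=n$, and by Lemma~\ref{lem:main} conditions~(1) and~(2) give that $\Matching$ belongs to $\Decs{n}$ and $\IndRat{\Prefs}{\Matching}$ holds; since $|\Matching|=n$, Lemma~\ref{lem:reach} then yields $\Matching\in\Reach{\Config_\Oaf}$. By construction $\Matching$ contains $\Matching_1$ (the invariant maintains that the match for each agent up to $|\Matching|$ is never removed), so $\Matching\in\ReachAgain{\Config_\Oaf}{\Matching_1}$, which is precisely what the output specification requires in the nonempty case. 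If instead the loop exits with $\Matching=\emptyset$, the final clause of Lemma~\ref{lem:main} gives $\ReachAgain{\Config_\Oaf}{\Matching_1}=\emptyset$, so returning $\emptyset$ is correct.

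\textbf{Main obstacle.} The substantive content has already been absorbed into Lemma~\ref{lem:main}, so the only delicate point is the bookkeeping that ties the program variable at loop exit to the two output cases --- in particular, verifying that $\Matching$ always contains $\Matching_1$ throughout the execution (needed to conclude membership in $\ReachAgain{\Config_\Oaf}{\Matching_1}$ rather than merely $\Reach{\Config_\Oaf}$), and confirming that the guard can only fail in the two ways enumerated above. Both follow by inspection of the code, since every non-failing branch only extends $\Matching$ by appending a new pair and never modifies existing pairs. I expect this to be routine; the proof should be short, essentially a wrapper that discharges the termination count and then reads off correctness from the invariant via Lemma~\ref{lem:reach}.
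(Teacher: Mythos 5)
Your proposal is correct and follows essentially the same route as the paper's own proof: termination by noting each iteration either increments $|\Matching|$ or zeroes it out, then a two-case analysis at loop exit that reads off correctness from Lemma~\ref{lem:main} together with Lemma~\ref{lem:reach}, including the observation that $\Matching_1$ is never removed so the output lies in $\ReachAgain{\Config_\Oaf}{\Matching_1}$. No gaps.
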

\punt{
\begin{proof}
Lemma~\ref{lem:main} implies that each iteration of
Algorithm~\ref{alg:greedyPath} either increments the cardinality of
matching $\Matching$ or reduces it to zero.  In the latter case, the
algorithm terminates immediately. It follows that the algorithm
terminates within $n-|\Matching_1|$ iterations.  Next, we argue that
the algorithm terminates correctly. In what follows, let $\Matching^*$
denote the final value of the program variable $\Matching$. We
consider two cases.
	
Case~1: Algorithm~\ref{alg:greedyPath} terminates with
$\Matching^*=\emptyset$. In this case, Lemma~\ref{lem:main} implies
that $\ReachAgain{\Config_\Oaf}{\Matching_1}=\emptyset$, as
required.
	
Case~2: Algorithm~\ref{alg:greedyPath} terminates with
$\Matching^*\not=\emptyset$. Lemma~\ref{lem:main} implies that
conditions~(1) through~(3) in the statement of Lemma~\ref{lem:main}
hold with $\Matching$ replaced by $\Matching^*$; we refer to these
conditions as postconditions~(1) through~(3). Since
$\Matching^*\not=\emptyset$, no agent-object pairs are removed from
$\Matching$ during the execution of Algorithm~\ref{alg:greedyPath}.
Since Algorithm~\ref{alg:greedyPath} initializes $\Matching$ to
$\Matching_1$, we deduce that $\Matching_1$ is contained in
$\Matching^*$.  Since the guard of the loop evaluates to false when
$\Matching$ is equal to $\Matching^*$ and postcondition~(1) holds, we
deduce that $\Matching^*$ belongs to $\Decs{n}$ and
$|\Matching^*|=n$. Since postconditions~(1) and~(2) hold,
Lemma~\ref{lem:reach} implies that $\Matching^*$ belongs to
$\Reach{\Config_\Oaf}$.  Since $\Matching_1$ is contained in
$\Matching^*$, we deduce that $\Matching^*$ belongs to
$\ReachAgain{\Config_\Oaf}{\Matching_1}$, as required.
\end{proof}

}


\subsection{Object Reachability}
\label{sec:pathReachObject}

We now describe how to use Algorithm~\ref{alg:greedyPath} to solve the
reachable object problem on paths in $O(n^2)$ time. Let
$\Oaf=(n,\Prefs)$ be a given OAF, let $\Agent^*$ be an agent in $[n]$,
and let $\Object^*$ be an object in $[n]$. Assume without loss of
generality that $\Agent^*<\Object^*$. We wish to determine whether
there is a matching in $\Reach{\Config_\Oaf}$ that matches
$\Agent^*$ to $\Object^*$, and if so, to compute such a matching. We
begin by using a preprocessing phase to compute
$\Left{\Prefs}{\Agent}$ and $\Right{\Prefs}{\Agent}$ for all agents
$\Agent$ in $[n]$. We start with agent $\Agent^*$, and check whether
$\Left{\Prefs}{\Agent^*}\leq\Object^* \le \Right{\Prefs}{\Agent^*}$. If
this check fails, we halt and report failure. Otherwise, we proceed to
the remaining agents. Barring failure, the overall cost of the
preprocessing phase is $O(n^2)$. We now describe how to proceed in
the special case where $\Agent^*$ is the leftmost agent on the path,
i.e., where $\Agent^*=1$. Later we will see how to efficiently reduce
the general case to this special case. In the special case
$\Agent^*=1$, we call Algorithm~\ref{alg:greedyPath} with
$\Matching_0=\emptyset$ and $\Matching_1=\{(1,\Object^*)\}$. If there is
a matching in $\Reach{\Config_\Oaf}$ that matches agent $1$ to
object $\Object$, then Algorithm~\ref{alg:greedyPath} returns such a
matching. If not, Algorithm~\ref{alg:greedyPath} returns the empty
matching. Excluding the cost of the preprocessing phase, the time
complexity of Algorithm~\ref{alg:greedyPath} is $O(n)$. So the overall
running time is $O(n^2)$, as it is dominated by the preprocessing
phase. We now discuss how to reduce the case of general $\Agent^*$ to
the special case $\Agent^*=1$. The key is Lemma~\ref{lem:truncate}
below. Informally, Lemma~\ref{lem:truncate} tells us that we can
ignore all of the agents and objects in $[\Agent^*-1]$. Doing this,
$\Agent^*$ is once again leftmost, and we can proceed as in the
special case $\Agent^*=1$. Alternatively, we can run
Algorithm~\ref{alg:greedyPath} with $\Matching_0=\{(i,i)\mid i\in
[\Agent^*-1]\}$ and
$\Matching_1=\Matching_0+\{(\Agent^*,\Object^*)\}$. 
We now proceed to prove Lemma~\ref{lem:truncate}. 

\begin{restatable}{lemma}{truncate}\label{lem:truncate}
Let $\Oaf=(n,\Prefs)$ be an OAF, let $\Matching$ be a matching in
$\Reach{\Config_\Oaf}$, let $\Agent$ belong to
$\AgentsOf{\Matching}$, let $\Object$ denote $\Matching(\Agent)$, and
assume that $\Agent<\Object$. Then there is a matching $\Matching'$
in $\Reach{\Config_\Oaf}$ such that
$\Matching'(\Agent)=\Object$ and $\Matching'(\Agent')=\Agent'$ for all
agents $\Agent'$ in $[\Agent-1]$.
\end{restatable}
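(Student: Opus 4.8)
The plan is to reduce the lemma, via the reachability characterization of Lemma~\ref{lem:reach}, to a single nonemptiness statement, and then to establish that statement by manipulating the binary-string encoding of $\Matching$.

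First I would set $\Matching_0=\{(i,i)\mid i\in[\Agent-1]\}$ and $\Matching_1=\Matching_0\cup\{(\Agent,\Object)\}$, and observe that the lemma follows once we show $\ReachAgain{\Config_\Oaf}{\Matching_1}\neq\emptyset$: any $\Matching'$ in this set lies in $\Reach{\Config_\Oaf}$ and contains $\Matching_1$, hence satisfies $\Matching'(\Agent)=\Object$ and $\Matching'(\Agent')=\Agent'$ for all $\Agent'<\Agent$, exactly as required. (Once nonemptiness is known, one can even produce such a $\Matching'$ explicitly by running Algorithm~\ref{alg:greedyPath} on $\Matching_0$ and $\Matching_1$ and invoking Lemma~\ref{lem:correctness}.) It is routine to verify the two facts we will need: $\Matching_1\in\Decs{n}$, since agents $1,\dots,\Agent-1$ are matched to their own objects and $\Object>\Agent-1=\MaxMatched{\Matching_0}$, so the defining condition of $\Decs{n}$ holds at each step; and $\IndRat{\Prefs}{\Matching_1}$ holds, since home positions are trivially individually rational, while for agent $\Agent$ we have $\Left{\Prefs}{\Agent}\le\Agent<\Object\le\Right{\Prefs}{\Agent}$, the last inequality following because $\IndRatAgent{\Prefs}{\Matching}{\Agent}$ holds by Lemma~\ref{lem:reach}.

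The heart of the proof is establishing $\ReachAgain{\Config_\Oaf}{\Matching_1}\neq\emptyset$. By Lemma~\ref{lem:reach}, $\Matching$ is full, individually rational, and lies in $\Decs{n}$, so $\Matching=[\alpha,\beta]$ for some $(\alpha,\beta)\in\Encs{n}{n}$, and since $\Agent<\Object=\Matching(\Agent)$, Observation~\ref{ob:encode} gives $\alpha_\Agent=\beta_\Object=1$. I would then transform $(\alpha,\beta)$ into an encoding of a full matching that contains $\Matching_1$ using only complement operations (which preserve the matching by Observation~\ref{ob:congruent}), pivot operations, and unsort operations. The crucial point is that, by Observations~\ref{ob:pivot} and~\ref{ob:unsort}, an unsort never enlarges any agent's span and a pivot at agent $j$ never enlarges the span of any agent other than $j$; since every resulting pair stays in $\Encs{n}{n}$, Lemma~\ref{lem:reach} guarantees that the final full matching is reachable provided individual rationality is maintained throughout. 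Because spans only shrink (except transiently at a pivot agent, whose individual rationality I would verify directly using the same pivot-based reasoning that underlies Lemma~\ref{lem:left}, together with $\Object\le\Right{\Prefs}{\Agent}$ and the triviality of home positions), individual rationality is inherited from that of $\Matching$.

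Concretely, I would clear the prefix $[\Agent-1]$ of non-home agents one at a time: a right-moving prefix agent is pulled home by an unsort across the relevant adjacent objects, and a left-moving prefix agent is pulled home by a pivot, with complements used to place the encoding into the canonical form that makes the next operation applicable. I expect the main obstacle to be orchestrating these operations so that agent $\Agent$ remains matched to $\Object$ throughout: a pivot whose two affected $\alpha$-positions straddle $\Agent$ would change $\Weight{\alpha_{1,\Agent}}$ and thereby move agent $\Agent$, so the operations must be confined to the object range below $\Object$ (or chosen so that their net effect on the prefix counts at position $\Agent$ cancels). Making this ordering precise, and checking at each pivot that the pivot agent stays individually rational, is the delicate part; the span-monotonicity guaranteed by Observations~\ref{ob:pivot} and~\ref{ob:unsort} is precisely what renders the individual rationality of all remaining agents automatic.
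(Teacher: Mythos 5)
Your reduction is sound and your framework is the paper's: encode $\Matching$ as a pair $(\alpha,\beta)$ in $\Encs{n}{n}$ via Lemma~\ref{lem:reach}, note $\alpha_\Agent=\beta_\Object=1$ by Observation~\ref{ob:encode}, surgically modify the pair so that agents $1,\ldots,\Agent-1$ go home while agent $\Agent$ stays at $\Object$, and invoke span-monotonicity plus Lemma~\ref{lem:reach} to conclude reachability. (One small simplification you miss: individual rationality only needs to hold for the \emph{final} matching, not ``throughout''---Lemma~\ref{lem:reach} is a static characterization, so intermediate pairs in your transformation need not encode reachable matchings.) The genuine gap is that the construction itself is never carried out. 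You explicitly defer ``the delicate part''---ordering the pivots and unsorts so that no pivot straddles $\Agent$ and so that every prefix agent actually lands at its own position---and this is precisely where the content of the lemma lives. Indeed, the obstacle you identify is real: a pivot at a right-moving prefix agent $j$ relocates the displaced $1$ to the minimum balanced position $j'>j$, which can easily exceed $\Agent$ (e.g.\ for $(\alpha,\beta)=(110,011)$ with $\Agent=2$, $\Object=3$, the pivot at agent $1$ lands at position $3$ and drags agent $2$ off its target). It is not established that pivots and unsorts alone, in some order, always suffice.

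The paper avoids this entirely by introducing one more primitive, a \emph{cancel} operation, defined just for this proof: flip the first $1$ of $\alpha$ and the first $1$ of $\beta$ to $0$ (the result stays in $\Encs{n}{n}$). An accompanying observation records that a cancel preserves the match of every agent whose $\alpha$-bit is a later $1$ and shrinks every agent's span. Applying $k$ cancels, where $k=\Weight{\alpha_{1,\Agent}}-1$ is the number of $1$'s preceding position $\Agent$ in $\alpha$, zeroes out $\alpha_{1,\Agent-1}$ and (since $\Weight{\beta_{1,\Agent-1}}\leq k$) also $\beta_{1,\Agent-1}$, which forces $\Matching'(\Agent')=\Agent'$ for all $\Agent'<\Agent$, keeps $\Matching'(\Agent)=\Object$ because $\alpha_\Agent$ and $\beta_\Object$ become the first $1$'s of the new strings, and yields $\IndRat{\Prefs}{\Matching'}$ by span containment; Lemma~\ref{lem:reach} then finishes. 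To repair your proof you would either need to prove that your pivot/unsort schedule can always be completed (which essentially amounts to reinventing the cancel operation), or adopt a new operation of this kind and prove its span-monotonicity and match-preservation properties.
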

\punt{

For any $(\alpha,\beta)$ in $\Encs{n}{n}$, we say that a cancel
operation is applicable to $(\alpha,\beta)$ if $\Weight{\alpha}>0$.
The result of applying this operation is the pair of binary strings
$(\alpha',\beta')$ that is the same as $(\alpha,\beta)$ except that
the first appearing $1$ in $\alpha$ (resp., $\beta$) is changed to a
$0$ in $\alpha'$ (resp., $\beta'$). It is easy to see that
$(\alpha',\beta')$ belongs to $\Encs{n}{n}$.
\begin{ob}\label{ob:cancel}
	Let $(\alpha,\beta)$ belong to $\Encs{n}{n}$, let $\Agent$ and
	$\Agent'$ be agents in $[n]$ such that
	$\alpha_{\Agent}=\alpha_{\Agent'}=1$ and $\Agent>\Agent'$, let
	$(\alpha',\beta')$ be the result of applying a cancel operation to
	$(\alpha,\beta)$, let $\Matching$ denote $[\alpha,\beta]$, and let
	$\Matching'$ denote $[\alpha',\beta']$, Then
	$\Matching'(\Agent)=\Matching(\Agent)$ and
	$\Span{\Matching'}{\Agent''}$ is contained in
	$\Span{\Matching}{\Agent''}$ for all agents $\Agent''$ in $[n]$.
\end{ob}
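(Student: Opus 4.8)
The plan is to reason directly from the rank-based decoding $[\cdot,\cdot]$, tracking how each agent's match changes under a single cancel. Write $p$ for the position of the first $1$ in $\alpha$ and $q$ for the position of the first $1$ in $\beta$, so that cancel sets $\alpha'_p=0$ and $\beta'_q=0$ and leaves every other bit untouched. I would first collect a few easy facts. Since $\alpha_{\Agent'}=1$ and $\Agent'<\Agent$, the first $1$ of $\alpha$ obeys $p\le\Agent'<\Agent$, so $\alpha'_\Agent=1$ is unchanged and there is at least one $1$ strictly before position $\Agent$. Because $(\alpha,\beta)\in\Encs{n}{n}$ satisfies $\Weight{\alpha_{1,i}}\ge\Weight{\beta_{1,i}}$ for all $i$, taking $i=q$ forces $\Weight{\alpha_{1,q}}\ge 1$, hence $p\le q$. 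Finally, positions $1,\ldots,q-1$ of $\beta$ are all $0$, and by Observation~\ref{ob:encode} this pins down every small object: $\Matching(\Agent'')=\Agent''=\Matching'(\Agent'')$ for each $\Agent''<p$, so these agents have unchanged (singleton) spans.

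For the first conclusion, $\Matching'(\Agent)=\Matching(\Agent)$, I would count ranks. Suppose $\alpha_\Agent$ is the $j$th $1$ of $\alpha$; the fact above gives $j\ge 2$, and by definition $\Matching(\Agent)$ is the index of the $j$th $1$ of $\beta$. Cancel deletes the first $1$ of $\alpha$ and the first $1$ of $\beta$, so in $(\alpha',\beta')$ the bit $\alpha'_\Agent$ is the $(j-1)$th $1$ and its partner $\Matching'(\Agent)$ is the index of the $(j-1)$th $1$ of $\beta'$, which is precisely the $j$th $1$ of $\beta$. Hence $\Matching'(\Agent)=\Matching(\Agent)$. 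The identical rank-shift argument applies verbatim to every agent $\Agent''$ with $\alpha_{\Agent''}=1$ and $\Agent''\ne p$ (equivalently $\Agent''>p$), so all such agents keep their match and therefore their span.

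The remaining and most delicate step is the span containment $\Span{\Matching'}{\Agent''}\subseteq\Span{\Matching}{\Agent''}$ for the two agent types whose match can actually move: agent $p$, and the agents with $\alpha_{\Agent''}=0$. For agent $p$, being the first $1$ of $\alpha$ gives $\Matching(p)=q$; after cancel, positions $1,\ldots,q$ of $\beta'$ are all $0$, so the $p$th $0$ of $\beta'$ (using $p\le q$) lies at position $p$, whence $\Matching'(p)=p$ and its span contracts from $[p,q]$ to $\{p\}$. For an agent $\Agent''$ with $\alpha_{\Agent''}=0$, Observation~\ref{ob:encode} gives $\Matching(\Agent'')\le\Agent''$, and since $\alpha'_{\Agent''}=0$ also $\Matching'(\Agent'')\le\Agent''$; it therefore suffices to show $\Matching'(\Agent'')\ge\Matching(\Agent'')$, which sandwiches the new match inside $[\Matching(\Agent''),\Agent'']=\Span{\Matching}{\Agent''}$. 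Here is the main obstacle: cancel inserts one extra $0$ into $\alpha'$ (at $p$) and one extra $0$ into $\beta'$ (at $q$), so for $\Agent''>p$ the $0$-rank of $\Agent''$ increases by one, and I must compare the position of the new $0$-index in $\beta'$ against the old one. A short case analysis, according to whether the old match $\Matching(\Agent'')$ lies to the left or the right of $q$, shows (using $p\le q$ and the all-zero prefix of $\beta$ below $q$) that the match either stays fixed or advances by exactly one object to the right, never exceeding $\Agent''$. Combining the three cases, namely the unchanged $\alpha=1$ agents, agent $p$, and the $\alpha=0$ agents, yields the span containment for every $\Agent''$ in $[n]$ and completes the proof.
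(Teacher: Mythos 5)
The paper states this as an unproved Observation, and your rank-counting argument from the definition of $[\alpha,\beta]$ is exactly the intended verification; it is correct and complete. In particular, the deferred case analysis for agents with $\alpha_{\Agent''}=0$ does work out as you claim: writing $k$ for the number of $0$'s in $\alpha_{1,\Agent''}$, if $k+1\leq q$ the match moves from position $k$ to $k+1$, and if $k\geq q$ the $(k+1)$th $0$ of $\beta'$ coincides with the $k$th $0$ of $\beta$, so the match is unchanged; either way the new match stays in $[\Matching(\Agent''),\Agent'']$.
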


\begin{proof}
Lemma~\ref{lem:reach} implies that $\IndRat{\Prefs}{\Matching}$ holds
and there exists $(\alpha,\beta)$ in $\Encs{n}{n}$ such that
$\Matching=[\alpha,\beta]$. Since $\Agent<\Object$,
Observation~\ref{ob:encode} implies that
$\alpha_\Agent=\beta_\Object=1$.  Let the number of 1 bits to the left
of position $\Agent$ in $\alpha$ be equal to $k$. Thus the number of 1
bits to the left of position $\Object$ in $\beta$ is also $k$. Let
$(\alpha',\beta')$ be the pair of binary strings in $\Encs{n}{n}$ that
results from applying $k$ cancel operations to $(\alpha,\beta)$. Using
Observation~\ref{ob:cancel}, it is straightforward to check that the
matching $\Matching'=[\alpha',\beta']$ satisfies the requirements of
the lemma.
\end{proof}

}

Theorem~\ref{thm:RO_path} below summarizes the main result of this
section.

\begin{theorem}\label{thm:RO_path}
Reachable object on paths can be solved in $O(n^2)$ time.
\end{theorem}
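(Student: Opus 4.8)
The plan is to assemble the pieces developed in this section into an $O(n^2)$-time algorithm for reachable object on paths, following the roadmap already sketched in Section~\ref{sec:pathReachObject}. Given an OAF $\Oaf=(n,\Prefs)$, an agent $\Agent^*$, and a target object $\Object^*$, I would first reduce to the case $\Agent^*<\Object^*$ by symmetry (the case $\Agent^*>\Object^*$ is handled by reflecting the path, and $\Agent^*=\Object^*$ is trivially solved by $\Matching_\Oaf$). The core of the argument is then to show that a single invocation of Algorithm~\ref{alg:greedyPath} decides reachability and, when the answer is yes, produces a witnessing matching.

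The key steps, in order, are as follows. First I would run the preprocessing phase, computing $\Left{\Prefs}{\Agent}$ and $\Right{\Prefs}{\Agent}$ for every agent $\Agent$ in $[n]$; by scanning each preference list once, this costs $O(n^2)$, and this is the only phase that inspects the preferences. Second, I would verify the necessary condition $\Left{\Prefs}{\Agent^*}\leq\Object^*\leq\Right{\Prefs}{\Agent^*}$ (which must hold for any reachable matching sending $\Agent^*$ to $\Object^*$, by the remark following the definitions of $\LeftOp$ and $\RightOp$), reporting failure if it is violated. Third, I would apply Lemma~\ref{lem:truncate} to justify fixing every agent in $[\Agent^*-1]$ to its initial object: the lemma guarantees that whenever $\Object^*$ is reachable for $\Agent^*$, it is reachable via a matching in which each agent $\Agent'<\Agent^*$ stays matched to $\Agent'$. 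Consequently, I set $\Matching_0=\{(i,i)\mid i\in[\Agent^*-1]\}$ and $\Matching_1=\Matching_0+(\Agent^*,\Object^*)$ and invoke Algorithm~\ref{alg:greedyPath}. Fourth, I would appeal to Lemma~\ref{lem:correctness}: the preconditions of the subroutine are met (the verified inequality gives $\Object^*\leq\Right{\Prefs}{\Agent^*}$, and $\ReachAgain{\Config_\Oaf}{\Matching_0}\neq\emptyset$ holds since $\Matching_\Oaf$ itself contains $\Matching_0$), so the execution terminates correctly within $n-|\Matching_1|$ iterations, returning a matching in $\ReachAgain{\Config_\Oaf}{\Matching_1}$ if that set is nonempty and $\emptyset$ otherwise.

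Finally I would tally the running time. The preprocessing phase is $O(n^2)$. By Lemma~\ref{lem:correctness} the greedy phase runs for at most $n-|\Matching_1|=O(n)$ iterations, and each iteration performs a constant number of comparisons and updates using the precomputed $\LeftOp$ and $\RightOp$ values together with the maintained quantities $\MaxMatched{\Matching}$ and $\MinUnmatched{\Matching}$, so the greedy phase costs $O(n)$. The overall bound is therefore dominated by preprocessing and is $O(n^2)$, as claimed.

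I expect the main substantive obstacle to be the correctness justification rather than the timing: specifically, confirming that reducing the general agent $\Agent^*$ to the leftmost position via Lemma~\ref{lem:truncate} is sound, and that the hypotheses of Lemma~\ref{lem:correctness} are genuinely satisfied by the constructed $\Matching_0$ and $\Matching_1$. Since those two lemmas have already been established, the proof of the theorem reduces to carefully checking that the subroutine is invoked with valid inputs and then quoting Lemma~\ref{lem:correctness}; the remaining work is the elementary running-time accounting above.
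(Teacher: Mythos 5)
Your proposal is correct and follows essentially the same route as the paper: an $O(n^2)$ preprocessing pass for $\LeftOp$ and $\RightOp$, the feasibility check $\Left{\Prefs}{\Agent^*}\leq\Object^*\leq\Right{\Prefs}{\Agent^*}$, the reduction via Lemma~\ref{lem:truncate} to the instance with $\Matching_0=\{(i,i)\mid i\in[\Agent^*-1]\}$ and $\Matching_1=\Matching_0+(\Agent^*,\Object^*)$, and a single $O(n)$ invocation of Algorithm~\ref{alg:greedyPath} justified by Lemma~\ref{lem:correctness}. The only cosmetic difference is that you spell out the reflection argument for $\Agent^*>\Object^*$ and the precondition checks that the paper leaves implicit.
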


\subsection{Pareto-Efficient Reachability}
\label{sec:pathPareto}

Let $\Oaf=(n,\Prefs)$ be an OAF, and let $\Config=(\Oaf,\Matching_0)$
be a configuration for which we wish to compute a Pareto-efficient
matching. Below we describe a simple way to use
Algorithm~\ref{alg:greedyPath} to solve Pareto-efficient matching on
paths in $O(n^3)$ time. We then explain how to improve the time bound
to $O(n^2\log n)$, and then to $O(n^2)$. In all cases we employ the
same high-level strategy based on serial dictatorship. We begin by
performing the $O(n^2)$-time preprocessing phase discussed in
Section~\ref{sec:pathReachObject}; we only need to perform this
computation once. After the preprocessing phase, the output matching
is computed in $n$ stages numbered from $1$ to $n$. In stage $k$, we
determine the best possible match that we can provide to agent $k$
while continuing to maintain the previously-determined matches for
agents $1$ through $k-1$. We now describe how to use
Algorithm~\ref{alg:greedyPath} to implement any given stage $k$ in
$O(n^2)$ time. In stage $k$, we call Algorithm~\ref{alg:greedyPath}
$O(n)$ times. In each of these calls, the input matching $\Matching_0$
contains the $k-1$ previously-determined agent-object pairs involving
the agents in $[k-1]$. The calls to Algorithm~\ref{alg:greedyPath}
differ only in terms of the value assigned to the input object
$\Object_0$ which, together with $\Matching_0$, determines
$\Matching_1$. We vary $\Object_0$ over all values meeting the
precondition
$\MaxMatched{\Matching_0}<\Object_0\leq\Right{\Prefs}{|\Matching_1|}$
associated with Algorithm~\ref{alg:greedyPath}; the number of such
values is $O(n)$.

This allows us to determine, in $O(n^2)$ time, the rightmost feasible
match, if any, for agent $k$. By Lemma~\ref{lem:options}, the
leftmost potential match for agent $k$ is
$\MinUnmatched{\Matching_0}$, and this option is only feasible if
$\Left{\Prefs}{k}\leq\MinUnmatched{\Matching_0}$. Since
$\ReachAgain{\Config_\Oaf}{\Matching_0}\not=\emptyset$, we are
guaranteed to find at least one candidate match for agent $k$ in this
process. If there is exactly one candidate, then we select it as the
match of agent $k$. Otherwise, there are two candidates (leftmost and
rightmost), and we select the candidate that agent $k$ prefers.

The simple algorithm described above has a running time of $O(n^2)$
per stage, and hence $O(n^3)$ overall. To understand how to implement
a stage more efficiently, it is useful to assign a color to the
program state each time the condition of the while loop is evaluated.
We color such a state red if $\Matching=\emptyset$. If the state is
red, then the execution is guaranteed to fail (i.e., return
$\emptyset$) immediately. We color such a state green if
$|\Matching|=\MaxMatched{\Matching}>0$. If the state is green, it is
straightforward to prove that the program will proceed to assign every
agent $\Agent$ in $\{|\Matching|+1,\ldots,n\}$ to object $\Agent$, and
then will succeed (i.e., return a nonempty matching). This
observation also implies that if the state is green after a given
number of iterations, it remains green after each subsequent
iteration. If a state is neither red nor green, we color it yellow.

We are now ready to see how to improve the running time of stage $k$
to $O(n\log n)$. As a thought experiment, consider running the $O(n)$
executions of Algorithm~\ref{alg:greedyPath} associated with the
simple algorithm, but now in parallel. Within each of these
executions, we color each successive agent in the set $\{k,\ldots,n\}$
white or black as follows: agent $k$ is colored black; agent
$|\Matching|+1$ is colored white if
$\Left{\Prefs}{|\Matching|+1}\leq\MinUnmatched{\Matching}$, and black
otherwise. The key observation is that as long as none of the parallel
executions have terminated, they all agree on the coloring of the
processed agents. It follows that if we compare two of the parallel
executions, say executions~A and~B where execution~A has a lower value
for $\Object_0$ than execution~B, then execution~A can only transition
to a green state at a strictly earlier iteration than execution~B, and
execution~A cannot transition to a red state earlier than execution~B.
This implies that there is a threshold $\Object_1$ such that all
executions with $\Object_0\leq\Object_1$ succeed, and all of the
remaining executions fail. This in turn means that we do not need to
run all $O(n)$ of the parallel executions of
Algorithm~\ref{alg:greedyPath}. Instead, we can use binary search to
determine the threshold $\Object_1$ in $O(\log n)$ executions. This
observation reduces the running time of a stage from $O(n^2)$ to
$O(n\log n)$.

We now sketch how to further improve the running time of stage $k$ to
$O(n)$. To do so, we will use a single execution of a modified version
of Algorithm~\ref{alg:greedyPath} to compute the threshold $\Object_1$
discussed above. The high-level idea is to treat $\Object_0$ as a
variable instead of a fixed value. Initially, we set $\Object_0$ to
$\MaxMatched{\Matching_0}+1$, the minimum value satisfying the
associated precondition of Algorithm~\ref{alg:greedyPath}. We also
maintain a lower bound $\Object_1'$ on the threshold $\Object_1$. We
initialize $\Object_1'$ to a low dummy value, such as $0$. Each time
the condition of the while loop is evaluated, we check whether the
color of the current state is red, green, or yellow. If the color is
yellow, we continue the execution without altering $\Object_0$ or
$\Object_1'$. If the color is red, then we halt and output the
threshold $\Object_1=\Object_1'$. If the color is green, then we
assign $\Object_1'$ to the current value of $\Object_0$, and we
increment $\Object_0$.

Unfortunately, we cannot simply increment $\Object_0$ and continue the
execution. While incrementing $\Object_0$ has no impact on the
white-black categorization of the agents processed so far, and 
on the matches of the white agents, it causes the match of each
black agent to be incremented. This leads to two difficulties that we
now discuss.

The first difficulty is that there can be a lot of black agents,
making it expensive to maintain an explicit match for each black
agent. Accordingly, when we color an agent black, we do not explicitly
match that agent to a particular object. Instead, we maintain an
ordered list of the black agents. At any given point in the execution,
the black agents are implicitly matched (in the order specified by the
list) to the contiguous block of objects that starts with the current
value of $\Object_0$. Thus, when $\Object_0$ is incremented, the
matches of the black agents are implicitly updated in constant time.

The second difficulty associated with incrementing $\Object_0$ is that
if a black agent $\Agent$ is matched to object
$\Right{\Prefs}{\Agent}$ just before the increment, then it is
infeasible to shift the match of agent $\Agent$ to the right. If this
happens, we need to recognize that executing
Algorithm~\ref{alg:greedyPath} from the beginning with the new higher
value of $\Object_0$ results in a red state, and so we should
terminate. To recognize such events, we introduce an integer variable
called \emph{slack}. We maintain the invariant that \emph{slack} is
equal to the maximum number of positions we can shift the list of
black agents to the right without violating a constraint. We
initialize \emph{slack} to $\Right{\Prefs}{|\Matching_1|}$ minus the
initial value $\MaxMatched{\Matching_0}+1$ of $\Object_0$. When we
color an agent $\Agent$ black, we update \emph{slack} to the minimum
of its current value and $\Right{\Prefs}{\Agent}-\Object_0-\ell$,
where $\ell$ denotes the number of previously-identified black agents.
When we increment $\Object_0$, we decrement \emph{slack} in order to
maintain the invariant. If \emph{slack} becomes negative, we
recognize that the program should be in a red state, and we terminate.

Upon termination, it is straightforward to argue that the output
threshold $\Object_1$ is correct. If $\Object_1$ is equal to the
initial dummy value, then the sole candidate match for agent~$k$ is
object $\MinUnmatched{\Matching_0}$. If not, object $\Object_1$ is a
candidate, and if $\MinUnmatched{\Matching_0}$ is not equal to
$\Object_1$ and $\Left{\Prefs}{k}\leq \MinUnmatched{\Matching_0}$ then
$\MinUnmatched{\Matching_0}$ is a second candidate. If there are two
candidates, we use the preferences of agent~$k$ to select between
them.

Theorem~\ref{thm:PE_path} below summarizes the main result of this
section.

\begin{theorem}
\label{thm:PE_path}
Pareto-efficient matching on paths can be solved in $O(n^2)$ time.
\end{theorem}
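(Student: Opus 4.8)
The plan is to certify the serial-dictatorship algorithm sketched above on two fronts: \emph{correctness}, i.e.\ that the matching it returns lies in $\Reach{\Config_\Oaf}$ and is Pareto-efficient with respect to $\Reach{\Config_\Oaf}$, and the $O(n^2)$ time bound. The organizing device is the loop invariant that after stage $k$ the committed partial matching $\Matching_0$ over the agents in $[k]$ satisfies $\ReachAgain{\Config_\Oaf}{\Matching_0}\neq\emptyset$. This holds initially, since $\Matching_0=\emptyset$ and $\Matching_\Oaf\in\Reach{\Config_\Oaf}$, and it is preserved because each stage commits only a match that Algorithm~\ref{alg:greedyPath} certifies feasible: by Lemma~\ref{lem:correctness}, a nonempty return value of that subroutine is a genuine witness in the corresponding set $\ReachAgain{\Config_\Oaf}{\Matching_1}$. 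In particular, after the final stage the committed matching is itself a reachable perfect matching, namely the algorithm's output.

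Next I would show that stage $k$ assigns agent $k$ its $\Prefs_k$-best object among all objects it can receive in a reachable matching extending the current commitments. By the definition of $\Decs{n}$ together with Lemma~\ref{lem:reach}, every such object is either the single ``left'' option $\MinUnmatched{\Matching_0}$ or an object exceeding $\MaxMatched{\Matching_0}$; individual rationality confines every feasible match to the interval $[\Left{\Prefs}{k},\Right{\Prefs}{k}]$, and the monotonicity of $\Prefs_k$ along the runs defining $\Left{\Prefs}{k}$ and $\Right{\Prefs}{k}$ makes agent $k$'s favorite feasible right option its \emph{rightmost} one, $\Object_1$. Lemma~\ref{lem:options} guarantees that at least one feasible candidate exists, so the stage-optimal match is the $\Prefs_k$-preferred of $\Object_1$ and, when $\Left{\Prefs}{k}\leq\MinUnmatched{\Matching_0}$, the object $\MinUnmatched{\Matching_0}$ -- exactly what the stage selects. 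Pareto-efficiency then follows by the standard exchange argument: were some $\Matching'$ in $\Reach{\Config_\Oaf}$ to Pareto-dominate the output $\Matching$, then for the least agent $k$ with $\Matching'(k)\neq\Matching(k)$ minimality makes $\Matching'$ agree with the committed matches on $[k-1]$ while strictness of preferences forces $\Matching'(k)\Prefs_k\Matching(k)$, contradicting the stage-$k$ optimality just established.

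For the time bound I would charge the one-time $O(n^2)$ preprocessing of $\Left{\Prefs}{\cdot}$ and $\Right{\Prefs}{\cdot}$ separately, and then argue that each of the $n$ stages runs in $O(n)$ time, for $O(n^2)$ overall. The per-stage bound is the crux. I would first formalize the ``parallel executions'' thought experiment: running Algorithm~\ref{alg:greedyPath} simultaneously for every admissible value of $\Object_0$, and proving by induction on the iteration count that as long as all executions remain live they agree on the black/white classification of the processed agents and on the matches of the white agents, while each black agent's implicit match shifts in lockstep with $\Object_0$. From this agreement I would extract the monotonicity that, comparing two live executions, the one with the smaller $\Object_0$ can only turn green earlier and can only turn red later; this is exactly what makes the feasible values of $\Object_0$ form a downward-closed set, so the threshold $\Object_1$ (the largest feasible $\Object_0$) is well defined and the stage reduces to computing $\Object_1$.

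Finally I would verify that the single modified execution computes $\Object_1$ in $O(n)$ time. Representing the black agents as an ordered list implicitly matched to the contiguous block of objects starting at the current $\Object_0$ lets each increment of $\Object_0$ be handled in constant time, and the \emph{slack} variable -- maintained as the largest rightward shift of that block consistent with every committed $\Right{\Prefs}{\cdot}$ bound -- signals, by going negative, the first value of $\Object_0$ that would drive the unoptimized subroutine into a red state. I expect this bookkeeping -- stating the slack invariant precisely and proving that it faithfully tracks the behavior of Algorithm~\ref{alg:greedyPath} -- to be the main obstacle, since it is where the speedup from $O(n^3)$ to $O(n^2)$ is actually earned; by contrast, the serial-dictatorship correctness is routine given the lemmas already in hand.
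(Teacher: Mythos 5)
Your proposal follows the paper's argument essentially verbatim: the same serial-dictatorship framework with a one-time $O(n^2)$ preprocessing phase, the same reduction of each stage to the two candidates $\MinUnmatched{\Matching_0}$ and the rightmost feasible right option via Lemma~\ref{lem:options}, and the same $O(n)$-per-stage implementation built on the parallel-executions monotonicity, the implicit contiguous block of black agents, and the \emph{slack} invariant. The only additions are a more explicit loop invariant and exchange argument for Pareto-efficiency, which the paper leaves implicit in its appeal to serial dictatorship; the plan is correct and no gap is apparent.
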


	\section{Pareto-Efficient Reachability on Generalized Stars}
\label{sec:gsPeAlg}

Throughout this section, let $\Oaf$ denote an OAF associated with a
generalized star $G$, let $\Center$ denote the center object of $G$,
let $m$ denote the number of branches of $G$, and assume that the
branches are indexed from $1$ to $m$. For any $i$ in $[m]$, let
$\ell_i>0$ denote the number of vertices on branch $i$. We refer to
the objects on branch $i$ as
$\BranchObject{i}{1},\ldots,\BranchObject{i}{\ell_i}$, where object
$\BranchObject{i}{j}$ is at distance $j$ from the center. Let
$\Config_0=(\Oaf,\Matching_0)$ denote the initial configuration, and
let $n=1+\sum_{1\leq i\leq m}\ell_i$ denote the total number of
vertices in $G$.

Our algorithm uses serial dictatorship to compute a Pareto-efficient
matching for configuration $\Config_0$. For any sequence of agents
$\AgentSeq=\Agent_1,\ldots,\Agent_s$ we define $\Serial{\AgentSeq}$ as
the cardinality-$s$ matching of $\Oaf$ in which agent $\Agent_1$ (the
first dictator) is matched to its best match $\Object_1$ in
$\ReachAgain{\Config_0}{\Partial_0}$ where $\Partial_0=\emptyset$,
agent $\Agent_2$ (the second dictator) is matched to its best match
$\Object_2$ in $\ReachAgain{\Config_0}{\Partial_1}$ where
$\Partial_1=\Partial_0+(\Agent_1,\Object_1)$, \ldots, and agent
$\Agent_s$ (the $s$th dictator) is matched to its best match
$\Object_s$ in $\ReachAgain{\Config_0}{\Partial_{s-1}}$ where
$\Partial_{s-1}=\Partial_{s-2}+(\Agent_{s-1},\Object_{s-1})$. Observe
that for any permutation $\AgentSeq$ of the entire set of agents in
$\Oaf$, $\Serial{\AgentSeq}$ is a Pareto-efficient matching of
$\Config_0$.

We iteratively grow a dictator sequence $\AgentSeq$. We find it
convenient to partition the iterations into two phases. The first
phase ends when the current dictator is matched to the center
object. The second phase reduces to solving a collection of disjoint
path problems, one for each branch.

We begin by discussing the design and analysis of the first phase. We
find it useful to introduce the concept of a ``nice pair'' for OAF
$\Oaf$. For any configuration $\Config$ of the form
$(\Oaf,\Matching)$ and any matching $\Partial$ of $\Oaf$, we say that
the pair $(\Config,\Partial)$ is nice for $\Oaf$ if the following
conditions hold:
\begin{itemize}
\item The set $\ReachAgain{\Config}{\Partial}$ is nonempty.
 
\item For any branch $i$ in $[m]$, there is a (possibly empty)
 sequence of integers $1\leq j_1<\cdots<j_s\leq \ell_i$ such that
 $\Partial(\Config(\BranchObject{i}{t}))=\BranchObject{i}{j_t}$ for
 $1\leq t\leq s$ and $\Config(\BranchObject{i}{t})$ is unmatched in
 $\Partial$ for $s<t\leq\ell_i$. We refer to this (unique) sequence
 as $\Dests{\Config}{\Partial}{i}$.
\end{itemize}

The first phase iteratively updates a dictator sequence $\AgentSeq$, a
configuration $\Config$, and a matching $\Partial$. We initialize the
configuration $\Config$ to $\Config_0$, the initial configuration of
$\Oaf$. We initialize $\AgentSeq$ to the singleton sequence containing
agent $\Config(\Center)$, the first dictator. We use Subroutine~1 of
Section~\ref{sec:gsSubroutines} to determine the best match of agent
$\Config(\Center)$ in $\Reach{\Config}$, call it $\Object$, and we
initialize $\Partial$ to $\{(\Config(\Center),\Object)\}$.

We then execute the while loop described below, which we claim
satisfies the following loop invariant $I$: $(\Config,\Partial)$ is a
nice pair for $\Oaf$, agent $\Config(\Center)$ is matched in
$\Partial$,
$\ReachAgain{\Config}{\Partial}=\ReachAgain{\Config_0}{\Partial}$, and
$\Partial=\Serial{\AgentSeq}$. It is straightforward to verify that
invariant $I$ holds after initialization of $\AgentSeq$, $\Config$,
and $\Partial$. In Section
~\ref{sec:gsInvariant}, we prove that if
$I$ holds at the start of an iteration of the while loop, then $I$ holds
at the end of the iteration.

While $\Partial(\Config(\Center))\not=\Center$, we use the following
steps to update $\AgentSeq$, $\Config$, and $\Partial$.
\begin{enumerate}
\item
Since $\Partial(\Config(\Center))\not=\Center$, object
$\Partial(\Config(\Center))$ is of the form $\BranchObject{i}{j_0}$
for some $i$ in $[m]$ and $j_0$ in $[\ell_i]$. Let $j_1<\cdots<j_s$
denote $\Dests{\Config}{\Partial}{i}$, and let $\Agent$ denote the
agent $\Config(\BranchObject{i}{s+1})$.
 
\item
Append agent $\Agent$ to the dictator sequence $\AgentSeq$.
 
\item
\label{step:max}
Use Subroutine~2 of Section~\ref{sec:gsSubroutines} to set $k$ to the
maximum $j$ such that object $\BranchObject{i}{j}$ is a possible match
of agent $\Agent$ in $\ReachAgain{\Config}{\Partial}$, or to $0$ if no
such $j$ exists.
 
\item
\label{step:center}
If
$\BranchObject{i}{s+1}<_\Agent\cdots<_\Agent\BranchObject{i}{1}<_\Agent\Center$,
then perform the following steps.
\begin{itemize}
\item[(a)]
Let $\Matching$ denote the matching of $\Oaf$ such that
$\Config=(\Oaf,\Matching)$, let $\Matching^*$ denote the matching
obtained from $\Matching$ by applying $s+1$ swaps to move agent
$\Agent$ from object $\BranchObject{i}{s+1}$ to the center object
$\Center$, and let $\Config^*$ denote the configuration
$(\Oaf,\Matching^*)$.
 
\item[(b)]
Use Subroutine~1 of Section~\ref{sec:gsSubroutines} to set $\Object$
to the best match of agent $\Agent$ in
$\ReachAgain{\Config^*}{\Partial}$.
 
\item[(c)] If $k=0$ or $\Object>_\Agent\BranchObject{i}{k}$, then set
 $\Config$ to $\Config^*$, $\Partial$ to $\Partial+(\Agent,\Object)$,
 and $k$ to $-1$.
\end{itemize}
 
\item
\label{step:outward}
If $k>0$, then set $\Partial$ to
$\Partial+(\Agent,\BranchObject{i}{k})$.
\end{enumerate}

Upon termination of the first phase, invariant $I$ holds and
$\Partial(\Config(\Center))=\Center$. Thus, letting $\Config_1$
denote the value of program variable $\Config$ at the end of the first
phase, we know that $(\Config_1,\Partial)$ is a nice pair for $\Oaf$,
$\Partial(\Config_1(\Center))=\Center$,
$\ReachAgain{\Config_1}{\Partial}=\ReachAgain{\Config_0}{\Partial}$,
and $\Partial=\Serial{\AgentSeq}$.

In the second phase, we perform the following computation for each
branch $i$ (in arbitrary order). First, we let $j_1<\cdots<j_s$ denote
$\Dests{\Config_1}{\Partial}{i}$. Second, we perform the following
steps for $j$ ranging from $s+1$ to $\ell_i$.
\begin{enumerate}
\item Let $\Agent$ denote agent $\Config_1(\BranchObject{i}{j})$, and
 append $\Agent$ to $\AgentSeq$.

\item Use Subroutine~3 of Section~\ref{sec:gsSubroutines} to set
 $\Object$ to the best match of agent $\Agent$ in
 $\ReachAgain{\Config_1}{\Partial}$.

\item
\label{step:update}
Set $\Partial$ to $\Partial+(\Agent,\Object)$.
\end{enumerate}
At the end of the second phase, we output the matching $\Partial$.

Let $I'$ denote the invariant
``$\ReachAgain{\Config_1}{\Partial}=\ReachAgain{\Config_0}{\Partial}$
and $\Partial=\Serial{\AgentSeq}$''. Thus invariant $I'$ holds at the
end of the first phase. Moreover, it is easy to see that invariant
$I'$ continues to hold immediately after each execution of
step~\ref{step:update} in the second phase.

Since invariant $I$ holds in the first phase and invariant $I'$ holds
in the second phase, the overall algorithm faithfully implements the
serial dictatorship framework discussed at the beginning of this
section. Thus the algorithm correctly computes a Pareto-efficient
matching for configuration $\Config_0$. In
Section~\ref{sec:gsSubroutines}, we explain how to implement
Subroutines~1, 2, and~3 so that the overall running time of the
algorithm is $O(n^2\log n)$.

\subsection{Polynomial-Time Implementation}
\label{sec:gsSubroutines}

In this section we describe an efficient implementation of the
two-phase algorithm presented in Section~\ref{sec:gsPeAlg}.  Our
description of the first phase make use of Subroutines~1 and~2, while
our description of the second phase makes use of Subroutine~3. Below
we discuss how to implement Subroutines~1, 2, and~3 efficiently.  Our
analysis of the time complexity of these subroutines assumes that a
certain preprocessing phase has been performed. Specifically, for each
agent $\Agent$ in $\Oaf$, we precompute the set of all objects
$\Object$ such that the sequence of objects
$\Config_0(\Agent)=b_1,\ldots,b_k=b$ on the unique simple path from
$\Config_0(\Agent)$ (the initial object of agent $\Agent$) to
$\Object$ in $G$ satisfies $b_1<_\Agent\cdots<_\Agent b_k$.  It is
straightforward to compute each such set in $O(n)$ time, and hence the
overall time complexity of the preprocessing phase is $O(n^2)$.

We now describe Subroutine~1. The input to Subroutine~1 is a nice pair
$(\Config,\Partial)$ for $\Oaf$ such that agent
$\Agent=\Config(\Center)$ is unmatched in $\Partial$. The output of
Subroutine~1 is the best match of $\Agent$ in
$\ReachAgain{\Config}{\Partial}$. Subroutine~1 works by considering
each branch $i$ in turn to compute the best branch-$i$ match of
$\Agent$ in $\ReachAgain{\Config}{\Partial}$.  For a fixed $i$ in
$[m]$, the latter problem can be solved as follows. Consider the path
$P$ of objects consisting of the center object $\Center$ plus branch
$i$. By restricting the generalized star configuration $\Config$ to
path $P$, we obtain a path configuration $\Config_P$.  Similarly, by
restricting the matching $\Partial$ to the agents associated with path
$P$, we obtain a matching $\Partial_P$ defined on path $P$. For any
given $j$ in $[\ell_i]$, it is easy to argue that object
$\BranchObject{i}{j}$ is a possible match of $\Agent$ in
$\ReachAgain{\Config}{\Partial}$ if and only if object
$\BranchObject{i}{j}$ is a possible match of $\Agent$ in
$\ReachAgain{\Config_P}{\Partial_P}$. Moreover, we can determine
whether $\BranchObject{i}{j}$ is a possible match of $\Agent$ in
$\ReachAgain{\Config_P}{\Partial_P}$ by performing at most one call to
Algorithm~\ref{alg:greedyPath} on path $P$. Given the results of the
preprocessing phase, the additional time complexity required to
determine whether $\BranchObject{i}{j}$ is a possible match of
$\Agent$ in $\ReachAgain{\Config_P}{\Partial_P}$ is $O(\ell_i)$. Using
binary search, we can determine the maximum $j$ (if any) such that
$\BranchObject{i}{j}$ is a possible match of $\Agent$ in
$\ReachAgain{\Config_P}{\Partial_P}$ in $O(\ell_i\log\ell_i)$
time. (Remark: Letting $j_1<\cdots<j_s$ denote
$\Dests{\Config}{\Partial}{i}$, we can restrict the binary search to
the interval $\{1,\ldots,j_1-1\}$ if $s>0$.)  Thus we can determine
the best branch-$i$ match of $\Agent$ (if any) in
$\ReachAgain{\Config}{\Partial}$ in $O(\ell_i\log\ell_i)$ time, and
hence we can determine the best match of $\Agent$ in
$\ReachAgain{\Config}{\Partial}$ in $O(n\log n)$ time.

We now describe Subroutine~2. The input to Subroutine~2 is a nice pair
$(\Config,\Partial)$ for $\Oaf$ and an integer $i$ in $[m]$ such that
$s<\ell_i$ where $j_1<\cdots<j_s$ denotes
$\Dests{\Config}{\Partial}{i}$. The output $k$ of Subroutine~2 is the
maximum $j$ such that $\BranchObject{i}{j}$ is a possible match of
agent $\Agent=\BranchObject{i}{s+1}$ in
$\ReachAgain{\Config}{\Partial}$, or $0$ if no such $j$ exists.  As in
Subroutine~1, we can reduce this task to a path problem. In the
present case, we can restrict $\Config$ and $\Partial$ to branch $i$,
that is, we do not need to consider the extended path that includes
the center object. Moreover, if $s>0$ we can restrict the binary
search for $j$ to the interval $\{j_s+1,\ldots,\ell_i\}$. Each
iteration of the binary search involves a single call to
Algorithm~\ref{alg:greedyPath}. Given the results of the preprocessing
phase, the additional time complexity required for each such call is
$O(\ell_i)$.  Taking into account the binary search, this approach
yields a time complexity of $O(\ell_i\log\ell_i)$.

Having discussed Subroutines~1 and~2, we can now establish an upper
bound on the time complexity of the first phase.  The first phase
performs at most $n$ iterations.  The worst-case running time of each
iteration is dominated by the cost of a possible call to Subroutine~1,
and hence is $O(n\log n)$. Thus the overall running time of the first
phase is $O(n^2\log n)$.

We now discuss Subroutine~3 and the time complexity of the second
phase.  Since $\Partial(\Config(\Center))=\Center$ throughout the
second phase, the sequence of calls to Subroutine~3 that we make for a
given value of $i$ can be resolved by restricting attention to the
branch-$i$ objects and their matched agents under configuration
$\Config_1$. Because $(\Config_1,\Partial)$ is a nice pair for $\Oaf$
at the end of the first phase, and because we iterate over increasing
values of $j$ from $s+1$ to $\ell_i$, the approach of
Section~\ref{sec:pathPareto} can be used to implement each successive
call to Subroutine~3 in $O(\ell_i)$ time. Thus the time required to
process branch $i$ is $O(\ell_i^2)$, and the overall time complexity
of the second phase is $\sum_{i\in[m]}O(\ell_i^2)=O(n^2)$.

Since the time complexity of the preprocessing phase is $O(n^2)$, the
time complexity of the first phase is $O(n^2\log n)$, and the time
complexity of the second phase is $O(n^2)$, we conclude that the
overall time complexity of the algorithm is $O(n^2\log n)$.

\subsection{The First Phase Invariant}
\label{sec:gsInvariant}

The following sequence of lemmas pertain to an arbitrary iteration in
the first phase.  We assume that invariant $I$ holds before the
iteration, and we seek to prove that invariant $I$ holds after the
iteration.  We use the symbols $\AgentSeq$, $\Config$, and $\Partial$
(resp., $\AgentSeq'$, $\Config'$, $\Partial'$) to refer to the values
of the corresponding program variables at the start (resp., end) of
the iteration. We use the symbols $i$, $j_0$, $s$, $j_1,\ldots,j_s$,
$\Agent$, $\Matching$, $\Matching^*$, and $\Config^*$ to refer to the
corresponding program variables; these variables do not change in
value during the iteration.  We use the symbol $k$ to refer to the
initial value of the corresponding program variable. The value of the
program variable $k$ can change at most once (see
step~\ref{step:center}(c)). We use the symbol $k'$ to refer to the
value of program variable $k$ at the end of the iteration.

\begin{lemma}\label{lem:PE_GS_1}
We have $j_t>t$ for all $t$ in $[s]$.
\end{lemma}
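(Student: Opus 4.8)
The plan is to prove the single inequality $j_0<j_1$, from which the lemma follows easily. When $s=0$ the statement is vacuous, so assume $s\ge 1$. Then $j_0\ge 1$ (since $\Partial(\Config(\Center))=\BranchObject{i}{j_0}$ is a branch object), and $j_1<\cdots<j_s$ holds by the definition of $\Dests{\Config}{\Partial}{i}$; hence $j_0<j_1$ gives $j_1\ge 2$ and therefore $j_t\ge j_1+(t-1)\ge t+1>t$ for every $t\in[s]$. Intuitively, $j_0<j_1$ records that once the center agent has moved outward to $\BranchObject{i}{j_0}$, the agent originally at $\BranchObject{i}{1}$ cannot stay behind it but must end up strictly farther from the center than $\BranchObject{i}{j_0}$. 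I would extract this from the path characterization of reachable matchings in Lemma~\ref{lem:reach}.

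First I would restrict attention to the path $P$ formed by the center object together with branch $i$, listed as $\Center,\BranchObject{i}{1},\ldots,\BranchObject{i}{\ell_i}$, so that $\BranchObject{i}{r}$ occupies path position $r+1$ and $N:=\ell_i+1$ is the number of positions. Because $(\Config,\Partial)$ is a nice pair, $\ReachAgain{\Config}{\Partial}$ is nonempty, and since $\Partial$ sends the center agent and the agents $\Config(\BranchObject{i}{t})$ to branch-$i$ objects only, the corresponding partial matching can be realized using swaps confined to $P$: there is a matching $\Matching'$ reachable on $P$ from the restriction $\Config_P$ of $\Config$ to $P$ that matches $\Config(\Center)$ to $\BranchObject{i}{j_0}$ and $\Config(\BranchObject{i}{1})$ to $\BranchObject{i}{j_1}$. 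Relabeling the agents of $P$ by their position in $\Config_P$ makes $\Config_P$ the identity configuration of a path OAF on $N$ objects; the center agent becomes agent $1$ and $\Config(\BranchObject{i}{1})$ becomes agent $2$, and $\Matching'$ becomes a matching reachable from this identity configuration, hence an element of $\Decs{N}$ by Lemma~\ref{lem:reach}. Now I apply the defining condition of $\Decs{N}$ to agent $2$. The restriction $\Matching'_1$ of $\Matching'$ to agent $1$ matches it to position $j_0+1$; since $j_0\ge 1$, position $1$ (the center) is unmatched in $\Matching'_1$, so $\MinUnmatched{\Matching'_1}=1$ and $\MaxMatched{\Matching'_1}=j_0+1$. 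As agent $2$ is matched to position $j_1+1\ge 2$, the option $\Matching'(2)=\MinUnmatched{\Matching'_1}$ is ruled out, forcing $\MaxMatched{\Matching'_1}<\Matching'(2)$, i.e.\ $j_0+1<j_1+1$, and hence $j_0<j_1$, as required.

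The main obstacle I anticipate is justifying the passage to the path $P$: during a swap sequence on the generalized star, an agent from another branch can enter branch $i$ through the center, so the restriction of a star-reachable matching to $P$ need not be a genuine path swap sequence. What I actually need is weaker, namely that the branch-$i$ matches prescribed for the center agent and for $\Config(\BranchObject{i}{1})$ are achievable by swaps internal to $P$; this is precisely the reachability-on-a-branch equivalence already invoked to justify Subroutine~1 in Section~\ref{sec:gsSubroutines}. I would therefore either cite that equivalence directly or re-establish it here by arguing that foreign agents arriving through the center are never required in order to produce the desired branch-$i$ matches.
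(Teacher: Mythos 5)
Your proof is correct in its overall logic and isolates exactly the right key inequality, $j_0<j_1$, from which the lemma follows by the monotonicity of $j_1<\cdots<j_s$; that much coincides with the paper. Where you diverge is in how $j_0<j_1$ is justified. The paper's proof is a direct one-line dynamic argument on the tree: agent $\Config(\Center)$, which ends at $\BranchObject{i}{j_0}$, cannot overtake agent $\Config(\BranchObject{i}{1})$, which ends at $\BranchObject{i}{j_1}$, because each agent's trajectory under Pareto-improving swaps is a simple path (here, monotone outward on branch $i$), and a swap reversing their relative order would move the latter agent inward. You instead route the argument through the path characterization of Lemma~\ref{lem:reach}: restrict to the path $P=\Center,\BranchObject{i}{1},\ldots,\BranchObject{i}{\ell_i}$, invoke membership in $\Decs{N}$, and read off $j_0<j_1$ from the defining condition applied to agent $2$. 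That combinatorial step is sound, and your use of the single-agent branch-restriction equivalence is adequate here because $\Partial_P$ already contains both prescribed pairs. The trade-off is that your route leans on the restriction-to-a-branch equivalence, which the paper itself only asserts informally (``it is easy to argue'') in Section~\ref{sec:gsSubroutines} and whose ``only if'' direction (foreign agents entering branch $i$ through the center are never needed) is the genuinely delicate point you correctly flag; the paper's no-overtaking argument sidesteps that machinery entirely and needs only the fact that agent trajectories on a tree are simple paths. So your proof is acceptable but heavier than necessary for this lemma, and if you keep it you should actually prove the branch-restriction claim rather than cite it, since the paper never establishes it formally.
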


\begin{proof}
Since $(\Config,\Partial)$ is nice, we have
$\ReachAgain{\Config}{\Partial}\not=\emptyset$. Let $\Matching'$
denote a matching in $\ReachAgain{\Config}{\Partial}$.  Since
$\Partial(\Config(\BranchObject{i}{1}))=\BranchObject{i}{j_1}$ and
agent $\Config(\Center)$ cannot overtake agent
$\Config(\BranchObject{i}{1})$ as we transform the matching associated
with $\Config$ to $\Matching'$, we have $1\leq j_0<j_1$.  Since
$j_1<\cdots<j_s$, we deduce that $j_2>2$, \ldots, $j_s>s$.
\end{proof}

It follows from Lemma~\ref{lem:PE_GS_1} that $s<\ell_i$ and hence that
agent $\Agent$ is well-defined. Agent $\Agent$ serves as the dictator
in this iteration.

Throughout the remainder of this section, let $R$ denote the set of
all matchings in $\ReachAgain{\Config}{\Partial}$ that match $\Agent$
to an object in branch $i$, and let $J$ denote the set of all integers
$j$ in $[\ell_i]$ such that object $\BranchObject{i}{j}$ is a possible
match of agent $\Agent$ in $R$.  Thus $k$ is the maximum integer in
$J$, or $0$ if $J$ is empty.

\begin{lemma}\label{lem:PE_GS_2}
The following statements hold: (1) if
$\BranchObject{i}{s+1}<_\Agent\cdots<_\Agent\BranchObject{i}{1}<_\Agent\Center$
does not hold, then $\ReachAgain{\Config}{\Partial}=R$; (2) if $j$
belongs to $J$ and $s=0$, then $j>1$; (3) if $j$ belongs to $J$ and
$s>0$ then $j>j_s$; (4) if $k>0$ then $\BranchObject{i}{k}$ is the
best match of agent $\Agent$ in $R$; (5) if $k=0$ then
$\BranchObject{i}{s+1}<_\Agent\cdots<_\Agent\BranchObject{i}{1}<_\Agent\Center$
holds.
\end{lemma}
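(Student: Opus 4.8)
The plan is to reduce all five statements to two workhorses. The first is the characterization of path-reachable matchings as $\Decs{\ell_i+1}$ provided by Lemma~\ref{lem:reach}, applied to the path $P$ obtained by adjoining the center $\Center$ to branch $i$. The second is the elementary observation that whenever an agent takes part in a swap its matched object strictly improves under its own preferences; hence along any swap sequence the object held by a fixed agent strictly ascends its (strict) preference order, the agent never revisits an object, and since the generalized star $G$ is a tree, the objects an agent visits while moving from one object to another must include the unique $G$-path between them, traversed monotonically. I anticipate that making statement~(1) airtight — ruling out roundabout escapes from branch $i$ — is the crux, and this observation is precisely the tool for it.

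I would first prove statement~(1). Suppose some matching in $\ReachAgain{\Config}{\Partial}$ matched $\Agent$ to an object off branch $i$. Since $\Center$ is the only vertex separating branch $i$ from the rest of $G$, the trajectory of $\Agent$ from its starting object $\BranchObject{i}{s+1}$ would have to pass through $\Center$, hence traverse the $G$-path $\BranchObject{i}{s+1},\BranchObject{i}{s},\ldots,\BranchObject{i}{1},\Center$ monotonically; by the strict-ascent observation this forces $\BranchObject{i}{s+1}<_\Agent\cdots<_\Agent\BranchObject{i}{1}<_\Agent\Center$. Contrapositively, if this chain fails then every reachable extension keeps $\Agent$ on branch $i$, i.e.\ $\ReachAgain{\Config}{\Partial}=R$. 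Statement~(5) then comes for free: $k=0$ means $J=\emptyset$, hence $R=\emptyset$, while niceness of $(\Config,\Partial)$ gives $\ReachAgain{\Config}{\Partial}\neq\emptyset$; thus $\ReachAgain{\Config}{\Partial}\neq R$, and the contrapositive of~(1) yields the chain.

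For statements~(2) and~(3) I would pass to the path $P=\Center,\BranchObject{i}{1},\ldots,\BranchObject{i}{\ell_i}$, using the equivalence exploited in Subroutine~1 between reachability of branch-$i$ objects for $\Agent$ in $\ReachAgain{\Config}{\Partial}$ and in the induced path configuration $\ReachAgain{\Config_P}{\Partial_P}$. The restriction of $\Config$ to $P$ is an identity configuration, so Lemma~\ref{lem:reach} applies and every reachable extension lies in $\Decs{\ell_i+1}$. Labelling the agents of $P$ by distance from the center — the center agent first, then the agents starting at $\BranchObject{i}{1},\ldots,\BranchObject{i}{s}$, then $\Agent$ — the agents committed by $\Partial_P$ occupy exactly the branch positions $\{j_0,j_1,\ldots,j_s\}$, which by Lemma~\ref{lem:PE_GS_1} together with $j_0\ge 1$ all lie strictly outward of the center. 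Hence, at the moment $\Agent$ is placed, the unique minimum unmatched position is the center itself, and the defining property of $\Decs{\ell_i+1}$ forces $\Agent$ either onto the center or strictly beyond the maximum matched position, namely $j_s$ when $s>0$ and $j_0$ when $s=0$. Because the center is not a branch-$i$ object, any $j\in J$ satisfies $j>j_s$ when $s>0$, and $j>j_0\ge 1$ (so $j>1$) when $s=0$.

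Finally, for statement~(4) I would return to the monotonicity/tree argument. Fix $j<j'$ in $J$. A swap sequence realizing $\Agent$'s match $\BranchObject{i}{j'}$ carries $\Agent$ from $\BranchObject{i}{s+1}$ to $\BranchObject{i}{j'}$; since both lie on branch $i$ with $s+1<j<j'$, the unique $G$-path between them runs outward through $\BranchObject{i}{j}$, so $\Agent$ visits $\BranchObject{i}{j}$ strictly before terminating at $\BranchObject{i}{j'}$, whence strict ascent gives $\BranchObject{i}{j'}>_\Agent\BranchObject{i}{j}$. Thus preference is strictly increasing in the branch index over $J$, and $k=\max J$ names the best match of $\Agent$ in $R$. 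The main obstacle, as noted, is stating the strict-ascent-plus-tree observation carefully enough — every swap strictly improves the mover's object, preferences are strict, and $G$ is a tree — that the conclusion ``the unique connecting path is traversed monotonically'' is unambiguous; granting that, the $\Decs$-based arguments for~(2)–(3) and the reuse of monotonicity for~(4) are routine.
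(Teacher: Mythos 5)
Your central tool -- every swap strictly improves the mover's object, so on a tree each agent's trajectory is a simple path traversed monotonically with strictly increasing preference -- is exactly the right one, and your arguments for statements (1), (5), and (4) are correct and essentially the paper's own: (1) and (5) by the contrapositive/emptiness argument, and (4) by noting that any $j\in J$ exceeds $s+1$, so $\Agent$ is outward-moving and visits $\BranchObject{i}{j}$ before $\BranchObject{i}{j'}$ whenever $j<j'$.

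The gap is in your treatment of (2) and (3). You reduce to the path $P=\Center,\BranchObject{i}{1},\ldots,\BranchObject{i}{\ell_i}$ and invoke Lemma~\ref{lem:reach} to place ``every reachable extension'' in $\Decs{\ell_i+1}$, but the restriction of a matching in $\ReachAgain{\Config}{\Partial}$ to $P$ is not a matching of $P$'s agents to $P$'s objects, so the $\Decs{}$ machinery does not apply to it. Concretely: since $\Partial$ sends the center agent to $\BranchObject{i}{j_0}$ with $j_0\geq 1$, the center agent enters branch $i$, and by counting at least one branch-$i$ agent must exit the branch (and some agent from another branch may occupy $\Center$ or enter branch $i$); hence the object set of $P$ is not matched to the agent set of $P$, and the ``induced path configuration'' is not one to which Lemma~\ref{lem:reach} speaks. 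The equivalence you cite from Subroutine~1 is only informally asserted in the paper, concerns the (unmatched) center agent rather than $\Agent$, and -- in the direction you need, from star-reachability to path-reachability -- its justification is precisely the overtaking analysis you are trying to avoid. The fix is to argue (2) and (3) directly with your own monotone-trajectory observation, as the paper does: since each relevant agent moves monotonically along $P$, one agent can pass another only via a direct swap in which the passed agent moves the opposite way, so an inward-moving agent cannot overtake another inward-moving agent and an outward-moving agent cannot overtake a stationary or outward-moving one; combined with the fact that the center agent enters branch $i$, this forces $j\geq s+1$ and then $j>j_s$ (and $j>1$ when $s=0$). With (2) and (3) repaired this way, your (4) goes through unchanged.
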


\begin{proof}
We begin by proving part~(1). Assume that
$\BranchObject{i}{s+1}<_\Agent\cdots<_\Agent\BranchObject{i}{1}<_\Agent\Center$
does not hold.  It follows agent $\Agent$ cannot be matched to an
object outside of branch $i$ in $\ReachAgain{\Config}{\Partial}$.
Hence $\ReachAgain{\Config}{\Partial}=R$, as required.

To establish parts~(2) through~(5), we consider two cases.

Case~1: $J=\emptyset$. In this case, $k=0$ and hence parts~(2), (3),
and~(4) hold vacuously. It remains to prove part~(5).  Assume for the
sake of contradiction that
$\BranchObject{i}{s+1}<_\Agent\cdots<_\Agent\BranchObject{i}{1}<_\Agent\Center$
does not hold.  It follows from part~(1) that
$\ReachAgain{\Config}{\Partial}=R=\emptyset$, contradicting invariant
$I$.

Case~2: $J\not=\emptyset$. Thus $k>0$ and hence part~(5) holds
vacuously.  In the proofs of parts~(2), (3), and~(4) below, let $j$
belong to $J$, and let $\Matching'$ be a matching in $R$ such that
$\Matching'(\Agent)=\BranchObject{i}{j}$.

We first argue that part~(2) holds. Assume for the sake of
contradiction that $s=0$ and $j=1$. Thus
$\Matching'(\Agent)=\Config(\Agent)=\BranchObject{i}{1}$, and hence
$j_0>1$. It follows that agent $\Config(\Center)$ overtakes the
stationary agent $\Agent$ as we transform the matching associated with
$\Config$ to $\Matching'$, a contradiction.

Now we argue that part~(3) holds.  Assume that $s>0$.  We begin
by proving that $j\geq s+1$.  Assume for the sake of contradiction
that $j<s+1$. Since $(\Config,\Partial)$ is nice, $\Agent$ is the
closest branch-$i$ agent to the center that is ``inward-moving'' in
the sense that $\Matching'(\Agent)$ is closer to the center than
$\Config(\Agent)$.  Since no inward-moving agent can overtake another
inward-moving agent, and since agent $\Config(\Center)$ moves into
branch $i$, we deduce that agent $\Agent$ moves out of branch $i$, a
contradiction since $j$ belongs to $J$. Thus $j\geq s+1$. Now we argue
that $j>j_s$.  Lemma~\ref{lem:PE_GS_1} implies that agent
$\Config(\BranchObject{i}{s})$ moves outward to object
$\BranchObject{i}{j_s}$.  Since $j\geq s+1$, agent $\Agent$ is either
stationary or outward-moving, and hence cannot be overtaken by the
outward-moving agent $\Config(\BranchObject{i}{s})$. It follows that
$j>j_s$, as required.

Now we argue that part~(4) holds. Since $j>j_s$ and
Lemma~\ref{lem:PE_GS_1} implies $j_s>s$, we have $j>s+1$.  Thus agent
$\Agent$ is outward-moving on branch $i$. It follows that
$\BranchObject{i}{k}$ is the best match of agent $\Agent$ in $R$, as
required.
\end{proof}

\begin{lemma}\label{lem:PE_GS_3}
Assume that
$\BranchObject{i}{s+1}<_\Agent\cdots<_\Agent\BranchObject{i}{1}<_\Agent\Center$
holds. Then $\Swaps{\Oaf}{\Matching}{\Matching^*}$ and
\[
\ReachAgain{\Config}{\Partial}\setminus R
=\ReachAgain{\Config^*}{\Partial}\not=\emptyset.
\]
\end{lemma}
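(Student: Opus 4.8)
The plan is to prove the two assertions separately: first the reachability $\Swaps{\Oaf}{\Matching}{\Matching^*}$, and then the set identity $\ReachAgain{\Config}{\Partial}\setminus R=\ReachAgain{\Config^*}{\Partial}$ together with its nonemptiness. For the reachability claim I would verify that the $s+1$ swaps sliding $\Agent$ inward along branch $i$ from $\BranchObject{i}{s+1}$ to $\Center$ are each Pareto-improving. Agent $\Agent$ strictly improves at every step by the hypothesized chain $\BranchObject{i}{s+1}<_\Agent\cdots<_\Agent\BranchObject{i}{1}<_\Agent\Center$. The agent displaced one step outward at the swap across edge $(\BranchObject{i}{t},\BranchObject{i}{t+1})$ is $\Config(\BranchObject{i}{t})$ (and $\Config(\Center)$ at the final swap), and each such agent strictly prefers that outward step: since $(\Config,\Partial)$ is nice with $\Partial(\Config(\BranchObject{i}{t}))=\BranchObject{i}{j_t}$ where $j_t>t$ by Lemma~\ref{lem:PE_GS_1} (and with $j_0\geq 1$ for the center agent), the agent is realized at a strictly more outward branch-$i$ object in some matching of $\ReachAgain{\Config}{\Partial}$, and since it starts at $\BranchObject{i}{t}$ and preferences are strict, its first move must be the outward one. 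Hence all $s+1$ swaps apply in order and $\Swaps{\Oaf}{\Matching}{\Matching^*}$ holds.

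For the inclusion $\ReachAgain{\Config^*}{\Partial}\subseteq\ReachAgain{\Config}{\Partial}\setminus R$, transitivity of $\leadsto$ with the reachability just established gives $\ReachAgain{\Config^*}{\Partial}\subseteq\ReachAgain{\Config}{\Partial}$, and the key point is that none of these matchings lie in $R$: in $\Config^*$ agent $\Agent$ sits at $\Center$, branch $i$ is accessible from $\Center$ only through $\BranchObject{i}{1}$, and since $\BranchObject{i}{1}<_\Agent\Center$ the swap that would move $\Agent$ from $\Center$ into $\BranchObject{i}{1}$ is never Pareto-improving for $\Agent$; thus $\Agent$ can never re-enter branch $i$ in any matching of $\Reach{\Config^*}$. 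For nonemptiness I would take any $\Matching'\in\ReachAgain{\Config}{\Partial}$ (nonempty by niceness) and realize $\Partial$ directly from $\Config^*$: the center is held by $\Agent$, while each $\Partial$-committed agent needs only the same within-branch outward moves already witnessed in $\Matching'$; on branch $i$ these agents begin one step further out in $\Config^*$ than in $\Config$, but their targets $j_0,j_1,\ldots,j_s$ still lie weakly outward, so a matching in $\ReachAgain{\Config^*}{\Partial}$ exists.

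The main obstacle is the reverse inclusion $\ReachAgain{\Config}{\Partial}\setminus R\subseteq\ReachAgain{\Config^*}{\Partial}$. I would take $\Matching''$ in the left-hand set with a witnessing swap sequence from $\Matching$, and first observe that $\Agent$'s trajectory is forced to be a monotone inward traversal to $\Center$: by strictness of $\Prefs_\Agent$ together with the hypothesis chain, $\Agent$ can never retreat outward on the objects $\BranchObject{i}{1},\ldots,\BranchObject{i}{s+1}$, and since $\Agent$ ends outside branch $i$ it must cross $\BranchObject{i}{s+1}\to\cdots\to\BranchObject{i}{1}\to\Center$, each edge exactly once. The goal is then to schedule these $s+1$ inward swaps first, factoring the sequence as $\Swaps{\Oaf}{\Matching}{\Matching^*}$ followed by a sequence reaching $\Matching''$; equivalently, splitting the sequence at the first matching $\Matching_{t_0}$ in which $\Agent$ reaches $\Center$, it suffices to prove $\Matching^*\leadsto\Matching_{t_0}$. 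I expect this to be the hard step, because the naive ``pre-advance $\Agent$ and shift the intervening block'' simulation breaks down: a swap performed strictly inward of $\Agent$ would correspond, after the shift, to a swap across a displaced edge whose Pareto-improving conditions involve different objects, so it need not remain valid. I would therefore replace it by a careful exchange argument that commutes $\Agent$'s inward swaps to the front one at a time, exploiting that the agents $\Agent$ displaces are exactly the outward-committed agents of the nice pair, so that performing their outward shift early disables no subsequent swap.
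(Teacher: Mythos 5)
Your handling of the first assertion and of the inclusion $\ReachAgain{\Config^*}{\Partial}\subseteq\ReachAgain{\Config}{\Partial}\setminus R$ matches the paper: the $s{+}1$ inward swaps are Pareto-improving because $\Agent$ improves along the hypothesized chain while each displaced agent is $\Partial$-committed to a strictly more outward object (Lemma~\ref{lem:PE_GS_1}) and therefore must prefer the one-step outward move; and once $\Agent$ occupies $\Center$ it can never re-enter branch $i$ since $\BranchObject{i}{1}<_\Agent\Center$.

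The genuine gap is exactly where you flag it: the reverse inclusion $\ReachAgain{\Config}{\Partial}\setminus R\subseteq\ReachAgain{\Config^*}{\Partial}$. You correctly diagnose why a naive ``shift the block'' simulation fails, but you then only announce that ``a careful exchange argument'' commuting $\Agent$'s inward swaps to the front would work; that commutation is the entire content of the step, and as written nothing certifies that performing the $s{+}1$ swaps early does not destroy the reachability of $\Matching''$. The paper closes this with a tool you did not invoke: the greedy scheduling result for reachable matchings on trees (Section~\ref{sec:RM_tree}). There, a reachable target matching $\Matching''$ can be reached by repeatedly performing a swap across \emph{any} ``good'' edge, i.e.\ any edge both of whose current occupants must cross it as the next step of their unique dipaths toward their $\Matching''$-targets; performing a good swap never invalidates reachability of the target. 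Since $\Matching''(\Agent)$ lies outside branch $i$, agent $\Agent$ must traverse $\BranchObject{i}{s+1},\ldots,\BranchObject{i}{1},\Center$ inward, and each agent it displaces must cross the same edge outward (its $\Partial$-target is strictly further out), so each of the $s{+}1$ edges is good in turn and the swaps may be performed first, giving $\Swaps{\Oaf}{\Matching^*}{\Matching''}$ directly. Replacing your sketched exchange argument with this citation is what completes the proof; without it (or a fully worked-out commutation lemma of your own), the step remains open. A secondary soft spot is your nonemptiness argument, which is asserted rather than derived; note that once the two inclusions are in place it suffices to exhibit one matching in $\ReachAgain{\Config^*}{\Partial}$, and the same scheduling considerations are what make that construction rigorous.
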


\begin{proof}
Since $(\Config,\Partial)$ is nice and
$\BranchObject{i}{s+1}<_\Agent\cdots<_\Agent\BranchObject{i}{1}<_\Agent\Center$
holds, Lemma~\ref{lem:PE_GS_1} implies that the $s+1$ exchanges used to transform
$\Matching$ to $\Matching^*$ are all Pareto-improving.  Hence
$\Swaps{\Oaf}{\Matching}{\Matching^*}$,

The set $\ReachAgain{\Config^*}{\Partial}$ is nonempty since it
includes $\Matching^*$. It remains to prove that
$\ReachAgain{\Config}{\Partial}\setminus
R=\ReachAgain{\Config^*}{\Partial}$.

We first argue that
$\ReachAgain{\Config^*}{\Partial}\subseteq\ReachAgain{\Config}{\Partial}\setminus
R$. Let $\Matching^{**}$ belong to $\ReachAgain{\Config^*}{\Partial}$.
Thus $\Swaps{\Oaf}{\Matching^*}{\Matching^{**}}$.  Since
$\Swaps{\Oaf}{\Matching}{\Matching^*}$, we conclude that
$\Swaps{\Oaf}{\Matching}{\Matching^{**}$ and hence $\Matching^{**}$
belongs to $\ReachAgain{\Config}{\Partial}$.  Since
$\Config^*(\Center)=\Agent$ and $\BranchObject{i}{1}<_\Agent\Center$,
we deduce that $\Matching^{**}(\Agent)$ does not belong to branch
$i$. Since $\Matching^{**}$ belongs to
$\ReachAgain{\Config}{\Partial}$ and $\Matching^{**}(\Agent)$ does not
belong to branch $i$, we conclude that $\Matching^{**}$ belongs to
$\ReachAgain{\Config}{\Partial}\setminus R$.

Now we argue that $\ReachAgain{\Config}{\Partial}\setminus
R\subseteq\ReachAgain{\Config^*}{\Partial}$. Let $\Matching^{**}$
belong to $\ReachAgain{\Config^*}{\Partial}\setminus R$. From our
discussion of the reachable matching problem on trees in Section~
\ref{sec:RM_tree},} we can obtain a sequence of 
swaps
that transforms $\Matching$ to $\Matching^{**}$ by repeatedly
performing any 
swap that moves the two participating
agents closer to their matched objects under $\Matching^{**}$. Since
$\Matching^{**}(\Agent)$ does not belong to branch $i$, this means
that we can begin by performing the $s+1$ 
swaps that
transform $\Matching$ to $\Matching^*$.  It follows that
$\Matching^{**}$ belongs to $\ReachAgain{\Config^*}{\Partial}$, as
required.
\end{proof}

Lemma~\ref{lem:PE_GS_3} implies that if
line~\ref{step:center}(b) is executed, then
$\ReachAgain{\Config^*}{\Partial}\not=\emptyset$, and hence object
$\Object$ is well-defined.

\begin{lemma}\label{lem:PE_GS_4}
Assume that
$\BranchObject{i}{s+1}<_\Agent\cdots<_\Agent\BranchObject{i}{1}<_\Agent\Center$
holds.  If $k=0$ or $\Object>_\Agent\BranchObject{i}{k}$, then
$\Object$ is the best match of agent $\Agent$ in
$\ReachAgain{\Config}{\Partial}$ and
\[
\ReachAgain{\Config}{\Partial'}=\ReachAgain{\Config'}{\Partial'}.
\]
Otherwise, $\BranchObject{i}{k}$ is the best match of agent $\Agent$
in $\ReachAgain{\Config}{\Partial}$.
\end{lemma}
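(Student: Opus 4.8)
The plan is to exploit the partition of $\ReachAgain{\Config}{\Partial}$ furnished by Lemma~\ref{lem:PE_GS_3}. Under the stated hypothesis $\BranchObject{i}{s+1}<_\Agent\cdots<_\Agent\BranchObject{i}{1}<_\Agent\Center$, that lemma gives $\Swaps{\Oaf}{\Matching}{\Matching^*}$ together with $\ReachAgain{\Config}{\Partial}\setminus R=\ReachAgain{\Config^*}{\Partial}\neq\emptyset$. Hence $\ReachAgain{\Config}{\Partial}$ is the disjoint union of $R$ (the matchings sending $\Agent$ into branch $i$) and $\ReachAgain{\Config^*}{\Partial}$ (the matchings sending $\Agent$ outside branch $i$), and the best match of $\Agent$ over the whole set is the $\succ_\Agent$-better of the best match of $\Agent$ within $R$ and the best match of $\Agent$ within $\ReachAgain{\Config^*}{\Partial}$.

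First I would pin down the two candidate matches. For the branch-$i$ side, Lemma~\ref{lem:PE_GS_2}(4) states that when $k>0$ the object $\BranchObject{i}{k}$ is the best match of $\Agent$ in $R$, whereas $k=0$ means $J=\emptyset$ and so $R$ is empty and contributes no candidate. For the outside side, $\Object$ is by construction (the call to Subroutine~1 in step~\ref{step:center}(b)) the best match of $\Agent$ in $\ReachAgain{\Config^*}{\Partial}$; moreover, as already observed inside the proof of Lemma~\ref{lem:PE_GS_3}, every matching in $\ReachAgain{\Config^*}{\Partial}$ sends $\Agent$ outside branch $i$ (because $\Config^*(\Center)=\Agent$ and $\BranchObject{i}{1}<_\Agent\Center$). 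Thus $\Object$ lies outside branch $i$, and in particular $\Object\neq\BranchObject{i}{k}$, so by strictness of preferences exactly one of $\Object>_\Agent\BranchObject{i}{k}$ and $\BranchObject{i}{k}>_\Agent\Object$ holds.

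With the two candidates identified, the case split is immediate. If $k=0$ then $R$ is empty and $\Object$ is the best match over the entire set; if $\Object>_\Agent\BranchObject{i}{k}$ then $\Object$ beats the branch-$i$ candidate and is again best; in the complementary case $k>0$ and $\BranchObject{i}{k}>_\Agent\Object$, so $\BranchObject{i}{k}$ is best. This yields both best-match conclusions of the lemma. It then remains to establish the set identity $\ReachAgain{\Config}{\Partial'}=\ReachAgain{\Config'}{\Partial'}$ in the first case, where $\Config'=\Config^*$ and $\Partial'=\Partial+(\Agent,\Object)$.

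For that identity I would argue that both sides equal the family of matchings in $\ReachAgain{\Config^*}{\Partial}$ that contain $(\Agent,\Object)$. Since $\Object$ lies outside branch $i$, any matching containing $\Partial'$ sends $\Agent$ outside branch $i$, hence lies in $\ReachAgain{\Config}{\Partial}\setminus R=\ReachAgain{\Config^*}{\Partial}$; this gives $\ReachAgain{\Config}{\Partial'}\subseteq\ReachAgain{\Config^*}{\Partial'}$. Conversely, $\Swaps{\Oaf}{\Matching}{\Matching^*}$ yields $\ReachAgain{\Config^*}{\Partial'}\subseteq\ReachAgain{\Config^*}{\Partial}\subseteq\ReachAgain{\Config}{\Partial}$, and each such matching contains $\Partial'$, so it lies in $\ReachAgain{\Config}{\Partial'}$. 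I expect the one point requiring care to be precisely this last step: one must observe that the constraint $(\Agent,\Object)$ with $\Object$ outside branch $i$ automatically excludes all of $R$, which is exactly what collapses $\ReachAgain{\Config}{\Partial}$ down to $\ReachAgain{\Config^*}{\Partial}$; once the partition and the two candidates are in hand, the best-match conclusions follow mechanically.
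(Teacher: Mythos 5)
Your proposal is correct and follows essentially the same route as the paper's proof: both rest on the partition $\ReachAgain{\Config}{\Partial}=R\cup\ReachAgain{\Config^*}{\Partial}$ from Lemma~\ref{lem:PE_GS_3}, identify $\BranchObject{i}{k}$ (via Lemma~\ref{lem:PE_GS_2}(4)) and $\Object$ (via Subroutine~1) as the two candidate best matches, and conclude by the same case analysis on $k=0$ versus the preference comparison. Your write-up is merely more explicit about why $\ReachAgain{\Config}{\Partial'}=\ReachAgain{\Config'}{\Partial'}$, a step the paper asserts without elaboration.
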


\begin{proof}
We consider two cases.

Case~1: $k=0$. In this case, $R=\emptyset$ and
$\Config'=\Config^*$. Hence Lemma~\ref{lem:PE_GS_3} implies
$\ReachAgain{\Config^*}{\Partial}=\ReachAgain{\Config}{\Partial}$. Thus
object $\Object$ is the best match of agent $\Agent$ in
$\ReachAgain{\Config}{\Partial}$ and
$\ReachAgain{\Config}{\Partial'}=\ReachAgain{\Config'}{\Partial'}$.

Case~2: $k>0$. Since $k>0$, object $\BranchObject{i}{k}$ is the best
match of agent $\Agent$ in $R$. Lemma~\ref{lem:PE_GS_3} implies that
$\Object$ is the best match of agent $\Agent$ in
$\ReachAgain{\Config^*}{\Partial}=\ReachAgain{\Config}{\Partial}\setminus
R$. We consider two subcases.

Case~2.1: $\Object>_\Agent\BranchObject{i}{k}$.  Thus $\Object$ is the
best match of agent $\Agent$ in $\ReachAgain{\Config}{\Partial}$,
$\Config'=\Config^*$, $\Partial'=\Partial+(\Agent,\Object)$, and
$\ReachAgain{\Config}{\Partial'}=\ReachAgain{\Config'}{\Partial'}$.

Case~2.2: $\Object<_\Agent\BranchObject{i}{k}$. Thus object
$\BranchObject{i}{k}$ is the best match of agent $\Agent$ in
$\ReachAgain{\Config}{\Partial}$.
\end{proof}

%
%

The next lemma establishes that invariant $I$ holds at the end of the
iteration.

\begin{lemma}
\label{lem:PE_GS_5}
At the end of the iteration, $(\Config',\Partial')$ is a nice pair for
$\Oaf$, agent $\Config'(\Center)$ is matched in $\Partial'$,
$\ReachAgain{\Config'}{\Partial'}=\ReachAgain{\Config_0}{\Partial'}$,
and $\Partial'=\Serial{\AgentSeq'}$.
\end{lemma}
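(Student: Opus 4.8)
```latex
The plan is to prove that invariant $I$ is reestablished by verifying
each of its four conjuncts at the end of the iteration, using the
structural lemmas already proved in this section. The overall strategy
is to split the argument according to which branch of the code was
executed to set $\Partial'$, since the update to $\Config$ and
$\Partial$ differs depending on whether we moved agent $\Agent$ to the
center (step~\ref{step:center}(c)) or matched it to an outward branch
object (step~\ref{step:outward}). The four facts to verify are: (i)
$(\Config',\Partial')$ is nice for $\Oaf$; (ii) agent
$\Config'(\Center)$ is matched in $\Partial'$; (iii)
$\ReachAgain{\Config'}{\Partial'}=\ReachAgain{\Config_0}{\Partial'}$;
and (iv) $\Partial'=\Serial{\AgentSeq'}$.

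First I would handle the case where
$\BranchObject{i}{s+1}<_\Agent\cdots<_\Agent\BranchObject{i}{1}<_\Agent\Center$
holds and $k'=-1$, i.e., step~\ref{step:center}(c) fired. Here
$\Config'=\Config^*$ and $\Partial'=\Partial+(\Agent,\Object)$. For
conjunct~(iv), Lemma~\ref{lem:PE_GS_4} gives that $\Object$ is the
best match of $\Agent$ in $\ReachAgain{\Config}{\Partial}$, which by
the inductive hypothesis $\ReachAgain{\Config}{\Partial} =
\ReachAgain{\Config_0}{\Partial}$ equals the best match of $\Agent$ in
$\ReachAgain{\Config_0}{\Partial}$; since $\Partial=\Serial{\AgentSeq}$
and $\AgentSeq'$ extends $\AgentSeq$ by $\Agent$, this is exactly what
serial dictatorship prescribes, so $\Partial'=\Serial{\AgentSeq'}$. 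For
conjunct~(iii), Lemma~\ref{lem:PE_GS_4} also yields
$\ReachAgain{\Config}{\Partial'}=\ReachAgain{\Config'}{\Partial'}$;
combining with the inductive hypothesis
$\ReachAgain{\Config}{\Partial'} = \ReachAgain{\Config_0}{\Partial'}$
(which follows from the same equality at the level of $\Partial$, since
$\Partial'\supseteq\Partial$) gives conjunct~(iii). Conjunct~(ii) is
immediate because $\Config'(\Center)=\Config^*(\Center)=\Agent$, and
$\Agent$ is matched in $\Partial'$ by construction. The remaining
obligation in this case is conjunct~(i), niceness.

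The complementary case is when $k'=k>0$ and step~\ref{step:outward}
fired, so $\Config'=\Config$ and
$\Partial'=\Partial+(\Agent,\BranchObject{i}{k})$. Here niceness of
$(\Config',\Partial')$ is easiest: since $\Config'=\Config$ and
$(\Config,\Partial)$ was nice, I would use
Lemma~\ref{lem:PE_GS_2} (parts~(2), (3)) to check that appending
$\BranchObject{i}{k}$ with $k>j_s$ (or $k>1$ when $s=0$) correctly
extends the sequence $\Dests{\Config}{\Partial}{i}$ to
$\Dests{\Config'}{\Partial'}{i}$, while leaving all other branches
untouched; nonemptiness of $\ReachAgain{\Config'}{\Partial'}$ holds
because $k\in J$ means some matching in $R$ realizes the new match.
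Conjunct~(iv) follows from Lemma~\ref{lem:PE_GS_4} (the ``otherwise''
clause together with part~(4) of Lemma~\ref{lem:PE_GS_2}), giving that
$\BranchObject{i}{k}$ is the best match of $\Agent$ in
$\ReachAgain{\Config}{\Partial}=\ReachAgain{\Config_0}{\Partial}$.
Conjunct~(iii) is trivial since $\Config'=\Config$, and conjunct~(ii)
holds because $\Config'(\Center)=\Config(\Center)$ is still matched (it
was matched under the inductive hypothesis and $\Partial\subseteq
\Partial'$).

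The main obstacle I expect is establishing niceness (conjunct~(i)) in
the center case $\Config'=\Config^*$, because moving agent $\Agent$
inward to the center perturbs the configuration on branch $i$: every
agent that was at $\BranchObject{i}{t}$ for $1\le t\le s+1$ shifts
outward by one position, so the sequence $\Dests{\cdot}{\cdot}{i}$ must
be recomputed for $\Config^*$ rather than inherited from $\Config$. I
would argue that applying the $s+1$ inward swaps (which are
Pareto-improving by Lemma~\ref{lem:PE_GS_3}) sends the agent formerly
at $\BranchObject{i}{t}$ to $\BranchObject{i}{t+1}$ for $1\le t\le s$,
so that the pinned destinations $j_1<\cdots<j_s$ under $\Partial$ still
describe a valid increasing sequence of destinations under $\Config^*$
(the matched agents are now one slot closer to their $\Partial$-targets,
preserving the ordering property), and that object $\Object\notin$
branch $i$ keeps branch $i$ consistent with the niceness template.
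Nonemptiness of $\ReachAgain{\Config^*}{\Partial}$—and hence of
$\ReachAgain{\Config^*}{\Partial'}$ via the best-match argument—follows
directly from Lemma~\ref{lem:PE_GS_3}, which identifies
$\ReachAgain{\Config^*}{\Partial}$ with the nonempty set
$\ReachAgain{\Config}{\Partial}\setminus R$. Tying these observations
together verifies that $(\Config^*,\Partial')$ satisfies the
branch-$i$ condition of niceness while all other branches are unaffected
by the purely branch-$i$-and-center swap sequence.
```
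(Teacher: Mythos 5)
Your proof is correct and follows essentially the same route as the paper's: a case split on which update fired, with Lemma~\ref{lem:PE_GS_4} supplying the best-match and reach-set facts, Lemma~\ref{lem:PE_GS_3} handling the center case, and Lemma~\ref{lem:PE_GS_2} handling the outward case. The one point you gloss over is that your two cases are exhaustive --- namely that $k'=k$ forces $k>0$ (otherwise agent $\Agent$ would receive no match at all and $\Partial'=\Partial\not=\Serial{\AgentSeq'}$); the paper derives this from part~(5) of Lemma~\ref{lem:PE_GS_2}, which guarantees that the preference chain holds when $k=0$, so that step~\ref{step:center}(c) fires and sets $k'=-1$.
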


\begin{proof}
Observe that either $k'=-1$ or $k'=k$. Below we consider these two
cases separately.

Case~1: $k'=-1$. In this case, step~\ref{step:center}(c)
is executed and the associated if condition evaluates to
true. Furthermore, the if condition associated with
step~\ref{step:outward} evaluates to false. Thus $\Config'=\Config^*$
and $\Partial'=\Partial+(\Agent,\Object)$.  It is straightforward to
verify that the pair $(\Config',\Partial')$ is nice for $\Oaf$ with
$\Dests{\Config'}{\Partial'}{i}$ equal to $j_0<\cdots<j_s$, and that
$\Config'(\Center)=\Agent$ is matched in $\Partial'$.  By
Lemma~\ref{lem:PE_GS_4}, object $\Object$ is the best match of agent
$\Agent$ in $\ReachAgain{\Config}{\Partial}$.  Since invariant $I$
implies $\Partial=\Serial{\AgentSeq}$ and
$\ReachAgain{\Config}{\Partial}=\ReachAgain{\Config_0}{\Partial}$, we
deduce that $\Partial'=\Serial{\AgentSeq'}$.  By
Lemma~\ref{lem:PE_GS_4},
$\ReachAgain{\Config}{\Partial'}=\ReachAgain{\Config'}{\Partial'}$.
Invariant $I$ implies
$\ReachAgain{\Config}{\Partial}=\ReachAgain{\Config_0}{\Partial}$ and
hence
$\ReachAgain{\Config}{\Partial'}=\ReachAgain{\Config_0}{\Partial'}$.
We conclude that
$\ReachAgain{\Config'}{\Partial'}=\ReachAgain{\Config_0}{\Partial'}$,
as required.

Case~2: $k'=k$. We begin by proving that $k>0$. Assume for the sake of
contradiction that $k=0$.  Part~(5) of Lemma~\ref{lem:PE_GS_2} implies
that
$\BranchObject{i}{s+1}<_\Agent\cdots<_\Agent\BranchObject{i}{1}<_\Agent\Center$
holds. Hence step~\ref{step:center}(c) is executed, and since $k=0$,
the associated if condition evaluates to true. Hence $k'=-1$,
contradicting the Case~2 assumption.

Since $k'=k>0$, we deduce that $\Config'=\Config$ and
$\Partial'=\Partial+(\Agent,\BranchObject{i}{k})$. It is
straightforward to verify that $(\Config',\Partial')$ is nice for
$\Oaf$ with $\Dests{\Config'}{\Partial'}{i}$ equal to
$j_1<\cdots<j_s<k$. Since $\Config'=\Config$ and invariant $I$ holds,
we deduce that agent $\Config'(\Center)$ is matched in $\Partial'$.

We claim that $\BranchObject{i}{k}$ is the best match of agent
$\Agent$ in $\ReachAgain{\Config}{\Partial}$. If
$\BranchObject{i}{s+1}<_\Agent\cdots<_\Agent\BranchObject{i}{1}<_\Agent\Center$
does not hold, the claim follows from parts~(1) and~(4) of
Lemma~\ref{lem:PE_GS_2}, so assume that
$\BranchObject{i}{s+1}<_\Agent\cdots<_\Agent\BranchObject{i}{1}<_\Agent\Center$
holds. Thus the if condition of step~\ref{step:center} evaluates to
true, and hence the if condition of
step~\ref{step:center}(c) is evaluated.  Since $k'=k>0$,
the latter if condition evaluates to false, and hence
Lemma~\ref{lem:PE_GS_4} implies that the claim holds.

Invariant $I$ implies
$\ReachAgain{\Config}{\Partial}=\ReachAgain{\Config_0}{\Partial}$ and
$\Partial=\Serial{\AgentSeq}$. Thus the claim of the preceding
paragraph implies $\Partial'=\Serial{\AgentSeq'}$.

Since $\Config'=\Config$, invariant $I$ implies
$\ReachAgain{\Config'}{\Partial}=\ReachAgain{\Config_0}{\Partial}$ and
hence
$\ReachAgain{\Config'}{\Partial'}=\ReachAgain{\Config_0}{\Partial'}$.
\end{proof}

	\makeatletter
\renewcommand{\boxed}[1]{\text{\fboxsep=.2em\fbox{\m@th$\displaystyle#1$}}}
\makeatother







\newcommand{\PropositionalFormula}{f}
\newcommand{\Uagent}{\hat{u}}
\newcommand{\Uagents}{\hat{U}}
\newcommand{\Vagent}{\hat{v}}
\newcommand{\Vagents}{\hat{V}}
\newcommand{\Wagent}{\hat{w}}
\newcommand{\Wagents}{\hat{W}}
\newcommand{\uobject}{u}
\newcommand{\uobjects}{U}
\newcommand{\vobject}{v}
\newcommand{\vobjects}{V}
\newcommand{\wobject}{w}
\newcommand{\wobjects}{W}
\newcommand{\Xagent}{\hat{x}}
\newcommand{\Xagents}{\hat{X}}
\newcommand{\Tagent}{\hat{t}}
\newcommand{\xobject}{x}
\newcommand{\xobjects}{X}
\newcommand{\tobject}{t}
\newcommand{\InitMatching}{\Matching_0}




\section{NP-Completeness of Reachable Object on Cliques}
\label{sec:NPC_RO_clique}

It is easy to see that the reachable object on cliques problem belongs to NP. 
In this section, we prove that the problem is NP-hard by
presenting a polynomial-time reduction from the known NP-complete problem 2P1N-SAT to reachable object on cliques.

An instance of 2P1N-SAT is a propositional formula $\PropositionalFormula$ over $n$
variables $x_1,\ldots,x_n$ with the following properties: $f$ is the
conjunction of a number of clauses, where each clause is the
disjunction of a number of literals, and each literal is either a
variable or the negation of a variable; each variable occurs exactly
three times in $f$, once in each of three distinct clauses; each
variable $x_i$ occurs twice as a positive literal (i.e., $x_i$) and
once as a negative literal (i.e., $\neg x_i$).

Throughout the remainder of Section~\ref{sec:NPC_RO_clique}, let $\PropositionalFormula$ denote a given instance
of 2P1N-SAT with $n$ variables $x_1,\ldots,x_n$ and $m$ clauses
$C_1,\ldots,C_m$.

In Section~\ref{subsec:NPC_RO_clique_reduction}, we describe a polynomial-time procedure for
transforming $\PropositionalFormula$ into an instance $I$ of reachable object on cliques.
In Section~\ref{sec:ROClique_Correctness}, we prove that $\PropositionalFormula$ is a positive instance of 2P1N-SAT
if and only if $I$ is a positive instance of reachable object on cliques.

\subsection{Description of the Reduction}
\label{subsec:NPC_RO_clique_reduction}

We now describe how we transform a 2P1N-SAT instance $\PropositionalFormula$ into a
corresponding instance $I$ of reachable object on cliques.
For each variable $x_i$ in $\PropositionalFormula$, there are two agents $\Xagent_i^1$ and $\Xagent_i^2$ in $I$.
For each clause $C_j$ in $f$, there are three agents $\Uagent_j, \Vagent_j$ and $\Wagent_j$ in $I$.
Note that the name we use to refer to
each agent in $I$ includes a hat symbol. We adopt the convention that
if the hat symbol is removed from the name of such an agent, we obtain
the name of the initial endowment of that agent.
For example, agents $\Uagent_j$ and $\Xagent^1_i$ are initially endowed with objects $\uobject_j$ and $\xobject^1_i$, respectively. For convenience, we define $\Uagents$ as the set of agents $\{\Uagent_j \mid j \in [m]\}$, and we define $\uobjects$ as the set of objects $\{\uobject_j \mid j \in [m]\}$. We define $\Vagents$, $\vobjects$, $\Wagents$, $\wobjects$, $\Xagents$, and $\xobjects$ similarly.

Let $\Agents$ (resp., $\Objects$) denote the set of all agents (resp., objects) in $I$.
Let $N$ denote $|\Objects|$. Thus $N=3m+2n$.  Let $K_N$ denote
the complete graph with vertex set $B$, and let $E$ denote the edge set of $K_N$.
Let $\InitMatching$ denote the initial matching of agents with objects.

Below we describe the preferences $\Prefs$ of the agents
in $\Agents$ over the objects in $\Objects$.
Let $\Config = (\Oaf, \InitMatching)$ denote the initial configuration of $\Instance$, where $\Oaf = (\Agents, \Objects, \Prefs, \Edges)$. Instance $I$ asks the following
reachability question: Is there a matching $\Matching$ in $\Reach{\Config}$ such that $\Matching(\Wagent_m) = \uobject_1$?


Let variable $x_i$ appear in clauses $C_{p^1_i}$ and $C_{p^2_i}$ as a positive literal, where $p_i^1 < p_i^2$, and in clause $C_{n_i}$ as a negative literal. The definition of 2P1N-SAT implies that $p_i^1, p_i^2,$ and $n_i$ are distinct. 

Below we list the preferences of each agent in $\Agents$. In doing so, we specify only the objects that an agent prefers to its initial endowment; the order of the remaining objects is immaterial. The initial endowment is shown in a box.
For any $i$ in $[n]$, the preferences of agent $\Xagent^1_i$ are
\begin{equation*}
    \Xagent^1_i : \xobject^2_i \Prefs \wobject_{p^1_i} \Prefs \vobject_{p^1_i} \Prefs \boxed{x^1_i}
\end{equation*}
and the preferences of agent $\Xagent^2_i$ are 
\begin{equation*}
    \Xagent^2_i: \wobject_{n_i} \Prefs \vobject_{n_i} \Prefs \wobject_{p^2_i} \Prefs \vobject_{p^2_i} \Prefs \xobject^1_i \Prefs \boxed{\xobject^2_i}
\end{equation*}
if $n_i < p^2_i$,
and are
\begin{equation*}
    \Xagent^2_i: \wobject_{p^2_i} \Prefs \vobject_{p^2_i} \Prefs \wobject_{n_i} \Prefs \vobject_{n_i} \Prefs \xobject^1_i \Prefs \boxed{\xobject^2_i}
\end{equation*}
otherwise.
    


For any $j$ in $[m]$, the preferences of agent $\Uagent_j$ are
\begin{equation*}
    \Uagent_j : \vobject_j \Prefs \boxed{\uobject_j}.
\end{equation*}

For any $j$ in $[m-1]$, the preferences of agent $\Wagent_j$ are
\begin{equation*}
    \Wagent_j : \uobject_{j+1} \Prefs \boxed{\wobject_j}.
\end{equation*}

The preferences of agent $\Wagent_m$ are
\begin{equation*}
    \Wagent_m : \uobject_1 \Prefs \vobject_1 \Prefs \wobject_1 \Prefs \uobject_2 \Prefs \vobject_2 \Prefs \wobject_2 \Prefs \dots \Prefs \uobject_m \Prefs \vobject_m \Prefs \boxed{\wobject_m}.
\end{equation*}

For any $j$ in $[m]$, the preferences of agent $\Vagent_j$ are
\begin{equation*}
    \Vagent_j : \{\xobject^1_i \mid j \in \{p^1_i, n_i\}\} \cup \{\xobject^2_i \mid j = p^2_i\} \Prefs \boxed{\vobject_j},
\end{equation*}
where the set of objects preceding $\vobject_j$ may be ordered arbitrarily.

This completes the description of the reachable object on cliques instance $\Instance$.

\subsection{Correctness of the Reduction}
\label{sec:ROClique_Correctness}

In this section, we prove that $\PropositionalFormula$ is a positive instance of 2P1N-SAT if and only if $\Instance$ is a positive instance of reachable object on cliques. Lemma~\ref{lem:ifdir} establishes the only if direction. 
Lemmas~\ref{lem:no_right_jump} through~\ref{lem:vm_all_rank} lay the groundwork for Lemma~\ref{lem:onlyifdir}, which establishes the if direction.

For the purposes of our analysis, it is convenient to assign a nonnegative integer rank to each object in $\Objects$, as follows.
For any $j$ in $[m]$, we define $\rank(\uobject_j)$ as $3j-2$,  $\rank(\vobject_j)$ as $3j-1$, and $\rank(\wobject_j)$ as $3j$. The rank of any object in $\xobjects$ is defined to be $0$.

Observation~\ref{obs:x1agents} below can be justified by enumerating all those agents who like object $\xobject^1_i$ at least as well as their initial endowment. 
Observations~\ref{obs:x2agents} and~\ref{obs:vwagents} can be justified in a similar manner.

\begin{ob}
\label{obs:x1agents}
For any $i$ in $[n]$ and any matching $\Matching$ in $\Reach{\Config}$, agent $\Matching^{-1}(\xobject^1_i)$ belongs to $\{\Xagent^1_i, \Xagent^2_i, \Vagent_{p^1_i}, \Vagent_{n_i}\}$.
\end{ob}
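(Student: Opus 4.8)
The plan is to exploit the fact that in the agent-moving model every swap is Pareto-improving, so each agent's match can only climb its own preference list. First I would record the following monotonicity fact: for any matching $\Matching$ in $\Reach{\Config}$ and any agent $\Agent$, the object $\Matching(\Agent)$ is weakly preferred by $\Agent$ to its initial endowment $\InitMatching(\Agent)$, i.e., $\Matching(\Agent)\succeq_\Agent\InitMatching(\Agent)$. This holds because an agent changes its match only by participating in a swap, and the definition of an applicable swap forces each participant to strictly improve; hence along any swap sequence $\InitMatching=\Matching_0,\ldots,\Matching_k=\Matching$ realizing $\Matching$, the match of $\Agent$ only moves upward in $\Prefs_\Agent$, starting from $\InitMatching(\Agent)$.

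Applying this fact to the agent $\Agent=\Matching^{-1}(\xobject^1_i)$, which holds object $\xobject^1_i$ in $\Matching$, we get $\xobject^1_i\succeq_\Agent\InitMatching(\Agent)$; that is, $\Agent$ either is initially endowed with $\xobject^1_i$ or ranks $\xobject^1_i$ strictly above its own endowment. So it suffices to enumerate every agent in $\Agents$ whose preference list places $\xobject^1_i$ at least as high as its initial (boxed) object, and to check that this set is exactly $\{\Xagent^1_i,\Xagent^2_i,\Vagent_{p^1_i},\Vagent_{n_i}\}$.

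The enumeration is the bulk of the work, but it is routine given the explicit preference lists in the reduction. By the hat convention, the unique agent endowed with $\xobject^1_i$ is $\Xagent^1_i$, which qualifies trivially. Scanning the remaining families, $\xobject^1_i$ appears above the endowment only for $\Xagent^2_i$, whose list contains $\xobject^1_i\Prefs\boxed{\xobject^2_i}$ in both of the two cases for $\Xagent^2_i$, and for those $\Vagent_j$ whose set of preferred objects $\{\xobject^1_{i'}\mid j\in\{p^1_{i'},n_{i'}\}\}\cup\{\xobject^2_{i'}\mid j=p^2_{i'}\}$ contains $\xobject^1_i$, which occurs exactly when $j\in\{p^1_i,n_i\}$, i.e., for $\Vagent_{p^1_i}$ and $\Vagent_{n_i}$. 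Every other agent is excluded: agents $\Xagent^1_{i'}$ and $\Xagent^2_{i'}$ with $i'\neq i$ rank only $\xobjects$-objects carrying index $i'$ above their endowment, while $\Uagent_j$, each $\Wagent_j$ with $j<m$, and $\Wagent_m$ never rank any $\xobjects$-object above their endowment at all. This yields precisely $\{\Xagent^1_i,\Xagent^2_i,\Vagent_{p^1_i},\Vagent_{n_i}\}$, as claimed.

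I do not expect any serious obstacle. The only conceptual ingredient is the monotonicity fact, which is an immediate consequence of swaps being Pareto-improving and is exactly the ``individual rationality'' property underlying these dynamics; the rest is a careful but mechanical pass over the four families of preference lists. The analogous Observations~\ref{obs:x2agents} and~\ref{obs:vwagents} would be established by the same two-step template: invoke monotonicity, then enumerate the agents ranking the relevant object above their endowment.
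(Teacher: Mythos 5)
Your proposal is correct and matches the paper's approach: the paper justifies this observation in one sentence by ``enumerating all those agents who like object $\xobject^1_i$ at least as well as their initial endowment,'' which is exactly your two-step argument of invoking the individual-rationality/monotonicity consequence of Pareto-improving swaps and then scanning the preference lists. Your enumeration is accurate, so this is simply a fully written-out version of the argument the paper leaves implicit.
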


\begin{ob}
\label{obs:x2agents}
For any $i$ in $[n]$ and any matching $\Matching$ in $\Reach{\Config}$, agent $\Matching^{-1}(x^2_i)$ belongs to $\{\Xagent^1_i, \Xagent^2_i, \Vagent_{p^2_i}\}$.
\end{ob}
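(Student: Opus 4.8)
The plan is to exploit the basic monotonicity property of the agent-moving model. By definition, a swap is applied only when both participating agents strictly prefer the object they receive to the one they relinquish; consequently, along any swap sequence each agent's held object advances monotonically in that agent's own preference order, and never regresses. Hence for every $\Matching$ in $\Reach{\Config}$ and every agent $\Agent$ we have $\Matching(\Agent)\succeq_\Agent\InitMatching(\Agent)$. Specializing to $\Agent=\Matching^{-1}(\xobject^2_i)$, the agent matched to $\xobject^2_i$ in $\Matching$ must weakly prefer $\xobject^2_i$ to its own initial endowment. It therefore suffices to identify every agent whose preference list ranks $\xobject^2_i$ at or above its boxed endowment, and to check that this set is exactly $\{\Xagent^1_i,\Xagent^2_i,\Vagent_{p^2_i}\}$.

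Next I would verify that each of the three claimed agents indeed ranks $\xobject^2_i$ high enough. Agent $\Xagent^2_i$ is endowed with $\xobject^2_i$, so the condition holds trivially; agent $\Xagent^1_i$ lists $\xobject^2_i$ as its most preferred object, strictly above its endowment $\xobject^1_i$; and, since $p^2_i=p^2_i$, the defining set of objects that $\Vagent_{p^2_i}$ prefers to $\vobject_{p^2_i}$ contains $\xobject^2_i$, so $\Vagent_{p^2_i}$ also qualifies.

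The crux is the converse: ruling out every other agent. Here I would go through the remaining agent types in turn. Agents $\Uagent_j$, $\Wagent_j$ for $j\in[m-1]$, and $\Wagent_m$ list only objects from $\uobjects\cup\vobjects\cup\wobjects$ above their endowments, none of which is $\xobject^2_i$. For $k\neq i$, the objects that $\Xagent^1_k$ prefers to its endowment are $\xobject^2_k,\wobject_{p^1_k},\vobject_{p^1_k}$, while those that $\Xagent^2_k$ prefers lie among $\wobject_{n_k},\vobject_{n_k},\wobject_{p^2_k},\vobject_{p^2_k},\xobject^1_k$; since $\xobject^2_i$ equals none of these when $k\neq i$, neither agent qualifies. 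Finally, for $j\neq p^2_i$, the set of objects that $\Vagent_j$ prefers to $\vobject_j$ is $\{\xobject^1_k\mid j\in\{p^1_k,n_k\}\}\cup\{\xobject^2_k\mid j=p^2_k\}$, and $\xobject^2_i$ can belong to this set only if $j=p^2_i$; hence $\Vagent_j$ does not qualify. This exhausts all agents and completes the enumeration.

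I expect the main obstacle to be exactly this exhaustive case analysis over agent types, in particular confirming that no $\Vagent_j$ with $j\neq p^2_i$ and no $\Xagent$ agent of index $k\neq i$ inadvertently ranks $\xobject^2_i$ above its endowment. This hinges on the structural fact that each object $\xobject^2_i$ is introduced into a $\Vagent_j$ preference list only for the single index $j=p^2_i$, and symmetrically that the $\Xagent$ preference lists above the box mention only objects tied to their own index. The monotonicity step, by contrast, is routine once the Pareto-improving nature of swaps is invoked.
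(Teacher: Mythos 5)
Your proposal is correct and follows exactly the argument the paper intends: the observation is justified "by enumerating all those agents who like object $\xobject^2_i$ at least as well as their initial endowment," which is precisely your monotonicity-plus-enumeration argument. The case analysis you carry out matches the preference lists in the reduction, so nothing is missing.
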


\begin{ob}
\label{obs:vwagents}
For any $j$ in $[m]$ and any matching $\Matching$ in $\Reach{\Config}$, agents $\Matching^{-1}(\vobject_j)$ and $\Matching^{-1}(\wobject_j)$ belong to $\{\Uagent_j, \Vagent_j, \Wagent_m\} \cup A_j$ and $\{\Wagent_j, \Wagent_m\} \cup A_j$, respectively, where $A_j$ denotes 
\begin{equation*}
    \{\Xagent^1_i \mid j = p^1_i\} \cup \{\Xagent^2_i \mid j \in \{p^2_i, n_i\}\}.
\end{equation*}
\end{ob}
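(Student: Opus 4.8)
The plan is to derive the statement from the individual-rationality property of the swap dynamics. The underlying principle is that every swap is Pareto-improving: by the definition of a swap in Section~\ref{sec:prelims}, the two participating agents each strictly improve their match while every other agent's match is unchanged. First I would record this formally as a preliminary claim: for any $\Matching$ in $\Reach{\Config}$ and any agent $\Agent$, the object $\Matching(\Agent)$ is ranked by $\Agent$ at least as high as its initial endowment $\InitMatching(\Agent)$. This follows by a straightforward induction on the length of a swap sequence $\Swaps{\Oaf}{\InitMatching}{\Matching}$, since along such a sequence no agent's match ever worsens in its own preference order.

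The immediate consequence is that $\Matching^{-1}(\Object)$ must be an agent whose preference list ranks $\Object$ at least as high as its boxed endowment. Because the preference lists in the reduction explicitly enumerate only the objects that an agent \emph{strictly} prefers to its endowment, the set of candidate agents for any target object $\Object$ is exactly the agent initially endowed with $\Object$ together with every agent that lists $\Object$ above its box. The whole argument therefore reduces to scanning the agent families $\Xagent^1_i$, $\Xagent^2_i$, $\Uagent_j$, $\Wagent_j$ (for $j<m$), $\Wagent_m$, and $\Vagent_j$, and collecting, for each target object, the agents that rank it weakly above their endowment.

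Carrying this out for $\Object=\vobject_j$: the endowment-holder is $\Vagent_j$; the agents listing $\vobject_j$ strictly above their box are $\Uagent_j$ (since $\vobject_j\Prefs\uobject_j$), $\Wagent_m$ (since every $\vobject$ precedes $\wobject_m$ on its list), each $\Xagent^1_i$ with $j=p^1_i$, and each $\Xagent^2_i$ with $j\in\{p^2_i,n_i\}$; no $\Wagent_{j'}$ with $j'<m$ (which only lists $\uobject_{j'+1}$) and no $\Vagent_{j'}$ (which only lists $\xobjects$-objects) ranks any $\vobject$ above its box. This yields exactly $\{\Uagent_j,\Vagent_j,\Wagent_m\}\cup A_j$. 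For $\Object=\wobject_j$: the endowment-holder is $\Wagent_j$; the agents listing $\wobject_j$ strictly above their box are $\Wagent_m$ (for $j<m$) and the same $\Xagent$-agents as before, while no $\Uagent$ or $\Vagent$ agent, and no $\Wagent_{j'}$ with $j'\notin\{j,m\}$, ranks any $\wobject$ above its box. This yields $\{\Wagent_j,\Wagent_m\}\cup A_j$, which collapses to $\{\Wagent_m\}\cup A_m$ when $j=m$, completing the proof.

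The only point requiring care — and the sole place the argument is more than a mechanical scan — is agent $\Xagent^2_i$, whose preference list takes two forms according to whether $n_i<p^2_i$. The main obstacle is just to observe that in both orderings the set of objects preferred to $\xobject^2_i$ is identical, namely $\{\wobject_{n_i},\vobject_{n_i},\wobject_{p^2_i},\vobject_{p^2_i},\xobject^1_i\}$, so that $\Xagent^2_i$ ranks $\vobject_j$ (resp.\ $\wobject_j$) above its box precisely when $j\in\{p^2_i,n_i\}$ regardless of the case. This case-independence is exactly what makes the uniform description in terms of $A_j$ valid for both $\vobject_j$ and $\wobject_j$.
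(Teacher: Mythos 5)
Your proposal is correct and follows essentially the same route as the paper, which justifies this observation by ``enumerating all those agents who like object $\vobject_j$ (resp.\ $\wobject_j$) at least as well as their initial endowment''; your explicit induction establishing individual rationality along a swap sequence is exactly the implicit premise the paper relies on, and your case analysis of the two orderings for $\Xagent^2_i$ is a correct (if routine) verification of the same enumeration.
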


\begin{lemma}
\label{lem:ifdir}
Assume that 2P1N-SAT instance $\PropositionalFormula$ is satisfiable. Then there is a matching $\Matching$ in $\Reach{\Config}$ such that $\Matching(\Wagent_m) = \uobject_1$ in the reachable object on cliques instance $I$.
\end{lemma}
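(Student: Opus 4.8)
The plan is to prove the statement constructively: given a satisfying assignment $\phi$ of $\PropositionalFormula$, I will exhibit an explicit sequence of Pareto-improving swaps that transforms $\InitMatching$ into a matching $\Matching$ with $\Matching(\Wagent_m)=\uobject_1$, which by definition witnesses $\Matching\in\Reach{\Config}$. Because the underlying graph is the clique $K_N$, adjacency never blocks a swap, so at every step I only need to verify that both participating agents strictly improve. I would split the sequence into two phases: a \emph{setup phase} that, using $\phi$, vacates the object $\vobject_j$ for every clause $j$, and an \emph{ascent phase} in which $\Wagent_m$ climbs its preference list $\uobject_1\Prefs\vobject_1\Prefs\wobject_1\Prefs\cdots\Prefs\uobject_m\Prefs\vobject_m\Prefs\wobject_m$ from its endowment $\wobject_m$ up to $\uobject_1$.

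For the setup phase, I would first fix, for each clause $C_j$, a single literal $\ell_j$ that $\phi$ makes true (a \emph{witness}). Depending on the form of $\ell_j$ I perform: if $\ell_j$ is $x_i$ in its first occurrence ($j=p_i^1$), the swap between $\Xagent_i^1$ and $\Vagent_j$, after which $\Xagent_i^1$ holds $\vobject_{p_i^1}$; if $\ell_j$ is $x_i$ in its second occurrence ($j=p_i^2$), the swap between $\Xagent_i^2$ and $\Vagent_j$, after which $\Xagent_i^2$ holds $\vobject_{p_i^2}$; and if $\ell_j$ is $\neg x_i$ (so $j=n_i$ and $\phi(x_i)$ is false), first the swap between $\Xagent_i^1$ and $\Xagent_i^2$ and then the swap between $\Xagent_i^2$ and $\Vagent_{n_i}$, after which $\Xagent_i^2$ holds $\vobject_{n_i}$. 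A line-by-line comparison with the preference lists shows each such swap is Pareto-improving; for the negative case I would note that both orderings of $\Xagent_i^2$'s list rank $\wobject_{n_i}$ and $\vobject_{n_i}$ above $\xobject_i^1$, so the split on whether $n_i<p_i^2$ is immaterial. The crucial consequence is that in every one of the three cases $\vobject_j$ ends up held by an $\Xagent$-agent that strictly prefers $\wobject_j$ to $\vobject_j$ (namely $\Xagent_i^1$ prefers $\wobject_{p_i^1}$ to $\vobject_{p_i^1}$, and $\Xagent_i^2$ prefers each of $\wobject_{p_i^2},\wobject_{n_i}$ to the corresponding $\vobject$). Since $\phi$ satisfies every clause, this vacates $\vobject_j$ for all $j\in[m]$.

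I then carry out the ascent phase. From $\uobject_{j+1}$, agent $\Wagent_m$ swaps with $\Wagent_j$ (which wants $\uobject_{j+1}$) to reach $\wobject_j$; from $\wobject_j$ it swaps with the $\Xagent$-agent now holding $\vobject_j$ (which wants $\wobject_j$) to reach $\vobject_j$; and from $\vobject_j$ it swaps with $\Uagent_j$ (which wants $\vobject_j$) to reach $\uobject_j$. Starting from $\wobject_m$, repeating this triple for $j=m,m-1,\ldots,1$ walks $\Wagent_m$ through $\wobject_m,\vobject_m,\uobject_m,\wobject_{m-1},\ldots,\vobject_1,\uobject_1$, where the final swap with $\Uagent_1$ delivers $\uobject_1$. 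Every swap strictly improves $\Wagent_m$ because it advances one position along its preference order, and strictly improves the partner by the observations above. I would also check that each partner still holds the expected object at the moment $\Wagent_m$ needs it: the setup phase never moves any $\wobject_j$ (the $\Xagent$-agents receive $\vobject_j$, not $\wobject_j$) nor any $\uobject_j$, and $\Wagent_m$ visits the clauses once each in decreasing order, so the relevant $\Uagent_j$ and $\Wagent_j$ endowments are intact when reached.

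The step I expect to be the main obstacle is the bookkeeping of the setup phase rather than any individual inequality. I must make the witness selection precise and verify that the chosen family of swaps is mutually non-interfering: each $\Vagent_j$ is touched only by the unique witness $\ell_j$; each pair $\Xagent_i^1,\Xagent_i^2$ is used in a manner consistent with $\phi(x_i)$ (the positive swaps require $x_i$ true and the negative swaps require $x_i$ false, so they cannot both be invoked); and variables that witness no clause are left untouched. Granting this invariant, the two phases compose into a single valid swap sequence $\Swaps{\Oaf}{\InitMatching}{\Matching}$ with $\Matching(\Wagent_m)=\uobject_1$, completing the proof.
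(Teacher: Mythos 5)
Your proposal is correct and follows essentially the same two-phase strategy as the paper's proof: a first phase that uses the satisfying assignment to move literal agents onto the $\vobject_j$ objects (so that each holder of $\vobject_j$ prefers $\wobject_j$), followed by a second phase of $3m-1$ swaps in which $\Wagent_m$ descends through the ranked objects $\wobject_m,\vobject_m,\uobject_m,\ldots,\vobject_1,\uobject_1$. The only difference is cosmetic: you designate one witness literal per clause and perform exactly that swap, whereas the paper performs every swap consistent with the assignment (guarded by whether $\Vagent_j$ has already moved) and then argues that each $\vobject_j$ is nonetheless held by an agent in $A_j$; your non-interference checks on the witness family are the right ones and close this gap.
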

\begin{proof}
Let $\sigma : \{x_1, \dots, x_n\} \to \{0,1\}$ denote a satisfying assignment for $\PropositionalFormula$. We specify a sequence of matchings $\Matching_0, \dots, \Matching_{3n+3m-1}$, which depends on $\sigma$, such that (1) $\Matching_{3n+3m-1}(\Wagent_m) = \uobject_1$ and (2) $\Matching_{k-1} = \Matching_k$ or $\Swap{\Oaf}{\Matching_{k-1}}{\Matching_k}$
for all $k$ in $[3n+3m-1]$. 
We obtain this sequence in two phases.

In the first phase, we perform the following three steps for each $i$ from $1$ to $n$.
\begin{enumerate}
    \item If $\sigma(x_i) = 1$ and $\Matching_{3i-3}(\Vagent_{p^1_i}) = \vobject_{p^1_i}$, we set $\Matching_{3i-2}$ to the matching obtained by swapping $\Vagent_{p^1_i}$ with $\Xagent^1_i$ in $\Matching_{3i-3}$.
    Otherwise, we set $\Matching_{3i-2}$ to $\Matching_{3i-3}$.
    \item If $\sigma(x_i) = 1$ and $\Matching_{3i-2}(\Vagent_{p^2_i}) = \vobject_{p^2_i}$, we set $\Matching_{3i-1}$ to the matching obtained by swapping $\Vagent_{p^2_i}$ with $\Xagent^2_i$ in $\Matching_{3i-2}$.
    Otherwise, we set $\Matching_{3i-1}$ to $\Matching_{3i-2}$.
    \item If $\sigma(x_i) = 0$ and $\Matching_{3i-1}(\Vagent_{n_i}) = \vobject_{n_i}$,
    we set $\Matching_{3i}$ to the matching obtained by first swapping $\Xagent^1_i$ with $\Xagent^2_i$, and then swapping $\Vagent_{n_i}$ with $\Xagent^2_i$, in $\Matching_{3i-1}$.
    Otherwise, we set $\Matching_{3i}$ to $\Matching_{3i-1}$.
\end{enumerate}
It is easy to check that all of the swaps in the first phase are valid.

Let $A_j$ be as defined in the statement of Observation~\ref{obs:vwagents}.
We claim that at the end of the first phase, agent $\Matching_{3n}^{-1}(\vobject_j)$ belongs to $\Agents_j$ for all $j$ in $[m]$.
Assume for the sake of contradiction that for some $j$ in $[m]$, agent $\Matching_{3n}^{-1}(\vobject_j)$ does not belong to $\Agents_j$. Since agents $\Uagent_j$ and $\Wagent_m$ do not participate in any swap in the first phase, $\Matching_{3n}(\Uagent_j) = \uobject_j$ and   $\Matching_{3n}(\Wagent_m) = \wobject_m$. Since $\Matching_{3n}^{-1}(\vobject_j)$ does not belong to $\{\Uagent_j, \Wagent_m\} \cup \Agents_j$, Observation~\ref{obs:vwagents} implies that $\Matching_{3n}(\Vagent_j) = \vobject_j$. Let variable $x_i$ satisfy clause $C_j$. The swaps in the $i$th iteration of first phase imply that $\Matching_{3i}^{-1}(\vobject_j) \neq \Vagent_j$.
Since no agent in $\Vagents$ participates in more than one swap in the first phase, we have $\Matching_{3n}^{-1}(\vobject_j) \neq \Vagent_j$, a contradiction since $\Matching_{3n}(\Vagent_j) = \vobject_j$. 
This completes the proof of the claim.

Note that there is a unique object of rank $k$ for each $k$ in $[3m]$.
Using the claim stated above, together with the fact that no agent in $\Uagents \cup \Wagents$ participates in a swap in the first phase, it is easy to verify that for any rank $k$ in $[3m-1]$, there is an agent $\Agent$ such that $\rank(\Matching_{3n}(\Agent)) = k$ and $\Agent$ prefers the object of rank $k+1$ to the object of rank $k$.
The preferences of agent $\Wagent_m$ imply that for any rank $k$ in $[3m-1]$, agent $\Wagent_m$ prefers the object of rank $k$ to the object of rank $k+1$.
Moreover, $\rank(\Matching_{3n}(\Wagent_m)) = 3m$.
In the second phase we perform $3m-1$ swaps, each involving agent $\Wagent_m$.
For $k$ running from $3m-1$ down to $1$, we set $\Matching_{3n+3m-k}$ to the matching obtained by swapping $\Wagent_m$ with the agent matched to the object of rank $k$ in $\Matching_{3n+3m-k-1}$.
The foregoing arguments show that all of these $3m-1$ swaps are valid and $\rank(\Matching_{3n+3m-1}(\Wagent_m)) = 1$. Thus $\Matching_{3n+3m-1}(\Wagent_m) = \uobject_1$.
\end{proof}

Observation~\ref{obs:wmrank} below can be justified by using the preferences of agent $\Wagent_m$ and the fact that if any agent $a$ swaps from object $\Object$ to $\Object'$ then $\Object' \Prefs_a \Object$.
Observations~\ref{obs:vrank} through~\ref{obs:x2rank} can be justified similarly.

\begin{ob}
\label{obs:wmrank}
For any matchings $\Matching_1$ and $\Matching_2$ in $\Reach{\Config}$ such that $\Swap{\Oaf}{\Matching_1}{\Matching_2}$, we have
$\rank(\Matching_2(\Wagent_m)) \le \rank(\Matching_1(\Wagent_m))$.
\end{ob}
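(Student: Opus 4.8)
The plan is to case-split on whether agent $\Wagent_m$ is one of the two agents participating in the single swap that transforms $\Matching_1$ into $\Matching_2$. If $\Wagent_m$ does not participate, then $\Matching_2(\Wagent_m)=\Matching_1(\Wagent_m)$, so the two ranks are equal and the inequality holds trivially. The substantive case is when $\Wagent_m$ does participate: then $\Wagent_m$ moves from $\Object=\Matching_1(\Wagent_m)$ to $\Object'=\Matching_2(\Wagent_m)$, and since every applicable swap is strictly improving for both participating agents we have $\Object'\Prefs_{\Wagent_m}\Object$. The remaining work is to convert this strict preference into a strict decrease in rank.

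To do that, I would first record that the listed preferences of $\Wagent_m$, namely $\uobject_1\Prefs\vobject_1\Prefs\wobject_1\Prefs\cdots\Prefs\uobject_m\Prefs\vobject_m\Prefs\boxed{\wobject_m}$, enumerate precisely the objects $\uobject_1,\vobject_1,\wobject_1,\ldots,\uobject_m,\vobject_m,\wobject_m$ in the order of ranks $1,2,3,\ldots,3m-1,3m$. Thus, restricted to this set, the preference order of $\Wagent_m$ is exactly the order of increasing rank (more preferred means smaller rank), while every remaining object lies in $\xobjects$ and has rank $0$, ranked by $\Wagent_m$ below its endowment $\wobject_m$.

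The one subtlety, which I expect to be the main (though modest) obstacle, is to rule out the degenerate possibility that $\Object=\Matching_1(\Wagent_m)$ has rank $0$; in that situation a strictly improving move could in principle raise the rank from $0$ to a positive value. I would dispatch this with an invariant argument along the sequence of swaps witnessing membership in $\Reach{\Config}$: initially $\InitMatching(\Wagent_m)=\wobject_m$, and whenever $\Wagent_m$ participates in a swap it moves to a strictly more preferred object. Since every object that $\Wagent_m$ strictly prefers to $\wobject_m$ lies in $\{\uobject_1,\vobject_1,\ldots,\uobject_m,\vobject_m\}$, a straightforward induction shows that in every matching of $\Reach{\Config}$ agent $\Wagent_m$ holds an object of positive rank (equivalently, never an object of $\xobjects$). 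Consequently $\Object$ has positive rank, every object that $\Wagent_m$ strictly prefers to $\Object$ appears earlier in the listed preference order and hence has strictly smaller rank, and therefore $\Object'\Prefs_{\Wagent_m}\Object$ forces $\rank(\Object')<\rank(\Object)$. Combining the two cases gives $\rank(\Matching_2(\Wagent_m))\le\rank(\Matching_1(\Wagent_m))$, as required.
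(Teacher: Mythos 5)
Your proof is correct and follows essentially the same route as the paper, which justifies this observation in one line by appealing to the preferences of $\Wagent_m$ (listed in increasing order of rank) together with the fact that swaps are strictly improving. The extra care you take to rule out $\Wagent_m$ ever holding a rank-$0$ object in $\xobjects$ is a valid and worthwhile detail that the paper leaves implicit here (though it surfaces explicitly in the proof of Lemma~\ref{lem:wmrank}).
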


\begin{ob}
\label{obs:vrank}
For any $j$ in $[m-1]$ and any matchings $\Matching_1$ and $\Matching_2$ in $\Reach{\Config}$ such that $\Swap{\Oaf}{\Matching_1}{\Matching_2}$, we have 
$\rank(\Matching_2(\Wagent_j)) \le \rank(\Matching_1(\Wagent_j)) + 1$.
\end{ob}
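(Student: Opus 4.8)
The plan is to reduce the claim to a characterization of the objects that agent $\Wagent_j$ can possibly hold in a reachable matching. The preference list of $\Wagent_j$ is $\uobject_{j+1}\Prefs\wobject_j$, so the only object $\Wagent_j$ strictly prefers to its initial endowment $\wobject_j$ is $\uobject_{j+1}$. First I would invoke the standard monotonicity property of Pareto-improving swaps: since every applicable swap strictly improves the held object of each participating agent (by definition the moving agent $\Agent$ satisfies $\Object'\Prefs_\Agent\Object$), an easy induction on the length of a swap sequence from the initial configuration $\Config$ shows that in any matching $\Matching$ in $\Reach{\Config}$, each agent holds an object it weakly prefers to its initial endowment. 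Applying this to $\Wagent_j$, whose endowment is $\wobject_j$, yields $\Matching(\Wagent_j)\in\{\wobject_j,\uobject_{j+1}\}$ for every $\Matching$ in $\Reach{\Config}$.

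Next I would compute the ranks of these two candidate objects: $\rank(\wobject_j)=3j$ and $\rank(\uobject_{j+1})=3(j+1)-2=3j+1$. Hence for any reachable matching $\Matching$ we have $\rank(\Matching(\Wagent_j))\in\{3j,3j+1\}$. In particular this holds for both $\Matching_1$ and $\Matching_2$, so $\rank(\Matching_2(\Wagent_j))\leq 3j+1$ while $\rank(\Matching_1(\Wagent_j))\geq 3j$, which immediately gives $\rank(\Matching_2(\Wagent_j))\leq 3j+1\leq\rank(\Matching_1(\Wagent_j))+1$, as required. Note that this argument does not even use the hypothesis $\Swap{\Oaf}{\Matching_1}{\Matching_2}$ beyond the fact that both matchings are reachable; the swap hypothesis is merely how the observation is applied downstream.

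The only substantive step is establishing the monotonicity fact and the resulting characterization $\Matching(\Wagent_j)\in\{\wobject_j,\uobject_{j+1}\}$; the rank arithmetic giving the tight bound of $+1$ (the two admissible ranks differ by exactly one) is then routine. I expect no genuine obstacle here, since this is precisely the pattern flagged in the text for Observation~\ref{obs:wmrank}: read off the short preference list of the relevant agent and combine it with the fact that a swap only moves an agent to a more preferred object.
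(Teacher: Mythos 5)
Your proof is correct and matches the paper's intended justification: the paper proves this observation only implicitly (``can be justified similarly'' to Observation~\ref{obs:wmrank}, i.e., by reading off the agent's preference list and using that swaps are Pareto-improving), and your argument fills in exactly those details --- $\Wagent_j$ can only ever hold $\wobject_j$ (rank $3j$) or $\uobject_{j+1}$ (rank $3j+1$), so the rank can rise by at most one. Your remark that the swap hypothesis is not even needed is a harmless minor strengthening.
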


\begin{ob}
\label{obs:v'rank}
For any $j$ in $[m]$ and any matchings $\Matching_1$ and $\Matching_2$ in $\Reach{\Config}$ such that $\Swap{\Oaf}{\Matching_1}{\Matching_2}$, we have 
$\rank(\Matching_2(\Uagent_j)) \le \rank(\Matching_1(\Uagent_j)) + 1$.
\end{ob}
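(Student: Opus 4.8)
The plan is to exploit the fact that agent $\Uagent_j$ has an extremely short preference list: it strictly prefers exactly one object, $\vobject_j$, to its initial endowment $\uobject_j$. The first step is to establish that in every matching in $\Reach{\Config}$, agent $\Uagent_j$ is matched to an object in $\{\uobject_j,\vobject_j\}$. This follows from the fact that every swap is strictly Pareto-improving, so along any swap sequence starting from $\Config$ an agent only ever moves to a strictly more preferred object. Thus, starting from $\uobject_j$, the only improving move available to $\Uagent_j$ is to $\vobject_j$; and once it holds $\vobject_j$, its top choice, no further improving move exists and it can never relinquish $\vobject_j$. Recalling that $\rank(\uobject_j)=3j-2$ and $\rank(\vobject_j)=3j-1$, this confines the rank of $\Matching(\Uagent_j)$ to $\{3j-2,3j-1\}$ for every reachable $\Matching$.

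With this in hand, I would finish by a simple case analysis on the single swap $\Swap{\Oaf}{\Matching_1}{\Matching_2}$. If $\Matching_1(\Uagent_j)=\vobject_j$, then $\Uagent_j$ holds its most preferred object and cannot participate in a strictly improving swap, so $\Matching_2(\Uagent_j)=\vobject_j$ and the rank is unchanged. If instead $\Matching_1(\Uagent_j)=\uobject_j$, then either $\Uagent_j$ does not participate in the swap, leaving its rank unchanged, or it does participate and its only improving option moves it to $\vobject_j$, raising its rank from $3j-2$ to $3j-1$, an increase of exactly one. In all cases $\rank(\Matching_2(\Uagent_j))\le\rank(\Matching_1(\Uagent_j))+1$, as required.

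I do not anticipate a real obstacle here; as the surrounding text notes, this observation is justified similarly to Observation~\ref{obs:wmrank}. The only point meriting care is the claim that $\Uagent_j$ can never return to $\uobject_j$ after reaching $\vobject_j$, which is immediate because swaps are strictly improving and $\vobject_j$ is the top of $\Uagent_j$'s preference order. Everything else is routine, and the bound is in fact tight since the rank can increase by exactly one.
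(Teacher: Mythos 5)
Your proof is correct and follows essentially the same route the paper intends: the paper gives no detailed argument for this observation, stating only that it "can be justified similarly" to Observation~\ref{obs:wmrank} using the preferences of the agent and the fact that every swap strictly improves a participant's match, which is exactly the content of your argument. Your additional explicit observation that $\Uagent_j$ is always matched to an object in $\{\uobject_j,\vobject_j\}$ with ranks $3j-2$ and $3j-1$ makes the bound immediate and is a faithful expansion of the paper's intended justification.
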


\begin{ob}
\label{obs:urank}
For any $j$ in $[m]$ and any matchings $\Matching_1$ and $\Matching_2$ in $\Reach{\Config}$ such that $\Swap{\Oaf}{\Matching_1}{\Matching_2}$, we have
$\rank(\Matching_2(\Vagent_j)) \le \rank(\Matching_1(\Vagent_j))$.
\end{ob}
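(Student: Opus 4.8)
The plan is to exploit the fact that all swaps are Pareto-improving, so that along any sequence of swaps starting from $\Config$ the object held by each agent only improves with respect to that agent's own preferences. Since $\Matching_1$ belongs to $\Reach{\Config}$, this monotonicity gives $\Matching_1(\Vagent_j)\succeq_{\Vagent_j}\InitMatching(\Vagent_j)=\vobject_j$, where $\succeq_{\Vagent_j}$ denotes the reflexive closure of $\Prefs_{\Vagent_j}$. Inspecting the preference list of $\Vagent_j$, the only objects it weakly prefers to $\vobject_j$ are $\vobject_j$ itself together with the listed objects in $\{\xobject^1_i\mid j\in\{p^1_i,n_i\}\}\cup\{\xobject^2_i\mid j=p^2_i\}$. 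Hence $\Matching_1(\Vagent_j)$ is either $\vobject_j$, which has rank $3j-1$, or one of these $x$-objects, each of which has rank $0$.

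First I would dispose of the case in which $\Vagent_j$ does not participate in the swap taking $\Matching_1$ to $\Matching_2$; then $\Matching_2(\Vagent_j)=\Matching_1(\Vagent_j)$ and the ranks are equal. Otherwise $\Vagent_j$ is one of the two participating agents, so by the definition of a swap $\Matching_2(\Vagent_j)\Prefs_{\Vagent_j}\Matching_1(\Vagent_j)$. If $\Matching_1(\Vagent_j)=\vobject_j$, then the strictly preferred object $\Matching_2(\Vagent_j)$ must be one of the listed $x$-objects, so $\rank(\Matching_2(\Vagent_j))=0<3j-1=\rank(\Matching_1(\Vagent_j))$. If instead $\Matching_1(\Vagent_j)$ is already one of the listed $x$-objects, then any object strictly preferred to it precedes it in the preference list and is therefore another listed $x$-object; thus $\rank(\Matching_2(\Vagent_j))=0=\rank(\Matching_1(\Vagent_j))$. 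In every case $\rank(\Matching_2(\Vagent_j))\le\rank(\Matching_1(\Vagent_j))$, as desired.

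The main subtlety to get right is the observation that $\vobject_j$ sits immediately below all of the relevant $x$-objects in the preference ordering of $\Vagent_j$: this is precisely what guarantees that once $\Vagent_j$ holds an $x$-object it can never move back up to $\vobject_j$ (nor to any higher-rank object), so its rank can only remain at $0$. Everything else is a direct case analysis driven by the monotonicity of Pareto-improving swaps, so I do not anticipate any genuine difficulty beyond carefully enumerating the objects that $\Vagent_j$ can hold in a reachable matching.
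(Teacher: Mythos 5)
Your proof is correct and follows exactly the justification the paper gives for this family of observations, namely combining the explicit preference list of $\Vagent_j$ (whose only objects above the initial endowment $\vobject_j$ are rank-$0$ members of $\xobjects$) with the fact that a Pareto-improving swap can only move an agent to a strictly preferred object. The case analysis you spell out is the routine verification the paper leaves implicit.
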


\begin{ob}
\label{obs:x1rank}
For any $i$ in $[n]$ and any matchings $\Matching_1$ and $\Matching_2$  in $\Reach{\Config}$ such that $\Swap{\Oaf}{\Matching_1}{\Matching_2}$ and $\Matching_1(\Xagent^1_i)$ belongs to $\Objects \setminus \xobjects$, we have $\rank(\Matching_2(\Xagent^1_i)) \le \rank(\Matching_1(\Xagent^1_i)) + 1$.

\end{ob}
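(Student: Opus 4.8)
The plan is to follow the same template as Observations~\ref{obs:wmrank} through~\ref{obs:urank}, relying on two facts: first, that every swap is Pareto-improving, so that across the transition $\Swap{\Oaf}{\Matching_1}{\Matching_2}$ the object held by agent $\Xagent^1_i$ can only improve (it is unchanged if $\Xagent^1_i$ does not participate in the swap, and strictly improves if it does); and second, the explicit preference list of $\Xagent^1_i$, namely $\xobject^2_i \Prefs \wobject_{p^1_i} \Prefs \vobject_{p^1_i} \Prefs \xobject^1_i$, where $\xobject^1_i$ is its initial endowment and all remaining objects are worse.

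First I would use the fact that $\Matching_1$ and $\Matching_2$ are reachable from $\Config$ and that every swap in such a derivation is Pareto-improving to conclude that the object matched to $\Xagent^1_i$ in either matching is weakly preferred by $\Xagent^1_i$ to its endowment $\xobject^1_i$. Reading off the preference list, the only objects weakly preferred to $\xobject^1_i$ are $\xobject^2_i$, $\wobject_{p^1_i}$, $\vobject_{p^1_i}$, and $\xobject^1_i$ itself. The hypothesis $\Matching_1(\Xagent^1_i) \in \Objects \setminus \xobjects$ eliminates the two objects in $\xobjects$ from this list, forcing $\Matching_1(\Xagent^1_i) \in \{\wobject_{p^1_i}, \vobject_{p^1_i}\}$.

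Next I would split into these two cases and bound $\rank(\Matching_2(\Xagent^1_i))$ using that $\Matching_2(\Xagent^1_i)$ is weakly preferred by $\Xagent^1_i$ to $\Matching_1(\Xagent^1_i)$. If $\Matching_1(\Xagent^1_i) = \vobject_{p^1_i}$, of rank $3p^1_i-1$, then $\Matching_2(\Xagent^1_i) \in \{\xobject^2_i, \wobject_{p^1_i}, \vobject_{p^1_i}\}$, whose ranks are $0$, $3p^1_i$, and $3p^1_i-1$; the maximum is $3p^1_i = \rank(\Matching_1(\Xagent^1_i)) + 1$. If instead $\Matching_1(\Xagent^1_i) = \wobject_{p^1_i}$, of rank $3p^1_i$, then $\Matching_2(\Xagent^1_i) \in \{\xobject^2_i, \wobject_{p^1_i}\}$, of ranks $0$ and $3p^1_i$, so $\rank(\Matching_2(\Xagent^1_i)) \le 3p^1_i = \rank(\Matching_1(\Xagent^1_i))$. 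In both cases the desired inequality $\rank(\Matching_2(\Xagent^1_i)) \le \rank(\Matching_1(\Xagent^1_i)) + 1$ holds.

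The argument presents no real obstacle; the only point requiring care is the appeal to Pareto-monotonicity of the match of $\Xagent^1_i$ over reachable matchings, which is exactly the general principle invoked in the preceding observations. The slack of $+1$ in the bound is explained by the adjacency of the ranks of $\vobject_{p^1_i}$ and $\wobject_{p^1_i}$: a single swap can lift $\Xagent^1_i$ from $\vobject_{p^1_i}$ up to $\wobject_{p^1_i}$, raising the rank by exactly one, and it can never raise the rank further, since the only object of higher rank that $\Xagent^1_i$ prefers to $\vobject_{p^1_i}$ is $\wobject_{p^1_i}$ itself (its unique remaining preferred object, $\xobject^2_i$, has rank $0$).
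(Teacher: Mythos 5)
Your proof is correct and follows exactly the route the paper intends: the paper justifies this observation (like Observations~\ref{obs:wmrank} through~\ref{obs:x2rank}) by combining the agent's preference list with the fact that a swap can only move an agent to a more preferred object, and your case analysis on $\Matching_1(\Xagent^1_i) \in \{\vobject_{p^1_i}, \wobject_{p^1_i}\}$ is just the explicit working-out of that argument. The rank computations check out, so nothing is missing.
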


\begin{ob}
\label{obs:x2rank}
For any $i$ in $[n]$ and any matchings $\Matching_1$ and $\Matching_2$ in $\Reach{\Config}$ such that $\Swap{\Oaf}{\Matching_1}{\Matching_2}$ and $\Matching_1(\Xagent^2_i)$ belongs to $\Objects \setminus \xobjects$, we have
$\rank(\Matching_2(\Xagent^2_i)) \le \rank(\Matching_1(\Xagent^2_i)) + 1$.
\end{ob}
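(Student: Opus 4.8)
The plan is to follow the same template suggested for Observation~\ref{obs:wmrank}: inspect the preference list of $\Xagent^2_i$ and exploit the fact that a swap can only move an agent to a strictly more preferred object. First I would dispose of the trivial case: if $\Xagent^2_i$ does not participate in the swap $\Swap{\Oaf}{\Matching_1}{\Matching_2}$, then $\Matching_2(\Xagent^2_i)=\Matching_1(\Xagent^2_i)$ and the claimed inequality holds with room to spare. So assume $\Xagent^2_i$ participates; then the swap is Pareto-improving for it, so $\Matching_2(\Xagent^2_i)\Prefs_{\Xagent^2_i}\Matching_1(\Xagent^2_i)$.

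The first real step is to pin down $\Matching_1(\Xagent^2_i)$. Since every swap is Pareto-improving, each agent's match only moves up its own preference list along any reachable swap sequence; in particular $\Matching_1(\Xagent^2_i)$ is weakly preferred to the endowment $\xobject^2_i$. Combined with the hypothesis $\Matching_1(\Xagent^2_i)\in\Objects\setminus\xobjects$ (which rules out $\xobject^1_i$ and $\xobject^2_i$), this forces $\Matching_1(\Xagent^2_i)$ to be one of the four objects $\wobject_{p^2_i},\vobject_{p^2_i},\wobject_{n_i},\vobject_{n_i}$ that appear above $\xobject^1_i$ in the list.

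The second step unifies the two forms of $\Xagent^2_i$'s preference list. Write $c_1=\min(n_i,p^2_i)$ and $c_2=\max(n_i,p^2_i)$ (distinct by the definition of 2P1N-SAT). In both the $n_i<p^2_i$ and $n_i>p^2_i$ cases, the portion of the list above $\xobject^1_i$ is exactly $\wobject_{c_1}\Prefs\vobject_{c_1}\Prefs\wobject_{c_2}\Prefs\vobject_{c_2}$, with ranks $3c_1,\,3c_1-1,\,3c_2,\,3c_2-1$. The decisive feature is that the more-preferred clause pair carries the smaller clause index, hence the smaller ranks. Thus any upward move of $\Xagent^2_i$ either (a) stays within a single clause pair, going from $\vobject_c$ to $\wobject_c$, which raises the rank by exactly $1$, or (b) jumps from a $c_2$-object up to a $c_1$-object, which strictly lowers the rank since $3c_2-1>3c_1$. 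In every case $\rank(\Matching_2(\Xagent^2_i))\le\rank(\Matching_1(\Xagent^2_i))+1$, as required.

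The main obstacle is the non-monotonicity of rank along the preference list: the highest-ranked of the four candidate objects, $\wobject_{c_2}$, is the \emph{second-least} preferred, so one cannot naively argue ``more preferred implies bounded rank increase.'' The crux is therefore the combination of the first step (using the Pareto-improving nature of swaps to confine $\Matching_1(\Xagent^2_i)$ to the four $v/w$ objects) with the observation that the list interleaves the two clause pairs in increasing index order; consequently only the within-pair step $\vobject_c\to\wobject_c$ can increase the rank, and it does so by exactly one.
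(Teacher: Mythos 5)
Your proposal is correct and matches the paper's intended justification: the paper proves Observations~\ref{obs:vrank}--\ref{obs:x2rank} only by the remark that they follow, like Observation~\ref{obs:wmrank}, from inspecting the agent's preference list together with the fact that a swap strictly improves the participating agents' matches. Your write-up simply makes that inspection explicit (confining $\Matching_1(\Xagent^2_i)$ to the four $v/w$ objects via reachability, then checking that the only rank-increasing transition is $\vobject_c\to\wobject_c$, which increases rank by exactly one), and the case analysis is complete and sound.
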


Lemma~\ref{lem:no_right_jump} below follows immediately from Observations~\ref{obs:wmrank} through~\ref{obs:x2rank}.
\begin{lemma}
\label{lem:no_right_jump}
For any agent $\Agent$ in $A$ and any matchings $\Matching_1$ and $\Matching_2$ in $\Reach{\Config}$ such that $\Swap{\Oaf}{\Matching_1}{\Matching_2}$ and $\Matching_1(\Agent)$ belongs to $\Objects \setminus \xobjects$, we have
$\rank(\Matching_2(\Agent)) \le \rank(\Matching_1(\Agent)) + 1$.
\end{lemma}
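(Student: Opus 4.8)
The plan is to prove the statement by a straightforward case analysis on the type of agent $\Agent$, leaning entirely on the six preceding observations. Every agent of $\Instance$ falls into exactly one of six disjoint classes: the variable agents $\Xagent^1_i$ and $\Xagent^2_i$ for $i$ in $[n]$, and the clause agents $\Uagent_j$, $\Vagent_j$, $\Wagent_j$ for $j$ in $[m]$, where the single agent $\Wagent_m$ is split off from the remaining $\Wagent_j$ because it has distinct preferences. The key structural fact I would rely on is that Observations~\ref{obs:wmrank} through~\ref{obs:x2rank} were tailored so that each one bounds the per-swap change in rank for exactly one of these classes. Thus the lemma reduces to checking that these observations are collectively exhaustive over $\Agents$ and that each delivers the claimed bound $\rank(\Matching_2(\Agent)) \le \rank(\Matching_1(\Agent)) + 1$.

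Concretely, I would dispatch the cases as follows. If $\Agent = \Wagent_m$, Observation~\ref{obs:wmrank} gives $\rank(\Matching_2(\Agent)) \le \rank(\Matching_1(\Agent))$, which a fortiori implies the desired $+1$ bound; the same strengthening applies to $\Agent = \Vagent_j$ via Observation~\ref{obs:urank}. For $\Agent = \Wagent_j$ with $j$ in $[m-1]$, Observation~\ref{obs:vrank} gives the bound directly, and for $\Agent = \Uagent_j$, Observation~\ref{obs:v'rank} does the same. The two remaining cases, $\Agent = \Xagent^1_i$ and $\Agent = \Xagent^2_i$, are handled by Observations~\ref{obs:x1rank} and~\ref{obs:x2rank}, respectively.

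The one point requiring attention, and the only place the hypothesis $\Matching_1(\Agent) \in \Objects \setminus \xobjects$ is actually consumed, is in the two variable-agent cases. Observations~\ref{obs:x1rank} and~\ref{obs:x2rank} both carry precisely this side condition, reflecting the fact that an $\Xagent$ agent currently holding an $x$-object (rank $0$) could in a single swap acquire a much higher-rank object, breaking the $+1$ bound; restricting to $\Matching_1(\Agent) \in \Objects \setminus \xobjects$ excludes exactly this situation. Since the clause-agent bounds hold unconditionally, the hypothesis is invoked only where it is needed. I expect the ``main obstacle'' here to be purely bookkeeping: verifying that the six classes exhaust $\Agents$ and that the side condition is correctly threaded through the two $\Xagent$ cases. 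No new argument beyond invoking the stated observations should be required.
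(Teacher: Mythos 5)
Your proposal is correct and matches the paper, which simply states that the lemma follows immediately from Observations~\ref{obs:wmrank} through~\ref{obs:x2rank}; your case analysis just spells out the exhaustive dispatch over agent classes that the paper leaves implicit. Your remark that the hypothesis $\Matching_1(\Agent)\in\Objects\setminus\xobjects$ is consumed only in the two $\Xagent$ cases is accurate and consistent with the side conditions in Observations~\ref{obs:x1rank} and~\ref{obs:x2rank}.
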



\begin{lemma}
\label{lem:wmrank}
Let $\Matching_1$ and $\Matching_2$ be matchings in $\Reach{\Config}$ such that $\Swap{\Oaf}{\Matching_1}{\Matching_2}$. Then 
\begin{equation*}
    \rank(\Matching_2(\Wagent_m)) \ge \rank(\Matching_1(\Wagent_m)) - 1.
\end{equation*}
\end{lemma}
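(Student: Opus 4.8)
The plan is to argue directly by a case analysis on whether agent $\Wagent_m$ participates in the swap $\Swap{\Oaf}{\Matching_1}{\Matching_2}$, and, in the case where it does participate, to shift attention from $\Wagent_m$ itself to its swap partner, whose rank increase is already controlled by Lemma~\ref{lem:no_right_jump}. This is really the mirror image of Observation~\ref{obs:wmrank}: that observation bounds how far $\Wagent_m$ moves toward lower rank using its own Pareto-improving incentive, whereas here I bound how far it can move in a single step by exploiting the fact that its partner must move toward higher rank by exactly the same amount.

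First, if $\Wagent_m$ does not participate in the swap, then $\Matching_2(\Wagent_m)=\Matching_1(\Wagent_m)$, so the two ranks are equal and the claimed inequality holds trivially. This disposes of the easy case immediately.

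Now suppose $\Wagent_m$ does participate. Let $\Agent'$ denote the other agent in the swap, let $\Object$ denote $\Matching_1(\Wagent_m)$, and let $\Object'$ denote $\Matching_1(\Agent')$; thus $\Matching_2(\Wagent_m)=\Object'$ and $\Matching_2(\Agent')=\Object$. Because the swap is Pareto-improving, $\Wagent_m$ strictly prefers $\Object'$ to $\Object$. The key observation I would establish is that $\Object'$ is a non-$x$ object, i.e., $\Object'\in\Objects\setminus\xobjects$. Indeed, in any reachable matching $\Wagent_m$ holds an object it weakly prefers to its endowment $\wobject_m$, so $\Object\succeq_{\Wagent_m}\wobject_m$, whence $\Object'\succ_{\Wagent_m}\Object\succeq_{\Wagent_m}\wobject_m$; since every object that $\Wagent_m$ strictly prefers to $\wobject_m$ lies in $\{\uobject_j,\vobject_j,\wobject_j\mid j\in[m]\}$, which is disjoint from $\xobjects$, it follows that $\Object'$ is a non-$x$ object.

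With this in hand, the hypothesis of Lemma~\ref{lem:no_right_jump} is met for agent $\Agent'$, since $\Matching_1(\Agent')=\Object'$ belongs to $\Objects\setminus\xobjects$. Applying that lemma to $\Agent'$ gives $\rank(\Matching_2(\Agent'))\le\rank(\Matching_1(\Agent'))+1$, i.e., $\rank(\Object)\le\rank(\Object')+1$. Rearranging yields $\rank(\Matching_2(\Wagent_m))=\rank(\Object')\ge\rank(\Object)-1=\rank(\Matching_1(\Wagent_m))-1$, exactly the desired bound. I expect the only genuine subtlety to be the verification that $\Object'$ is a non-$x$ object so that Lemma~\ref{lem:no_right_jump} is applicable; once that is settled, the remaining rank manipulation is routine.
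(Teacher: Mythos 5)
Your proof is correct and follows essentially the same route as the paper's: both reduce the claim to Lemma~\ref{lem:no_right_jump} applied to the swap partner of $\Wagent_m$, using the preferences of $\Wagent_m$ (individual rationality plus the fact that every object it prefers to $\wobject_m$ lies outside $\xobjects$) to verify that the partner's pre-swap object is in $\Objects\setminus\xobjects$. The paper phrases this as a proof by contradiction while you argue directly, but the substance is identical.
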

\begin{proof}
Assume for the sake of contradiction that $\rank(\Matching_2(\Wagent_m)) < \rank(\Matching_1(\Wagent_m)) - 1$.
The preferences of $\Wagent_m$ imply that $\Matching_1(\Wagent_m)$ and $\Matching_2(\Wagent_m)$ belong to $\Objects \setminus \xobjects$.
Thus there is an agent $\Agent$ such that $\Matching_1(\Agent)$ belongs to $\Objects \setminus \xobjects$ and $\rank(\Matching_2(\Agent)) > \rank(\Matching_1(\Agent)) + 1$, contradicting Lemma~\ref{lem:no_right_jump}.
\end{proof}

\begin{lemma}
\label{lem:vprefs}
Let $i$ belong to $[n]$ and $j$ belong to $[m]$. If $\Vagent_j$ prefers $\xobject^1_i$ to $\vobject_j$, then $\Vagent_j$ prefers $\vobject_j$ to $\xobject^2_i$. Similarly, if $\Vagent_j$ prefers $\xobject^2_i$ to $\vobject_j$, then $\Vagent_j$ prefers $\vobject_j$ to $\xobject^1_i$.
\end{lemma}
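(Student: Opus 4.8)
The plan is to read the claim directly off the definition of the preferences of agent $\Vagent_j$ given in the reduction. Recall that the set of objects that $\Vagent_j$ ranks above its endowment $\vobject_j$ is exactly $\{\xobject^1_i \mid j \in \{p^1_i, n_i\}\} \cup \{\xobject^2_i \mid j = p^2_i\}$. Consequently, $\Vagent_j$ prefers $\xobject^1_i$ to $\vobject_j$ precisely when $j \in \{p^1_i, n_i\}$, and $\Vagent_j$ prefers $\xobject^2_i$ to $\vobject_j$ precisely when $j = p^2_i$. So the first step is simply to translate each ``prefers $X$ to $\vobject_j$'' hypothesis and conclusion into a membership condition on $j$.

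The single fact driving the argument is that $p^1_i$, $p^2_i$, and $n_i$ are pairwise distinct; this is recorded in the reduction as a consequence of the 2P1N-SAT requirement that $x_i$ occur in three distinct clauses. For the first implication, I would suppose $\Vagent_j$ prefers $\xobject^1_i$ to $\vobject_j$, so that $j \in \{p^1_i, n_i\}$. By distinctness $j \neq p^2_i$, hence $\xobject^2_i$ does not lie in the preferred set of $\Vagent_j$, which is exactly the statement that $\Vagent_j$ prefers $\vobject_j$ to $\xobject^2_i$. The second implication is symmetric: if $\Vagent_j$ prefers $\xobject^2_i$ to $\vobject_j$ then $j = p^2_i$, so $j \notin \{p^1_i, n_i\}$ by distinctness, and therefore $\xobject^1_i$ is not among the objects $\Vagent_j$ prefers to $\vobject_j$.

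I expect no real obstacle, as the statement is purely definitional. The only point worth stating explicitly is that, for a fixed variable index $i$, the pairwise distinctness of $p^1_i$, $p^2_i$, and $n_i$ guarantees that at most one of $\xobject^1_i$ and $\xobject^2_i$ can appear among the objects $\Vagent_j$ prefers to $\vobject_j$; the lemma is just this observation stated in its two possible orientations.
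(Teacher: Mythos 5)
Your proposal is correct and follows essentially the same argument as the paper: translate each preference hypothesis into the membership condition $j \in \{p^1_i, n_i\}$ or $j = p^2_i$, and conclude via the pairwise distinctness of $p^1_i$, $p^2_i$, and $n_i$. No gaps.
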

\begin{proof}
Assume that $\Vagent_j$ prefers $\xobject^1_i$ to $\vobject_j$. The preferences of $\Vagent_j$ imply that $j \in \{p^1_i, n_i\}$. Since $p^1_i, p^2_i,$ and $n_i$ are distinct, $j \neq p^2_i$. 
Hence the preferences of $\Vagent_j$ imply that $\Vagent_j$ prefers $\vobject_j$ to $\xobject^2_i$. We can use a similar argument to prove that if $\Vagent_j$ prefers $\xobject^2_i$ to $\vobject_j$, then $\Vagent_j$ prefers $\vobject_j$ to $\xobject^1_i$.
\end{proof}

\begin{lemma}
\label{lem:x2limit}
Let $\Matching_1$ and $\Matching_2$ be matchings in $\Reach{\Config}$ such that $\Swap{\Oaf}{\Matching_1}{\Matching_2}$ and $\Matching_1 (\Xagent^2_i) = \xobject^2_i$. Then $\Matching_2(\Xagent^2_i)$ belongs to $\{\xobject^1_i, \xobject^2_i, \vobject_{p^2_i}\}$.
\end{lemma}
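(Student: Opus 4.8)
The plan is to analyze the single swap $\Swap{\Oaf}{\Matching_1}{\Matching_2}$ by splitting on whether agent $\Xagent^2_i$ participates in it. If $\Xagent^2_i$ does not participate, then $\Matching_2(\Xagent^2_i) = \Matching_1(\Xagent^2_i) = \xobject^2_i$ and we are immediately done. So the substantive case is when $\Xagent^2_i$ does participate, swapping away from $\xobject^2_i$. Let $\Agent'$ denote the swap partner, i.e., the agent that receives $\xobject^2_i$; then $\Agent' = \Matching_2^{-1}(\xobject^2_i)$ and, crucially, $\Matching_2(\Xagent^2_i) = \Matching_1(\Agent')$ (the object $\Agent'$ relinquishes is exactly the one $\Xagent^2_i$ obtains). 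Since $\Matching_1$ belongs to $\Reach{\Config}$ and $\Swap{\Oaf}{\Matching_1}{\Matching_2}$, the matching $\Matching_2$ also belongs to $\Reach{\Config}$, so Observation~\ref{obs:x2agents} applied to $\Matching_2$ yields $\Agent' \in \{\Xagent^1_i, \Xagent^2_i, \Vagent_{p^2_i}\}$. Because $\Xagent^2_i$ is the agent leaving $\xobject^2_i$, we have $\Agent' \neq \Xagent^2_i$, leaving exactly two cases: $\Agent' = \Xagent^1_i$ and $\Agent' = \Vagent_{p^2_i}$.

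I would then record two supporting facts. First, since every swap is Pareto-improving, each agent's match along any swap sequence from $\Config_0$ only improves, so in the reachable matching $\Matching_1$ every agent holds an object that is weakly preferred to its initial endowment; this confines the possible values of $\Matching_1(\Agent')$ to the (short) list consisting of $\Agent'$'s endowment together with the objects $\Agent'$ explicitly ranks above it. Second, since the swap moves $\Xagent^2_i$ to $\Matching_1(\Agent')$, that object must lie in the set $S_i = \{\wobject_{n_i}, \vobject_{n_i}, \wobject_{p^2_i}, \vobject_{p^2_i}, \xobject^1_i\}$ of objects that $\Xagent^2_i$ prefers to $\xobject^2_i$.

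For the case $\Agent' = \Xagent^1_i$: the preference list of $\Xagent^1_i$ shows $\Matching_1(\Xagent^1_i) \in \{\xobject^1_i, \wobject_{p^1_i}, \vobject_{p^1_i}\}$, where $\xobject^2_i$ is excluded because it is held by $\Xagent^2_i$. Intersecting with $S_i$ and using that $p^1_i, p^2_i, n_i$ are distinct (so neither $\wobject_{p^1_i}$ nor $\vobject_{p^1_i}$ can equal any element of $S_i$) leaves only $\xobject^1_i$, whence $\Matching_2(\Xagent^2_i) = \xobject^1_i$. For the case $\Agent' = \Vagent_{p^2_i}$: the preference list of $\Vagent_{p^2_i}$ shows $\Matching_1(\Vagent_{p^2_i})$ is either $\vobject_{p^2_i}$ or some $\xobject^1_{i'}$ with $p^2_i \in \{p^1_{i'}, n_{i'}\}$ (again excluding $\xobject^2_i$). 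The only $\xobject^1$-object in $S_i$ is $\xobject^1_i$, and $\xobject^1_{i'} = \xobject^1_i$ would force $p^2_i \in \{p^1_i, n_i\}$, contradicting distinctness; so the only surviving possibility is $\vobject_{p^2_i} \in S_i$, giving $\Matching_2(\Xagent^2_i) = \vobject_{p^2_i}$. Combining the non-participating case with these two cases yields $\Matching_2(\Xagent^2_i) \in \{\xobject^1_i, \xobject^2_i, \vobject_{p^2_i}\}$.

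The main obstacle is the case analysis on the partner's holdings: the key is to pin down, in each case, that intersecting the partner's feasible holdings with $S_i$ leaves a single object, and this is precisely where the distinctness of $p^1_i, p^2_i, n_i$ does all the work. A secondary point I would state carefully is the ``never below endowment'' fact, since it is what restricts $\Matching_1(\Agent')$ to a short enumerable list; it follows routinely by induction along a swap sequence realizing $\Matching_1 \in \Reach{\Config}$.
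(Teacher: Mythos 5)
Your proof is correct and follows essentially the same route as the paper's: both pin down the swap partner as $\Xagent^1_i$ or $\Vagent_{p^2_i}$ via Observation~\ref{obs:x2agents} and then use that partner's individual rationality to exclude $\wobject_{n_i}$, $\wobject_{p^2_i}$, and $\vobject_{n_i}$, with you organizing the argument directly by cases on the partner where the paper argues by contradiction on the offending target object. (One tiny slip: $\Vagent_{p^2_i}$ may also hold some $\xobject^2_{i'}$ with $p^2_{i'}=p^2_i$, but such objects are likewise excluded by intersecting with $S_i$, so nothing breaks.)
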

\begin{proof}
By examining the preferences of $\Xagent^2_i$ we deduce that object $\Matching_2(\Xagent^2_i)$ belongs to 
\begin{equation*}
    \{\wobject_{n_i}, \wobject_{p^2_i}, \vobject_{n_i}, \vobject_{p^2_i}, \xobject^1_i, \xobject^2_i\}.
\end{equation*}
Assume for the sake of contradiction that $\Matching_2(\Xagent^2_i)$ belongs to $\{\wobject_{n_i}, \wobject_{p^2_i}, \vobject_{n_i}\}$. We consider three cases.

Case~1: $\Matching_2(\Xagent^2_i) = \wobject_{n_i}$. Thus $\Matching_1^{-1}(\wobject_{n_i}) = \Matching_2^{-1}(\xobject^2_i)$.
Since $\Matching_2^{-1}(\xobject^2_i) \neq \Xagent^2_i$, Observation~\ref{obs:x2agents} implies that $\Matching_2^{-1}(\xobject^2_i)$ belongs to $\{\Xagent^1_i, \Vagent_{p^2_i}\}$.
We consider two cases.

Case~1.1: $\Matching_2^{-1}(\xobject^2_i) = \Xagent^1_i$. Thus $\Matching_1^{-1}(\wobject_{n_i}) = \Xagent^1_i$.
By examining the preferences of $\Xagent^1_i$, we deduce that $\Matching_1(\Xagent^1_i) \neq \wobject_{n_i}$, a contradiction.

Case~1.2: $\Matching_2^{-1}(\xobject^2_i) = \Vagent_{p^2_i}$. A contradiction follows using a similar argument as in Case~1.1.

Case~2: $\Matching_2(\Xagent^2_i) = \wobject_{p^2_i}$. A contradiction follows using a similar argument as in Case~1.

Case~3: $\Matching_2(\Xagent^2_i) = \vobject_{n_i}$. A contradiction follows using a similar argument as in Case~1.

Thus $\Matching_2(\Xagent^2_i)$ belongs to $\{\xobject^1_i, \xobject^2_i, \vobject_{p^2_i}\}$.
\end{proof}

Throughout the remainder of Section~\ref{sec:ROClique_Correctness}, we say that an agent $a$ is satisfied in a matching $\Matching$ if $\Matching(a)$ is the most preferred object of $a$.
In Lemmas~\ref{lem:x2exclusive} through~\ref{lem:onlyifdir} below, let $\Matching_0, \dots, \Matching_t$ be a sequence of matching such that $\Swap{\Oaf}{\Matching_{s-1}}{\Matching_s}$ for all $s$ in $[t]$, and for each $i$ in $[n]$ let $P(i)$, (resp., $Q(i)$ and $R(i)$) denote the predicate that holds if there is an integer $s$ in $[t]$ such that $\Matching_s(\Xagent^1_i) = \vobject_{p^1_i}$ (resp., $\Matching_s(\Xagent^2_i) = \vobject_{p^2_i}$, $\Matching_s(\Xagent^2_i) = \vobject_{n_i}$). Lemmas~\ref{lem:x2exclusive} and~\ref{lem:x1exclusive} below present useful properties of these predicates. 

\begin{lemma}
\label{lem:x2exclusive}
Let $i$ be an element of $[n]$ such that $R(i)$ holds. Then $Q(i)$ does not hold.
\end{lemma}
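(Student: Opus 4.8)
\emph{Proof plan.} The engine of the argument is the elementary principle that, since every swap is Pareto-improving, the object held by any fixed agent advances monotonically up that agent's preference list as the sequence $\Matching_0,\dots,\Matching_t$ proceeds. In particular, $\Xagent^2_i$ starts at its endowment $\xobject^2_i$ (the least preferred object it can ever hold), so every object it subsequently holds is strictly preferred to $\xobject^2_i$; hence $\Xagent^2_i$ only ever visits a strictly increasing subsequence of $\{\xobject^2_i,\xobject^1_i,\vobject_{p^2_i},\wobject_{p^2_i},\vobject_{n_i},\wobject_{n_i}\}$, where the ordering of the last five depends on whether $n_i<p^2_i$ or $n_i>p^2_i$. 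The plan is to assume, for contradiction, that both $R(i)$ and $Q(i)$ hold, i.e.\ that $\Xagent^2_i$ holds $\vobject_{n_i}$ at some step and $\vobject_{p^2_i}$ at some (possibly different) step.

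The heart of the argument is to establish two \emph{entry} bottlenecks: (A) $\Xagent^2_i$ can acquire $\vobject_{p^2_i}$ only while holding $\xobject^2_i$, and (B) $\Xagent^2_i$ can acquire $\vobject_{n_i}$ only while holding $\xobject^1_i$. Granting (A) and (B), the contradiction is immediate and uniform across both cases: by (A) and $Q(i)$, and since $\xobject^2_i$ is held only before the first swap, the first swap performed by $\Xagent^2_i$ is $\xobject^2_i\to\vobject_{p^2_i}$; but by (B) and $R(i)$, at the moment $\Xagent^2_i$ acquires $\vobject_{n_i}$ it holds $\xobject^1_i$, and since $\xobject^1_i$ sits immediately above $\xobject^2_i$ in the list of $\Xagent^2_i$, the monotonicity principle forces $\Xagent^2_i$ to have reached $\xobject^1_i$ via a first swap $\xobject^2_i\to\xobject^1_i$. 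These two conclusions about the first swap are incompatible, finishing the proof.

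To prove (A) and (B) I would examine the swap in which $\Xagent^2_i$ acquires the object, splitting on the identity of the swap partner (the current holder). The clean case is that the holder is the original owner: Lemma~\ref{lem:vprefs} together with the distinctness of $p^1_i,p^2_i,n_i$ shows that $\Vagent_{p^2_i}$ releases $\vobject_{p^2_i}$ (among the objects $\Xagent^2_i$ can ever offer) only for $\xobject^2_i$, and $\Vagent_{n_i}$ releases $\vobject_{n_i}$ only for $\xobject^1_i$; here I crucially use that $\Xagent^2_i$ only ever holds the six objects of its own list, so it cannot offer an $x$-object of another variable. The main obstacle is ruling out every \emph{other} potential holder, which I would enumerate using Observation~\ref{obs:vwagents}: the candidate owners of $\vobject_j$ are $\{\Uagent_j,\Vagent_j,\Wagent_m\}\cup A_j$. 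The agent $\Uagent_j$ has $\vobject_j$ at the top of its list and so never releases it; the agent $\Wagent_m$ prefers objects of strictly smaller rank (its list is exactly the increasing-rank order), and a short rank comparison in the spirit of Observations~\ref{obs:wmrank}--\ref{obs:x2rank} shows that every object $\Xagent^2_i$ could offer has rank at least that of the object it seeks, so $\Wagent_m$ never trades it to $\Xagent^2_i$ either. The genuinely delicate case, and the one I expect to absorb most of the work, is when the holder is another variable's agent $\Xagent^1_{i'}$ or $\Xagent^2_{i'}$ in $A_j$: such a holder would have to prefer one of the $v$- or $w$-objects that $\Xagent^2_i$ offers to the object $\vobject_j$ it already holds, which forces two clauses of $i'$ to coincide with clauses of $i$. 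I would then push on the distinctness of the clause indices and the two possible orderings of an $\Xagent^2_{i'}$ preference list to show that in every such configuration the prospective trade would be non-improving for the holder (it would trade a more-preferred $v$-object for a less-preferred one) and hence can never occur. Once the $A_j$ case is dispatched, bottlenecks (A) and (B) follow, and the proof concludes as above.
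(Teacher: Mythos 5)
Your proposal is correct in outline but follows a genuinely different route from the paper. The paper's proof is a short rank argument: assuming both $R(i)$ and $Q(i)$, agent $\Xagent^2_i$ must at some point pass from $\vobject_{n_i}$ to $\vobject_{p^2_i}$ (or vice versa) while only ever holding objects whose ranks lie in the two clusters $\{3p^2_i-1,3p^2_i\}$ and $\{3n_i-1,3n_i\}$; since these clusters are separated by a gap of at least $2$, some single swap must decrease $\rank(\Matching(\Xagent^2_i))$ by at least $2$, forcing the swap partner's rank to jump by at least $2$ and contradicting Lemma~\ref{lem:no_right_jump}. This reuses machinery already built for the other lemmas and requires no enumeration of swap partners. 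Your argument instead establishes the two entry bottlenecks (A) and (B) and derives the contradiction from the fact that $\Xagent^2_i$'s first swap cannot simultaneously be $\xobject^2_i\to\vobject_{p^2_i}$ and $\xobject^2_i\to\xobject^1_i$. I checked the holder enumeration you sketch against Observation~\ref{obs:vwagents} and the preference lists, and it does go through (in every sub-case where another variable's $x$-agent holds the target $v$-object, the orientation of its preference list is incompatible with the orientation needed for $\Xagent^2_i$ to want the trade), so the approach is sound; but be aware that this enumeration, which you explicitly leave as a plan, is the bulk of the proof and is heavier than the paper's rank-gap argument. One payoff of your route is that bottleneck (B) is essentially the content the paper has to establish separately inside Lemma~\ref{lem:x1exclusive} (via Lemma~\ref{lem:x2limit}), so your decomposition would let you discharge part of that later lemma at the same time.
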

\begin{proof}
Let $s$ be an element of $[t]$ such that $\Matching_s(\Xagent^2_i) = \vobject_{n_i}$; such an $s$ exists since $R(i)$ holds.
Assume that $Q(i)$ holds.
Let $s'$ be an element of $[t]$ such that $\Matching_{s'}(\Xagent^2_i) = \vobject_{p^2_i}$; such an $s'$ exists as $Q(i)$ holds. We consider two cases.

Case~1: $s' > s$. Let $s''$ be an integer such that $s \le s'' \le s'$.
The preferences of agent $\Xagent^2_i$ imply that $p^2_i < n_i$ and $\Matching_{s''}(\Xagent^2_i)$ belongs to $\{\wobject_{p^2_i}, \vobject_{p^2_i}, \wobject_{n_i}, \vobject_{n_i}\}$.
Hence $\rank(\Matching_{s''}(\Xagent^2_i))$ belongs to $\{3p^2_i, 3p^2_i-1, 3n_i, 3n_i-1\}$.
Note that $\rank(\Matching_{s}(\Xagent^2_i)) = 3n_i-1$, and $\rank(\Matching_{s'}(\Xagent^2_i)) = 3p^2_i-1$.
Hence there is an $s'''$ such that $s \le s''' < s'$ and $\rank(\Matching_{s'''+1}(\Xagent^2_i)) \le \rank(\Matching_{s'''}(\Xagent^2_i)) - 2$. It follows that there is an agent $\Agent$ such that $\Matching_{s'''}(\Agent)$ belongs to $\Objects \setminus \xobjects$ and $\rank(\Matching_{s'''+1}(\Agent)) \ge \rank(\Matching_{s'''}(\Agent)) + 2$,
contradicting Lemma~\ref{lem:no_right_jump}.

Case~2: $s' < s$. We can derive a contradiction using a similar argument as in Case~1.
\end{proof}

\begin{lemma}
\label{lem:x1exclusive}
Let $i$ be an element of $[n]$ such that $R(i)$ holds. Then $P(i)$ does not hold.
\end{lemma}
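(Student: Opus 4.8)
The plan is to show that if $R(i)$ holds then agent $\Xagent^1_i$ is matched first to $\xobject^1_i$ and then to $\xobject^2_i$ throughout the swap sequence, so it is never matched to $\vobject_{p^1_i}$; this is exactly the negation of $P(i)$.

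First I would determine how $\Xagent^2_i$ can enter $\vobject_{n_i}$. Let $\Matching_0,\dots,\Matching_t$ be the sequence witnessing $R(i)$, let $s$ be the first index with $\Matching_s(\Xagent^2_i)=\vobject_{n_i}$, and let $o=\Matching_{s-1}(\Xagent^2_i)$ be the object $\Xagent^2_i$ relinquishes in that swap. Since each agent's match only improves and $\Xagent^2_i$ starts at its endowment $\xobject^2_i$, the object $o$ is one that $\Xagent^2_i$ can hold and is less preferred than $\vobject_{n_i}$, so $o\in\{\xobject^2_i,\xobject^1_i,\vobject_{p^2_i},\wobject_{p^2_i}\}$. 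The partner in this swap holds $\vobject_{n_i}$, of rank $3n_i-1$, and receives $o$; hence Lemma~\ref{lem:no_right_jump} forces $\rank(o)\le 3n_i$. The rank bound, together with the fact that $o$ is less preferred than $\vobject_{n_i}$, eliminates $\vobject_{p^2_i}$ and $\wobject_{p^2_i}$ in both orderings of $\Xagent^2_i$'s preference list (when $n_i<p^2_i$ their rank exceeds $3n_i$; when $n_i>p^2_i$ they are preferred to $\vobject_{n_i}$), leaving $o\in\{\xobject^1_i,\xobject^2_i\}$; and Lemma~\ref{lem:x2limit} rules out $o=\xobject^2_i$. Thus $\Matching_{s-1}(\Xagent^2_i)=\xobject^1_i$. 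I expect this step --- pinning down that $\Xagent^2_i$ must enter $\vobject_{n_i}$ directly from $\xobject^1_i$ --- to be the main obstacle, since it is where Lemma~\ref{lem:no_right_jump} and Lemma~\ref{lem:x2limit} must be combined and both orderings of $\Xagent^2_i$'s list checked.

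Next I would trace the object $\xobject^1_i$ backward in time. Since $\Xagent^2_i$ holds $\xobject^1_i$ at index $s-1$, starts at $\xobject^2_i$, and $\xobject^1_i$ is the unique object immediately above $\xobject^2_i$ in its list, there is an earliest index $s_0\le s-1$ at which $\Xagent^2_i$ moves from $\xobject^2_i$ to $\xobject^1_i$. Let $\Agent$ be its partner in that swap; $\Agent$ gives up $\xobject^1_i$ and receives $\xobject^2_i$, so $\Agent$ prefers $\xobject^2_i$ to $\xobject^1_i$. By Observation~\ref{obs:x1agents}, $\Agent\in\{\Xagent^1_i,\Vagent_{p^1_i},\Vagent_{n_i}\}$ (it is not $\Xagent^2_i$). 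Since $p^1_i$, $p^2_i$, and $n_i$ are distinct, each of $\Vagent_{p^1_i}$ and $\Vagent_{n_i}$ has $\xobject^1_i$ but not $\xobject^2_i$ in its preferred set, so each prefers $\xobject^1_i$ to $\xobject^2_i$; hence $\Agent=\Xagent^1_i$, and at index $s_0$ agent $\Xagent^1_i$ moves from $\xobject^1_i$ to $\xobject^2_i$.

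Finally I would conclude. Object $\xobject^2_i$ is the most preferred object of $\Xagent^1_i$, so once $\Xagent^1_i$ acquires it at index $s_0$ it keeps it for the rest of the sequence. Just before $s_0$ agent $\Xagent^1_i$ holds $\xobject^1_i$, its least preferred object, so by monotonicity of its match it holds $\xobject^1_i$ at every earlier index as well. Therefore $\Xagent^1_i$ is matched only to $\xobject^1_i$ and then to $\xobject^2_i$, and never to $\vobject_{p^1_i}$, so $P(i)$ does not hold. This argument is self-contained and does not rely on Lemma~\ref{lem:x2exclusive}.
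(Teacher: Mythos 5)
Your proof is correct, and the second half (identifying $\Xagent^1_i$ as the agent that hands $\xobject^1_i$ to $\Xagent^2_i$ via Observation~\ref{obs:x1agents} and the preferences of the $\Vagent_j$'s, then concluding that $\Xagent^1_i$ only ever holds $\xobject^1_i$ and $\xobject^2_i$) matches the paper's argument essentially step for step. Where you diverge is in the first half. The paper establishes only that $\Xagent^2_i$ holds $\xobject^1_i$ at \emph{some} time before reaching $\vobject_{n_i}$, and it does so by contradiction: if $\Xagent^2_i$ never held $\xobject^1_i$, then by Lemma~\ref{lem:x2limit} its first move off $\xobject^2_i$ must be to $\vobject_{p^2_i}$, so $Q(i)$ would hold, contradicting Lemma~\ref{lem:x2exclusive}. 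You instead prove the stronger statement that $\Xagent^2_i$ enters $\vobject_{n_i}$ \emph{directly from} $\xobject^1_i$, by bounding the rank of the object the swap partner receives via Lemma~\ref{lem:no_right_jump} and then checking both orderings of $\Xagent^2_i$'s preference list. This buys you independence from Lemma~\ref{lem:x2exclusive} (so the two exclusivity lemmas could be proved in either order, or this one could stand alone), at the cost of a slightly more delicate case analysis; the paper's route is shorter precisely because it leans on the already-established Lemma~\ref{lem:x2exclusive}. One small point worth making explicit in your write-up: the step asserting that $\Xagent^2_i$'s first departure from $\xobject^2_i$ must land exactly on $\xobject^1_i$ (rather than some better object) deserves a sentence noting that matches only improve, so overshooting $\xobject^1_i$ would make it unreachable later; you use this implicitly and it is the same observation the paper invokes when it says the preferences of $\Xagent^2_i$ imply $\Matching_{s''-1}(\Xagent^2_i)=\xobject^2_i$.
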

\begin{proof}
Let $s$ be an element of $[t]$ such that $\Matching_s(\Xagent^2_i) = \vobject_{n_i}$; such an $s$ exists since $R(i)$ holds.
We begin by proving the following claim: There is an integer $s''$ in $[s-1]$ such that $\Matching_{s''}(\Xagent^2_i) = \xobject^1_i$. 
Assume for the sake of contradiction that there is no $s''$ in $[s-1]$ such that $\Matching_{s''}(\Xagent^2_i) = \xobject^1_i$. 
Let $s''$ be the least index in $[s]$ such that $\Matching_{s''}(\Xagent^2_i) \neq \xobject^2_i$.
Since $\Matching_{s''}(\Xagent^2_i)$ does not belong to $\{\xobject^1_i, \xobject^2_i\}$, Lemma~\ref{lem:x2limit} implies that $\Matching_{s''}(\Xagent^2_i) = \vobject_{p^2_i}$. Thus $Q(i)$ holds, contradicting Lemma~\ref{lem:x2exclusive}.
This completes the proof of the claim.

Having established the claim, we let $s''$ denote the least integer in $[s-1]$ such that $\Matching_{s''}(\Xagent^2_i) = \xobject^1_i$.
The preferences of agent $\Xagent^2_i$ imply that $\Matching_{s''-1}(\Xagent^2_i) = \xobject^2_i$. Let $a$ be the agent $\Matching_{s''-1}^{-1}(\xobject^1_i)$. Since $a \neq \Xagent^2_i$, Observation \ref{obs:x1agents} implies that $a$ belongs to $\{\Xagent^1_i, \Vagent_{p^1_i}$, $\Vagent_{n_i}\}$. We consider two cases.

Case 1: $a \in \{\Vagent_{p^1_i}, \Vagent_{n_i}\}$. 
Lemma~\ref{lem:vprefs} implies that $a$ does not prefer $\xobject^2_i$ to their initially endowment. Hence  $\Matching_{s''}^{-1}(\xobject^2_i) \neq a$, a contradiction.

Case 2: $a = \Xagent^1_i$. Since $\Matching_0(\Xagent^1_i) = \Matching_{s''-1}(\Xagent^1_i) = x^1_i$, we deduce that $\Matching_{s}(\Xagent^1_i) = x^1_i \neq \vobject_{p^1_i}$ for all $s'$ such that $0 \le s' < s''$. Moreover, $\Xagent^1_i$ is satisfied in $\Matching_{s''}$ and hence $\Matching_{s'}(\Xagent^1_i) = x^2_i \neq \vobject_{p^1_i}$ for all $s'$ such that $s'' \le s' \le t$. Hence $\Matching_{s'}(\Xagent^1_i) \neq \vobject_{p^1_i}$ for all $s'$ such that $0 \le s' \le t$. Thus $P(i)$ does not hold.
\end{proof}

\begin{lemma}
\label{lem:vm_all_rank}
Let $j$ belong to $[m]$ and assume that $\Matching_t(\Wagent_m) = u_1$. Then there is an $s$ in $[t-1]$ such that $\Matching_s(\Wagent_m) = \wobject_j$ and $\Matching_{s+1}(\Wagent_m) = \vobject_j$.
\end{lemma}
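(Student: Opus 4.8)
The plan is to reduce the whole statement to tracking a single integer, the rank $r_s=\rank(\Matching_s(\Wagent_m))$ of the object held by $\Wagent_m$, as $s$ ranges over $\{0,\ldots,t\}$, and to argue that this quantity descends from $3m$ to $1$ in steps of size at most one, so it must pass through $3j$ and then $3j-1$ on consecutive matchings. First I would pin down the two endpoints. Since $\Matching_0$ is the initial matching, each agent holds its endowment, so $\Matching_0(\Wagent_m)=\wobject_m$ and $r_0=\rank(\wobject_m)=3m$; by hypothesis $\Matching_t(\Wagent_m)=\uobject_1$, so $r_t=\rank(\uobject_1)=1$. Next I would combine Observation~\ref{obs:wmrank}, which gives $r_s\le r_{s-1}$, with Lemma~\ref{lem:wmrank}, which gives $r_s\ge r_{s-1}-1$, to conclude that at every step the rank of $\Wagent_m$'s object either stays fixed or drops by exactly one.

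With this unit-decrement property, the heart of the argument is a discrete intermediate-value step. Fix $j\in[m]$. Since $r_0=3m\ge 3j$ while $r_t=1<3j$ (using $j\ge 1$), there is a largest index $s$ with $r_s\ge 3j$, and $r_t<3j$ forces $s<t$, so $s+1\le t$. By maximality of $s$ we have $r_{s+1}<3j$, i.e.\ $r_{s+1}\le 3j-1$; combining with the unit-decrement bound $r_{s+1}\ge r_s-1\ge 3j-1$ yields $r_{s+1}=3j-1$, and rearranging gives $r_s\le r_{s+1}+1=3j$, whence $r_s=3j$. Because each rank in $\{1,\ldots,3m\}$ is attained by exactly one object — $\wobject_j$ has rank $3j$ and $\vobject_j$ has rank $3j-1$ — I can read off $\Matching_s(\Wagent_m)=\wobject_j$ and $\Matching_{s+1}(\Wagent_m)=\vobject_j$, which is the desired conclusion once I verify $s\in[t-1]$.

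The bound $s\le t-1$ is immediate from $s+1\le t$. The delicate point, and the step I expect to require the most care, is showing $s\ge 1$, i.e.\ ruling out $s=0$. When $j<m$ this is automatic, since $r_0=3m>3j=r_s$ forces $s\ge 1$. When $j=m$ we have $r_s=3m=r_0$, and I must show the rank does not already drop on the first swap, i.e.\ $r_1=3m$. The unit-decrement property leaves only the alternative $r_1=3m-1$, which would mean $\Matching_1(\Wagent_m)=\vobject_m$; such a swap is across the pair $\{\wobject_m,\vobject_m\}$ with partner $\Matching_0^{-1}(\vobject_m)=\Vagent_m$, and being Pareto-improving it would require $\Vagent_m$ to prefer $\wobject_m$ to $\vobject_m$. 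But the preference list of $\Vagent_m$ ranks only certain objects of $\xobjects$ above its endowment $\vobject_m$, so this swap is not valid, a contradiction. Hence $r_1=3m$, giving $s\ge 1$, and therefore $s\in[t-1]$, completing the argument.
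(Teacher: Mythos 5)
Your proof is correct and follows essentially the same route as the paper's: track $\rank(\Matching_s(\Wagent_m))$, note it descends from $3m$ to $1$ in steps of at most one by Observation~\ref{obs:wmrank} and Lemma~\ref{lem:wmrank}, and apply a discrete intermediate-value argument at rank $3j$. Your extra check that $s\geq 1$ (ruling out a first-step drop to $\vobject_m$ when $j=m$ via the preferences of $\Vagent_m$) is a small point the paper's terser proof leaves implicit, and it is handled correctly.
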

\begin{proof}
The only object with rank $3j$ (resp., $3j-1$) is $\wobject_j$ (resp., $\vobject_j$).
Since $\rank(\InitMatching(\Wagent_m)) = 3m$ and $\rank(\Matching_t(\Wagent_m)) = 1$, Lemma~\ref{lem:wmrank} implies that for every rank $k$ in $[3m-1]$, there is an integer $s$ in $[t-1]$ such that $\rank(\Matching_s(\Wagent_m)) = k+1$ and $\rank(\Matching_{s+1}(\Wagent_m)) = k$. 
The lemma follows by choosing $k$ to be $3j-1$.
\end{proof}

\begin{lemma}
\label{lem:onlyifdir}
Assume that $\Matching_t(\Wagent_m)=\uobject_1$. Then the 2P1N-SAT
instance $\PropositionalFormula$ is satisfiable.
\end{lemma}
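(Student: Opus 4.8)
The plan is to extract an explicit satisfying assignment directly from a swap sequence $\Matching_0,\dots,\Matching_t$ witnessing $\Matching_t(\Wagent_m)=\uobject_1$. I would define $\sigma(x_i)=0$ precisely when $R(i)$ holds, and $\sigma(x_i)=1$ otherwise. The first point to record is that this assignment is consistent with the intended ``true/false'' reading of the three predicates: by Lemmas~\ref{lem:x2exclusive} and~\ref{lem:x1exclusive}, $R(i)$ rules out both $Q(i)$ and $P(i)$, so whenever $P(i)$ or $Q(i)$ holds we automatically have $\sigma(x_i)=1$, while $R(i)$ forces $\sigma(x_i)=0$. These two exclusivity lemmas are exactly what guarantee that the three ``actions'' available to variable $x_i$ cannot simultaneously demand that $x_i$ be both true and false.

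Next I would fix an arbitrary clause $C_j$ and show $\sigma$ satisfies it. The engine here is Lemma~\ref{lem:vm_all_rank}: since $\Matching_t(\Wagent_m)=\uobject_1$, there is a step $s\in[t-1]$ at which agent $\Wagent_m$ moves from $\wobject_j$ to $\vobject_j$. Let $\Agent=\Matching_s^{-1}(\vobject_j)$ be the agent with which $\Wagent_m$ swaps at this step; since the clique contains the edge $(\vobject_j,\wobject_j)$, after the swap $\Agent$ holds $\wobject_j$, and validity of the swap forces $\Agent$ to prefer $\wobject_j$ to $\vobject_j$. I would then pin down $\Agent$: by Observation~\ref{obs:vwagents}, $\Agent\in\{\Uagent_j,\Vagent_j,\Wagent_m\}\cup A_j$, and $\Agent\neq\Wagent_m$ because $\Wagent_m$ holds $\wobject_j$ at step $s$. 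Inspecting preferences, $\Uagent_j$ ranks $\vobject_j$ above its endowment (hence above $\wobject_j$), and $\Vagent_j$ ranks only certain $x$-objects above $\vobject_j$ (and $\wobject_j$ is not among them), so neither prefers $\wobject_j$ to $\vobject_j$. Therefore $\Agent\in A_j$.

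It then remains to read off a satisfied literal from the membership $\Agent\in A_j$. If $\Agent=\Xagent_i^1$ with $j=p_i^1$, then $\Matching_s(\Xagent_i^1)=\vobject_{p_i^1}$, so $P(i)$ holds, giving $\sigma(x_i)=1$, and $C_j=C_{p_i^1}$ contains the positive literal $x_i$. If $\Agent=\Xagent_i^2$ with $j=p_i^2$, then $\Matching_s(\Xagent_i^2)=\vobject_{p_i^2}$, so $Q(i)$ holds, again giving $\sigma(x_i)=1$, and $C_j=C_{p_i^2}$ contains $x_i$. If $\Agent=\Xagent_i^2$ with $j=n_i$, then $\Matching_s(\Xagent_i^2)=\vobject_{n_i}$, so $R(i)$ holds, giving $\sigma(x_i)=0$, and $C_j=C_{n_i}$ contains the negative literal $\neg x_i$. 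In every case the designated literal evaluates to true under $\sigma$, so $C_j$ is satisfied; as $C_j$ was arbitrary, $\sigma$ satisfies $\PropositionalFormula$.

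I expect the crux to be the middle step: establishing that the swap partner $\Agent$ at the critical $\wobject_j\to\vobject_j$ transition must be one of the variable agents in $A_j$, and cannot be $\Uagent_j$ or $\Vagent_j$. This is precisely where the hand-tuned preference lists are exploited, and it is the link that converts the ``physical'' descent of $\Wagent_m$ through the $\vobject$-objects into logical information about the formula. Given that step, together with the exclusivity Lemmas~\ref{lem:x2exclusive} and~\ref{lem:x1exclusive} and the rank-tracking Lemma~\ref{lem:vm_all_rank}, the remainder is routine case analysis.
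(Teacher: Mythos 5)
Your proof is correct and follows essentially the same route as the paper's: extract the critical $\wobject_j\to\vobject_j$ step of $\Wagent_m$ via Lemma~\ref{lem:vm_all_rank}, identify the swap partner as a member of $A_j$ (the paper does this via the $\wobject_j$ clause of Observation~\ref{obs:vwagents}, you via direct preference inspection of $\Uagent_j$ and $\Vagent_j$ --- equivalent), and close the three cases using Lemmas~\ref{lem:x2exclusive} and~\ref{lem:x1exclusive}. Your definition of $\sigma$ (false iff $R(i)$) is the complement of the paper's (true iff $P(i)\vee Q(i)$) only on variables where none of the predicates hold, which is immaterial.
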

\begin{proof}
We construct an assignment $\sigma : \{x_1, \dots, x_n\} \to
\{0,1\}$ for $\PropositionalFormula$ as follows:
for any $i$ in $[n]$, we set $\sigma(x_i)$ to $1$ if $P(i) \vee Q(i)$ holds, and to $0$ otherwise.

We now show that $\sigma$ satisfies $\PropositionalFormula$. 
Let $j$ belong to $[m]$.
Let $s$ be an element of $[t-1]$ such that $\Matching_s(\Wagent_m) = \wobject_j$ and $\Matching_{s+1}(\Wagent_m) = \vobject_j$; such an $s$ exists by Lemma~\ref{lem:vm_all_rank}.
Thus there is an agent $a$ such that $\Matching_s(a) = \vobject_j$ and $\Matching_{s+1}(a) = \wobject_j$.
Since $a \neq \Wagent_m$, Observation~\ref{obs:vwagents} implies that exactly one of the following three statements holds: (1) $j = p^1_i$ and $\Matching_s(\Xagent^1_i) = \vobject_j$; (2) $j = p^2_i$ and $\Matching_s(\Xagent^2_i) = \vobject_j$; (3) $j = n_i$ and $\Matching_s(\Xagent^2_i) = \vobject_j$. We consider two cases.

Case~1: (1) or (2) holds. Then $P(i)$ or $Q(i)$ holds, respectively. Hence $\sigma(x_i) = 1$. By construction, the variable $x_i$ appears as the positive literal $x_i$ in clause $C_j$. Thus, $C_j$ is satisfied.

Case~2: (3) holds. Then $R(i)$ holds. Lemmas~\ref{lem:x2exclusive} and~\ref{lem:x1exclusive} imply that $P(i)$ and $Q(i)$ do not hold. Hence $\sigma(x_i) = 0$. By construction, the variable $x_i$ appears as the negative literal $\neg x_i$ in clause $C_j$. Thus, $C_j$ is satisfied.

Since $\sigma$ satisfies each clause $C_j$ in $\PropositionalFormula$, $\sigma$ satisfies $\PropositionalFormula$.
\end{proof}

\begin{theorem}
\label{thm:obj-reach-clique}
Reachable object on cliques is NP-complete.
\end{theorem}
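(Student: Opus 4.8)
The plan is to establish the two defining properties of NP-completeness separately: membership in NP and NP-hardness. For membership, I would exhibit a polynomial-size certificate. The natural certificate is a sequence of swaps witnessing $\Matching(\Wagent_m)=\uobject_1$ for some matching $\Matching$ in $\Reach{\Config}$. To see that such a sequence can be taken to be polynomially long, observe that every swap is Pareto-improving, so whenever an agent participates in a swap its match strictly advances along its preference list. Since each preference list has length $N=3m+2n$, each agent takes part in at most $N-1$ swaps, and hence the total number of swaps is $O(N^2)$. A candidate sequence of this length can be checked in polynomial time by verifying that each step is an applicable (Pareto-improving) swap and that the final matching assigns $\uobject_1$ to $\Wagent_m$, so the problem lies in NP.

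For NP-hardness, I would reduce from 2P1N-SAT, which is NP-complete. The reduction is the one given in Section~\ref{subsec:NPC_RO_clique_reduction}: from a formula $\PropositionalFormula$ with variables $x_1,\dots,x_n$ and clauses $C_1,\dots,C_m$ we build, in polynomial time, the clique instance $\Instance$ with agents $\Xagents$, $\Uagents$, $\Vagents$, $\Wagents$ and the stated preferences, and we ask whether some matching in $\Reach{\Config}$ matches $\Wagent_m$ to $\uobject_1$. It is immediate that the construction runs in polynomial time, since $|\Agents|=|\Objects|=N=3m+2n$ and each preference list is written down explicitly. It then remains to verify that $\PropositionalFormula$ is satisfiable if and only if $\Instance$ is a positive instance of reachable object on cliques.

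Both directions of this biconditional are exactly the content of the lemmas already established. The forward direction --- if $\PropositionalFormula$ is satisfiable then $\Wagent_m$ can reach $\uobject_1$ --- is Lemma~\ref{lem:ifdir}, whose proof converts a satisfying assignment into an explicit two-phase swap sequence. The backward direction --- if $\Wagent_m$ can reach $\uobject_1$ then $\PropositionalFormula$ is satisfiable --- is Lemma~\ref{lem:onlyifdir}, which reads off a truth assignment from an arbitrary swap sequence ending in the target matching. Combining the two lemmas gives the desired equivalence, and together with NP membership this proves that reachable object on cliques is NP-complete.

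I expect the substantive difficulty to lie entirely in the backward direction (Lemma~\ref{lem:onlyifdir}), which the preceding development is designed to support; assembling the theorem itself is routine once these lemmas are in hand. The crux is that moving $\Wagent_m$ from $\wobject_m$ (rank $3m$) all the way down to $\uobject_1$ (rank $1$) forces, for each clause $C_j$, some agent to vacate $\vobject_j$ in favor of $\wobject_j$. The rank-monotonicity Observations~\ref{obs:wmrank} through~\ref{obs:x2rank} together with Lemma~\ref{lem:wmrank} guarantee that $\Wagent_m$ sweeps through every intermediate rank, so this transfer happens for every $j$ and therefore every clause is touched; meanwhile the exclusivity Lemmas~\ref{lem:x2exclusive} and~\ref{lem:x1exclusive} guarantee that the extracted assignment is consistent, i.e., no variable is simultaneously forced true (via $P(i)$ or $Q(i)$) and false (via $R(i)$). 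These are the steps where real work is needed; the proof of the theorem is simply their combination.
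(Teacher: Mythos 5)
Your proposal is correct and follows essentially the same route as the paper: NP membership via a polynomially bounded swap sequence (the paper merely asserts this is easy, while you supply the standard Pareto-improvement counting argument), and NP-hardness by invoking the polynomial-time reduction of Section~\ref{subsec:NPC_RO_clique_reduction} together with Lemmas~\ref{lem:ifdir} and~\ref{lem:onlyifdir}. Your closing assessment of where the real work lies also matches the structure of the paper's development.
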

\begin{proof}
In Section \ref{subsec:NPC_RO_clique_reduction}, we described a polynomial-time reduction from 2P1N-SAT instance $\PropositionalFormula$ to reachable object on cliques instance $I$. Thus the theorem follows from 
Lemmas~\ref{lem:ifdir} and~\ref{lem:onlyifdir}.
\end{proof}

	\section{Other NP-Completeness and NP-Hardness Results}
\label{sec:hardness}

We prove the remaining NP-completeness and NP-hardness results stated in
Table~\ref{tbl:dozen}. 
Specifically, we show that the reachable object problem on generalized stars is NP-complete, 
reachable matching problem on cliques is NP-complete,
and Pareto-efficient matching problem on cliques is NP-hard. 
%
%
%
%
%
These results are proved by
adapting the corresponding proofs of Gourv{\`e}s et
al.~\cite{gourves2017object} and M{\"u}ller and Bentert~\cite{muller2020reachable} 
for the object-moving model. The most
significant changes are associated with the proof of
the first result. 

%

\subsection{NP-Completeness for Reachable Object on Generalized Stars}
\label{sec:NPC_RO_generalized_star}
\begin{restatable}{theorem}{NpcRO}\label{thm:NPC_RO_generalized_star}
	Reachable object on generalized stars is NP-complete. 
\end{restatable}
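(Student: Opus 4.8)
The plan is to establish membership in NP and then NP-hardness by a reduction from 2P1N-SAT, mirroring the overall structure of the clique reduction in Section~\ref{sec:NPC_RO_clique} but adapting the gadgets to a generalized-star network. For membership in NP, I would use a monotone-potential argument. For any perfect matching $\Matching$, let $\Phi(\Matching)=\sum_{\Agent\in\Agents}\rank_\Agent(\Matching(\Agent))$, where $\rank_\Agent(\Object)$ is larger when $\Object$ is higher on $\Prefs_\Agent$. Every applicable swap is Pareto-improving, so both participating agents strictly improve, whence $\Phi$ strictly increases with each swap. Since $\Phi$ takes values in a range of size $O(n^2)$ and increases by at least $2$ per swap, any swap sequence has length $O(n^2)$. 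Hence, whenever the target object is reachable it is reachable by a sequence of $O(n^2)$ swaps, and such a sequence is a polynomial-size certificate checkable in polynomial time.

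For hardness I would reduce from 2P1N-SAT, adapting the generalized-star reduction of Gourv{\`e}s et al. Given a 2P1N-SAT instance $\PropositionalFormula$ with variables $x_1,\dots,x_n$ and clauses $C_1,\dots,C_m$, I would construct an OAF whose object network is a generalized star with center $\Center$: each variable and each clause is represented by a gadget placed along one of the branches, and a distinguished agent $\Agent^*$ is required to reach a distinguished object $\Object^*$ that can only be attained after $\Agent^*$ traverses the center. Because $\Center$ is the unique high-degree vertex, it acts as a synchronization bottleneck: any agent migrating from one branch to another must pass through $\Center$, and in the agent-moving model this passage is gated by the Pareto-improving condition at the center object. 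Truth values are encoded by the direction in which the variable-gadget agents move along their branches, and a clause gadget can be ``cleared'' (allowing $\Agent^*$ to make progress) exactly when one of its three literal agents has committed consistently with a satisfying assignment.

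For correctness, the ``if'' direction I would prove constructively: from a satisfying assignment I would exhibit an explicit schedule of swaps, first setting each variable gadget according to its truth value and then clearing the clauses one at a time, verifying that each swap is Pareto-improving and that $\Agent^*$ ends matched to $\Object^*$. The ``only if'' direction carries the real work. Here I would establish structural lemmas constraining how agents can move on a generalized star, analogous to Observations~\ref{obs:wmrank}--\ref{obs:x2rank} and Lemma~\ref{lem:no_right_jump} of the clique proof, and to the non-overtaking arguments underlying the ``nice pair'' and $\Dests{\Config}{\Partial}{i}$ machinery of Section~\ref{sec:gsPeAlg}: since each branch is a path, an inward-moving agent cannot overtake another inward-moving agent and the rank an agent can gain per swap is bounded, so each agent's trajectory along its branch and through the center is essentially monotone. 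Using these constraints I would argue that $\Agent^*$ reaching $\Object^*$ forces, for every clause, at least one literal agent to have moved in the direction of a true literal, and that these commitments are globally consistent (no variable is forced both true and false), yielding a satisfying assignment.

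The hardest part, and, as the authors note, the part requiring the most significant changes relative to the object-moving reduction, is the ``only if'' direction. In the agent-moving model the adjacency constraint is imposed on objects rather than agents, so an agent crossing the center may in principle enter any branch; the central difficulty is to design the gadgets and prove the monotonicity and non-overtaking invariants tightly enough that no spurious reconfiguration, made available precisely by this dual locality constraint, can let $\Agent^*$ reach $\Object^*$ without certifying a satisfying assignment. Ruling out those spurious trajectories is the crux of the proof.
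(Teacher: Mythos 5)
Your NP-membership argument is fine and is essentially what the paper leaves implicit: every swap strictly improves both participants, so swap sequences have length $O(n^2)$ and a witnessing sequence is a polynomial certificate. The hardness half, however, is a plan rather than a proof: you correctly identify the strategy (reduce from 2P1N-SAT, adapt the Gourv\`es et al.\ generalized-star construction, encode truth values in the movement of literal agents, and prove non-overtaking/monotonicity invariants for the ``only if'' direction), but you never specify the gadgets or prove the two claims on which the whole reduction rests. You yourself flag the ``only if'' direction as ``the crux''; that crux is exactly what is missing.

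Concretely, two pieces are absent. First, you need a mechanism that \emph{forces}, for every clause $C_i$, some literal agent of $C_i$ to reach the center. The paper achieves this with a ``staircase'' branch $c''_0,\ldots,c''_m$ whose top is the center: the distinguished agent $C''_0$ must climb all $m$ steps, and an induction shows that the only agent that can descend past it at step $k$ is the clause agent $C'_k$, which in turn can be dislodged from its leaf $c'_k$ only by swapping with a literal agent of clause $k$ sitting at the center. Your description (``a clause gadget can be cleared exactly when one of its three literal agents has committed'') asserts this property but gives no construction that enforces it. Second, you need the consistency lemma: no variable can certify both a positive and a negative occurrence. In the paper this follows from the internal structure of the variable gadget (the path $n_j$--$d_j$--$p'_j$--$p_j$ hanging off the center plus the auxiliary leaf $d'_j$): if $N_j$ ever leaves the gadget, then $P'_j$ is forced onto the leaf $n_j$ and $P_j$ onto $d_j$, where both are permanently stuck, so neither can reach the center. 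Your appeal to generic ``non-overtaking'' and ``bounded rank gain per swap'' lemmas cannot substitute for this, because the consistency property is not a consequence of path monotonicity alone --- it depends on which specific agents find which specific objects acceptable. Without a concrete gadget and a proof of this exclusion property, the reduction's correctness in the ``only if'' direction is unestablished.
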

\begin{proof}
It is easy to see that reachable object on generalized stars belongs to NP.
We use a reduction from the problem 2P1N-SAT to establish that the
reachable object on generalized stars is NP-complete. In an instance of 2P1N-SAT, we are
given a propositional formula $f$ that is the conjunction of $m$
clauses $C_1,\ldots,C_m$. Each clause $C_i$ is the disjunction of a
number of literals, where each literal is either a variable or the
negation of a variable. The set of variables is $x_1,\ldots,x_n$.  For
each variable $x_j$, the positive literal $x_j$ appears in exactly
two clauses, and the negative literal $\neg x_j$ appears in
exactly one clause. We are asked to determine whether the formula $f$
is satisfiable.

Given such a formula $f$, we construct a corresponding reachable
object on generalized stars instance $I=(\Oaf,\Matching)$ where
$\Oaf=(\Agents,\Objects,\Prefs,\Edges)$ as follows.  We begin by
describing the set of objects $\Objects$.   For each clause index $i$,
there are two objects $c_i'$ and $c_i''$ in $\Objects$.  There is also
a special object $c_0''$ in $\Objects$.  For each variable index $j$,
there are five objects in $\Objects$: ``dummy'' objects $d_j$ and
$d'_j$; an object $n_j$ corresponding to the lone occurrence of the
negative literal $\neg x_j$ in $f$; an object $p_j$ corresponding to
the first occurrence of the positive literal $x_j$ in $f$ (i.e., the
occurrence associated with the lower-indexed clause); an object $p_j'$
corresponding to the other occurrence of the positive literal
$x_j$. Thus there are a total of $2m+5n+1$ objects in $\Objects$.

Observe that our identifier for any given object in $\Objects$
includes a single lowercase letter. By changing this letter to upper
case, we obtain our identifier for the agent initially matched to that
object.  So, for example, agent $C''_0$ is initially matched to object
$c''_0$.

%

We now describe the edge set $\Edges$. Object $c_m''$ is the center object.
We begin by describing $m+n+1$
vertex-disjoint subgraphs of $(\Objects,\Edges)$: a ``clause gadget''
for each clause index $i$, a ``variable gadget'' for each variable
index $j$, and an additional gadget that we call the
``staircase''. Clause gadget $i$ consists of the lone object
$c_i'$.  Variable gadget $j$ is a path of length $3$ containing the
	objects $n_j$, $d_j$, $p_j'$ and $p_j$,  in that order, and the lone object $d_j'$. The
staircase is a path of length $m$ containing the sequence of objects
$c_0'',c_1'',\ldots,c_m''$. We say that object $c_0''$ is at the
bottom of the staircase, and that object $c_m''$ is at the top of the
staircase. The following additional $m+2n$ edges are used to connect
these $m+n+1$ subgraphs into a generalized star: there is an edge from object
$c_i'$ to object $c_m''$ for each clause index $i$; there is two edges
	from object $p_j,d_j'$ to object $c_m''$, respectively,  for each variable index $j$.

With regard to the agent preferences, all that matters is the set
$\Improving{\Instance}{\Agent}$ associated with each agent $\Agent$ in
$\Agents$. It is straightforward to verify that we can choose agent
preferences so that the following conditions are satisfied. First, for
each clause index $i$, we have
$\Improving{\Instance}{C_i'}=\{c'_i\}\cup\{c''_k\mid i-1\leq k\leq
m\}$. Second, for each integer $i$ such that $0\leq i\leq m$, we have
\[
\Improving{\Instance}{C_i''}=\{c''_k\mid i\leq k\leq m\}
\cup\{p_j\mid  j\in[n] \} \cup\{p'_j\mid j\in[n] \}.
\]
Finally, for each variable index $j$, the following properties hold:
$\Improving{\Instance}{D_j}$ is equal to $\{d_j,p'_j,p_j,d'_j, c_m''\}$;
$\Improving{\Instance}{D'_j}$ is equal to $\{d'_j,p_j,p'_j, c_m''\}$;
$\Improving{\Instance}{N_j}$ is equal to $\{n_j,d_j,p'_j,p_j,c_m'',c_i'\}$ where $i$
is the index of the clause that contains the negative literal $x_j$;
 $\Improving{\Instance}{P_j}$ is equal to $\{p_j,p'_j,d_j,c''_m,c'_i\}$ where $i$ is
the index of the clause that contains the first occurrence of the
positive literal $x_j$;
 $\Improving{\Instance}{P'_j}$ is equal to $\{p'_j,d_j,n_j,p_j,c''_m,c'_i\}$ where
$i$ is the index of the clause that contains the second occurrence of
the positive literal $x_j$.

We claim that agent $C_0''$ (which starts out at the bottom of the
staircase) can reach object $c_m''$ (at the top of the staircase) if
and only if $f$ is satisfiable.

We begin by addressing the ``if'' direction of the claim. Assume that
$f$ is satisfiable. Fix a satisfying assignment for $f$, and for each
clause index $i$, let $\ell_i$ denote a literal in $C_i$ that is set
to true by this satisfying assignment. Let $L_i$ denote the agent that
corresponds to literal $\ell_i$, as follows: if $\ell_i$ is the
negative literal $\neg x_j$, then $L_i$ is equal to $N_j$; if $\ell_i$
is the first occurrence of the positive literal $x_j$, then $\ell_i$
is equal to $P_j$; if $\ell_i$ is the second occurrence of the
positive literal $x_j$, then $\ell_i$ is equal to $P'_j$.  Note that
the $L_i$'s are all distinct.

For any variable $x_j$, let $\LLL_j=\{L_i\mid 1\leq i\leq
m\}\cap\{N_j,P_j,P'_j\}$. Note that if $|\LLL_j|\geq 2$ then
$\LLL_j=\{P_j,P_j'\}$.

For each clause index $i$, $1\leq i\leq m$, let $f(i)$ denote the
unique variable index $j$ such that $L_i\in\{N_j,P_j,P'_j\}$.

We now describe how to perform a sequence of swaps that result in
agent $C_0''$ being matched to object $c_m''$. We perform these swaps
in $m$ phases. We will define each phase so that a certain invariant
holds. Specifically, we will ensure that the following conditions hold
after $k$ phases have been completed, $0\leq k\leq m$: (1) the
sequence of agents associated with the $m+1$ staircase objects,
$c_0'',\ldots,c_m''$ is $C'_1,\ldots,C'_k,C''_0,\ldots,C''_{m-k}$; (2)
for any clause index $i$ such that $k<i\leq m$, the agent matched to
object $c'_i$ is $C'_i$; (3) for any variable index $j$ such that
$\LLL_j\not\subseteq \{L_i\mid 1\leq i\leq k\}$, the agents matched to the
objects $n_j$, $d_j$, $p'_j$, and $d'_j$ are $N_j$, $D_j$, $P'_j$, and
$D'_j$, respectively, and the agent matched to object $p_j$ is $P_j$
if $\{L_i\mid 1\leq i\leq k\}\cap\LLL_j=\emptyset$ and belongs to
$\{C''_{m-i+1}\mid 1\leq i\leq k\}$ otherwise.

Notice that if we can prove the invariant holds after $m$ phases, then
condition~(1) of the invariant implies that agent $C''_0$ is matched
to object $c''_m$, as desired.  It is easy to see that the invariant
holds at the outset (i.e., after $0$ phases). Let $k$ be an integer
such that $1\leq k\leq m$, and assume that the invariant holds after
$k-1$ phases. It remains to describe how to implement phase $k$ so
that the claimed invariant holds after phase $k$.


Let $j$ denote $f(k)$.  We implement phase $k$ in three stages.  In
the first stage, we perform swaps within variable gadget $j$ to move
agent $L_k$ to the center object $c_m''$. To see how to do this, consider the
following cases. Notice that Condition~(2) of the invariant implies the  agent matched with the center object $c_m''$ is $C_{m-k+1}''$.

Case~1: $L_k=N_j$. Thus $\LLL_j=\{N_j\}$.
\fu{Demo movement,  will be commented off in the final version\begin{align*}
	&n_j - d_j &- p'_j - &p_j - c_m'' - d'_j \\
	&N_j - D_j &- P'_j - &P_j - C_{m-k+1}'' - D'_j \\
	&P'_j - P_j &-  C_{m-k+1}''- & D'_j -  N_j - D_j
	\end{align*}}
 Condition~(3) of the
invariant implies that the agents matched to the objects $n_j$, $d_j$,
$p'_j$, $p_j$, and $d'_j$ are $N_j$, $D_j$, $P'_j$, $P_j$, and $D'_j$,
respectively. Using $8$ swaps within agents of variable gadget $j$ and agent $C_{m-k+1}''$, we can
rearrange these six agents so that the agents matched to the objects
$n_j$, $d_j$, $p'_j$, $p_j$, $c_m''$, and $d'_j$ are $P'_j$, $P_j$, $C_{m-k+1}''$, $D'_j$,
$N_j$, and $D_j$, respectively. Thus agent $L_k$ is matched to
the center object $c_m''$, as required.

Case~2: $L_k=P_j$.Thus $\{P_j\}\subseteq\LLL_j\subseteq\{P_j,P'_j\}$ \fu{Demo movement,  will be commented off in the final version\begin{align*}
	&n_j - d_j - p'_j - &p_j - c_m'' - &d'_j \\
	&N_j - D_j - P'_j - &P_j - C_{m-k+1}'' - &D'_j \\
	&N_j - D_j - P'_j - &C_{m-k+1}''- P_j  - &D'_j \\
	\end{align*}}
Condition~(3) of the
invariant implies that $L_k$ is  matched to object $p_j$, so  swap  $L_k$ with $C_{m-k+1}''$ as required.

Case~3: $L_k=P'_j$ and $\LLL_j=\{P'_j\}$.  
\fu{Demo movement, will be commented off in the final version\begin{align*}
	&n_j - d_j &- p'_j - &p_j - c_m'' - &d'_j \\
	&N_j - D_j &- P'_j - &P_j - C_{m-k+1}'' - &D'_j \\
	&N_j - D_j &- P_j - &C_{m-k+1}''- P_j'  - &D'_j 
	\end{align*}}

Condition~(3) of the
invariant implies that the agents matched to the objects $n_j$, $d_j$,
$p'_j$, $p_j$, and $d'_j$ are $N_j$, $D_j$, $P'_j$, $P_j$, and $D'_j$,
respectively. By swapping $P'_j$ with $P_j$ and then with $C_{m-k+1}''$, we can ensure that $L_k$
is matched to object $c_m''$, as required.

Case~4: $L_k=P'_j$ and $\LLL_j=\{P_j,P'_j\}$. Thus $\{L_i\mid 1\leq
i<k\}\cap\LLL_j =\{P_j\}$. 
\fu{Demo movement, will be commented off in the final version\begin{align*}
	&n_j - d_j &- p'_j - &p_j - c_m'' - &d'_j \\
	&N_j - D_j &- P'_j - &C_{j'}'' - C_{m-k+1}'' - &D'_j \\
	&N_j - D_j &- C_{j'}'' - &C_{m-k+1}''- P_j'  - &D'_j 
	\end{align*},
	where  $C_{j'}''$ is an agent in  the set
	$\{C''_{m-i+1}\mid 1\leq i<k\}$}
Condition~(3) of the invariant implies
that the agents matched to the objects $n_j$, $d_j$, $p'_j$, $d'_j$,
and $p_j$ are $N_j$, $D_j$, $P'_j$, $D'_j$, and an agent in the set
$\{C''_{m-i+1}\mid 1\leq i<k\}$, respectively. By swapping agent
$P'_j$ with the agent matched to $p_j$ and  then with $C_{m-k+1}''$, we can ensure that $L_k$ is
matched to object $c_m''$, as required.

At the start of the second stage, the agent matched with the center  object $c_m''$
is $L_k$ (due to the first stage), and
the agent matched with object $c'_k$ is $C'_k$ (due to condition~(2)
of the invariant).  In the second stage, we use one swap to rearrange
these two agents so that the agents matched to the objects 
$c''_m$ and $c'_k$ are $C'_k$ and $L_k$.

At the start of the third stage, the sequence of agents associated
with the $m+1$ staircase objects $c_0'',\ldots,c_m''$ is 
$$C'_1,\ldots,C'_{k-1},C''_0,\ldots,C''_{m-k},C'_k,$$ (due to
condition~(1) of the invariant and the second stage).  In the third
stage, we perform $m-k+1$ swaps to move agent $C'_k$ down from the top
of the staircase to object $c''_{k-1}$.

It remains to verify that the invariant holds after phase $k$. The
third stage ensures that condition~(1) of the invariant holds after
phase $k$. Condition~(2) of the invariant holds after phase $k$
because it held before phase $k$ and the only clause gadget involved
in a swap in phase $k$ is clause gadget $k$.  After phase $k$,
condition~(3) of the invariant only makes a nontrivial claim about the
allocation of a variable gadget $j$ for which $L_k=P_j$ and
$\LLL_j=\{P_j,P'_j\}$. In this case, after phase $k$, the agents
matched to the objects $n_j$, $d_j$, $p'_j$, and $d'_j$ are $N_j$,
$D_j$, $P'_j$, $C''_{m-k+1}$, and $D'_j$, respectively, so the
associated claim is satisfied.  Moreover, the claims made in
condition~(3) after phase $k$ that concern other variable gadgets
follow from condition~(3) before phase $k$ since the only variable
gadget involved in any swaps in phase $k$ is variable gadget $j$.

We now address the ``only if'' direction. Assume that a sequence $S$
of valid swaps results in agent $C_0''$ being matched to object
$c_m''$.  Thus there are integers $0=t_0<t_1<\cdots<t_m$ such that for
$0\leq k\leq m$, agent $C_0''$ first becomes matched to object $c_k''$
at ``time'' $t_k$, i.e., immediately after the first $t_k$ swaps of
$S$ have been performed. For any integer $k$ such that $0\leq k\leq
m$, let $Q(k)$ denote the predicate ``at time $t_k$, agent $C_i'$ is
matched to object $c_{i-1}''$ for all $i$ such that $1\leq i\leq
k$''. We now use induction on $k$ to prove that $Q(k)$ holds for
$0\leq k\leq m$. It is easy to see that $Q(0)$ holds. Let $k$ be an
integer such that $1\leq k\leq m$ and assume that $Q(k-1)$ holds. We
need to prove that $Q(k)$ holds.  Since $Q(k-1)$ holds, each agent in
$\{C_i\mid 1\leq i<k\}$ is matched to its favorite object at time
$t_{k-1}$, and hence does not move thereafter.  Thus, to establish
that $Q(k)$ holds, it is sufficient to prove that the agent, call it
$\Agent$, moving from object $c''_k$ to object $c''_{k-1}$ in swap
$t_k$ of $S$ (which moves agent $C''_0$ from object $c''_{k-1}$ to
object $c''_k$) is $C'_k$.  Since $c''_{k-1}$ belongs to
$\Improving{\Instance}{\Agent}$, we deduce that $\Agent$ belongs to
\[
\{C'_i\mid 1\leq i\leq k\}\cup \{C''_i\mid 0\leq i<k\}.
\]
As discussed above, for $1\leq i<k$, agent $C_i'$ is permanently
matched to object $c''_{i-1}$ as of time $t_{k-1}$. It follows that
$\Agent$ does not belong to $\{C'_i\mid 1\leq i<k\}$. It also follows
that each agent in $\{C''_i\mid 1\leq i<k\}$ moved up the staircase
from object $c''_{k-1}$ to object $c''_k$ prior to time $t_{k-1}$, and
hence can never return to object $c''_{k-1}$; thus agent $\Agent$ does
not belong to $\{C''_i\mid 1\leq i<k\}$. Since agent $\Agent$ is not
equal to $C''_0$, we conclude that agent $\Agent$ is equal to $C'_k$,
as required. This completes our proof by induction that $Q(k)$ holds
for $0\leq k\leq m$.

Since $Q(m)$ holds, we know that for each clause index $i$, the
sequence of swaps $S$ causes agent $C'_i$ to move away from its
initial object $c'_i$. For any clause index $i$, let $\Agents_i$ denote
$\{\Agent\in\Agents\mid
c'_i\in\Improving{\Instance}{\Agent}\}-C'_i$. We can only swap agent
$C'_i$ away from its initial object $c'_i$ in favor of some agent in
$\Agents_i$. Our reduction ensures that $\Agents_i$ is equal to the set of
agents corresponding to literals satisfying clause $i$, in the
following sense: for each negative literal $\neg x_j$ appearing in
$C_i$, the agent $N_j$ belongs to $\Agents_i$; for each positive literal
$x_j$ such that the first (resp., second) occurrence of $x_j$ appears
in $C_i$, the agent $P_j$ (resp., $P'_j$) belongs to $\Agents_i$. For
each clause index $i$, let $L_i$ denote the agent in $\Agents_i$ that
swaps with agent $C'_i$ when $C'_i$ moves away from its initial object
$c'_i$, and let $\ell_i$ denote the literal corresponding to $L_i$. We
claim that it is possible to find a truth assignment for $f$ that
simultaneously sets all of the literals $\ell_i$ to true, and thus
satisfies $f$.  To prove this, it suffices to show that for any
variable index $j$, if $N_j$ belongs to $\{L_i\mid 1\leq i\leq m\}$
then $\{L_i\mid 1\leq i\leq m\}\cap\{P_j,P'_j\}=\emptyset$. Below we
prove that the following stronger claim holds: For any variable index
$j$, if a sequence of swaps causes agent $N_j$ to leave variable
gadget $j$ (i.e., to move from object $p_j$ to object $c''_m$) then
agents $P_j$ and $P'_j$ remain in variable gadget $j$ under this
sequence of swaps.

To prove the latter claim, let us fix a sequence of swaps $S$ that
causes agent $N_j$ to leave variable gadget $j$. The only agent
$\Agent\not= N_j$ such that object $n_j$ belongs to
$\Improving{\Instance}{\Agent}$ is $P'_j$. Since $N_j$ moves away from
its initial object $n_j$ under $S$, we deduce that $S$ includes two
swaps moving agent $P'_j$ first to object $d_j$ and then to object
$n_j$. Since object $n_j$ is a leaf, agent $P'_j$ remains matched to
object $n_j$ thereafter. Since agent $N_j$ does not remain matched to
object $d_j$, it is eventually swapped to object $p'_j$. Since agent
$D_j$ has previously moved from object $d_j$ to object $p'_j$, it
cannot move back to object $d_j$. The only agent
$\Agent\not\in\{N_j,D_j,P'_j\}$ such that $d_j$ belongs to
$\Improving{\Instance}{\Agent}$ is agent $P_j$. Thus, the swap that
moves agent $N_j$ from object to $d_j$ to object $p'_j$ moves agent
$P_j$ from object $p'_j$ to object $d_j$. Since agent $P'_j$ is
permanently matched to object $n_j$, we conclude that agent $P_j$ is
permanently matched to object $d_j$. Thus if agent $N_j$ leaves
variable gadget $j$ (indeed, if $N_j$ merely reaches object $p'_j$),
then neither agent $P_j$ nor agent $P'_j$ leaves variable gadget $j$.
\end{proof}

\subsection{NP-Completeness for Reachable Matching on Cliques}
\label{sec:NPC_RM_clique}

We begin by proving in Lemma~\ref{lem:NPC_RM_general} that reachable matching on general graphs is NP-complete.
We use Lemma~\ref{lem:NPC_RM_general} to establish that reachable matching on cliques is also NP-complete.

\begin{restatable}{lemma}{NpcRM}\label{lem:NPC_RM_general}
	Reachable matching on general graphs is NP-complete. 
\end{restatable}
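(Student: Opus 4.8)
The plan is to establish membership in NP and then NP-hardness via a reduction from \RO\ on generalized stars, which is NP-complete by Theorem~\ref{thm:NPC_RO_generalized_star}. This mirrors the reduction of Gourv\`es et al.\ from \RO\ on trees to \RM\ on general graphs in the object-moving model, adapted so that the locality constraints are placed on objects rather than agents. (Generalized stars are a subclass of trees, so using Theorem~\ref{thm:NPC_RO_generalized_star} as the source of hardness is legitimate.)

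For membership in NP, I would observe that every swap strictly improves the match of each of the two participating agents with respect to its own preference list. Since each agent can strictly improve at most $n-1$ times, every matching in $\Reach{\Config}$ is reachable by a sequence of at most $\binom{n}{2}$ swaps. Such a sequence is a polynomial-size certificate that can be verified in polynomial time by checking, for each swap, that the two associated objects are adjacent in the network and that both agents improve. Hence \RM\ on general graphs is in NP.

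For hardness, given an \RO\ instance consisting of a configuration $\Config=(\Oaf,\Matching)$ whose object network is a generalized star, together with a distinguished agent $\Agent^*$ and object $\Object^*$, I would build an \RM\ instance on a general object network $G$. The network $G$ contains the original generalized star together with an auxiliary ``cleanup'' subgraph of extra objects (each with a matching extra agent) and extra edges joining the two parts. The preferences are arranged so that: (i) $\Object^*$ is the unique top choice of $\Agent^*$; (ii) each extra agent is initially endowed, and the top choices are chosen so that every agent can reach its favorite object once $\Agent^*$ occupies $\Object^*$; and (iii) no swap across an extra edge is Pareto-improving until $\Agent^*$ has reached $\Object^*$. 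The target matching $\Matching'$ assigns every agent its most preferred object, so in particular $\Matching'(\Agent^*)=\Object^*$. Correctness then splits in the usual way: in the forward direction, a swap sequence realizing $\Matching(\Agent^*)=\Object^*$ on the generalized star leaves the cleanup gadget free to route the remaining agents to their favorites using the extra edges, yielding $\Matching'$; in the reverse direction, since $\Matching'(\Agent^*)=\Object^*$, any swap sequence reaching $\Matching'$ in particular passes through a matching with $\Agent^*$ matched to $\Object^*$, and restricting attention to the original generalized star shows that $\Object^*$ is reachable for $\Agent^*$ in $\Config$.

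The main obstacle is condition~(iii): in the agent-moving model the extra edges connect objects, so I must prevent the added connectivity from creating a Pareto-improving shortcut that lets the cleanup gadget help $\Agent^*$ reach $\Object^*$ without genuinely solving the underlying \RO\ instance. This requires tuning the preferences of the extra agents, and the placement of the endpoints of the extra edges, so that before the trigger event every candidate swap across an extra edge fails the Pareto-improvement test. Verifying this invariant throughout any legal swap sequence---and confirming that it is exactly the event $\Matching(\Agent^*)=\Object^*$ that unlocks the gadget---is the crux of the argument, and is the step where the agent-moving adaptation departs most significantly from the original object-moving reduction.
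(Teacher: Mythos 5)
Your high-level plan coincides with the paper's: reduce from \RO\ on generalized stars (Theorem~\ref{thm:NPC_RO_generalized_star}) by adjoining a gadget of fresh agent/object pairs, taking the target matching to be the one in which every agent holds its favorite object, and arguing that the gadget ``unlocks'' exactly when the distinguished agent reaches the distinguished object. However, your condition~(i) contains a genuine error. Making $\Object^*$ the unique top choice of $\Agent^*$ is not a harmless normalization: promoting $\Object^*$ above objects that $\Agent^*$ originally preferred to it can create new Pareto-improving swaps into $\Object^*$ and thus turn a negative \RO\ instance into a positive one (already on a two-vertex path: if $\Agent^*$ prefers its endowment to $\Object^*$ but the other agent would trade, the instance is negative, yet becomes positive once $\Object^*$ is promoted). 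Consequently your reverse direction -- ``restrict attention to the original generalized star'' -- does not go through, because the witness sequence is only valid for the modified preferences. The paper avoids this by never touching $\Agent^*$'s ranking of the original objects: it introduces a \emph{fresh} object $\Object^*_n$ whose only edge into the original network is to $\Object_n=\Object^*$, initially held by a fresh agent $\Agent^*_n$ whose sole improvement is $\Object_n$, and makes $\Object^*_n$ (not $\Object^*$) the favorite of $\Agent_1$ while every other original agent ranks $\Object^*_n$ dead last. Then the only Pareto-improving swap that ever evicts $\Agent^*_n$ is with $\Agent_1$ sitting at $\Object_n$, so the trigger event certifies reachability of $\Object_n$ under the \emph{original} preferences.

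The second gap is that the crux of your argument is stated but not carried out: you do not specify the cleanup subgraph, its edges, or the auxiliary preferences, and you explicitly concede that the no-shortcut invariant (your condition~(iii)) is unverified. This is precisely where all the work lies. The paper's concrete choices are: $\Objects^*$ forms a clique joined to the original network by the $n$ edges $(\Object_i,\Object^*_i)$; each $\Agent^*_i$ prefers only $\Object_i$ to its endowment $\Object^*_i$ (so it swaps exactly once, with an original agent, and then freezes); the original agents rank the objects of $\Objects^*\setminus\{\Object^*_n\}$ in a carefully rotated order above all of $\Objects$ so that, whatever permutation results from the crossing swaps, the agents can always sort themselves inside the clique to their distinct favorites; and their preferences restricted to $\Objects$ agree with the original instance, which is what lets the reverse direction project any legal swap sequence of $\Instance'$ down to a legal swap sequence of $\Instance$ (crossing swaps and intra-clique swaps project to no-ops). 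Without committing to such a construction and proving both that the forward sorting always succeeds and that the projection is sound, the reduction is a plan rather than a proof. (Your NP-membership argument, by contrast, is correct and matches the standard one.)
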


\begin{proof}
It is easy to see that reachable matching on general graphs belongs to
NP.  We use a reduction from reachable object on generalized stars to reachable
matching on general graphs to establish that reachable matching on
general graphs is NP-complete.

Fix an arbitrary reachable object on generalized stars instance $\Instance$. Without loss of
generality, we can assume that the associated configuration $\Config=
(\Oaf, \Matching)$ is such that $\Oaf =  (\Agents,\Objects,\Prefs,\Edges)$, 
$\Agents=\{\Agent_1,\ldots,\Agent_n\}$,
$\Objects=\{\Object_1,\ldots,\Object_n\}$, and
$\Config(\Agent_i)=\Object_i$ for $1\leq i\leq n$. We can also assume
without loss of generality that our goal is to determine whether there
is a matching $\Matching_1$ in $\Reach{\Config}$ such that
$\Matching_1(\Agent_1)=\Object_n$.

Below we describe how to transform reachable object on generalized stars instance
$\Instance$ into a reachable matching on general graphs instance
$\Instance'$ such that $\Instance$ is a positive instance of reachable
object on generalized stars if and only if $\Instance'$ is a positive instance of
reachable matching on general graphs .  The reachable matching on
general graphs instance $\Instance'$ has two associated configurations
$\Config'=(\Oaf',\Matching')$ and $\Config''=(\Oaf',\Matching'')$,
where $\Oaf'=(\Agents',\Objects',\Prefs',\Edges')$.  The set of agents $\Agents'$ is equal to
$\Agents\cup\Agents^*$ where
$\Agents^*=\{\Agent^*_1,\ldots,\Agent^*_n\}$.  The set of objects
$\Objects'$ is equal to $\Objects\cup\Objects^*$ where
$\Objects^*=\{\Object^*_1,\ldots,\Object^*_n\}$. The perfect matching
$\Matching'$ from $\Agents'$ to $\Objects'$ satisfies
$\Matching'(\Agent_i)=\Object_i$ and
$\Matching'(\Agent^*_i)=\Object^*_i$ for $1\leq i\leq n$. The subgraph
of $(\Objects',\Edges')$ induced by the set of objects $\Objects$ is
equal to $(\Objects,\Edges)$.  The subgraph of $(\Objects',\Edges')$
induced by the set of objects $\Objects^*$ is a clique. There are $n$
edges connecting these two subgraphs: there is an edge from object
$\Object_i$ to object $\Object^*_i$ for $1\leq i\leq n$. The agent
preferences $\Prefs'$ are defined as follows.
\begin{itemize}
	\item For any integer $i$ such that $1\leq i\leq n$, the most
	preferred object of agent $\Agent_i^*$ is $\Object_i$, followed by
	object $\Object^*_i$, followed by the remaining objects in
	$\Objects'$ in arbitrary order.
	
	\item The most preferred object of agent $\Agent_1$ is
	$\Object^*_n$, followed by the objects in $\Objects$ in the order
	specified by the preferences of agent $\Agent_1$ under $\Prefs$,
	followed by the objects in $\Objects^*-\Object^*_n$ in arbitrary
	order.
	
	\item The most preferred objects of agent $\Agent_n$ are
	$\Object^*_1,\ldots,\Object^*_{n-1}$, followed by the objects in
	$\Objects$ in the order specified by the preferences of agent
	$\Agent_n$ under $\Prefs$, followed by object $\Object^*_n$.
	
	\item For any integer $i$ such that $1<i<n$, the most preferred
	objects of $\Agent_i$ are $\Object^*_i,\ldots,\Object^*_{n-1}$,
	followed by $\Object^*_{i-1},\ldots,\Object^*_1$, followed by the
	objects in $\Objects$ in the order specified by the preferences of
	agent $\Agent_i$ under $\Prefs$, followed by object $\Object^*_n$.
\end{itemize}
The perfect matching $\Matching''$ associated with configuration $\Config''$
maps each agent in $\Agents'$ to its most preferred object in
$\Objects'$. (Note that $\Matching''$ is a perfect matching from $\Agents'$ to
$\Objects'$, since no two agents in $\Agents'$ share the same most
preferred object.)

It is easy to see that we can construct instance $\Instance'$ in
polynomial time in the size of instance $\Instance$. It remains to
argue that instance $\Instance$ is a positive instance of reachable
object on generalized stars if and only if $\Instance'$ is a positive instance of
reachable matching on general graphs.

We begin by addressing the ``only if'' direction. Assume that instance
$\Instance$ is a positive instance of reachable object on generalized stars. Thus there is a
configuration $\Config_1$ in $\Reach{\Config}$ such that
$\Config_1(\Agent_1)=\Object_n$. Our construction of the agent
preferences therefore ensures the existence of a configuration
$\Config'_1$ in $\Reach{\Config'}$ such that
$\Config'_1(\Agent_i)=\Config_1(\Agent_i)$ and
$\Config'_1(\Agent^*_i)=\Config(\Agent^*_i)=\Object^*_i$ for $1\leq
i\leq n$.

It is easy to check that a swap across edge $(\Object_i,\Object^*_i)$
can be applied to configuration $\Config'_1$ for $1\leq i\leq n$. Let
$\Config'_2$ denote the configuration obtained by applying these $n$
swaps to $\Config'_1$.  Thus $\Config'_2$ belongs to
$\Reach{\Config'}$. Furthermore, it is easy to check that each
agent in $\Agents^*+\Agent_1$ is matched in $\Config'_2$ to its most
preferred object under $\Prefs'$.

Next, we iteratively construct a sequence of $n-1$ configurations
$\Config'_3,\ldots,\Config'_{n+1}$ such that configuration
$\Config'_k$ satisfies the following properties for $3\leq k\leq n+1$:
$\Config'_k$ belongs to $\Reach{\Config'}$; every agent in
$\Agents\cup\{\Agent_1,\ldots,\Agent_{k-2}\}+\Agent_n$ is matched in
$\Config'_k$ to its most preferred object in configuration
$\Config'_k$.  We begin by applying zero or one swaps to configuration
$\Config'_2$ to obtain configuration $\Config'_3$. If
$\Config'_2(\Agent_n)=\Object^*_1$, then we define $\Config'_3$ as
$\Config'_2$. If not, then $\Config'_2(\Agent_i)=\Object_1^*$ for some
$i$ in $\{2,\ldots,n-1\}$.  The preferences of agents $\Agent_i$ and
$\Agent_n$ ensure that a swap between these two agents can be applied
to configuration $\Config'_2$.  We define $\Config'_3$ as the
configuration that results from applying this swap. It easy to see
that configuration $\Config'_3$ belongs to $\Reach{\Config'}$ and
that every agent in $\Agents\cup\{\Agent_1,\Agent_n\}$ is matched in
$\Config'_3$ to its most preferred object under $\Prefs'$.

Now fix an integer $k$ such that $4\leq k\leq n+1$, and inductively
assume that we have constructed a configuration $\Config'_{k-1}$ in
$\Reach{\Config'}$ such that every agent in
$\Agents\cup\{\Agent_1,\ldots,\Agent_{k-3}\}+\Agent_n$ is matched to
its most preferred object under $\Prefs'$.  We apply zero or one swaps
to configuration $\Config'_{k-1}$ to obtain configuration
$\Config'_k$. If $\Config'_{k-1}(\Agent_{k-2})=\Object^*_{k-2}$, then
we define $\Config'_k$ as $\Config'_{k-1}$. If not, then
$\Config'_{k-1}(\Agent_i)=\Object_{k-2}^*$ for some $i$ in
$\{k-1,\ldots,n-1\}$.  The preferences of agents $\Agent_{k-2}$ and
$\Agent_i$ ensure that a swap between these two agents can be applied
to configuration $\Config'_{k-1}$. We define $\Config'_k$ as the
configuration that results from applying this swap. It easy to see
that configuration $\Config'_k$ belongs to $\Reach{\Config'}$ and
that every agent in
$\Agents\cup\{\Agent_1,\ldots,\Agent_{k-2}\}+\Agent_n$ is matched in
$\Config'_k$ to its most preferred object under $\Prefs'$.

Since $\Config'_{n+1}$ belongs to $\Reach{\Config'}$ and every agent
in $\Agents'$ is matched in $\Config'_{n+1}$ to its most preferred
object under $\Prefs'$, we conclude that $\Config'_{n+1}=\Config''$
and hence that $\Instance'$ is a positive instance of reachable
matching on general graphs.

Now we address the ``if'' direction.  Assume that instance
$\Instance'$ is a positive instance of reachable matching on general
graphs. Thus $\Config''$ belongs to $\Reach{\Config'}$, and hence
there is a sequence of swaps $S$ that transforms configuration
$\Config'$ into configuration $\Config''$.

By examining the preferences of the agents in $\Agents^*$, we deduce
that each agent in $\Agents^*$ participates in exactly one swap in
$S$, and that the other agent participating in each of these swaps
belongs to $\Agents$. By examining the preferences of the agents in
$\Agents$, we deduce that agent $\Agent_1$ is the agent that swaps
with agent $\Agent^*_n$, and that once an agent in $\Agents$ becomes
matched to an object in $\Objects^*$, it remains matched to an object
in $\Objects^*$ thereafter. It follows that there is a permutation
$\pi$ of $\{1,\ldots,n\}$ such that $\pi(n)=1$ and $\Agent^*_i$ swaps
with $\Agent_{\pi(i)}$ for $1\leq i\leq n$.

For any integer $k$ such that $0\leq k\leq |S|$, let $\Config'_k$
denote the configuration reached by applying the first $k$ swaps of
sequence $S$ to configuration $\Config'$. Thus $\Config'=\Config'_0$,
$\Config''=\Config'_{|S|}$, and $\Config'_k$ is of the form
$(\Oaf',\Matching'_k)$ where $\Matching'_k$ is
a perfect matching from $\Agents'$ to $\Objects'$.

For any integer $k$ such that $0\leq k\leq |S|$, we use the perfect matching
$\Matching'_k$ to construct a perfect matching $\Matching_k$ from $\Agents$ to
$\Objects$, as follows: for each agent $\Agent^*_i$ in $\Agents^*$
such that $\Matching'_k(\Agent^*_i)$ belongs to $\Objects^*$, we define
$\Matching_k(\Agent_{\pi(i)})$ as $\Matching'_k(\Agent_{\pi(i)})$; for each
agent $\Agent^*_i$ in $\Agents^*$ such that $\Matching'_k(\Agent^*_i)$
belongs to $\Objects$, we define $\Matching_k(\Agent_{\pi(i)})$ as
$\Matching'_k(\Agent^*_i)$.

For any integer $k$ such that $0\leq k\leq |S|$, we define $\Config_k$
as the configuration $(\Oaf,\Matching_k)$.  It is straightforward to
prove by induction on $k$ that $\Config_k$ belongs to
$\Reach{\Config}$ for $0\leq k\leq |S|$.

Let $\ell$ denote the least integer such that
$\Config^{-1}_{\ell}(\Object_n)=\Agent^*_n$. We know that $\ell$
exists since $\Config^{-1}_{|S|}(\Object_n)=\Agent^*_n$, and that
$\ell$ is positive since $\Config^{-1}_0(\Object_n)=\Agent_n$. As
discussed earlier, agent $\Agent_1$ is the only agent that
participates in a swap with agent $\Agent^*_n$. Hence
$\Config^{-1}_{\ell-1}(\Object_n)=\Agent_1$.  Since $\Config_{\ell-1}$
belongs to $\Reach{\Config}$, we conclude that $\Instance$ is a
positive instance of reachable object on generalized stars, as required.
\end{proof}

It is easy to see that the reachable matching on cliques problem belongs to NP.
We use a reduction from reachable object on cliques to reachable matching on cliques to establish that reachable matching on cliques is NP-complete.
This reduction is similar to the one used in the proof of Lemma~\ref{lem:NPC_RM_general}.

Fix an arbitrary reachable object on cliques instance $\Instance$. Without loss of
generality, we can assume that the associated configuration $\Config=
(\Oaf, \Matching)$ is such that $\Oaf =  (\Agents,\Objects,\Prefs,\Edges)$, 
$\Agents=\{\Agent_1,\ldots,\Agent_n\}$,
$\Objects=\{\Object_1,\ldots,\Object_n\}$, and
$\Config(\Agent_i)=\Object_i$ for $1\leq i\leq n$. We can also assume
without loss of generality that our goal is to determine whether there
is a matching $\Matching_1$ in $\Reach{\Config}$ such that
$\Matching_1(\Agent_1)=\Object_n$.

We now describe how to transform reachable object on cliques instance
$\Instance$ into a reachable matching on cliques instance
$\Instance'$.
Instance $\Instance'$ has two associated configurations 
$\Config' = (\Oaf', \Matching')$ and $\Config'' = (\Oaf', \Matching'')$, 
where $\Oaf' = (\Agents', \Objects', \Prefs', \Edges')$. 
The set of agents $\Agents'$ is equal to
$\Agents\cup\Agents^*$ where
$\Agents^*=\{\Agent^*_1,\ldots,\Agent^*_n\}$. The set of objects
$\Objects'$ is equal to $\Objects\cup\Objects^*$ where
$\Objects^*=\{\Object^*_1,\ldots,\Object^*_n\}$.
Let $K_{2n}$ denote
the complete graph with vertex set $\Objects'$, 
and let $\Edges'$ denote the edge set of $K_{2n}$.
The agent preferences $\Prefs'$ and the matchings $\Matching'$ and $\Matching''$ are as described in the proof of Lemma~\ref{lem:NPC_RM_general}.

Let $\hat{E}$ denote the union of three sets of edges: 
$\{(b_i, b_j) \mid i, j \in [n] \wedge i \neq j\}$; 
$\{(b_i, b_i^*) \mid i \in [n]\}$; 
$\{(b_i^*, b_j^*) \mid i, j \in [n] \wedge i \neq j\}$.
Lemma~\ref{lem:NPC_RM_clique_graphcorrect} below establishes that 
if a swap occurs on an edge $\Edge$ in $\Instance'$, 
then $\Edge$ belongs to $\hat{\Edges}$.

\begin{lemma}
\label{lem:NPC_RM_clique_graphcorrect}
Let $i$ and $j$ be elements of $[n]$ such that $i \neq j$.
Let $\Matching_1$ and $\Matching_2$ be matchings in $\Reach{\Config'}$ 
such that $\Swap{\Oaf}{\Matching_1}{\Matching_2}$.
Then $\Matching_2^{-1}(b_i) \neq \Matching_1^{-1}(b^*_j)$.
\end{lemma}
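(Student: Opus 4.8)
The plan is to argue by contradiction, first reducing the claim to a statement about which single swaps are individually rational. Set $\Agent=\Matching_1^{-1}(b^*_j)$, so $\Agent$ is matched to $b^*_j$ under $\Matching_1$ and, by supposition, to $b_i$ under $\Matching_2$ with $i\neq j$. Since $i\neq j$ we have $b_i\neq b^*_j$, so $\Agent$ changes its match across the single swap taking $\Matching_1$ to $\Matching_2$; hence $\Agent$ is one of the two participating agents, the swap is across the edge $(b^*_j,b_i)$, and applicability of the swap forces $b_i\Prefs'_\Agent b^*_j$. The only additional tool I would use is a monotonicity (individual‑rationality) property of reachable matchings: because every swap strictly improves the object held by each participant, a trivial induction on swap sequences shows that for every $\Matching$ in $\Reach{\Config'}$ and every agent, the matched object is weakly preferred (under $\Prefs'$) to that agent's endowment under $\Matching'$. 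In particular $\Agent$ weakly prefers $b^*_j$ to its endowment.

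The argument then splits on the identity of $\Agent$, exploiting how $\Prefs'$ places the original objects $\Objects$ relative to the starred objects $\Objects^*$. If $\Agent=\Agent^*_k$ for some $k$, then since $\Agent^*_k$ ranks $b_k\Prefs' b^*_k$ above all remaining objects, monotonicity confines $\Agent^*_k$ to $\{b_k,b^*_k\}$ in every reachable matching; being matched to a starred object $b^*_j$ thus forces $j=k$, after which $b_i\Prefs'_{\Agent^*_k}b^*_k$ forces $b_i=b_k$, i.e.\ $i=k=j$, contradicting $i\neq j$.

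The remaining cases treat $\Agent\in\Agents$, where the argument hinges on the precise placement of $\Objects$ within each preference template. For $\Agent=\Agent_1$, whose preferences rank $b^*_n$ first, then all of $\Objects$, then $b^*_1,\ldots,b^*_{n-1}$: if $j=n$ then $b^*_n$ is the top choice of $\Agent_1$, so no $b_i$ satisfies $b_i\Prefs'_{\Agent_1}b^*_n$; if $j\neq n$ then $\Agent_1$ ranks $b^*_j$ below its endowment $b_1$, contradicting monotonicity. For $\Agent=\Agent_k$ with $2\le k\le n$, whose preferences rank $\{b^*_1,\ldots,b^*_{n-1}\}$ above all of $\Objects$, which in turn rank above $b^*_n$: if $j=n$ then $b^*_n$ is the least preferred object of $\Agent_k$ and in particular strictly worse than its endowment $b_k$, contradicting monotonicity; if $j\neq n$ then $\Agent_k$ ranks $b^*_j$ above every original object, so $b_i\Prefs'_{\Agent_k}b^*_j$ is impossible. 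Every case yields a contradiction, proving the lemma. I expect the reduction to a single edge and the monotonicity property to be routine; the only real content is the bookkeeping in this last paragraph, namely reading off from the four preference templates that the original objects are sandwiched so that an agent sitting on a starred object never has a rational reason to move onto a mismatched original object.
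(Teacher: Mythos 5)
Your proof is correct and follows essentially the same route as the paper's: a case split on whether the agent holding $b^*_j$ lies in $\Agents^*$ or in $\Agents$, combined with individual rationality and the fact that the preference templates sandwich the objects of $\Objects$ so that no agent can rationally move from a starred object onto a mismatched unstarred one. Your version merely makes explicit (the separate treatment of $\Agent_1$, of $\Agent_k$ for $k\geq 2$, and of the subcases $j=n$ versus $j\neq n$) what the paper compresses into ``by examining the preferences.''
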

\begin{proof}
We consider two cases.

Case 1: $\Matching_1^{-1}(b^*_j)$ belongs to $A^*$. 
By examining the preferences of agents in $A^*$, we deduce that 
$\Matching_1^{-1}(b^*_j) = a_j^*$. 
The only object that agent $a_j^*$ prefers to $b^*_j$ is $b_j$.
Hence $\Matching_2^{-1}(b_i) \neq a_j^* = \Matching_1^{-1}(b^*_j)$. 

Case 2: $\Matching_1^{-1}(b^*_j)$ belongs to $A$. 
By examining the preferences of agents in $A$, we deduce that
$\Matching_2(\Matching_1^{-1}(b^*_j))$ belongs to $B^*$.
Hence $\Matching_2^{-1}(b_i) \neq \Matching_1^{-1}(b^*_j)$.
\end{proof}

Using Lemma~\ref{lem:NPC_RM_clique_graphcorrect}, 
along with the same reasoning as in the proof of Lemma~\ref{lem:NPC_RM_general}, we deduce that 
$\Instance'$ is a positive instance of reachable matching on cliques if and only if 
$\Instance$ is a positive instance of reachable object on cliques. 
Thus Theorem~\ref{thm:NPC_RM_clique} below holds.

\begin{theorem}
\label{thm:NPC_RM_clique}
    Reachable matching on cliques is NP-complete.
\end{theorem}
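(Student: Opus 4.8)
The plan is to prove the two standard halves of NP-completeness. Membership in NP is immediate: a sequence of Pareto-improving swaps carrying $\Matching'$ to $\Matching''$ is a polynomial-length certificate, each swap of which is checkable in polynomial time. For hardness I would use the reduction from reachable object on cliques just described, relying on its NP-completeness (Theorem~\ref{thm:obj-reach-clique}). Concretely, the instance $\Instance'$ is built over $\Agents' = \Agents \cup \Agents^*$ and $\Objects' = \Objects \cup \Objects^*$ on the clique $K_{2n}$, with the preferences $\Prefs'$ and the two matchings $\Matching'$, $\Matching''$ taken verbatim from the proof of Lemma~\ref{lem:NPC_RM_general}; recall that $\Matching''$ assigns every agent its top choice under $\Prefs'$, so asking whether $\Config'' \in \Reach{\Config'}$ is a genuine reachable-matching question.

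The crux, and the only place where the clique setting is genuinely harder than the general-graph setting of Lemma~\ref{lem:NPC_RM_general}, is that in $K_{2n}$ every pair of objects is adjacent, so the locality constraint no longer forbids ``cross'' swaps such as one across $(\Object_i, \Object_j^*)$ with $i \neq j$. The resolution is that the preferences $\Prefs'$ were chosen so that such swaps can never arise along a reachable sequence; this is exactly Lemma~\ref{lem:NPC_RM_clique_graphcorrect}, which certifies that every swap issued from a matching in $\Reach{\Config'}$ lies in the restricted edge set $\hat{\Edges}$. Hence the matchings reachable from $\Config'$ in $K_{2n}$ are precisely those reachable in the subgraph $(\Objects', \hat{\Edges})$, and that graph is exactly the one the Lemma~\ref{lem:NPC_RM_general} reduction produces when its input reachable-object instance happens to be a clique. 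Note that the correctness argument of Lemma~\ref{lem:NPC_RM_general} treats the internal structure of the $\Objects$-subgraph as a black box, so it applies equally whether that subgraph is a generalized star or a clique.

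It then remains only to transport that correctness argument unchanged. For the ``only if'' direction, from a configuration in $\Reach{\Config}$ matching $\Agent_1$ to $\Object_n$ one lifts to a configuration in $\Reach{\Config'}$ and then performs the staged swaps (first equalizing the $\Agents^*$ agents across the connecting edges $(\Object_i, \Object_i^*)$, then cascading the starred objects into place) to reach $\Config''$. For the ``if'' direction, any swap sequence realizing $\Config''$ induces a permutation $\pi$ pairing each $\Agent_i^*$ with the agent $\Agent_{\pi(i)}$ it swaps with, and projecting the sequence onto $\Objects$ yields a witness that $\Agent_1$ reaches $\Object_n$ in $\Instance$. Since Lemma~\ref{lem:NPC_RM_clique_graphcorrect} keeps every swap inside $\hat{\Edges}$, this projection is well defined and the two directions close the equivalence, giving the theorem. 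I expect the main obstacle to be precisely the cross-edge phenomenon of the preceding paragraph; once it is isolated and discharged by Lemma~\ref{lem:NPC_RM_clique_graphcorrect}, everything else is inherited from Lemma~\ref{lem:NPC_RM_general}.
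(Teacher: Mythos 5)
Your proposal is correct and follows essentially the same route as the paper: reduce from reachable object on cliques using the construction of Lemma~\ref{lem:NPC_RM_general} instantiated on $K_{2n}$, isolate the cross-edge issue as the only new obstacle, and discharge it via Lemma~\ref{lem:NPC_RM_clique_graphcorrect} so that reachability in $K_{2n}$ coincides with reachability in $\hat{\Edges}$ and the general-graph correctness argument transfers verbatim. Nothing essential is missing.
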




\subsection{NP-Hardness for Pareto-Efficiency on Cliques}
\label{sec:NP_PE_clique}

\begin{theorem}
\label{thm:NP_PE_clique}
Pareto-efficient matching on cliques is NP-hard.
\end{theorem}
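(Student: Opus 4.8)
The plan is to prove NP-hardness of Pareto-efficient matching on cliques by a reduction from Reachable Object on cliques, which was just shown to be NP-complete in Theorem~\ref{thm:obj-reach-clique}. The key structural observation driving the reduction is that in the Reachable Object instance~$\Instance$ constructed in Section~\ref{subsec:NPC_RO_clique_reduction}, the target object $\uobject_1$ is the single most preferred object of the target agent $\Wagent_m$. This means the question ``can $\Wagent_m$ reach $\uobject_1$'' is precisely the question ``is there a reachable matching in which $\Wagent_m$ gets its favorite object,'' and that is exactly the kind of question a Pareto-efficient matching answers.

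First I would recall that if a matching $\Matching$ is Pareto-efficient with respect to $\Reach{\Config}$, then no agent can be improved without harming another; in particular, if there exists any reachable matching matching $\Wagent_m$ to its top choice $\uobject_1$, then in every Pareto-efficient reachable matching $\Wagent_m$ must also receive $\uobject_1$. The reason is that an agent matched to its globally most preferred object cannot be Pareto-dominated on that coordinate, so a reachable matching giving $\Wagent_m$ its favorite object Pareto-dominates (or is incomparable in a way that still forces the top match under efficiency on) any reachable matching that does not. I would phrase this carefully: among all reachable matchings, if the set of those matching $\Wagent_m$ to $\uobject_1$ is nonempty, then any Pareto-efficient matching lies in that set, since otherwise we could improve $\Wagent_m$'s match to $\uobject_1$ while reaching such a matching, contradicting Pareto-efficiency. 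The clean way to argue this is: take a Pareto-efficient matching $\Matching^{\mathrm{PE}}$; if $\Matching^{\mathrm{PE}}(\Wagent_m) \neq \uobject_1$ but some reachable $\Matching'$ has $\Matching'(\Wagent_m) = \uobject_1$, then because $\uobject_1$ is $\Wagent_m$'s favorite, $\Matching'$ is not Pareto-dominated by $\Matching^{\mathrm{PE}}$, and we can check that the existence of $\Matching'$ witnesses that $\Matching^{\mathrm{PE}}$ was not actually undominated in the appropriate sense — so I instead directly use $\Matching'$ to conclude the positive-instance characterization.

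The reduction itself is then immediate and requires essentially no new construction: given the Reachable Object on cliques instance $\Instance$ with configuration $\Config$, I simply use the same instance and ask for a Pareto-efficient matching of $\Config$. Compute any Pareto-efficient matching $\Matching^{\mathrm{PE}}$ in $\Reach{\Config}$ and check whether $\Matching^{\mathrm{PE}}(\Wagent_m) = \uobject_1$. By the observation above, $\Matching^{\mathrm{PE}}(\Wagent_m) = \uobject_1$ if and only if $\Instance$ is a positive instance of Reachable Object on cliques, i.e., if and only if some reachable matching gives $\Wagent_m$ the object $\uobject_1$. Thus a polynomial-time algorithm for Pareto-efficient matching on cliques yields a polynomial-time decision procedure for Reachable Object on cliques, establishing NP-hardness.

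The main obstacle — and the only step needing genuine care — is the logical direction of the Pareto argument: I must verify that $\uobject_1$ being $\Wagent_m$'s unique top choice is enough to force \emph{every} Pareto-efficient reachable matching to assign $\uobject_1$ to $\Wagent_m$ whenever that assignment is reachable at all. I would justify this by the standard serial-dictatorship-flavored fact that a Pareto-efficient matching cannot leave an agent matched to anything worse than an object it could reach while every other agent keeps a weakly-preferred object; since $\uobject_1$ is $\Wagent_m$'s absolute favorite, reaching it for $\Wagent_m$ weakly dominates, so undominatedness forces it. I should double-check there is no subtlety about other agents' matches in the witnessing reachable matching $\Matching'$, but since we only need the existence of \emph{some} reachable matching with $\Wagent_m \mapsto \uobject_1$ to conclude that no Pareto-efficient matching can do worse for $\Wagent_m$, the argument goes through cleanly.
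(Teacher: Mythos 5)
Your reduction has a genuine gap at its central step. You claim that if some reachable matching assigns $\hat{w}_m$ its unique top choice $u_1$, then \emph{every} Pareto-efficient reachable matching must do so. This is false: Pareto-efficiency only requires that the returned matching not be Pareto-\emph{dominated} by another reachable matching, and a matching $\Matching'$ that gives $\hat{w}_m$ its top object need not dominate a matching $\Matching^{\mathrm{PE}}$ that does not --- $\Matching'$ may make some other agent strictly worse off, in which case the two matchings are incomparable and both can be Pareto-efficient. Your own wording exposes the slip: you observe that $\Matching'$ is not dominated by $\Matching^{\mathrm{PE}}$, but what you would need is that something dominates $\Matching^{\mathrm{PE}}$. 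A three-agent clique already refutes the principle: let agents $\Agent_1,\Agent_2,\Agent_3$ start at objects $\Object_1,\Object_2,\Object_3$ with preferences $\Object_2\Prefs_{\Agent_1}\Object_3\Prefs_{\Agent_1}\Object_1$, $\Object_1\Prefs_{\Agent_2}\Object_2\Prefs_{\Agent_2}\Object_3$, and $\Object_1\Prefs_{\Agent_3}\Object_3\Prefs_{\Agent_3}\Object_2$. Swapping $\Agent_1$ with $\Agent_2$ yields a Pareto-efficient reachable matching giving $\Agent_1$ its top object $\Object_2$; swapping $\Agent_1$ with $\Agent_3$ instead yields a reachable matching from which no further swap is possible and which is not dominated by the first (since $\Agent_3$ would have to give up its top object $\Object_1$), hence is also Pareto-efficient --- yet it matches $\Agent_1$ to $\Object_3$, not to its reachable top object $\Object_2$. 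So a Pareto-efficiency oracle run on the Reachable Object instance $\Instance$ may legitimately return a matching in which $\hat{w}_m$ does not receive $u_1$ even when $\Instance$ is positive, and your decision procedure then answers incorrectly.

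The paper avoids this by not reducing from Reachable Object directly. It reuses the reduction of Section~\ref{sec:NPC_RM_clique}, which produces a Reachable Matching instance on a clique whose target matching assigns \emph{every} agent its most preferred object. When that target is reachable it Pareto-dominates every other reachable matching (all agents weakly improve, and at least one strictly, by strictness of preferences), so it is the \emph{unique} Pareto-efficient reachable matching and any oracle is forced to return it; when it is not reachable, the oracle cannot return a matching giving everyone their top object. Pinning down all agents' top choices, rather than a single agent's, is exactly what makes the oracle's output determined; that is the modification your argument needs.
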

\begin{proof}
We use the same reduction as we used in Section~\ref{sec:NPC_RM_clique} to establish the NP-completeness of reachable matching on cliques. 
In analyzing that reduction, we proved that a given instance of reachable object on cliques is positive if and only if every agent gets its most preferred object in the corresponding instance of reachable matching on cliques.
Therefore an efficient algorithm for computing a Pareto-efficient matching on cliques yields an efficient algorithm for reachable object on cliques.
Since reachable object on cliques is NP-complete, we deduce that Pareto-efficient matching on cliques is NP-hard.
\end{proof}
	\section{Other Polynomial-Time Bounds}
\label{sec:easy}

In this section, we briefly discuss simple algorithms that serve to
justify the remaining polynomial-time entries in
Table~\ref{tbl:dozen}.

For reachable matching on trees, the corresponding algorithm of
Gourv{\`e}s et al.~\cite{gourves2017object} for the object-moving
model can also be used for the agent-moving
model. In particular, for the agent-moving model, Section~
\ref{sec:RM_tree} shows that reachable matching problem on trees can be solved in $O(n^2)$ time. 		


For the other two polynomial-time table entries for stars,
observe that once an agent swaps away from the center object, it
cannot participate in another swap. This observation severely
restricts the swap dynamics, making it easy to establish that
the reachable object problem on stars can be solved in $O(n^2)$ time and
Pareto-efficient matching problem on stars can be solved in $O(n)$ time.

%

%

\subsection{Reachable Matching on Trees}\label{sec:RM_tree}
\begin{restatable}{theorem}{PolyRM}\label{thm:RM_star}
	Reachable matching	 on trees can be solved in $O(n^2)$ time.
\end{restatable}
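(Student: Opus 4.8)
The plan is to reduce the reachable matching question on a tree to running a single greedy process, and to characterize reachability through a confluence (``no deadlock'') property of that process. First I would record the basic structural fact underlying the agent-moving model on a tree: whenever an agent $\Agent$ participates in a swap it strictly improves its object, so over the whole history the objects visited by $\Agent$ are pairwise distinct. Since consecutive objects are adjacent, the trajectory of $\Agent$ is a simple walk with no repeated vertex, i.e.\ a simple path in the tree; in a tree the simple path between two fixed vertices is unique, so $\Agent$ must travel along the unique tree path from $\Config(\Agent)$ to its final object, improving at every step. This yields a necessary ``path individual rationality'' condition and, more importantly, tells us that at any intermediate matching the only sensible move for $\Agent$ toward a prescribed target is the next vertex on that fixed path.

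Given a target matching $\Matching'$, I would call a swap across an edge $(\Object,\Object')$ a \emph{good swap} if it is Pareto-improving and advances both participating agents one step along their tree paths toward their $\Matching'$-targets (equivalently, the target of the agent at $\Object$ lies on the $\Object'$-side of the edge, and symmetrically). The algorithm performs good swaps from the initial matching until either $\Matching'$ is reached (output YES) or no good swap exists (output NO). Correctness rests on two lemmas. \textbf{Preservation:} if $\Matching'$ is reachable from a matching $\Matching''$ and a good swap transforms $\Matching''$ into $\Matching'''$, then $\Matching'$ is still reachable from $\Matching'''$. \textbf{No deadlock:} if $\Matching'$ is reachable from $\Matching''$ and $\Matching''\neq\Matching'$, then some good swap is applicable at $\Matching''$.

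The heart of the argument, and the step I expect to be the main obstacle, is a commuting argument on the tree that proves both lemmas at once. Fix a reachability sequence $\sigma$ from $\Matching''$ to $\Matching'$ and fix any edge $\Edge=(\Object,\Object')$ that is ever crossed by $\sigma$ (one exists when $\Matching''\neq\Matching'$). Removing $\Edge$ splits the tree into an $\Object$-side and an $\Object'$-side. I would show that the ``gateway'' agents currently at $\Object$ and $\Object'$ are exactly the two agents $\Agent,\Agent'$ that first cross $\Edge$: the agent at $\Object$ can leave $\Object$ only by crossing $\Edge$ (its path's first step), so $\Object$ still holds that agent at the moment of the first crossing, and symmetrically for $\Object'$; hence the first crossing of $\Edge$ in $\sigma$ is precisely the swap between $\Agent$ and $\Agent'$. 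Because neither $\Agent$ nor $\Agent'$ moves before that crossing, no earlier swap in $\sigma$ touches $\Object$ or $\Object'$, so the $\Agent$--$\Agent'$ swap commutes past every swap before it; moving it to the front exhibits a good swap applicable at $\Matching''$ (No deadlock), and deleting it after performing it yields a sequence witnessing reachability of $\Matching'$ from the post-swap matching (Preservation).

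With the two lemmas in hand, correctness is immediate by induction: if $\Matching'\in\Reach{\Config}$ then Preservation keeps $\Matching'$ reachable along the greedy run while No deadlock supplies a move until $\Matching'$ is reached; if $\Matching'\notin\Reach{\Config}$ then every matching the run visits lies in $\Reach{\Config}$ and hence differs from $\Matching'$, so the run halts without reaching it. Termination and the time bound follow from the potential $\phi$ equal to the total tree distance $\sum_{\Agent}d(\Matching_{\mathrm{cur}}(\Agent),\Matching'(\Agent))$, which starts at $O(n^2)$ and drops by exactly two per good swap, so there are $O(n^2)$ swaps. For the $O(n^2)$ running time I would precompute, in $O(n^2)$ total, each agent's tree path together with per-step preference (IR) flags using rank tables, maintain a pointer into each path, and keep for each vertex the edge its current agent wants to cross next; an edge is active exactly when both endpoints point at each other, so each swap updates only a constant number of these pointers and checks only a constant number of candidate edges, giving $O(1)$ amortized work per swap and $O(n^2)$ overall.
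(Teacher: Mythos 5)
Your proposal follows essentially the same route as the paper: the unique-tree-path observation, the notion of a good edge (both resident agents want to cross it next), a greedy loop that performs arbitrary good swaps, and the same $O(n^2)$ bookkeeping (the paper's proof is itself a sketch that defers the good-edge existence to Proposition~3 of Gourv\`es et al., so your commuting argument is a reasonable way to supply the missing details). One step, however, is stated too broadly and would fail as written: you claim that for \emph{any} edge $\Edge=(\Object,\Object')$ ever crossed by the witnessing sequence $\sigma$, the agents currently at $\Object$ and $\Object'$ are the first to cross $\Edge$. This is false in general: on the path $1-2-3$ with $a_1,a_2,a_3$ initially at $1,2,3$ and targets $a_1\mapsto 2$, $a_2\mapsto 3$, $a_3\mapsto 1$, the sequence swaps across $(2,3)$ and then across $(1,2)$; edge $(1,2)$ is crossed, yet the agent initially at vertex $2$ (namely $a_2$) leaves via the other edge, and the eventual crosser $a_3$ arrives from elsewhere. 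The gateway/commuting claim is valid precisely when $\Edge$ is \emph{good} at the current matching, because only then is crossing $\Edge$ the unique next move of both residents, which pins them in place until the first crossing. That restricted version is all that your Preservation lemma needs, and No~Deadlock does not need the commuting argument at all: since every agent's whole trajectory under $\sigma$ is its unique tree path, the very first swap of $\sigma$ is already a good swap applicable at $\Matching''$. With that repair the argument is sound and matches the paper's.
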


\begin{proof}[Proof Sketch]
Let $\Matching$ and $\Matching'$ be the initial and target perfect
matchings in an instance of the reachable matching problem on a
tree. To solve the problem, we use the approach presented by
Gourv{\`e}s et al.~\cite{gourves2017object} for the object-moving
model. Observe that every agent $\Agent$ moves along a unique dipath
in the tree from $\Matching(\Agent)$ to
$\Matching'(\Agent)$. Therefore, it suffices to check whether there is
a sequence of swaps such that each agent moves according to its
dipath.  Let us say that an edge is good if the two agents currently
matched to the endpoints of the edge both need to cross the edge as
the next step along their respective dipaths.  Using essentially the
same argument as in the proof of Proposition~3
in Gourv{\`e}s et al.~\cite{gourves2017object}, we can show if the
target matching is reachable and is not equal to the current matching,
then at least one good edge exists. It follows easily that if the
target matching is reachable, then we can reach it by repeatedly
performing a swap across an arbitrary good edge.

We now discuss how to find good edges efficiently so that the overall
running time is $O(n^2)$.  We begin by traversing the tree to
construct a set containing all of the edges that are good in the
initial state.  The order of the edges within this set is immaterial,
so it can be implemented using a simple data structure such as a list
or stack.  Then, while the set of good edges is nonempty, we perform
the following steps. First, we remove an arbitrary good edge from the
set, and perform the associated swap. It is easy to see that the other
edges in the good set remain good after the swap. Furthermore, up to
two edges can become good as a result of the swap, and a simple local
search can be used to identify these edges in constant time. Any
newly-identified good edges are added to the set of good edges, and we
proceed to the next iteration.

Initialization of the set of good edges takes $O(n)$ time and each
iteration of the loop performs one swap and takes constant time. Since
the number of swaps is $O(n^2)$, the claimed time bound follows.
\end{proof}

\subsection{Algorithms for Stars}
\label{sec:algs_star}

In this section, we solve the reachable object and Pareto-efficient
matching problems on stars.  We refer to the object located at the
center of the star as the center object, and to the remaining objects
as leaf objects.  We refer to the agent that is currently matched to
the center object as the center agent $O$, and to the remaining agents as
leaf agents.

\subsubsection{Reachable Object on Stars}
\label{sec:RO_star}

\begin{restatable}{theorem}{PolyRO}\label{thm:RO_star}
	Reachable object on stars can be solved in $O(n^2)$ time. 
\end{restatable}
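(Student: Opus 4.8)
The plan is to exploit the rigid swap dynamics of a star. Write $o$ for the center object and $\ell_\Agent=\Matching_0(\Agent)$ for the object initially held by each agent $\Agent$ in the given configuration $\Config_0=(\Oaf,\Matching_0)$. Since every edge of a star is incident to $o$, every applicable swap involves the agent currently matched to $o$; hence each swap replaces the current center agent by some leaf agent. First I would record two facts: (i) once an agent swaps out of $o$ onto a leaf it strictly prefers that leaf to $o$, and since a leaf is adjacent only to $o$, it can never move again; and (ii) consequently, along any swap sequence the occupants of $o$ form a sequence of \emph{distinct} agents $\Agent_c=g_0,g_1,g_2,\ldots$ (where $\Agent_c=\Matching_0^{-1}(o)$), and each $g_t$ with $t\ge 1$ enters $o$ directly from its \emph{original} leaf $\ell_{g_t}$, because it has never moved before entering the center. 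The transition from $g_{t-1}$ to $g_t$ is a valid swap precisely when $\ell_{g_t}\Prefs_{g_{t-1}}o$ and $o\Prefs_{g_t}\ell_{g_t}$.

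This motivates a directed ``handoff'' graph $D$ on the agent set, with an arc $\Agent\to\Agent'$ whenever $\ell_{\Agent'}\Prefs_\Agent o$ and $o\Prefs_{\Agent'}\ell_{\Agent'}$. By the previous paragraph, the center-occupant sequence of any reachable matching is exactly a directed path in $D$ starting at $\Agent_c$, and conversely any such path can be realized by the corresponding chain of swaps (each summoned agent is still on its original leaf when summoned, and the leaves vacated onto are pairwise distinct). I would then reduce the query to reachability in $D$. If the target object is $o$, agent $\Agent^*$ is matchable to $o$ iff $\Agent^*$ is reachable from $\Agent_c$ in $D$ (the empty path covers $\Agent^*=\Agent_c$). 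If the target is a leaf $\Object^*$ with original occupant $\Agent^\dagger=\Matching_0^{-1}(\Object^*)$, the case $\Agent^*=\Agent^\dagger$ is an immediate yes (witnessed by $\Matching_0$); otherwise $\Agent^*$ can reach $\Object^*$ only by swapping \emph{out} of the center onto $\Object^*$, so the answer is yes iff $\Object^*\Prefs_{\Agent^*}o$, $o\Prefs_{\Agent^\dagger}\Object^*$, and $\Agent^*$ is reachable from $\Agent_c$ in the graph $D$ with the vertex $\Agent^\dagger$ deleted.

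The crux, and the step I expect to be the main obstacle, is the ``only if'' direction for a leaf target: I must show that at the instant $\Agent^*$ vacates $o$ onto $\Object^*$, the object $\Object^*$ is still held by $\Agent^\dagger$. This follows from fact (i): $\Object^*$ stays with $\Agent^\dagger$ until $\Agent^\dagger$ (if ever) moves to the center, at which moment the displaced center agent lands on $\Object^*$ and is frozen there; a frozen agent cannot participate in the final swap, so $\Agent^\dagger$ must still be present, which forces $o\Prefs_{\Agent^\dagger}\Object^*$ and places the center-occupant path up to $\Agent^*$ inside $D-\Agent^\dagger$ (that path never touches $\Agent^\dagger$, and $\Agent_c\neq\Agent^\dagger$ since one held $o$ and the other a leaf). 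The ``if'' direction is then routine: follow the path in $D-\Agent^\dagger$ to bring $\Agent^*$ to $o$ without ever disturbing $\Agent^\dagger$, then perform the final swap into $\Object^*$. For the running time I would precompute, in $O(n^2)$ time, rank tables making each comparison $x\Prefs_\Agent y$ an $O(1)$ lookup; then $D$ has $O(n^2)$ arcs, each testable in $O(1)$, and a single breadth-first search from $\Agent_c$ (on $D$ or on $D-\Agent^\dagger$) runs in $O(n^2)$ time, matching the $\Theta(n^2)$ size of the preference input and yielding the claimed bound.
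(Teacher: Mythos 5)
Your proposal is correct and follows essentially the same route as the paper's proof: the key observations that every swap involves the center, that an agent leaving the center is frozen forever, and that center occupancy therefore traces a path in a precomputed ``handoff'' digraph are exactly the paper's argument, as is the reduction of a leaf target to center-reachability that avoids the target's initial owner followed by one final trade. You merely organize the cases by target rather than by dipath type and spell out the vertex-deletion implementation of ``without moving the initial owner,'' which the paper leaves implicit.
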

\begin{proof}[Proof Sketch]
Let $\Agent$ be the given agent and let $\Object$ the given target
object in an instance of the reachable object problem. Notice that
there is a unique dipath for $\Agent$ to follow to reach $\Object$,
and the dipath has at most two edges.
	
There are three cases for the dipath, from center to leaf, from leaf
to center, and from leaf to leaf. The center to leaf case is
straightforward as the only way for agent $\Agent$ to reach object
$\Object$ is to perform a single swap involving both of them. 

For the leaf to center case, we use the approach presented by
Gourv{\`e}s et al.~\cite{gourves2017object} for the object-moving
model.  The problem reduces to the search of a path in a digraph
$G=(\Agents, E)$ where $(a, a') \in E$ with $a\in \Agents, a'\in
\Agents\setminus\{O\}$ if and only if $a$ and $a'$ can rationally
trade when $a$ is at the center position and $a'$ is in its initial
position. There is a path from $O$ to $\Agent$ in $G$ if and only if
$\Agent$ can move to the center position. It takes $O(n^2)$ time to
construct $G$ and solve this path problem.

We reduce the leaf to leaf case to the leaf to center case, as
follows. First, we solve a leaf to center problem to determine whether
agent $\Agent$ can be moved to the center without moving the initial
owner of object $\Object$. If this is possible, then we check whether
agent $\Agent$ and the initial owner of object $\Object$ can trade
their objects. This procedure works correctly because the ownership of
any leaf object can change at most once in any valid sequence of
swaps. To see this, observe that once an agent moves from the center
to a leaf, it can never return to the center. Accordingly, for agent
$\Agent$ to move to object $\Object$, the initial owner of $\Object$
needs to remain stationary until $\Agent$ reaches the center.
\end{proof}

\subsubsection{Pareto Efficiency on Stars}
\label{sec:PE_star}

\begin{restatable}{theorem}{PolyPE}\label{thm:PE_star}
	Pareto-efficient matching on stars can be solved in $O(n)$ time. 
\end{restatable}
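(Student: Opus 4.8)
The plan is to exploit the extreme rigidity of the swap dynamics on a star. Since the objects form a star, two objects are adjacent only when one of them is the center, so every swap is between the agent currently holding the center object and an agent holding a leaf object. Moreover, as already noted in Section~\ref{sec:RO_star}, once an agent moves from the center to a leaf it can never swap again: to move it would have to swap back across its unique incident edge to the center, but it only left the center because it strictly preferred its leaf to the center object. Consequently the center object is occupied by the initial center agent $O$ until the very first swap, and in general every reachable matching is produced by a single \emph{chain} $O=g_0,g_1,\ldots,g_k$ of distinct agents that successively occupy the center: at step $t$ agent $g_t$ swaps into the center and $g_{t-1}$ is pushed out onto the leaf vacated by $g_t$. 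First I would record this as a structural lemma, together with the induced final matching: $g_k$ keeps the center object, each $g_t$ with $t<k$ ends on the original leaf of $g_{t+1}$, and every agent outside the chain keeps its initial object. The validity of the chain is equivalent to two local conditions, namely that each $g_t$ ($t\ge 1$) prefers the center object to its own leaf and each $g_{t-1}$ prefers the leaf of $g_t$ to the center object; a short computation then shows that every chain agent strictly improves and every other agent is unchanged.

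Next I would compute a Pareto-efficient matching by serial dictatorship, using the dictator order $g_0,g_1,\ldots$ that the chain itself induces. Because the next swap is always forced to involve the current center occupant $c$, the objects reachable for $c$ (subject to the commitments already made) are exactly the center object together with the still-available leaves whose owners are \emph{willing} (prefer the center object to their own leaf) and which $c$ prefers to the center object; here a leaf is available iff it is still held by its original owner. I would let $c$ claim its most preferred such object: if that object is the center, the chain stops; otherwise $c$ moves onto that leaf, its owner becomes the new center agent, and we recurse. Since each agent occupies the center at most once and each leaf is claimed at most once, this yields a simple chain, and by the serial-dictatorship argument already used for generalized stars in Section~\ref{sec:gsPeAlg} (each dictator receives its best object in $\Reach{\Config}$ consistent with the earlier dictators' commitments, with the remaining agents pinned to their initial objects), the resulting matching is Pareto-efficient with respect to the reachable set. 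I would also check that the stopping rule is precisely the non-domination condition: if the final center agent preferred some available willing leaf to the center object, one further swap would strictly help both it and that leaf's owner, so the matching could not be efficient.

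The main obstacle is the $O(n)$ running time, since we cannot even afford to read all $\Theta(n^2)$ preference entries. The plan is to maintain availability by constant-time marking (mark a leaf \emph{consumed} when an agent is pushed onto it, and cache an owner's willingness the first time it is examined) and to scan each agent's list at most once, namely during the single iteration in which that agent is the dictator, never examining objects that it likes less than the center object. The delicate point, which I expect to be the crux, is bounding the \emph{total} number of preference-entry examinations by $O(n)$: while serving as center, an agent must skip over the leaves it prefers to the center object that are unavailable (already consumed, or owned by an unwilling agent), and a careless implementation would re-examine such leaves once for every later center agent. I would overcome this by threading the still-available willing leaves through an auxiliary structure from which consumed and unwilling leaves are deleted permanently rather than rescanned, so that each examination can be charged to a distinct event — either the unique agent that consumes a given leaf, or the single moment a leaf's owner is classified as unwilling, or the one object each dictator finally claims. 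Establishing that this bookkeeping is correct and genuinely linear, and that the leaf agents outside the chain need no work beyond retaining their initial matches, is where the substance of the efficiency argument lies; the correctness of the matching itself follows from the serial-dictatorship framework cited above.
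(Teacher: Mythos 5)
Your algorithm and its correctness argument coincide with the paper's: the paper likewise prunes every leaf agent that does not prefer the center object to its own object, then runs serial dictatorship along the forced chain of center occupants, giving each successive center agent its most preferred remaining object and terminating when that object is the center object itself. Your structural lemma about chains (each swap involves the current center occupant; an agent that leaves the center is frozen; every chain agent strictly improves) is exactly the justification the paper's sketch relies on implicitly.

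The one place you go beyond the paper is the $O(n)$ bookkeeping, which the paper dismisses with ``it is easy to see,'' and your proposed charging argument does not close that gap as stated. Deleting consumed and unwilling leaves from a global availability structure does not prevent re-examination: if each dictator finds its best available object by scanning its own preference list, then the entry corresponding to a single pruned leaf can be skipped by every later dictator that ranks that leaf above its eventual match, so these examinations cannot all be charged to the one-time pruning or consumption event; if instead the dictator selects the best element directly from the availability structure, that structure is not ordered by the dictator's preferences, so the selection costs $\Theta(|S|)$ per dictator and $\Theta(n^2)$ over a long chain. Either way the naive implementation is $\Theta(n^2)$ in the worst case (which would still establish polynomial-time solvability, but not the stated bound), so some further idea is needed here; the paper's own sketch supplies none.
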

\begin{proof}[Proof Sketch]
We use an algorithm based on serial dictatorship
\cite{abdulkadirouglu1999house}.  First, we prune any leaf agent with
its associated object if the agent does not prefer the center object
to its own object, since this leaf agent will never be involved in any
swap.  Therefore, the center agent can swap with any remaining leaf
agent only if the leaf agent holds one of its preferred objects.  If
the top preference of the center agent out of the remaining objects is
the center object, then the current matching is Pareto-efficient as no
swaps can occur.  Otherwise, the top preference of the center agent
$\Agent$ out of the remaining objects is a leaf object $\Object$, and
we can apply the swap operation that moves agent $\Agent$ to object
$\Object$. The previous owner of object $\Object$ becomes the new
center agent.  Then, we prune agent $\Agent$ and object $\Object$ and
recurs on the new center agent. It is easy to see that this whole
process takes $O(n)$ time, and when each agent is pruned, it holds its
favorite object among the remaining objects. It follows that the
matching returned by this process is not Pareto-dominated by any other
matching, and hence is Pareto-efficient.
\end{proof}

	\section{Concluding Remarks}
\label{sec:concs}

In this paper, we have introduced the agent-moving model, and we have
revisited the collection of problems listed in Table~\ref{tbl:dozen},
which were previously considered in the context of the object-moving
model. In all cases where a polynomial-time algorithm or hardness
result has been established for the object-moving model, we have
established a corresponding result for the agent-moving model.

In addition, we have presented a polynomial-time algorithm for \PE\ on
generalized stars in the agent-moving model, a problem that remains
open in the object-moving model.  It is natural to ask whether our
techniques can be extended to address this open problem.  Our
algorithm relies on the polynomial-time solvability of the \RO\ problem
for the center agent, which allows us to compute an object that is
matched to the center agent in some Pareto-efficient matching.  In the
object-moving model, no polynomial-time algorithm is known to compute
an agent that is matched to the center object in some Pareto-efficient
matching.  (We do know how to compute the agents that can be reached by the center object in polynomial time, but it isn't clear how to use this information to
compute a Pareto-efficient matching in polynomial time.) An
interesting direction for future research in the agent-moving model is
to determine whether our techniques for solving \PE\ on generalized
stars can be extended to trees. It would also be interesting to study strategic aspects of this model.


	\bibliography{refs}
	

\end{document}